\numberwithin{equation}{section}
\crefname{equation}{Eq.}{Eqs.}
\crefname{eqnarray}{Eq.}{Eqs.}
\crefname{conj}{Conjecture}{Conjectures}
\crefname{defn}{Definition}{Definitions}
\crefname{lem}{Lemma}{Lemmas}
\crefname{thm}{Theorem}{Theorems}
\crefname{claim}{Claim}{Claims}
\crefname{rmk}{Remark}{Remarks}
\crefname{prop}{Proposition}{Propositions}
\crefname{section}{Section}{Sections}
\crefname{appendix}{Appendix}{Appendices}
\crefname{cor}{Corollary}{Corollaries}
\crefname{figure}{Figure}{Figures}
\crefname{table}{Table}{Tables}
\crefname{example}{Example}{Examples}
\newcommand{\de}{{\partial}}
\newcommand{\rd}{\mathrm{d}}
\newcommand{\ri}{\mathrm{i}}
\newcommand{\re}{\mathrm{e}}
\newcommand{\bbV}{\mathbb{V}}
\newcommand{\bbI}{\mathbb{I}}
\newcommand{\bbA}{\mathbb{A}}
\newcommand{\bbN}{\mathbb{N}}
\newcommand{\bbL}{\mathbb{L}}
\newcommand{\bbZ}{\mathbb{Z}}
\newcommand{\bbR}{\mathbb{R}}
\newcommand{\bbC}{\mathbb{C}}
\newcommand{\bbP}{\mathbb{P}}
\newcommand{\bbQ}{\mathbb{Q}}
\newcommand{\bbT}{\mathbb{T}}
\def\bary{\begin{array}} 
\def\eary{\end{array}} 
\def\ben{\begin{enumerate}} 
\def\een{\end{enumerate}}
\def\bit{\begin{itemize}} 
\def\eit{\end{itemize}}
\def\nn{\nonumber} 
\newcommand{\cY}{\mathcal{Y}}
\newcommand{\cZ}{\mathcal{Z}}
\newcommand{\cO}{\mathcal{O}}
\newcommand{\cT}{\mathcal{T}}
\newcommand{\cE}{\mathcal{E}}
\newcommand{\cP}{\mathcal{P}}
\newcommand{\cC}{\mathcal{C}}
\newcommand{\DD}{\mathcal{D}}
\newcommand{\LL}{\mathcal{L}}
\newcommand{\cS}{\mathcal{S}}
\newcommand{\cK}{\mathcal{K}}
\newcommand{\cN}{\mathcal{N}}
\newcommand{\cW}{\mathcal{W}}
\newcommand{\cG}{\mathcal{G}}
\newcommand{\cA}{\mathcal{A}}
\newcommand{\HH}{\mathcal{H}}
\newcommand{\cB}{\mathcal{B}}
\newcommand{\cF}{\mathcal{F}}
\newcommand{\cX}{\mathcal{X}}
\newcommand{\cM}{\mathcal M}
\newcommand{\cU}{\mathcal U}
\def\beq{\begin{equation}}                     %  
\def\eeq{\end{equation}}                       % 
\def\bea{\begin{eqnarray}}                     %         % 
\def\eea{\end{eqnarray}}
\def\bary{\begin{array}} 
\def\eary{\end{array}} 
\def\ben{\begin{enumerate}} 
\def\een{\end{enumerate}}
\def\bit{\begin{itemize}} 
\def\eit{\end{itemize}}
\def\nn{\nonumber} 
\def\de {\partial}
\def\a{\alpha}
\def\b{\beta}
\def\e{\epsilon}
\def\Res{\mathrm{Res}}
\theoremstyle{plain}
\newtheorem{thm}{Theorem}[section]
\newtheorem{lem}[thm]{Lemma}
\newtheorem{prop}[thm]{Proposition}
\newtheorem{conj}[thm]{Conjecture}
\newtheorem{claim}[thm]{Claim}
\newtheorem*{conj*}{Conjecture}
\newtheorem{cor}[thm]{Corollary}
\newtheorem*{cor*}{Corollary}
\newtheorem{defn}{Definition}[section]
\theoremstyle{definition}
\newtheorem{rmk}[thm]{Remark}
\newcommand{\GIT}[1]{/\!\!/_{\kern-.2em #1 \kern0.1em}}
\renewcommand{\l}{\left}
\renewcommand{\r}{\right}
\newcommand{\bra}{\left\langle}
\newcommand{\ket}{\right\rangle}
\def\bred{\begin{color}{red}}
\def\ered{\end{color}}
\def\bes{\begin{subequations}}
\def\ees{\end{subequations}}
\begin{document}

\title{$\mathrm{E}_8$ spectral curves}

\author{Andrea Brini}
\address{Department of Mathematics,
Imperial College London, 180 Queen's Gate, SW7 2AZ, London, United Kingdom
}
\address{School of Mathematics and Statistics, University of Sheffield,
  Hounsfield Road, Sheffield S3 7RH, United Kingdom}
\address{On leave from IMAG, Univ.~Montpellier, CNRS, Montpellier, France}

\subjclass[2010]{14H70, 37K10,  53D45, 57M27, 81R12}
\email{a.brini@sheffield.ac.uk}

\begin{abstract}

I provide an explicit construction of spectral curves for the
affine $\mathrm{E}_8$ relativistic Toda chain.
%Specifically, the closed-form expression for the spectral
%curve of the relativistic generalisation of the affine $\mathrm{E}_8$-Toda chain
%(Coxeter--Bogoyavlensky--Toda lattice) is obtained; 
Their closed form expression is obtained by
determining the full
set of character relations in the representation ring of $\mathrm{E}_8$ for the exterior algebra of the adjoint
representation; this is in turn employed to provide
an explicit construction of both integrals of motion  and
the
action-angle map for the resulting integrable system.\\
I consider two main areas of applications of these constructions. On the one
hand, I consider the resulting family of spectral curves in the context of the correspondences
between Toda systems, 5d Seiberg--Witten theory,
%solution of $\mathcal{N}=1$
%five-dimensional gauge theory, 
Gromov--Witten theory of
orbifolds of the resolved conifold, and Chern--Simons theory
% on spherical
%3-manifolds 
to establish a version of the B-model Gopakumar--Vafa correspondence for the
$\mathrm{sl}_N$ L\^e--Murakami--Ohtsuki invariant of the Poincar\'e integral homology sphere to all orders in $1/N$.
%with an application to the strongly coupled
%(maximally Argyres--Douglas) $\mathrm{E}_8$-superconformal point; a
%Hori--Iqbal--Vafa-type mirror geometry for the local Calabi--Yau geometry given by
%the $\mathbb{I}_{120}$ fibrewise quotient of the resolved conifold,
%
%  whose
%large $N$ duality with $\mathrm{U}(N)$ Chern--Simons theory on  is established  and a one-dimensional
On the other, I consider a degenerate version of the spectral curves and prove a
1-dimensional Landau--Ginzburg mirror theorem for the Frobenius manifold
structure on the space of orbits of the
extended affine Weyl group of type $\mathrm{E}_8$ introduced by Dubrovin--Zhang (equivalently, the orbifold quantum cohomology of the type-$\mathrm{E}_8$
polynomial $\bbC P^1$ orbifold). This leads to closed-form expressions for the
flat co-ordinates of the Saito metric, the prepotential, and a higher genus
mirror theorem based on the Chekhov--Eynard--Orantin recursion. I will also 
show how the constructions of the paper %should 
lead to a generalisation of a conjecture of Norbury--Scott to ADE
$\bbP^1$-orbifolds, and a mirror of the Dubrovin--Zhang construction for all Weyl groups
and choices of marked roots.

\end{abstract}

\maketitle
\tableofcontents

\section{Introduction}

Spectral curves have been the subject of considerable study in a variety of
contexts. These are moduli spaces $\mathscr{S}$
%$\cM_g(\Gamma, \rd \sigma,
%\rd \tau, \Lambda)$ 
of complex projective 
%genus-$g$ 
curves 
%$\Gamma$ 
endowed with a distinguished pair
of meromorphic abelian differentials 
%$(\rd \sigma, \rd \tau)$ 
and a
marked symplectic subring 
%$\Lambda \subset H_1(\Gamma,\bbZ)$ 
of their
first homology group; such data define (one or more) families of flat connections on the
tangent bundle of the smooth
part of moduli space. In particular, a
Frobenius manifold structure on the base of the family, a
dispersionless integrable hierarchy on its loop space, and 
%$\cM_g(\Gamma, \rd \sigma,
%\rd \tau, \Lambda)$, 
the genus zero part of a semi-simple CohFT 
%$\cM_g(\Gamma_g, \rd \sigma,\rd \tau, \Lambda)$ 
are then naturally defined in terms of periods of the aforementioned
differentials over the marked cycles; a canonical reconstruction of the
dispersive deformation (resp. the higher genera of the CohFT) is furthermore determined by
$\mathscr{S}$ through the topological recursion of \cite{Eynard:2007kz}.

The one-line summary of this paper is that I offer two constructions (related
to Points (II) and (IV) below) and two isomorphisms (related to Points (III), (V)
and (VI)) in the context of spectral curves with exceptional gauge symmetry of
type $\mathrm{E}_8$. 
%I describe the context below, and an outline of the paper in \cref{sec:about}.

\subsection{Context}

Spectral curves are abundant in several problems in enumerative
geometry and mathematical physics. In particular:
\ben[(I)]
\item in the spectral theory of finite-gap solutions of the KP/Toda hierarchy,
  spectral curves arise as the (normalised, compactified)
  affine curve in $\bbC^2$ given by the vanishing locus of the Burchnall--Chaundy
  polynomial ensuring commutativity of the operators generating two
  distinguished flows of the hierarchy; the marked abelian differentials here are just the
  differentials of the two coordinate projections 
%$(\sigma, \tau)$ 
onto the plane.  In this case,
  to each smooth point 
%$u\in \cS$ 
in moduli space with
  fibre a smooth Riemann surface $\Gamma$
%$\Gamma_u$ 
there corresponds a canonical theta-function solution of the
  hierarchy depending on $g(\Gamma)$ times, and the associated dynamics 
  is encoded into a  linear flow on the Jacobian of the curve;
%$\mathrm{Pic}^{(0)}(\Gamma_u)$;
\item in many important cases, this type of linear flow on a Jacobian (or, more
  generally, a principally polarised Abelian subvariety thereof, singled out by the marked basis of 1-cycles on the curve) is a
  manifestation of the Liouville--Arnold dynamics of an auxiliary, finite-dimensional
  integrable system. Coordinates in moduli space correspond to Cauchy data --
  i.e., initial values of involutive Hamiltonians/action variables --  and
  flow parameters are given by linear coordinates on the associated torus;
\item all the action has hitherto taken place at a fixed fibre over a point in moduli
  space; however additional structures emerge once moduli are varied by
  considering secular (adiabatic)
  deformations of the integrals of motions via the Whitham averaging method. This defines
a dynamics on moduli space which is itself integrable and admits a
$\tau$-function; remarkably, the logarithm of the $\tau$-function satisfies the big
phase-space version of WDVV equations, and its restriction to initial data/small
phase space defines an almost-Frobenius manifold structure on the moduli space;
\item from the point of view of four dimensional supersymmetric gauge theories with eight
  supercharges, the appearance of WDVV equations for the Whitham $\tau$-function is
  equivalent to the constraints of rigid special K\"ahler
  geometry on the effective prepotential; such equivalence is indeed realised by presenting the Coulomb branch of the theory as a
  moduli space of spectral curves, the marked differentials giving rise to
% $\sigma \rd \tau$
  %as 
the Seiberg--Witten 1-form,
%differential, 
the BPS central charge as the period
  mapping on the marked homology sublattice,
% spanned by $\Lambda$, 
and the prepotential as the logarithm of
  the Whitham $\tau$-function;
\item in several cases, the Picard--Fuchs equations satisfied by the periods of
%  $\sigma\rd\tau$ 
the SW differential
are a reduction of the GKZ hypergeometric system for a toric
  Calabi--Yau variety, whose quantum cohomology is then isomorphic to the
  Frobenius manifold structure on the moduli of spectral curves.
% This is
%  one of the simplest instances of mirror symmetry for local Calabi--Yau
%  threefolds. 
What is more, 
%the special one-dimensional
%nature of this setting opens the way to a direct, and
%  striking, generalisation 
spectral curve mirrors open the way to include higher genus
  Gromov--Witten invariants in the picture through the Chekhov--Eynard--Orantin
  topological recursion: a universal calculus of residues on the fibres of the
  family $\mathscr{S}$, which is recursively determined by the spectral data.
This provides simultaneously a
  definition of a higher genus topological B-model on a curve, a higher genus
  version of local mirror symmetry, and a dispersive deformation of the
  quasi-linear hierarchy obtained by the averaging procedure;
\item in some cases, spectral curves 
%arising in this context
%  above 
may also be related to
 multi-matrix models and topological gauge theories (particularly
  Chern--Simons theory) in a formal $1/N$
  expansion: for fixed 't~Hooft parameters, 
%and generalising the Wigner
%  semi-circle law for the Gaussian unitary ensemble, 
the generating function of
  single-trace insertion of the gauge field in the planar limit cuts out a
  plane curve in $\bbC^2$.
%, with a fixed choice of one-dimensional paths with
%  symplectic intersection matrix, 
%and thus a particular spectral curve via the
%  two coordinate projections. 
The asymptotic analysis of the matrix model/gauge
  theory then falls squarely within the above setup: the formal solution of
  the Ward identities of the model dictates that the planar free energy is
  calculated by
  the special K\"ahler geometry relations for the associated spectral curve,
  and the full $1/N$ expansion of connected multi-trace correlators is
  computed by the topological recursion.
\een

A paradigmatic example is given by the spectral curves arising as the
vanishing locus for the characteristic polynomial of the Lax matrix for the periodic
Toda chain with $N+1$ particles. In this case (I) coincides with the theory of
$N$-gap solutions of
the Toda hierarchy, which has a counterpart (II) in the Mumford--van Moerbeke algebro-geometric integration of
the Toda chain by way of a flow on the Jacobian of the curves. In turn, this
gives a Landau--Ginzburg picture for an (almost) Frobenius manifold structure (III),
which is associated to the Seiberg--Witten solution of $\mathcal{N}=2$ pure
${\rm SU}(N+1)$ gauge theory (IV). The relativistic deformation of the system relates
the Frobenius manifold above to the quantum cohomology (V) of a family of toric
Calabi--Yau threefolds (for $N=1$, this is $K_{\bbP^1 \times \bbP^1}$), which
  encodes the planar limit of ${\rm SU}(M)$ Chern--Simons--Witten invariants on lens
  spaces $L(N+1,1)$ in (VI). 

\subsection{What this paper is about}
\label{sec:about}

A wide body of literature has been devoted in the
  last two decades to further generalising at least part of this web of relations to a wider
  arena (e.g. quiver gauge theories). A somewhat orthogonal direction, and one
  where the whole of (I)-(VI) have a concrete generalisation, is to
  consider the Lie-algebraic extension of the Toda hierarchy and its
  relativistic counterpart to
  arbitrary root systems $R$ associated to semi-simple Lie algebras, the
  standard case corresponding to $R={\rm A}_{N}$. Constructions
  and proofs of the relations above have been available for quite a while for
  (II)-(IV) and more recently for (V)-(VI), in complete generality except for
  one egregious example: $R={\rm E}_8$, whose
  complexity has put it out of reach of previous treatments in the
  literature. This paper
  %grows out of the
  %author's stubborness to
  fills the gap in this exceptional case and provides, as an upshot, a series of novel applications of Toda spectral
  curves which may be of interest for geometers and mathematical physicists
  alike. As was mentioned, the aim of the paper is to provide two main constructions, and
prove two isomorphisms, 
%in the context of Toda curves with $\mathrm{E}_8$
%symmetry. More in detail, this goes 
as follows.

\begin{description}
\item[Construction 1] The first
construction gives a closed-form
expression for arbitrary moduli of the family of curves associated to the relativistic
Toda chain of type $\widehat{\mathrm{E}}_8$ for its sole quasi-minuscule
representation -- the adjoint. This is achieved in two steps: by determining the dependence
of the regular fundamental characters of the Lax matrix on the spectral
parameter, and by subsequently computing the polynomial character
relations in the representation ring of $\mathrm{E}_8$ (viewed as a polynomial ring
over the fundamental characters) corresponding to the exterior
powers of the adjoint representation.
%, $\wedge^k \mathfrak{e}_8$. 
The last
step, which is of independent
representation theoretic interest, is of significant computational complexity
and is solved by a reduction to an equivalent large-size linear problem which
is amenable to an efficient solution by distributed computation.
This is beyond the scope of this paper and will find
a detailed description in \cite{E8comp}: I herein
limit myself to announce and condense the ideas of \cite{E8comp}
into the 2-page summary given in
\cref{sec:charE8}, and accompany this paper with a {\it Mathematica}
package\footnote{This is available at {\tt http://tiny.cc/E8SpecCurve}. Part of
  the complexity is reflected in the size of the compressed data containing
  the final solution ($\sim$ 180Mb -- should the reader wish to have a closer
  look at this, they should be aware that this unpacks to binary files and a
  {\it Mathematica} notebook that are collectively almost 1GB of data.).} containing the
    solution thus achieved. As an immediate spin-off I obtain
the generating function of the integrable model (in the language of
\cite{Fock:2014ifa}) as a function of the basic involutive Hamiltonians
attached to the fundamental weights, and a family of spectral curves as its
vanishing locus. In the process, this yields  
a canonical set of integrals of motion in involution in cluster variables and
in Darboux co-ordinates for the integrable system on a special double
Bruhat cell
%, corresponding to the Coxeter
%element of the affine Weyl group $\cW_{\widehat{\mathrm{E}_8}}\times W_{\widehat{\mathrm{E}_8}}$,
of the coextended Poisson--Lie loop group $\widehat{\mathrm{E}_8}^\#$, which, by analogy with
the case of $\widehat{{\rm A}}$-series, I call ``the relativistic
$\widehat{\mathrm{E}_8}$ Toda chain'', and whose dynamics is solved completely by the
preceding construction. 
\item[Construction 2] The previous construction gives the first element in the
  %4-tuple $\cM=(C_g, \rd \sigma, \rd \tau, \Gamma)$ 
description of the spectral curve -- a family of plane complex algebraic curves, which
  are themselves integrals of motion. The next step determines the
  three remaining characters in the play, namely the two marked Abelian
  differentials and the distinguished sublattice of the first homology of the curves; this goes hand in hand with the
  construction of appropriate action--angle
  variables for the system.  I identify the phase space of the Toda system with a fibration over the
  Cartan torus of $\mathrm{E}_8$ (times $\bbC^\star$) by Abelian varieties, which are
  Prym--Tyurin sub-tori of the spectral curve Jacobian. These are selected by the curve geometry itself,
due to an argument going back to
  Kanev \cite{MR1013158}, and 
%(and correspondingly, a distinguished   subring  $\Gamma \subset H_1(C_g,
%\bbZ)$) 
the Liouville--Arnold flows linearise on them. The Hamiltonian structure
  inherited from the embedding of the system into a Poisson--Lie--Bruhat cell
  translates into a canonical choice of symplectic form on the universal
  family of Prym--Tyurins, and it pins down (up to canonical transformation) a
  marked pair of Abelian third kind differentials on the curves. 

Altogether,
  the family of curves, the marked 1-forms, and the choice of preferred cycles
  lead to the assignment of a set of {\it Dubrovin--Krichever data} (\cref{defn:dk}) to the
  family of spectral curves. Armed with this, I turn to some of the uses of
  Toda spectral curves in the 
  context of Fig.~\ref{fig:dualities}. 

\begin{figure}[!h]
\input{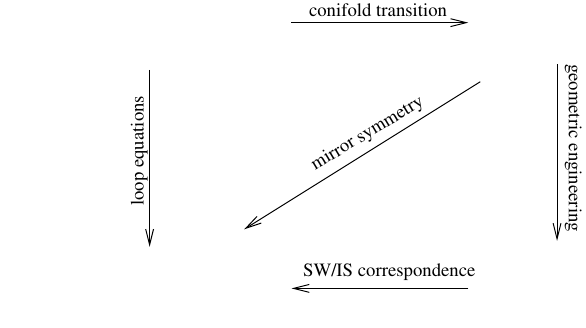_t}
\caption{Duality web for the B-model on Toda spectral curves}
\label{fig:dualities}
\end{figure}
%schematically, for $\sigma :
%  \cM \to \mathrm{Sym}^{8}(\mathcal{C}_g)$ a section of the
%  $8^{\rm th}$-symmetric power of the tautological family of curves over
%  $\cM$, this is given by $\rd
%  \sigma(\sigma) \wedge \rd \tau(\sigma)$, .

\item[Isomorphism 1]  

Toda spectral curves have long been proposed to encode the
  Seiberg--Witten solution 
of $\cN=2$ pure gluodynamics in four dimensional
  Minkowski space   \cite{Gorsky:1995zq, Martinec:1995by}, as well as of its higher dimensional
  $\cN=1$ parent theory on $\bbR^4 \times S^1$
  \cite{Nekrasov:1996cz} in the relativistic case. From the physics point of
  view, Constructions~1-2 provide the
  Seiberg--Witten solution for minimal, five-dimensional supersymmetric $\mathrm{E}_8$
  Yang--Mills theory on $\bbR^4\times S^1$; and as the latter should be related to
  (twisted) curve counts on an orbifold of the resolved conifold
  $Y=\cO_{\bbP^1}(-1)\oplus \cO_{\bbP^1}(-1)$ by the
  action of the binary icosahedral group $\tilde{\mathbb{I}}$, the same
  construction provides a conjectural 1-dimensional mirror
  construction for the orbifold Gromov--Witten theory of these targets, as well as to its large $N$ Chern--Simons dual theory on the
  Poincar\'e sphere $S^3/\tilde{\mathbb{I}}\simeq \Sigma(2,3,5)$
  \cite{Gopakumar:1998ki, Ooguri:1999bv, Aganagic:2002wv, Borot:2015fxa}. 
I do not pursue here the proof of either the bottom horizontal (SW/integrable
systems correspondence) or the diagonal (mirror symmetry) arrow in the diagram
of Fig.~\ref{fig:dualities}, although it is highlighted in the text how having access to the global solution on its Coulomb
  branch allows to study particular degeneration limits of the solution
  corresponding to superconformal (maximally Argyres--Douglas) points where
  mutually non-local dyons pop up in the massless spectrum, and 
  limiting versions of mirror symmetry for the Toda curves in Isomorphism~2
  below are also considered. What I do prove instead is a version of the vertical arrow:
%  completing results in a previous joint paper with Borot
  %  \cite{Borot:2015fxa}:
  namely,
  that the Chern--Simons/Reshetikhin--Turaev--Witten invariant of $\Sigma(2,3,5)$ restricted to the
  trivial flat connection (the L\^e--Murakami--Ohtsuki invariant), as
  well as the quantum invariants of fibre knots therein in the same limit and
  for arbitrary colourings, are
  computed to all orders in $1/N$ from the Chekhov--Eynard--Orantin topological
  recursion on a suitable subfamily of $\widehat{\mathrm{E}_8}$ relativistic
  Toda spectral curves.
  %As in
  %\cite{Borot:2015fxa},
  The strategy resorts to studying the trigonometric
  eigenvalue model associated to the LMO invariant of the Poincar\'e sphere at
  large $N$ and to prove that the planar resolvent is one of the meromorphic coordinate
  projections of a plane curve in $(\bbC^\star)^2$, which is in turn shown to
  be the affine part of the spectral curve of
  the $\widehat{\mathrm{E}_8}$ relativistic Toda chain.

\item[Isomorphism 2] I further consider two meaningful operations that can be performed
  on the spectral curve setup of Constructions~1-2. The first is to take a
  degeneration limit to the leaf where the natural Casimir function of the
  affine Toda chain goes to zero; this corresponds to the restriction to
  degree-zero orbifold invariants on the top-right corner of
  Fig.~\ref{fig:dualities}, and to the perturbative limit of the 5D
  prepotentials of the bottom-right corner. The second is to replace one of the
  marked Abelian integrals with their exponential; this is a version of 
  Dubrovin's notion of (almost)-duality of Frobenius manifolds
  \cite{MR2070050}. 

I conjecture and prove that the resulting spectral curve provides a
  1-dimensional Landau--Ginzburg mirror for the Frobenius manifold structure
  constructed on orbits of the extended affine Weyl group of type $\mathrm{E}_8$ by
  Dubrovin and Zhang \cite{MR1606165}. Their construction depends on a choice
  of simple root, and the canonical choice they take matches with the Frobenius
  manifold structure on the Hurwitz space determined by our global spectral
  curve. This
%  extends to the first (and most) exceptional case the LG mirror
  %  theorems of \cite{Dubrovin:2015wdx} for the classical series; and it
  opens the way to formulate a precise conjecture for how the general
  case, encompassing general choices of simple roots in the Dubrovin--Zhang
  construction, should receive an analogous description in terms of Toda
  spectral curves for the corresponding Poisson--Lie group and twists thereof
  by the action of a Type I symmetry of WDVV (in the language of
  \cite{Dubrovin:1994hc}). Restricting to simply-laced Lie algebras, this gives a
  mirror theorem for the quantum cohomology of ADE orbifolds of $\bbP^1$:
our genus zero mirror statement then lifts to an all-genus statement by virtue
of the equivalence of the topological recursion with Givental's
quantisation for R-calibrated Frobenius manifolds. This provides a version, for
the ADE series, of statements by Norbury--Scott 
\cite{MR3268770,DuninBarkowski:2012bw,MR3654104} for the Gromov--Witten theory
of $\bbP^1$.

\end{description}

\subsection{Structure of the paper and relation to other work} The two constructions and two isomorphisms above will find their place in
\cref{sec:E8chain}, \ref{sec:actangl}, \ref{sec:applI} and \ref{sec:applII}
respectively. The main novel results of the paper are structured in the
following logical progression:
\bit
\item \cref{claim:E8} (which is Theorem~2.4 in the companion paper
  \cite{E8comp}) and \cref{lem:uispec} provide the explicit form of
  relativistic $\widehat{\mathrm{E}_8}$ Toda spectral curves of {\bf
    Construction 1}.
\item \cref{thm:pt,thm:kp,cor:actangl} establish the linearisation of the
  flows on the canonical Prym--Tyurin fibration over the family of Toda
  spectral curves, as well as their Hamiltonian nature, completing {\bf
    Construction 2}.
\item \cref{thm:gv} proves the weak B-model Gopakumar--Vafa
  correspondence for the Poincar\'e sphere in {\bf Isomorphism 1}.
\item \cref{conj:mirror,thm:frob} provide respectively a uniform construction
  of Landau--Ginzburg mirrors of the Dubrovin--Zhang Frobenius manifolds
  associated to orbits of extended affine Weyl groups in all cases, and a
  proof for the type $\mathrm{E}_8$ group and the canonical marked node, which
  is {\bf Isomorphism 2}.
\eit
%
%While the body of results above is
Some facets of the problems addressed here have surfaced with a different
angle in previous works in the literature, and in order to make the text
self-contained we review as necessary the
links with their methodology at the beginning of each Section. The input datum
of our {\bf Construction 1} is the Lax
formalism with spectral parameter of Fock--Marshakov in \cite{Fock:2014ifa},
which is the starting point of our reduction of the computation of spectral curves to
a problem in Lie theory. {\bf Construction 2}, while new for relativistic
systems of type other than $\widehat{\mathrm{A}_n}$, owes an intellectual debt to the
classical ideology of \cite{MR1013158, MR533894, MR815768, MR1401779,
  MR1397059} in the non-relativistic case, and to the construction of
algebro-geometric symplectic forms of \cite{Krichever:1997sq, DHoker:2002kfd},
both of which are shown in this paper to be
adaptable to the relativistic setting at hand. {\bf Isomorphism 1} concludes a
program initiated in my joint work with Borot \cite{Borot:2015fxa} to prove
the B-model Gopakumar--Vafa correspondence for Clifford--Klein 3-manifolds by
treating the central missing case of
the Poincar\'e sphere, and furthermore completes it to the full higher genus
theory by
proving that the Chern--Simons planar two-point function agrees with the
symmetrised Bergmann recursion kernel on the Toda curves, thereby establishing
the equality of initial data for the Chekhov--Eynard--Orantin recursion on the
two sides of the correspondence. The previous state-of-the-art in the
construction of mirrors for Dubrovin--Zhang Frobenius manifolds in type other
than $\mathrm{A}_n$ was given by
\cite{Dubrovin:2015wdx}, where a version of {\bf Isomorphism~2} is given by an entirely
different route for extended affine Weyl groups associated to $\mathrm{Spin(n, \bbC)}$ and $\mathrm{Sp(n, \bbC)}$ groups. Our construction instead provides a
general method which is applicable uniformly to all simple, simply-connected Lie groups, including exceptional
cases and all choices of marked roots,  recovers as a particular case \cite{MR1606165, Dubrovin:2015wdx} by restricting to Dynkin types A, B, C, and D, and is shown in
particular to yield the correct mirror for the most exceptional case of $\mathrm{E}_8$. More details for the other exceptional groups will appear in \cite{karophd}.

I have tried to give a self-contained exposition of the material
in each of \cref{sec:E8chain,sec:actangl,sec:applI,sec:applII}, and to a good extent the reader interested in a particular
angle of the story may read them independently (in particular
\cref{sec:applI,sec:applII}). 

\subsection*{Acknowledgements} I would like to thank G.~Bonelli, G.~Borot, A.~D'Andrea,
B.~Dubrovin, N.~Orantin, N.~Pagani, P.~Rossi, A.~Tanzini, Y.~Zhang for discussions and correspondence
on some of the topics touched upon in this paper, and H.~Braden for bringing
\cite{MR1182413,MR1401779,MR1668594} to my attention during a talk at SISSA in
2015. For the calculations of \cref{sec:charE8} and
\cite{E8comp}, I have availed myself of cluster computing facilities at the
Universit\'e de Montpellier
({\tt Omega} departmental cluster at IMAG, and the HPC@LR centre {\tt Thau/Muse} inter-faculty cluster) and the compute cluster of the
Department of Mathematics at Imperial College London. I am grateful to B.~Chapuisat and especially A.~Thomas for
their continuous support and patience whilst these computations were carried
out. This research was partially supported by the ERC Consolidator Grant
no.~682603 (PI:~T.~Coates) and, for the revised version of the manuscript, the
EPSRC Fellowship grant no.~EP/S003657/2.

\section{The $\mathrm{E}_8$ and $\widehat{\mathrm{E}_8}$ relativistic Toda chain}
\label{sec:E8chain}

I will provide a succinct, but rather complete account of the construction of Lax pairs for the
relativistic Toda chain for both the finite and affine $\mathrm{E}_8$ root
system. This is mostly to fix notation and key concepts for the discussion to
follow, and there is virtually no new material here until
\cref{sec:speccurve}. I refer the reader to \cite{Fock:2014ifa,
  Williams:2012fz, MR1993935, Reyman:1979ru, Olshanetsky:1981dk} for more context,
references, and further discussion. I
will subsequently move to the explicit construction of spectral curves and the
action-angle map for the affine $\mathrm{E}_8$ chain in \cref{sec:speccurve,sec:actangl}.

\subsection{Notation}
\label{sec:notation}
I will start by fixing some basic notation for the foregoing discussion; in
doing so I will endeavour to avoid the uncontrolled profileration of subscripts ``8'' related
to $\mathrm{E}_8$ throughout the text, and stick to generic symbols instead (such as $\cG$ for the $\mathrm{E}_8$ Lie group,
$\mathfrak{g}$ for its Lie algebra, and so on). I wish to make clear from the outset
though that whilst many aspects of the discussion are general, the focus of
this section is on $\mathrm{E}_8$ alone; the attentive reader
will notice that some of its
properties, such as simply-lacedness, or triviality of the centre, are
implicitly assumed in the formulas to follow.

Let then $\mathfrak{g}\triangleq \mathfrak{e}_8$ denote the complex simple Lie algebra corresponding to
the Dynkin diagram of type $\mathrm{E}_8$ (Fig.~\ref{fig:dynke8}). I will write $\cG=\exp \mathfrak{g}$ for the
corresponding simply-connected complex Lie group, $\cT=\exp\mathfrak{h}$ for the
maximal torus (the exponential of the Cartan algebra $\mathfrak{h}\subset \mathfrak{g}$), and $\cW=N_\cT/\cT$ for the
Weyl group. I will also write $\Pi=\{\a_1, \dots, \a_8\}$ for the set of simple
roots (see e.g. \eqref{eq:roots}), and $\Delta$, $\Delta^*$, $\Delta^{(0)}$, $\Delta^\pm$ to indicate respectively the
full root system, the non-vanishing roots, the zero roots, and the negative/positive roots;
%I will fix once and for all a set 
%$\Pi=\{\a_1, \dots, \a_8\}$ of simple roots  and, 
the choice of splitting $\Delta^\pm$ determines accordingly Borel subgroups $\cB^\pm$
intersecting at $\cT$. Each Borel realises $\cG$ as a
disjoint union of double cosets $\cG=\cB^\pm \cW \cB^\pm=\coprod_{w\in \cW} \cB^\pm w
\cB^\pm=\coprod_{(w_+,w_-)\in \cW \times \cW} \l(\cB^+ w_+ \cB^+ \cap \cB^- w_- \cB^-\r)
=: \coprod_{(w_+,w_-)\in \cW \times \cW} \cC_{w_+, w_-}$,
the double Bruhat cells of $\cG$.  The Euclidean vector space $(\mathrm{span}_\bbR
\Pi; \bra, \ket)
\subset \mathfrak{h}^*$ is a vector subspace of $\mathfrak{h}^*$ with an
inner product structure $\bra \beta, \gamma\ket$ given by the dual of the Killing
form; in particular, $\bra \a_i, \a_j \ket \triangleq \mathscr{C}^{\mathfrak{g}}_{i,j}$ is the Cartan
matrix \eqref{eq:cartE8}. For a weight $\lambda$ in
the lattice
$\Lambda_w(\cG)\triangleq \{\lambda \in \mathfrak{h}^* | \bra \lambda, \alpha \ket \in \bbZ\}$, I will
write $\cW_\lambda=\mathrm{Stab}_\lambda \cW$ for the parabolic subgroup
stabilised by $\lambda$; the action of $\cW$ on weights is the restriction of
the coadjoint action on $\mathfrak{h}^*$; since $Z(\cG)=e$ in our case, the
weight lattice is isomorphic to the root lattice $\Lambda_r(\cG)=\bbZ\bra \Pi \ket \simeq \Lambda_w(\cG)$.
Corresponding to the choice of
$\Pi$, Chevalley generators
$\{(h_i\in \mathfrak{h}, e_{\pm i}\in {\rm Lie}(\cB^\pm)| i \in
\Pi \}$ for $\mathfrak{g}$ will be chosen satisfying
\bea
[h_i, h_j] &=& 0, \nn \\
\l[h_i,e_j\r] &=& \mathrm{sgn}(j) \delta_{i|j|} e_j, \nn \\
\l[e_i, e_{-i}\r] &=& \mathrm{sgn}(i) \mathscr{C}^{\mathfrak{g}}_{ij}h_j, \nn \\
(\mathrm{ad} e_i)^{1-\mathscr{C}^{\mathfrak{g}}_{ij}} e_j  &=& 0 \quad \mathrm{for} \quad i+j\neq 0.
\label{eq:chevalley}
\eea
Accordingly, the correponding time-$t$ flows on $\cG$ lead to Chevalley generators
$H_i(t)=\exp{t h_i }$, $E_i(t)=\exp{t e_i}$ for the Lie group.  Finally, I denote by $R(\cG)$ the
representation ring of $\cG$, namely the free abelian group of virtual
representations of $\cG$ (i.e. formal differences), with ring structure given by
the tensor product; this is a polynomial ring $\bbZ[\omega]$ over the integers with
generators given by the irreducible $\cG$-modules having $\omega_i\in \Lambda_w(G)$ as
their highest weights, where $\bra \omega_i, \alpha_j\ket =\delta_{ij}$.
\begin{figure}
\input{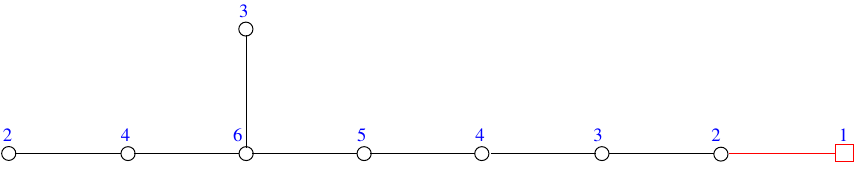_t}
\caption{The Dynkin diagrams of type $\mathrm{E}_8$ and, superimposed in red, type
  $\mathrm{E}_8^{(1)}$; roots are labelled following Dynkin's convention
  (left-to-right, bottom-to-top). The numbers in blue are the Dynkin labels for each vertex --
  for the non-affine roots, these are the components of the highest root in
  the $\alpha$-basis.}
\label{fig:dynke8}
\end{figure}

Most of the notions (and notation) above carries through to the setting of
the Kac--Moody group\footnote{It should be noticed that, while in
  \eqref{eq:chevalley} passing from $h_i$
  to $h'_i=\sum \mathscr{C}^{\mathfrak{g}}_{ij} h_j$ is an isomorphism of Lie algebras, the same is not
  true in the affine setting as the Cartan matrix is then degenerate. Our
  discussion below sticks to the Lie algebra relations as written in
  \eqref{eq:chevalley}, rather than their more common dualised form; in the
  affine setting, this substantial difference leads to the
  centrally coextended loop group instead of the more
  familiar central extension in Kac--Moody theory. In  \cite{Fock:2014ifa},
  this is stressed by employing the notation $G^\#$ for the co-extended group;
  as I make clear from the outset in \eqref{eq:chevalley} what side of the duality I
  am sitting on, I somewhat abuse notation and denote $\widehat{\cG}$ the
  resulting Poisson--Lie group.} $\widehat{\cG}=\exp{\mathfrak{g}^{(1)}}$ where
$\mathfrak{g}^{(1)}\simeq \mathfrak{g} \otimes \bbC[\lambda,\lambda^{-1}] \oplus \bbC c$ is the (necessarily
untwisted, for $\mathfrak{g}\simeq \mathfrak{e}_8$) affine Lie algebra corresponding to $\mathfrak{e_8}$. In this case
we adjoin the highest (affine) root $\a_0$ as in \eqref{eq:affroot}, leading to
the Dynkin diagram and Cartan matrix in Fig.~\ref{fig:dynke8} and
\eqref{eq:cartE81}. Elements $g \in \widehat{\cG}$ are linear $q$-differential polynomials
in the spectral parameter $\lambda$; namely, $g=M(\lambda)
q^{\lambda \de \lambda}$, with the pointwise multiplication rule leading to
\beq
g_1 g_2 = M_1(\lambda) M_2(q_1 \lambda) ~(q_1 q_2)^{\lambda \de \lambda}.
\eeq
The Chevalley generators for the simple Lie group
$\cG$ are then lifted to $\widehat{H_i(q)}\triangleq H_i(q) q^{\mathfrak{d}_i \lambda \de \lambda}$,
with $\mathfrak{d}_i$ the Dynkin labels as in Fig.~\ref{fig:dynke8}, and extended to
include $(H_0, E_0, E_{\bar 0})$ where
\beq
H_0(q)= q^{\lambda \de \lambda}, \quad E_0=\exp(\lambda e_0), \quad E_{\bar
  0}=\exp(e_{\bar 0}/\lambda)
\eeq
with $e_0 \in \mathrm{Lie}(\cB^+)$ and $e_{\bar 0}\in \mathrm{Lie}(\cB^-)$ the Lie
algebra generators corresponding to the highest (lowest) roots -- i.e. the
only non-vanishing iterated commutators of order $h(\mathfrak{g})=30$ of $e_i$ ($e_{\bar
  i}$), $i=1, \dots, 8$.

\subsection{Kinematics}

Consider now the 16-dimensional symplectic algebraic torus $$\cP\simeq \l(
(\bbC_x^\star)^8 \times (\bbC_y^\star)^8, \{,\}_\cG\r)$$ with Poisson bracket
\beq
\{x_i, y_j\}_\cG= \mathscr{C}^{\mathfrak{g}}_{ij} x_i y_j.
\eeq
Semi-simplicity of $\cG$ amounts to the non-degeneracy of the bracket, so that $\cP$ is symplectic. 

There is an injective morphism from $\cP$ to a distinguished Bruhat cell of $\cG$, as
follows. Notice first that $\cG$ carries an adjoint action by the Cartan torus
$\cT$ which obviously preserves the Borels, and therefore, descends to an
action on the double cosets of the Bruhat decomposition. Consider now Weyl
group elements $w_+=w_-=\bar w$ where $\bar w$ is the ordered product of the
eight simple reflections in $\cW$. The corresponding cell $\cP^{\rm
  Toda}\triangleq \cC_{\bar w, \bar w}
\subset \cG/\cT$ has dimension 16 \cite{Fock:2014ifa}, and it inherits a
symplectic structure from $\cG$, as I now describe. Recall that the latter carries a
Poisson structure given by the canonical Belavin--Drinfeld--Olive--Turok 
solution of the classical Yang--Baxter equation \cite{MR674005,Olive:1983mw}:
\beq
\{g_1 \stackrel{\otimes}{,} g_2\}_{\rm PL} = \frac{1}{2}\l[r, g_1 g_2 \r],
\label{eq:pl}
\eeq
with $r\in \mathfrak{g}\otimes \mathfrak{g}$ given by 
\beq
r=\sum_{i\in \Pi} h_i \otimes h_i + \sum_{\a \in \Delta^+} e_\alpha \otimes
e_{-\a}.
\label{eq:rmat}
\eeq
Since $\cT$ is a trivial Poisson submanifold, $\cP^{\rm Toda}$ inherits a
Poisson structure from the parent Poisson--Lie group. Consider now the (Lax) map 
\beq
\bary{rcrclcl}
L_{x, y} & : & \cP & \to & \cP^{\rm Toda} & & \\
& & (x,y) & \to & \prod_{i=1}^8 H_i(x_i) E_i H_i(y_i) E_{-i}
\eary
\label{eq:lax}
\eeq
Then the following proposition holds.
\begin{prop}[Fock--Goncharov, \cite{MR2263192}]
$L$ is an algebraic Poisson embedding into an open subset of $\cP^{\rm Toda}$.
\label{prop:fg}
\end{prop}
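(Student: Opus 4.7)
The plan is to establish three assertions implicit in the proposition: (i) $L(\cP)\subset \cP^{\rm Toda}$; (ii) $L$ is an algebraic embedding onto an open subset; and (iii) $L$ intertwines the bracket $\{,\}_\cG$ with the Poisson--Lie bracket descended from $\cG$ to $\cP^{\rm Toda}=\cC_{\bar w,\bar w}/\cT$.

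For (i), I would reduce to a rank-one computation: each factor $H_i(x_i) E_i H_i(y_i) E_{-i}$ lies in the $\mathrm{SL}_2$-subgroup generated by $(h_i,e_{\pm i})$, and an elementary Gauss decomposition in that $\mathrm{SL}_2$ shows it belongs to $\cT\cdot(\cB^+ s_i \cB^+ \cap \cB^- s_i \cB^-)$. Taking the ordered product along the reduced expression $\bar w=s_1\cdots s_8$ and using that lengths add along reduced words to give $\cB^\pm \bar w \cB^\pm = \overline{\cB^\pm s_1 \cB^\pm \cdots \cB^\pm s_8 \cB^\pm}$, one concludes that $L(x,y)\in \cT \cdot \cC_{\bar w,\bar w}$, which descends to a point of $\cP^{\rm Toda}$. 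For (ii), algebraicity is manifest from \eqref{eq:lax}; injectivity follows from the Fomin--Zelevinsky theory of double Bruhat cells, in which a reduced word for $(\bar w,\bar w)$ defines a torus chart whose natural coordinates are, up to renaming, precisely the factorization parameters $(x_i,y_i)$ and can be recovered as explicit generalized minors of $L_{x,y}$ in a fundamental representation. Since both source and target have dimension 16 and $L$ is of maximal rank on a dense open subset, $L$ is an open embedding.

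For (iii) I would expand the Sklyanin bracket \eqref{eq:pl} on the ansatz \eqref{eq:lax} via the Leibniz rule $\{g_1 g_2 \stackrel{\otimes}{,} g_3\}_{\rm PL} = (g_1 \otimes 1)\{g_2 \stackrel{\otimes}{,} g_3\}_{\rm PL} + \{g_1 \stackrel{\otimes}{,} g_3\}_{\rm PL}(g_2\otimes 1)$ and its right counterpart. The elementary pieces, computed from the $r$-matrix \eqref{eq:rmat}, all reduce to the Chevalley relations \eqref{eq:chevalley} together with $\mathrm{Ad}(H_i(x))\, e_j = x^{\mathscr{C}^{\mathfrak{g}}_{ij}} e_j$. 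Taking logarithmic derivatives in $x_i$ and $y_j$ and projecting to $\cP^{\rm Toda}$ should recover $\{x_i,y_j\}_\cG = \mathscr{C}^{\mathfrak{g}}_{ij} x_i y_j$; the vanishing of $\{x_i,x_j\}$ and $\{y_i,y_j\}$ is forced by the diagonality of the $\mathfrak{h}\otimes\mathfrak{h}$ block of $r$ together with the fact that each $E_{\pm i}$ occurs only once in the product \eqref{eq:lax}.

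The main obstacle will be the bookkeeping in this last step: propagating $E_i$ and $E_{-j}$ past the Cartan and nilpotent factors via the Serre relations generates a wealth of correction terms supported on non-simple roots, which must eventually cancel or be killed by the quotient by $\cT$. The cleanest resolution, and the one implicit in the Fock--Goncharov approach, is to identify $(x_i,y_i)$ with the cluster chart on $\cC_{\bar w,\bar w}/\cT$ associated with the interleaved double reduced word $(s_1\cdots s_8;s_1\cdots s_8)$, whose exchange matrix is precisely the Cartan matrix $\mathscr{C}^{\mathfrak{g}}$; since cluster coordinates are log-canonical for the Poisson--Lie bracket by the general theorem of Fock--Goncharov, the verification reduces to the purely combinatorial task of reading off the exchange matrix from the Dynkin diagram, which is immediate. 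All the cross-terms that had to be tracked in the direct approach are then absorbed into the construction of the cluster chart itself, yielding both injectivity and Poisson compatibility simultaneously.
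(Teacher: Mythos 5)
The paper offers no proof of this proposition---it is stated as a quoted result with a reference to Fock--Goncharov \cite{MR2263192}---so there is no in-paper argument to compare against; the comparison is against the cited source, and your reconstruction broadly aligns with the cluster-amalgamation strategy that \cite{MR2263192} develops. Your three-part split and the pivot in step (iii) away from direct $r$-matrix bookkeeping towards the log-canonicity of cluster $\mathcal{X}$-coordinates is exactly the conceptual move that makes this proposition easy once the machinery is available, and you correctly identify the Berenstein--Fomin--Zelevinsky generalized-minor inversion as the source of injectivity in step (ii).

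Two points deserve tightening. First, the assertion that the exchange matrix ``is precisely the Cartan matrix'' is a harmless but inaccurate shorthand: the cluster chart here has $16$ mutable coordinates $(x_1,\dots,x_8,y_1,\dots,y_8)$, and the log-canonical bracket $\{z_a,z_b\}=\epsilon_{ab}z_a z_b$ has a $16\times 16$ exchange matrix whose off-diagonal blocks are built from $\mathscr{C}^{\mathfrak{g}}$ (with the diagonal blocks vanishing), not the $8\times 8$ Cartan matrix itself. Establishing that the amalgamation of elementary seeds along the specific interleaved word $(1,-1,2,-2,\dots,8,-8)$ produces exactly this block-antisymmetric form---and, in particular, that no $\{x_i,x_j\}$ or $\{y_i,y_j\}$ cross-terms survive---is the nontrivial combinatorial content of the verification, which you call ``immediate''; it does follow from the Fock--Goncharov amalgamation rules but is word-dependent, so it is worth a sentence rather than a wave. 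Second, in step (i), the relation $\cB^\pm \bar w \cB^\pm=\overline{\cB^\pm s_1 \cB^\pm\cdots\cB^\pm s_8 \cB^\pm}$ is a closure statement; landing in the \emph{open} Bruhat cell $\cC_{\bar w,\bar w}$ rather than merely its closure requires genericity of the torus parameters, which holds on a dense open subset of $\cP$ and is consistent with the proposition asserting an embedding into an open subset of $\cP^{\rm Toda}$, but should be noted explicitly.
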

Similar considerations apply to the affine case. In $(\bbC^\star)^{18}\simeq
(\bbC_x^\star)^{9}\times (\bbC_y^\star)^{9}$ with exponentiated linear
co-ordinates $(x_0, x_1, \dots, x_8; y_0, y_1, \dots, y_8)$ and log-constant
Poisson bracket
\beq
\{x_i, y_j\}_{\widehat{\cG}}= \mathscr{C}^{\mathfrak{g}^{(1)}}_{ij} x_i y_j,
\label{eq:ppaff}
\eeq
consider the
hypersurface $\widehat{\cP}\triangleq \mathbb{V}(\prod_{i=0}^8 (x_i y_i)^{\mathfrak{d}_i}-1)$, where $\{\mathfrak{d}_i\}_i$ are
the Dynkin labels of Fig.~\ref{fig:dynke8}. Since
$\mathrm{Ker}\mathscr{C}^{\mathfrak{g}^{(1)}}=1$, $\widehat{\cP}$ is not symplectic anymore, unlike the
simple Lie group case above; in
particular, the regular function
\beq
\cO(\widehat{\cP}) \ni  \aleph \triangleq \prod_{i=0}^8 x_i^{\mathfrak{d}_i}=\prod_{i=0}^8 y_i^{-\mathfrak{d}_i}
\label{eq:aleph}
\eeq
is a Casimir of the bracket \eqref{eq:ppaff}, and it foliates $\widehat{\cP}$ symplectically. As before,
there is a double coset decomposition of $\widehat{\cG}$ indexed by
pairs of elements of the affine Weyl group $\widehat{\cW}$, and a distinguished
cell $\cC_{\bar w, \bar w}$ labelled by the element $\bar w$ corresponding to the longest cyclically irreducible word in the generators of
$\widehat{\cW}$. Projecting to trivial central (co)extension
\beq
\widehat{G} \ni g=M(\lambda) q^{\lambda \de \lambda} \stackrel{\pi}{\to} M(\lambda) \in \mathrm{Loop}(\cG)
\eeq
induces a Poisson structure on the projections of the cells $\cC_{w_+,w_-}$ (and
in particular $\cC_{\bar w, \bar w}$), as well as their quotients
$\cC_{w_+,w_-}/\mathrm{Ad} \cT $ by the adjoint
action of the Cartan torus,  upon lifting to the loop group the Poisson--Lie structure of the
non-dynamical r-matrix \eqref{eq:pl}. I will write $\widehat{\cP}^{\rm Toda}
\triangleq \pi(\cC_{\bar w, \bar w})/\mathrm{Ad}\cT$ for the resulting Poisson
manifold; and
we have now that \cite{Fock:2014ifa} $$\mathrm{dim}_\bbC\cP^{\rm Toda}=2~ 
\mathrm{length}(\bar w) -1=2 \times 9 -1 = 17.$$ Consider now the morphism
\beq
\bary{rcrclcl}
\widehat{L_{x, y}}(\lambda) & : & \widehat{\cP} & \to & \widehat{\cP}^{\rm Toda} & & \\
& & (x,y) & \to & \prod_{i=0}^8 \widehat{H_i}(x_i) \widehat{E_i} \widehat{H_i}(y_i) \widehat{E_{-i}}.
\eary
\label{eq:laxaff}
\eeq
It is instructive to work out explicitly the form
of the loop group element corresponding to $\widehat{L_{x, y}}$; we have 
\bea
\widehat{L_{x, y}}(\lambda) &=& \prod_{i=0}^8 \widehat{H_i}(x_i) \widehat{E_i}
\widehat{H_i}(y_i) \widehat{E_{-i}}\nn \\
&=& E_0(\lambda/y_0) E_{\bar 0}(\lambda) \l[\prod_{i=0}^8 (x_i
  y_i)^{\mathfrak{d}_i}\r]^{\lambda \rd \lambda}\prod_{i=1}^8 H_i(x_i) E_i
H_i(y_i) E_{-i} \nn \\
&=& E_0(\lambda/y_0) E_{\bar 0}(\lambda) \prod_{i=1}^8 H_i(x_i) E_i
H_i(y_i) E_{-i}
\label{eq:laxaff2}
\eea
where in moving from the first to the second line we have expanded $g \in
\widehat{G}$ as a linear $q-$differential operator and grouped together all
the multiplicative $q-$shifts, and then used that $\prod_{i=0}^8 (x_i
  y_i)^{\mathfrak{d}_i}=1$ on $\widehat{\cP}$, which gives indeed an element with trivial
  co-extension. The same line of reasoning of \cref{prop:fg} shows that
  $\widehat{L}$ is a Poisson monomorphism.

\subsection{Dynamics}

For functions $H_1, H_2 \in \cO(\widehat{\cP}^{\rm Toda})$, the Poisson
bracket \eqref{eq:pl} reads, explicitly,
\beq
\{H_1, H_2\}_{\rm PL}=-\frac{1}{2}\sum_{\a\in\Delta^+}\l[ L_{e_\alpha} H_1 R_{e_{-\alpha}}
H_2 - (1 \leftrightarrow 2)\r]
\eeq
where $L_X$ (resp. $R_X$) denotes the left (resp. right) invariant vector
field generated by $X\in T_e G\simeq \mathfrak{g}$. Then a complete system of
involutive Hamiltonians for \eqref{eq:pl} on $\cG$, and any Poisson Ad-invariant
submanifold such as $\cP^{\rm Toda}$, is given by Ad-invariant functions on
the group -- or equivalently, Weyl-invariant functions on $\cT$. This is a
subring of $\cO(\cP^{\rm Toda})$ generated by the regular fundamental
characters
\beq
H_i(g)=\chi_{\rho_i}(g), \quad i=1,\dots, 8
\label{eq:hi}
\eeq
where $\rho_i$ is the irreducible representation having the $i^{\rm th}$
fundamental weight $\omega_i$ as its highest weight. In the affine case the
same statements hold, with the addition of the central Casimir $\aleph$ in
\eqref{eq:aleph}. The Lax maps \eqref{eq:lax}, \eqref{eq:laxaff} then
pull-back this integrable dynamics to the respective tori $\cP$ and
$\widehat{\cP}$. Fixing a faithful representation $\rho \in R(\cG)$ (say, the
adjoint), the same dynamics on $\cP^{\rm Toda}$ and $\widehat{\cP}^{\rm Toda}$ takes the form of isospectral flows
\cite[Sec.~3.2-3.3] {MR1995460}
\bea
\frac{\de \rho(L)}{\de t_i} = \l\{\rho(L), H_i(L)\r\}_{\rm PL}= \l[\rho(L), (P_i(\rho(L)))_+\r]\\
\frac{\de \rho(\widehat{L})}{\de t_i} = \l\{\widehat{\rho(L)}, H_i(\widehat{L})\r\}_{\rm PL}= \l[\widehat{\rho(L)}, (P_i(\widehat{\rho(L)}))_+\r]
\label{eq:laxeq}
\eea
where $P_i\in \bbC[x]$ is the expression of the Weyl-invariant Laurent polynomial
$\chi_{\omega_i}\in \cO(\cT)^\cW$ in terms of power sums of the eigenvalues of 
$\rho(g)$, and $()_+:\cG\to \cB^+$ denotes the projection to the positive Borel.

\subsection{The spectral curve}
\label{sec:speccurve}
We henceforth consider the affine case only. Since \eqref{eq:laxeq} is isospectral, all functions of the spectrum
$\sigma(\rho(\widehat{L}))$ of $\rho(\widehat{L})$ are integrals of motion. A central
role in our discussion will be played by the spectral invariants constructed
out of elementary symmetric polynomials in the eigenvalues of $\widehat{L}$, for the case in
which $\rho= \mathfrak{g}$ is the adjoint representation, that is, is the minimal-dimensional
non-trivial irreducible representation of $\cG$. I write
\beq
\Xi_{\mathfrak{g}}(\mu,\lambda) \triangleq \det_{\mathfrak{g}} \l(\widehat{L}(\lambda)-\mu
\mathbf{1}\r)
\label{eq:xi}
\eeq
for the characteristic polynomial of $\widehat{L}$ in the adjoint, thought of
as a 2-parameter family of maps $\Xi_{\mathfrak{g}}(\mu, \lambda): \widehat{\cP}\to \bbC$. It
is clear by \eqref{eq:laxeq} that $\Xi_{\mathfrak{g}}(\mu,\lambda)$ is an integral of motion for all $(\mu,
\lambda)$, and so is therefore the plane curve in $\mathbb{A}^2$ given by its
vanishing locus $\mathbb{V}(\Xi_{\mathfrak{g}})$. \\

We will be interested in expanding out the flow invariant \eqref{eq:xi} as an
explicit polynomial function of the basic integrals of motion \eqref{eq:hi}. I will do
so in two steps: by determining the dependence of \eqref{eq:hi} on the
spectral parameter when $g=\widehat{L}(\lambda)$ in \eqref{eq:laxaff2} and
\eqref{eq:hi}, and by computing the dependence of $\Xi_{\mathfrak{g}}(\mu,\lambda)$ on the
basic invariants \eqref{eq:hi}. We have first the following
\begin{lem}
$H_i(\widehat{L})$, $i=1,\dots, 8$ are Laurent polynomials in $\lambda$, which
  are constant except for $i=3$. In particular, there exist functions $u_i \in \cO(\widehat{\cP})$ such that 
\beq
H_i(\widehat{L})=u_i(x,y)-\delta_{i,3}\l(\lambda'+\frac{\aleph^2}{\lambda'}\r)
\label{eq:ui}
\eeq
with $\de_{x_0}u_i(x,y)=\de_{y_0}u_i(x,y)=0$ and $\lambda'=\lambda y_0 \aleph^2$.
\label{lem:uispec}
\end{lem}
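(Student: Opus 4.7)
The argument splits naturally into three stages.

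First, the $\lambda$-dependence of \eqref{eq:laxaff2} is carried entirely by the outer exponential factors. Writing $L_{\mathrm{fin}}(x,y) \triangleq \prod_{i=1}^{8}H_i(x_i)E_iH_i(y_i)E_{-i}$, which is manifestly independent of $\lambda$, $x_0$, and $y_0$, one has $\widehat{L}(\lambda)=E_0(\lambda/y_0)\,E_{\bar 0}(\lambda)\,L_{\mathrm{fin}}(x,y)$. Since $e_0\in \mathfrak{g}_\theta$ and $e_{\bar 0}\in \mathfrak{g}_{-\theta}$ are extremal root vectors, their image in every finite-dimensional $\cG$-module $\rho_i$ is nilpotent, so $\rho_i(E_0(\lambda/y_0))$ is a matrix-valued polynomial in $\lambda$ and $\rho_i(E_{\bar 0}(\lambda))$ a matrix-valued polynomial in $1/\lambda$. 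The character $H_i(\widehat{L}(\lambda))=\chi_{\rho_i}(\widehat{L}(\lambda))$ is therefore a Laurent polynomial in $\lambda$, with bidegree bounded by the nilpotency indices of $\rho_i(e_{\pm\theta})$.

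Second, expanding the exponentials cyclically gives
\[
H_i(\widehat{L}(\lambda))=\sum_{k,m\geq 0}\frac{\lambda^{k-m}}{k!\,m!\,y_0^k}\,\text{tr}_{\rho_i}\bigl(\rho_i(e_0)^k\,\rho_i(e_{\bar 0})^m\,\rho_i(L_{\mathrm{fin}})\bigr).
\]
Since $\rho_i(e_0)^k\rho_i(e_{\bar 0})^m$ has definite $\mathfrak{h}^*$-weight $(k-m)\theta$, non-vanishing of its diagonal in the weight decomposition of $\rho_i$ requires simultaneously that $\rho_i$ admit $\theta$-strings of length $|k-m|+1$ and that $\rho_i(L_{\mathrm{fin}})$ intertwine the string endpoints through its prescribed word in simple Chevalley generators. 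A case-by-case analysis of the eight fundamental $\mathrm{E}_8$-weight systems -- most cleanly effected through the character-ring machinery of \cref{sec:charE8} -- establishes that these two conditions are jointly satisfied only in $\rho_3$, and there only for $k-m=\pm 1$. Heuristically, this singles out the branch node $\alpha_3$ of maximal Dynkin mark $\mathfrak{d}_3=6$, the unique node whose associated fundamental representation supports a sufficiently long $\theta$-string accessible via $L_{\mathrm{fin}}$. Consequently $H_i(\widehat{L})=u_i(x,y)$ is $\lambda$-independent for every $i\neq 3$.

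Third, for $i=3$ only the pairs $(k,m)\in\{(1,0),(0,1)\}$ contribute. The surviving traces $\text{tr}_{\rho_3}(\rho_3(e_0)\rho_3(L_{\mathrm{fin}}))$ and $\text{tr}_{\rho_3}(\rho_3(e_{\bar 0})\rho_3(L_{\mathrm{fin}}))$ are $\cT$-equivariant monomials in $(x,y)$, which by direct evaluation together with the Casimir constraint $\prod_{i=0}^{8}(x_iy_i)^{\mathfrak{d}_i}=1$ collapse to $-y_0\aleph^2$ and $-y_0^{-1}$ as the coefficients of $\lambda$ and $\lambda^{-1}$, respectively. Introducing $\lambda'=\lambda y_0\aleph^2$ reassembles the two contributions into the claimed form $-(\lambda'+\aleph^2/\lambda')$. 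The independence $\partial_{x_0}u_i=\partial_{y_0}u_i=0$ is then automatic: $L_{\mathrm{fin}}$ contains no $(x_0,y_0)$, and all $(x_0,y_0)$-dependence of $\widehat{L}$ -- carried by $E_0(\lambda/y_0)$ and the Casimir $\aleph$ -- has been absorbed into $\lambda'$ and $\aleph^2/\lambda'$.

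The main obstacle is the second stage: ruling out non-constant $\lambda$-dependence in the seven non-special characters demands a fine-grained weight-combinatorial argument across all eight fundamental $\mathrm{E}_8$-modules, most of them of enormous dimension, and is best executed through the character-ring computations of \cref{sec:charE8} rather than by ad-hoc weight counting.
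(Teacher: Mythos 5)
Your structural framework is sound and genuinely different from the paper's: you propose to work directly in each fundamental module $\rho_{\omega_i}$, exploiting that the $\lambda$-dependence of $\widehat{L}$ is concentrated in the nilpotent exponentials $\exp((\lambda/y_0)e_0)\exp(e_{\bar 0}/\lambda)$, whereas the paper never leaves the $248$-dimensional adjoint -- it computes power sums $\mathrm{tr}_{\mathfrak{g}}(\widehat{L}^k)$ there and extracts $\chi_{\rho_{\omega_i}}(\widehat{L})$ via the character relations \eqref{eq:char37} and \eqref{eq:p6}--\eqref{eq:p7}, which is the whole point of \cref{claim:E8}. This matters because your route requires evaluating traces such as $\mathrm{tr}_{\rho_{\omega_i}}(e_0^k e_{\bar 0}^m L_{\mathrm{fin}})$ in modules of dimension up to $\dim\rho_{\omega_4}\sim 7\times 10^9$, which is precisely what the paper's detour through the adjoint is designed to avoid; and the ``character-ring machinery of \cref{sec:charE8}'' that you invoke to close stage two does not produce weight-string data in individual $\rho_{\omega_i}$ -- it produces the polynomial identities $\chi_{\wedge^k\mathfrak{g}}=\mathfrak{p}_k(\theta)$ in $R(\cG)$, so it does not actually deliver what your argument needs in the form you need it.

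There are also two concrete gaps. First, you assert that the vanishing $\partial_{x_0}u_i=\partial_{y_0}u_i=0$ is ``automatic'' because $L_{\mathrm{fin}}$ is $(x_0,y_0)$-free; but in your own expansion the constant ($\lambda^0$) part of $H_i(\widehat{L})$ is $\sum_{k\geq 0}\frac{1}{(k!)^2 y_0^k}\mathrm{tr}_{\rho_{\omega_i}}(e_0^ke_{\bar 0}^kL_{\mathrm{fin}})$, which for $k\geq 1$ carries nontrivial powers $y_0^{-k}$. The vanishing of these $k\geq 1$ traces is an additional nontrivial statement, of exactly the same kind as the claim that the $k\neq m$ traces vanish for $i\neq 3$, and it is not addressed. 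Second, the assertion that for $i=3$ ``only the pairs $(k,m)\in\{(1,0),(0,1)\}$ contribute'' is never argued; even the weaker statement that all $k-m=\pm1$ terms assemble into $-y_0\aleph^2\lambda$ and $-(y_0\lambda)^{-1}$ respectively is not derived. The heuristic appeal to $\mathfrak{d}_3=6$ being the largest Dynkin mark is a suggestive observation but is not connected by any argument to the survival of precisely those $\theta$-strings in $\rho_{\omega_3}$.
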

\begin{proof}[Proof (sketch)]
The proof follows from a lengthy but straightforward calculation from
\eqref{eq:laxaff2}. Since we are looking at the adjoint representation, explicit matrix expressions for the Chevalley
generators \eqref{eq:chevalley} can be computed by systematically reading off
the structure constants in \eqref{eq:chevalley}, the full set of which for all
the $\dim \mathfrak{g}=248$ generators of the algebra is determined from the canonical assignment of
signs to so-called extra-special pairs of roots reflecting the ordering of simple
roots within $\Pi$ (see \cite{MR2047097} for details). The resulting
$248\times 248$ matrix in \eqref{eq:laxaff2}, with coefficients depending on
$(\lambda, x,y)$, is moderately sparse,
which allows to compute power sums of its eigenvalues efficiently. We can then
show
from a direct calculation that \eqref{eq:ui} holds for $i=3,\dots, 7$ the
relations in $R(\cG)$ 
\bea
\rho_{\omega_7} &=& \mathfrak{g}, \quad 
\rho_{\omega_6} = \wedge^2\mathfrak{g}\ominus \mathfrak{g},\quad
\rho_{\omega_5} = \wedge^3\mathfrak{g}\ominus \rho_{\omega_6} \oplus
\wedge^2\mathfrak{g}\ominus \mathfrak{g}\otimes \mathfrak{g} \nn \\
\rho_{\omega_4} &=& \wedge^4\mathfrak{g}\ominus \rho_{\omega_6} \otimes \mathfrak{g} \oplus
\rho_{\omega_6}, \quad
\rho_{\omega_3} = \wedge^5\mathfrak{g} \oplus \rho_{\omega_5}\otimes (\mathbf{1}
\ominus \mathfrak{g})\ominus \mathrm{Sym}^3 \mathfrak{g}
\label{eq:char37}
 \eea
which are an easy consequence of the decomposition into irreducibles of
$\wedge^n\mathfrak{g}$, $\mathrm{Sym}^n\mathfrak{g}$, and their tensor powers
for $n\leq 5$,\footnote{We used {\it Sage} for the decomposition of the
  plethysms, and {\it LieART} for that of the tensor powers.} and the use of
Newton identities relating power sum polynomials (i.e. traces of powers) to
elementary/complete symmetric polynomials (i.e. antisymmetric/symmetric
traces). A little more work is required to show that $u_i$ is constant for
$i=1,2,8$; this uses the more complicated character relations \eqref{eq:p6}-\eqref{eq:p7} of \cref{sec:charE8}. The final result is \eqref{eq:ui}.
%
%\beq
%\bary{rcl}
%H_7(g) & \triangleq & \mathrm{Tr}_\mathfrak{g}(g),  \quad  H_6(g) 
%=  \frac{1}{2} \l(\mathrm{Tr}(g)^2-\mathrm{Tr}(g^2)\r)-\mathrm{Tr}(g), \nn \\
%H_5(g) &=& \frac{1}{6} \l(\mathrm{Tr}_{\mathfrak{g}}(g)^3-6 \mathrm{Tr}_{\mathfrak{g}}(g)^2-3
%    (\mathrm{Tr}_{\mathfrak{g}}(g^2)-2) \mathrm{Tr}_{\mathfrak{g}}(g)+2 \mathrm{Tr}_{\mathfrak{g}}(g^3)\r)\nn \\
%H_4(g) &=& 
%1/24 (\mathrm{Tr}_{\mathfrak{g}}(g)^4-12 \mathrm{Tr}_{\mathfrak{g}}(g)^3-6 (\mathrm{Tr}_{\mathfrak{g}}(g^2)-6)
%      \mathrm{Tr}_{\mathfrak{g}}(g)^2+4 (3 \mathrm{Tr}_{\mathfrak{g}}(g^2)+2 \mathrm{Tr}_{\mathfrak{g}}(g^3)-6)
%            \mathrm{Tr}_{\mathfrak{g}}(g)\nn \\ &+& 3 (\mathrm{Tr}_{\mathfrak{g}}(g^2)^2-4 \mathrm{Tr}_{\mathfrak{g}}(g^2)-2
%                  \mathrm{Tr}_{\mathfrak{g}}(g^4))) \nn \\
%H_3(g) &=& 1/120 (\mathrm{Tr}_{\mathfrak{g}}(g)^5-20
%\mathrm{Tr}_{\mathfrak{g}}(g)^4-10 (\mathrm{Tr}_{\mathfrak{g}}(g^2)-12)
%\mathrm{Tr}_{\mathfrak{g}}(g)^3+20 (3
%\mathrm{Tr}_{\mathfrak{g}}(g^2)+\mathrm{Tr}_{\mathfrak{g}}(g^3)-12)
%\mathrm{Tr}_{\mathfrak{g}}(g)^2\nn \\ &+& 5 (3 \mathrm{Tr}_{\mathfrak{g}}(g^2)^2-24 \mathrm{Tr}_{\mathfrak{g}}(g^2)-8 \mathrm{Tr}_{\mathfrak{g}}(g^3)-6 \mathrm{Tr}_{\mathfrak{g}}(g^4)+24) \mathrm{Tr}_{\mathfrak{g}}(g)-20 \mathrm{Tr}_{\mathfrak{g}}(g^2) \mathrm{Tr}_{\mathfrak{g}}(g^3)+24 \mathrm{Tr}_{\mathfrak{g}}(g^5))
%\eary
%\eeq
%
\end{proof}
%Write $\cU$ for the image of $(H_1,\dots,H_8) : \widehat{\cP}^{\rm Toda} \to \bbC^8$
%This gives a family of algebraic curves
It is immediately seen from \eqref{eq:ui} that $u_i(x,y)$ are involutive, independent
integrals of motion; they are equal to the fundamental Hamiltonians
\eqref{eq:hi} for $i\neq 3$, and for $i=3$ they are a
$\bbC[\lambda,\lambda^{-1}]$ linear combination of $H_3$ and the Casimir
$\aleph$. Denote now by $\cU=\vec u(\widehat{\cP})\subset \bbC^8$ the image of
$\widehat{\cP}$ under the map $\vec u = (u_i)_i:\widehat{\cP}\to \bbC^8$ . It is clear
from \eqref{eq:xi} and \eqref{eq:ui} that $\Xi_{\mathfrak{g}} : \widehat{\cP} \to
\bbC[\lambda',\aleph^2 \lambda^{-1},\mu]$ factors through $\vec u$ and a map
$\mathfrak{p}=\sum_k(-)^k \mathfrak{p}_k : \cU \to \bbC[\lambda',\aleph^2 \lambda'^{-1},\mu] $ given by the decomposition
of the characteristic polynomial into fundamental characters:
\bea
\Xi_{\mathfrak{g}}(\lambda,\mu) &=& \sum_{k=0}^{248}(-\mu)^k \chi_{\wedge^{k}
  \mathfrak{g}}\widehat{L_{x, y}}(\lambda)  \nn \\
&=& \sum_{k=0}^{124} (-)^k \mathfrak{p}_k\l(u_1, u_2,
u_3+\l(\lambda'+\aleph^2/\lambda'\r), u_4, \dots, u_8\r) \l(\mu^k+\mu^{248-k}\r)
\label{eq:xi2}
\eea
where the reality of the adjoint representation has been used. Here $\mathfrak{p}_k$ is the polynomial relation of formal characters
\beq
\chi_{\wedge^k \mathfrak{g}} = \mathfrak{p}_k(\chi_{\omega_1}, \dots,
\chi_{\omega_8}) \in \bbZ[\chi_{\omega_1}, \dots,
\chi_{\omega_8}]\simeq R(\cG)
\label{eq:pk}
\eeq
evaluated at the group element $\widehat{L}$. For fixed $(u_i)_i\in \cU $ and
$\aleph\in \bbC$, the vanishing locus $\mathbb{V}(\Xi_\mathfrak{g})$ of the
characteristic polynomial is a complex algebraic curve in $\bbC^2$; I shall
write $\mathscr{B}_{\mathfrak{g}} \triangleq \cU \times \bbA^1$ for the variety of
parameters this polynomial will depend on. Even
though $\mathfrak{g}$ is irreducible, the curve $\mathbb{V}(\Xi_\mathfrak{g})$ is
reducible since $\Xi_\mathfrak{g}$ is. Indeed, conjugating $\widehat{L}$ to an
element $\exp{l} \in \cT$ in the Cartan torus, $l\in \mathfrak{h}$, we have
\bea
\Xi_{\mathfrak{g}}(\lambda, \mu) &=& \det_{\mathfrak{g}} \l(\widehat{L}-\mu
\mathbf{1}\r) = \prod_{\a \in \Delta}
\l(\exp(\a(l)) - \mu\r) \nn \\
&=& (\mu-1)^8
\prod_{\a \in \Delta_+}
\l(\exp(\a(l)) - \mu\r)\l(\exp(-\a(l)) - \mu\r)
\label{eq:xi3}
\eea
For a general representation $\rho$, we would obtain as many irreducible components
as the number of Weyl orbits in the weight system. When $\rho=\mathfrak{g}$,
and for this case alone, we have only one non-trivial orbit, as well as eight
trivial orbits corresponding to the zero roots. I will factor out the trivial
component corresponding to zero roots by writing $\Xi_{\mathfrak{g},\mathrm{red}}=\Xi_{\mathfrak{g}}/(\mu-1)^8$.

\begin{defn}
For $(u, \aleph) \in \mathscr{B}_{\mathfrak{g}}$, let $\Gamma_{(u, \aleph)}$ be
the normalisation of the projective closure of $\mathrm{Spec}
\frac{\bbC[\lambda,\mu]}{\bra \Xi_{\mathfrak{g},\mathrm{red}} \ket}$. 
We call the corresponding family of plane curves $\pi:\mathscr{S}_\mathfrak{g}
\to \cU\times \bbC$, 
\beq
\xymatrix{ \Gamma_{u, \aleph} \ar[d]  \ar@{^{(}->}[r]&
  \mathscr{S}_{\mathfrak{g}}\ar[d]^\pi\ar^{(\lambda,\mu)~}[r]  &  \bbP^1\times \bbP^1  \\
                     (u, \aleph)  \ar@{^{(}->}[r]^{\rm pt} 
  \ar@/^1pc/[u]^{P_{i}} &  
\mathscr{B}_{\mathfrak{g}} \ar@/^1pc/[u]^{\Sigma_i} & 
}
\label{eq:diagsc}
\eeq
the family of {\rm spectral curves} of
the $\widehat{\mathrm{E}}_8$ relativistic Toda chain in the adjoint
representation. In \eqref{eq:diagsc}, $P_i$ are the points added in the
compactification of $\bbV(\Xi_{\mathfrak{g},\mathrm{red}})$ (see
\cref{rmk:ptsinf,tab:ptsinf} below) and
$\Sigma_{i}$ are the sections marking them.
\label{def:sc}
\end{defn}

As is known in the more familiar setting of $\widehat{\cG}=\widehat{\rm sl}_N$, and as we
will discuss in \cref{sec:actangl}, spectral curves are a key ingredient in the integration of the Toda flows.
Knowledge of the spectral curves is encoded into knowledge of the 
character relations \eqref{eq:pk}, which grant access to the explicit form of
the polynomial $\Xi_{\mathfrak{g}, \rm red}$ to spectral curves for arbitrary moduli $(u, \aleph)$: the
description of the spectral curves is then reduced to the purely
representation-theoretic problem of determining these relations. 

In view of this, denote $\theta_\bullet \triangleq \chi_{\rho_{\omega_\bullet}}$, $\phi_\bullet
\triangleq \chi_{\wedge^\bullet \mathfrak{g}}$. What we are looking
for are explicit polynomials 
\beq
\mathfrak{p}_k(\theta) = \sum_{I\in M} n_{I,k} \prod_{j=1}^8 \theta_j^{d_j^{(I)}}
\label{eq:pkgen}
\eeq
where the index $I$ runs over a suitable finite set $M \ni (d_1^{(I)},
\dots,d_8^{(I)})$, $M \subset \bbN^8$, and $n_{I,k}\in \bbZ$. Since what we
are ultimately interested in is the reduced characteristic curve
$\Gamma_{u, \aleph}$, it suffices to compute $\{ n_{I,k}\}$ (and hence $\mathfrak{p}_k$) for $k\leq 120$.
\begin{claim}[\cite{E8comp}]
We determine  $\{ n_{I,k}\in \bbZ\}$ for all $I\in M$, $k\leq 120$.
\label{claim:E8}
\end{claim}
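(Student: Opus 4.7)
The plan is to reduce the claim to a calculation in the representation ring $R(\cG)\simeq \bbZ[\theta_1,\ldots,\theta_8]$ and to carry it out by a two-stage algorithm: first compute $\chi_{\wedge^k\mathfrak{g}}$ as a virtual character, then rewrite it in the basis of fundamental characters.

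For the first stage, the cleanest route proceeds via the Adams operations $\psi^n$ on $R(\cG)$. On the maximal torus one has the explicit Weyl-invariant Laurent polynomial $\psi^n(\chi_\mathfrak{g})(t)=8+\sum_{\alpha\in\Delta^*} t^{n\alpha}$, since the root system $\Delta^*$ of $\mathrm{E}_8$ is known. The Newton--Girard recursion then produces $\chi_{\wedge^k\mathfrak{g}}$ from $\psi^1,\ldots,\psi^k$ via
\[
k\,\chi_{\wedge^k\mathfrak{g}}=\sum_{i=1}^{k}(-1)^{i-1}\,\chi_{\wedge^{k-i}\mathfrak{g}}\,\psi^i(\chi_\mathfrak{g}).
\]
For the second stage, one expresses each virtual character in the polynomial basis $\{\theta_i\}$ by the classical ``peel off the highest weight'' procedure: locate a dominant weight $\lambda$ of maximal height in the support of the character, subtract $m_\lambda\,\chi_{V_\lambda}$ (with $m_\lambda$ the multiplicity, and $\chi_{V_\lambda}$ furnished by Weyl's character formula), and iterate. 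Since $R(\cG)$ is freely polynomial on $\theta_1,\ldots,\theta_8$, the procedure terminates and returns a canonical integer polynomial expression; multiplicities at each stage can be independently cross-checked against {\it LiE} or {\it Sage} decompositions of $\wedge^k\mathfrak{g}$ into irreducibles, and against the invariants $\dim \wedge^k\mathfrak{g} = \binom{248}{k}$ obtained by evaluation at the identity.

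The main obstacle, and the reason the detailed write-up is deferred to \cite{E8comp}, is the sheer scale of the intermediate data. The dimensions $\binom{248}{k}$ reach $\sim 10^{73}$ near $k=124$, and both the number of irreducible constituents of $\wedge^k\mathfrak{g}$ and the size of the integer coefficients in their expansions in the fundamental basis grow correspondingly; the polynomials $\mathfrak{p}_k$ in \eqref{eq:pkgen} have enormous index set $M$ and height. The computation is therefore infeasible on a single workstation and must be distributed across a cluster with careful management of memory and of exact-integer arithmetic. Two structural simplifications make it tractable: the Hodge involution $\wedge^k\mathfrak{g}\simeq \wedge^{248-k}\mathfrak{g}$ (using self-duality of the adjoint) halves the required range to $k\leq 124$, and the factorisation $\Xi_\mathfrak{g}=(\mu-1)^8\,\Xi_{\mathfrak{g},\mathrm{red}}$ arising from the $8$-dimensional zero weight space of $\mathfrak{g}$, together with the palindromic symmetry of $\Xi_{\mathfrak{g},\mathrm{red}}$ in $\mu$ inherited from \eqref{eq:xi3}, brings this down to $k\leq 120$ as in the statement.
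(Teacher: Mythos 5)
Your first stage (Newton--Girard recursion on the power sums $\psi^n(\chi_\mathfrak{g})$, evaluated on the torus) is essentially the same as the paper's: in \cref{sec:charE8} the $\phi_k$ and $\theta_i$ are indeed produced from Newton's identities applied to the power sums $\theta_7(\exp(kl))$. Your counting of why $k\le 120$ suffices (self-duality $\wedge^k\mathfrak{g}\simeq\wedge^{248-k}\mathfrak{g}$ plus the palindromic symmetry of $\Xi_{\mathfrak{g},\rm red}$ after pulling out $(\mu-1)^8$) is also correct and matches \eqref{eq:xi2}.

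The second stage, however, is exactly the ``direct approach'' the paper opens \cref{sec:charE8} with and then \emph{explicitly rejects as infeasible}: peeling off highest dominant weights via the Weyl character formula to produce an irreducible (or monomial-in-$\theta$) decomposition. The paper notes that the irreducible decomposition of $\wedge^k\mathfrak{g}$ is ``a daunting (and still unsolved)'' task and that even $\rho_{\omega_3}\otimes\rho_{\omega_3}$ requires hundreds of GB of RAM; more to the point, the monomials $\prod_j\theta_j^{d_j^{(I)}}$ that appear in \eqref{eq:pkgen} range up to $d^{(I)}=(23,13,9,11,14,19,29,17)$, with $\dim\simeq 1.25\times10^{96}$ by \eqref{eq:djdeg}, so the Laurent polynomials one would have to store and subtract term-by-term during a peel are hopelessly large. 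Appealing to distribution over a cluster does not repair this: the obstruction is the intrinsic size of the symbolic objects, not merely the wall-clock time.

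The actual computation in \cref{sec:charE8} sidesteps symbolic manipulation altogether. Instead of computing $\phi_k$ and the monomials as explicit Laurent polynomials, one \emph{evaluates} both sides of \eqref{eq:pkgen} at sample Cartan torus elements and solves the resulting linear system for the integers $n_{I,k}$. The two ideas that make this tractable are (i) sampling at points where all but $n$ of the $\theta_i$ vanish, which restricts the relation to a level set $M_n$ of the monomial index set, and (ii) further slicing $M_n$ by taking $\theta_j$-derivatives of prescribed order. This reduces the $|M|\times|M|$ Vandermonde-type system (with $|M|\approx 9.5\times10^5$) to many small subsystems of rank $\cO(10)$--$\cO(10^4)$ that can be solved by exact $p$-adic/Dixon methods and distributed trivially. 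None of that structure appears in your proposal, and it is precisely what constitutes the content of \cite{E8comp} beyond the naive reduction to representation theory.
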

This is the result of a series of computer-assisted calculations, of
independent interest and whose
details will appear elsewhere \cite{E8comp}, but for which I provide a fairly
comprehensive summary in \cref{sec:charE8}. For the sake of example, we obtain for the first few values of $k$,
\bea
\mathfrak{p}_6 &=& 
\label{eq:p6}
\theta_7 \theta_1^2-\theta_1^3-\theta_6 \theta_1^2-\theta_1^2+2 \theta_7^2 \theta_1+2 \theta_2 \theta_1-\theta_4 \theta_1+\theta_5 \theta_1-\theta_6
\theta_1+\theta_6 \theta_7 \theta_1-2 \theta_7 \theta_1-\theta_8 \theta_1-\theta_6^2\nn \\
&+& \theta_6 \theta_7^2-\theta_7^2-\theta_3+\theta_2 \theta_6+\theta_5 \theta_6+\theta_2 \theta_7+\theta_4
   \theta_7-2 \theta_6 \theta_7+\theta_2 \theta_8-\theta_6 \theta_8-\theta_7 \theta_8, \\
\mathfrak{p}_7 &=& 
\label{eq:p7}
\theta_7^4+2 \theta_1 \theta_7^3-4 \theta_7^3+\theta_1^2 \theta_7^2-6 \theta_1 \theta_7^2+2 \theta_2 \theta_7^2+2 \theta_5 \theta_7^2-2
\theta_6 \theta_7^2+\theta_7^2-2 \theta_1^3 \theta_7-\theta_1^2 \theta_7+4 \theta_1 \theta_7\nn \\
&+& 4 \theta_1 \theta_2 \theta_7 -\theta_3 \theta_7+\theta_4
   \theta_7+2 \theta_1 \theta_5 \theta_7-4 \theta_5 \theta_7+\theta_1 \theta_6 \theta_7+4 \theta_6 \theta_7-\theta_1 \theta_8 \theta_7-\theta_6 \theta_8
   \theta_7 \nn \\
&+&\theta_1^3+2 \theta_1^2+\theta_2^2+\theta_5^2+\theta_1 \theta_6^2+\theta_6^2+\theta_1 \theta_8^2+\theta_1-\theta_1 \theta_2-2 \theta_1
   \theta_3-\theta_3+\theta_1 \theta_4+\theta_4-2 \theta_1^2 \theta_5 \nn \\ &+& 2 \theta_2 \theta_5-\theta_5+2 \theta_1^2 \theta_6+3 \theta_1
   \theta_6-\theta_2 \theta_6+\theta_4 \theta_6-2 \theta_5 \theta_6+\theta_6-\theta_1^2 \theta_8-\theta_1 \theta_8+\theta_2 \theta_8  \nn \\ 
&-& 2 \theta_2 \theta_7-\theta_8 \theta_7+2 \theta_7-\theta_4 \theta_8-\theta_1 \theta_6
   \theta_8-3 \theta_1 \theta_5\\
\mathfrak{p}_8 &=& 
\theta_8-\theta_1^4-\theta_6 \theta_1^3+2 \theta_7^2 \theta_1^2+3 \theta_2 \theta_1^2-\theta_4 \theta_1^2+\theta_5 \theta_1^2+\theta_6
\theta_1^2-\theta_7 \theta_1^2-2 \theta_7 \theta_8 \theta_1^2-2 \theta_7^3 \theta_1+\theta_6^2 \theta_1\nn \\ &+& 3 \theta_6 \theta_7^2 \theta_1-3 \theta_3 \theta_1+2 \theta_4
   \theta_1-\theta_5 \theta_1+2 \theta_2 \theta_6 \theta_1+\theta_5 \theta_6 \theta_1+2 \theta_6 \theta_1+2 \theta_2 \theta_7 \theta_1+\theta_4 \theta_7 \theta_1 \nn \\ &+& 2 \theta_6 \theta_7 \theta_1+5 \theta_7 \theta_1+\theta_7^2 \theta_8 \theta_1+\theta_2 \theta_8 \theta_1-2 \theta_5 \theta_8 \theta_1-3 \theta_7 \theta_8
   \theta_1+\theta_8 \theta_1-2 \theta_7^4+\theta_6 \theta_7^3+\theta_7^3\nn \\ &-& 2 \theta_5^2+\theta_6^2-\theta_2
   \theta_7^2+2 \theta_4 \theta_7^2-4 \theta_5 \theta_7^2+3 \theta_7^2-\theta_8^2+\theta_2+\theta_3+\theta_2 \theta_4-\theta_4-2 \theta_2 \theta_5\nn \\ &-& 2 \theta_3
   \theta_6+ \theta_4 \theta_6-\theta_5 \theta_6+\theta_6^2 \theta_7+\theta_2 \theta_7-3 \theta_3 \theta_7+\theta_5 \theta_7+2 \theta_2 \theta_6
   \theta_7+\theta_5 \theta_6 \theta_7+\theta_6 \theta_7-2 \theta_7\nn \\ &-& \theta_2 \theta_8-3 \theta_3 \theta_8+2 \theta_4 \theta_8-3 \theta_5 \theta_8+2 \theta_6 \theta_8+3 \theta_2
   \theta_7 \theta_8+2 \theta_6 \theta_7 \theta_8+4 \theta_7 \theta_8,  \nn
   \\ 
&-& \theta_7^2 \theta_1+2
   \theta_4 \theta_5+2 \theta_5 -3
   \theta_5 \theta_7 \theta_1+\theta_8^3-\theta_2^2-4 \theta_7^2 \theta_8\\
\mathfrak{p}_9 &=& 
2 \theta_1^2 \theta_7^3-2 \theta_7^4-7 \theta_1 \theta_7^3-3 \theta_6 \theta_7^3+4 \theta_7^3-2 \theta_1^3 \theta_7^2-\theta_1^2
\theta_7^2+\theta_6^2 \theta_7^2+\theta_8^2 \theta_7^2+10 \theta_1 \theta_7^2+4 \theta_1 \theta_2 \theta_7^2\nn
\\ &-& 2 \theta_3 \theta_7^2+\theta_4
   \theta_7^2+2 \theta_1 \theta_5 \theta_7^2-5 \theta_5 \theta_7^2+2 \theta_1 \theta_6 \theta_7^2+2 \theta_6 \theta_7^2+\theta_6 \theta_8
   \theta_7^2-\theta_8 \theta_7^2-2 \theta_7^2+2 \theta_1^3 \theta_7+\theta_1^2 \theta_7\nn \\ &+& \theta_2^2 \theta_7+2 \theta_1 \theta_6^2 \theta_7+6 \theta_6^2 \theta_7+\theta_1
   \theta_8^2 \theta_7-2 \theta_8^2 \theta_7-\theta_1 \theta_7-3 \theta_1 \theta_2 \theta_7+5 \theta_2 \theta_7-4 \theta_1 \theta_3 \theta_7+2 \theta_3
   \theta_7\nn \\
&+&2 \theta_1 \theta_4 \theta_7
-2 \theta_1^2 \theta_5 \theta_7-6 \theta_1 \theta_5 \theta_7+2 \theta_2 \theta_5 \theta_7+5 \theta_5 \theta_7+\theta_1^2
   \theta_6 \theta_7+5 \theta_1 \theta_6 \theta_7+2 \theta_2 \theta_6 \theta_7+3 \theta_4 \theta_6 \theta_7\nn \\
&-& 4 \theta_5 \theta_6 \theta_7+4 \theta_6 \theta_7-2 \theta_1 \theta_8 \theta_7+\theta_4 \theta_8 \theta_7-2 \theta_5 \theta_8 \theta_7-2 \theta_1 \theta_6 \theta_8 \theta_7-\theta_6 \theta_8 \theta_7+\theta_8 \theta_7-\theta_1^3+2
   \theta_6^3\nn \\
&+& 2 \theta_4^2-\theta_5^2+2 \theta_1 \theta_6^2+\theta_2 \theta_6^2+3 \theta_6^2-\theta_1 \theta_8^2+\theta_2 \theta_8^2+\theta_5
   \theta_8^2+2 \theta_1 \theta_2+\theta_2+\theta_1^2 \theta_3+\theta_1 \theta_3-2 \theta_2 \theta_3\nn \\
&-& \theta_1 \theta_4-2 \theta_2
   \theta_4-\theta_4+\theta_1^2 \theta_5+2 \theta_1 \theta_5-2 \theta_2 \theta_5-3 \theta_3 \theta_5-\theta_1^2 \theta_6+\theta_1 \theta_6+\theta_1 \theta_2
   \theta_6+4 \theta_2 \theta_6-2 \theta_3 \theta_6\nn \\
&+& \theta_1 \theta_4 \theta_6-\theta_4 \theta_6-2 \theta_1 \theta_5 \theta_6+\theta_5 \theta_6+\theta_6-\theta_1^3 \theta_8+2 \theta_1
   \theta_2 \theta_8-\theta_3 \theta_8-2 \theta_1 \theta_4 \theta_8+\theta_4 \theta_8+\theta_1 \theta_5 \theta_8\nn \\ &-& \theta_1^2 \theta_6 \theta_8+\theta_2 \theta_6 \theta_8-\theta_5 \theta_6 \theta_8-\theta_1^2+\theta_1^2 \theta_4-4 \theta_2 \theta_7^2-\theta_4 \theta_7.
\label{eq:p9}
\eea

\begin{figure}[t]
\input{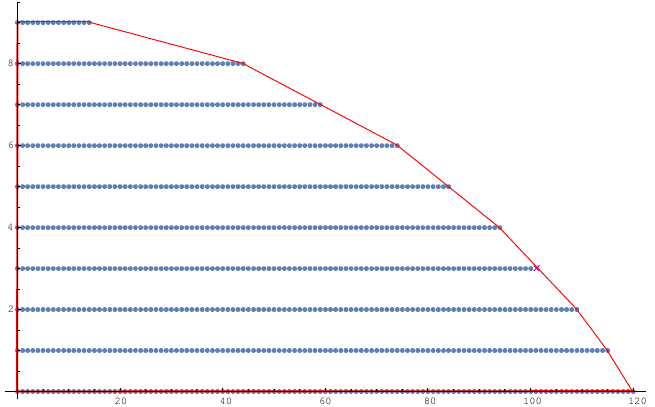_t}
\caption{The Newton polygon of $\Xi''_{\mathfrak{g}, \rm  red}$ (in
  red); blue spots depict monomials in $\Xi''_{\mathfrak{g}, \rm
    red}$ with non-zero coefficients; the purple cross marks the vanishing of the coefficient of
  $x^{101} y^{3}$ on the boundary of the polygon.}
\label{fig:npol}
\end{figure}

\subsubsection{Genus, ramification points, and points at infinity}

\label{sec:curvdet}

The curves $\Gamma_{u, \aleph}$ have two obvious involutions, coming from
the $\bbZ_2 \times \bbZ_2$ symmetry \eqref{eq:xi2} of the reduced characteristic polynomial
$\Xi_{\mathfrak{g}, \rm red}$,
\beq
\Xi_{\mathfrak{g}, \rm red}(\lambda', \mu)=\Xi_{\mathfrak{g}, \rm
  red}(\aleph/\lambda', \mu), \quad \Xi_{\mathfrak{g}, \rm red}(\lambda',
\mu)=\mu^{240} \Xi_{\mathfrak{g}, \rm red}(\lambda', 1/\mu).
\eeq
This realises $\Gamma_{u, \aleph} \stackrel{y}{\to} \Gamma'_{u,
  \aleph} \stackrel{x}{\to} \Gamma''_{u,
  \aleph}$, where $x=\mu+\mu^{-1}$, $y=\lambda+\aleph \lambda^{-1}$, as a branched fourfold cover of a curve $\Gamma''_{u,
  \aleph} \triangleq \{\Xi''_{\mathfrak{g}, \rm red}(y,x)=0\}$,
 so that
\beq
\Xi_{\mathfrak{g}, \rm red}(\lambda',\mu)=:\mu^{120}  \Xi''_{\mathfrak{g}, \rm
  red}\l(\lambda'+\frac{\aleph}{\lambda'}, \mu+\frac{1}{\mu}\r)
\eeq
We see from \eqref{eq:xi2} and \eqref{eq:djdeg} that $\deg_y\Xi''_{\mathfrak{g}, \rm
  red}(y,x)=9$, $\deg_x \Xi''_{\mathfrak{g}, \rm  red}=120$. The
Newton polygon of $\Xi''_{\mathfrak{g}, \rm  red}$ is depicted in
\cref{fig:npol}. By way of example, some of the simplest coefficients on the
boundary are given by:
\bea
\label{eq:y9}
[y^9]\Xi''_{\mathfrak{g}, \rm
  red} &=& (x+1)^3 (x+2) \left(-1+x+x^2\right)^5,  \\
\l\{\l[x^{\deg_x \l[y^i\r] \Xi''_{\mathfrak{g}, \rm  red}}\r] \Xi''_{\mathfrak{g}, \rm
  red}\r\}_{i=0}^8 &=& \{1,-1,-1,-3 u_7-5,1,2,1,-2,1\}.
%\Xi''_{\mathfrak{g}, \rm  red} &=&  1 +10 x+9 x^2-233 x^3-748
%x^4+1931 x^5+6 (u_7+1987) x^6+(60 u_7-1936) x^7\nn \\
%&+& (90 u_7-98558) x^8-2 (519 u_7+46312) x^9+(10
%   u_1-3982 u_7+491274) x^{10}+\cO\l(x^{11}\r)\nn \\
\label{eq:npolbdy}
\eea
%
%\begin{prop}

Let us now compute the genus of $\Gamma''_{u}$, $\Gamma'_u$ and $\Gamma_{u,\aleph}$. 

\begin{prop}
We have, for generic $(u, \aleph) \in \mathscr{B}_{\mathfrak{g}}$, 
\beq
g(\Gamma''_{u})=61, \quad g(\Gamma'_u)=128, \quad g(\Gamma_{u,\aleph})=495.
\label{eq:genred}
\eeq
\end{prop}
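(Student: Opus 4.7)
The natural route is to compute $g(\Gamma''_u)$ directly from its plane model and then propagate up the tower of two $\mathbb{Z}_2$-covers $\Gamma_{u,\aleph}\to \Gamma'_u\to\Gamma''_u$ by Riemann--Hurwitz. For the base of the tower I would invoke Baker's theorem: if the Laurent polynomial $\Xi''_{\mathfrak{g},\mathrm{red}}$ is non-degenerate with respect to its Newton polygon $\Delta$ (depicted in \cref{fig:npol}), then the geometric genus of the normalisation of the projective toric closure equals the number of interior lattice points of $\Delta$. The non-degeneracy assumption amounts to checking that, for every edge $e\subset\partial\Delta$, the truncated polynomial $\Xi''|_e$ has only simple roots in $(\bbC^\star)^2$ — this can be read off at a glance from the explicit boundary coefficients \eqref{eq:y9}--\eqref{eq:npolbdy} and from the analogous expressions for the remaining four edges, with the only slightly delicate point being the vanishing of $[x^{101}y^3]$ flagged in \cref{fig:npol}: I would verify that this accidental zero is absorbed by the adjacent edge truncation and hence does not create a singularity of the toric compactification at the corresponding boundary stratum. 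Once this is done, counting lattice points interior to the polygon yields $g(\Gamma''_u)=61$ for generic $(u,\aleph)$.

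For the intermediate step I would exploit the $\bbZ_2\times\bbZ_2$ symmetry of $\Xi_{\mathfrak{g},\mathrm{red}}$. The involution $\sigma_x:\mu\mapsto 1/\mu$ realises $\Gamma'_u\to\Gamma''_u$ as a double cover whose ramification divisor is supported on the fixed locus $\mu=\pm 1$, equivalently $x=\pm 2$ on $\Gamma''_u$. Since $\deg_y\Xi''_{\mathfrak{g},\mathrm{red}}=9$, the specialisations $\Xi''_{\mathfrak{g},\mathrm{red}}(y,\pm 2)\in \bbC[y]$ generically have nine roots each; subtracting the contributions coming from the toric boundary (points at infinity already forced to be ramified or resolving into several branches, as catalogued in the table of \cref{rmk:ptsinf,tab:ptsinf}) gives the effective number of branch points $R_x$. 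Riemann--Hurwitz then reads
\begin{equation}
2g(\Gamma'_u)-2 = 2\bigl(2g(\Gamma''_u)-2\bigr)+R_x,
\end{equation}
and matching the expected $g(\Gamma'_u)=128$ fixes $R_x=14$, consistent with the count after discarding the ramification absorbed at infinity.

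The top layer is analogous: the second involution $\sigma_y:\lambda'\mapsto \aleph/\lambda'$ presents $\Gamma_{u,\aleph}\to\Gamma'_u$ as a double cover ramified over the divisor cut out by $\lambda'^2=\aleph$, i.e.\ $y=\pm 2\sqrt{\aleph}$. Since $\deg_\mu\Xi_{\mathfrak{g},\mathrm{red}}=240$, the fibre of $\Gamma'_u\to \bbA^1_y$ over a generic $y$ carries $240$ points, and $y=\pm 2\sqrt{\aleph}$ contributes $R_y=480$ ramification points modulo boundary corrections. A second application of Riemann--Hurwitz
\begin{equation}
2g(\Gamma_{u,\aleph})-2=2\bigl(2g(\Gamma'_u)-2\bigr)+R_y
\end{equation}
then delivers $g(\Gamma_{u,\aleph})=495$.

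The conceptual core is straightforward --- a Newton-polygon computation followed by two Riemann--Hurwitz formulas --- so I expect the genuine labour to lie in the non-degeneracy check for $\Xi''_{\mathfrak{g},\mathrm{red}}$ and in the exact bookkeeping of ramification points at the toric boundary. The accidental zero of the coefficient of $x^{101}y^3$ highlighted in \cref{fig:npol} signals that the Newton polygon underestimates the true ``valuation polygon'' along one edge, so one must either shift to a finer polyhedral decomposition or explicitly resolve the resulting toric singularity to confirm that the naive interior-point count still gives the geometric genus. This, combined with the need to identify which intersections with $\{x=\pm 2\}$ and $\{y^2=\aleph\}$ survive normalisation, is where the real care is required; once past it, the arithmetic is forced.
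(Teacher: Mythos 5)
The Riemann--Hurwitz framework for the two $\bbZ_2$-covers is fine and the bookkeeping at $x=\pm 2$ and $y=\pm 2\sqrt{\aleph}$ does match (e.g.\ your $R_x=14$ is $9$ ramification points over $x=2$ plus $5$ over $x=-2$, one of which is the point $P_1''$ at infinity, consistent with \cref{rmk:ptsinf}). But there is a genuine gap at the base of the tower, and it is not the one you flag.

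Baker's theorem equates the geometric genus of the normalisation with the count of interior lattice points of the Newton polygon only when the curve is \emph{non-degenerate with respect to the polygon}, which requires two things: that each edge truncation has simple roots (the condition you discuss), \emph{and} that the affine curve in $(\bbC^\star)^2$ is smooth. You address the first but not the second, and the second fails dramatically here. The paper's explicit computation of the $y$-discriminant, $\mathrm{Discr}_y\,\Xi''_{\mathfrak{g},\mathrm{red}}=(x+2)^4\,\Delta_1(x)\,\Delta_2(x)^2\,\Delta_3(x)^2$ with $\deg\Delta_2=215$ and $\deg\Delta_3=392$, exhibits $607$ ordinary double points of the plane model in the torus. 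The interior-lattice-point count therefore returns the arithmetic genus of the toric closure, which overshoots $g(\Gamma''_u)=61$ by roughly the number of nodes; Baker applied naively would output a number near $668$, not $61$. The accidental vanishing of $[x^{101}y^3]$ is a red herring: the danger is in the interior, not on the boundary. One cannot patch this without knowing the node count, and determining that count requires exactly the discriminant analysis that the paper performs. In other words, there is no shortcut around computing $\mathrm{Discr}_y\,\Xi''$ and separating the factors corresponding to honest branch points ($\Delta_1$ and $x=-2$, plus one point at infinity) from those corresponding to nodes ($\Delta_2$, $\Delta_3$); once you have that, Riemann--Hurwitz on the degree-$9$ projection $x:\Gamma''_u\to\bbP^1$ gives $g(\Gamma''_u)=1-9+\tfrac{1}{2}(133+4+1)=61$ directly, and the Newton polygon plays no role.
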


\begin{proof}
Since \cref{lem:uispec,claim:E8} determine the polynomial $\Xi''_{\mathfrak{g}, \rm red}$
completely, the calculation of the genus can be turned
into an explicit calculation of discriminants of $\Xi''_{\mathfrak{g}, \rm
  red}$; and because $\deg_y
\Xi''_{\mathfrak{g}, \rm red} \ll \deg_x
\Xi''_{\mathfrak{g}, \rm red}$, it is much easier to  start from the
$y$-discriminant. This is computed to be
\beq
\mathrm{Discr}_y \Xi''_{\mathfrak{g}, \rm red}=  (x+2)^4
\Delta_1(x) \Delta_2 (x)^2 \Delta_3(x)^2
\label{eq:discry}
\eeq
where $\deg \Delta_1= 133$, $\deg \Delta_2= 215$ and $\deg \Delta_3=392$. Call
$r_i^k$, $i=1, 2, 3$, $k=1,\dots, \deg\Delta_i$ the roots of $\Delta_i$. We 
can verify directly by substitution into $\Xi''_{\mathfrak{g},
  \rm red}$ that the roots $x=r_2^{k}$ and $x=r_3^k$ correspond to images on the
$x$-line of exactly one point with $\de_y \Xi''_{\mathfrak{g}}=0$,
which is always an ordinary double point. Similarly, we get that the roots $x=-2$ and $x=r_1^{k}$ correspond
in all cases to degree 2 ramification points; there are four of them lying
over $x=-2$. On the desingularised projective
curve $\Gamma''_{u}$, the nodes are resolved into pairs
of unramified points; and Puiseux expansions of $\Xi''_{\mathfrak{g},
  \rm red}$ at infinity  show that we have one extra point with degree 2 ramification
above $x=\infty$ (see below). By Riemann--Hurwitz, this gives
\beq
g(\Gamma''_{u})=1-\deg_y \Xi''_{\mathfrak{g},
  \rm red} +\frac{1}{2}\sum_{P | \rd x(P)=0} e_x(P)=1-9+\frac{133+1+4}{2}=61.
%\label{eq:genred}
\eeq
The genera of the branched double covers $x:\Gamma'_{u}\to \Gamma''_{u}$,
$y: \Gamma_{u,\aleph}\to \Gamma'_{u}$ follow from an elementary Riemann--Hurwitz calculation.
%These are, for generic $\aleph \in \bbC$,
%
%\beq
%g(\Gamma'_{u})=128
%=1+ 2 \times (61-1) + \overbrace{\frac{5}{2}^{x=\mu+\mu^{-1}=-2}} +\overbrace{\frac{9}{2}^{x=\mu+\mu^{-1}=2}}
%, \quad g(\Gamma_{u,\aleph})=495.
%=1+ 2 \times (128-1) + \overbrace{\frac{240}{2}^{y=\lambda+\aleph \lambda^{-1}=2\sqrt{\aleph}}} +\overbrace{\frac{9}{2}^{x=\mu+\mu^{-1}=-2\sqrt{\aleph}}}
%\label{eq:gen2}
%\eeq
%
\end{proof}

\begin{rmk}
It can readily be deduced from \eqref{eq:y9} that the smooth completion $\Gamma''_{u}$ is obtained
topologically by adding 12 points at infinity $P''_i$; their relevant properties are shown in
\cref{tab:ptsinf}. Their pre-images in $\Gamma'_u$ and $\Gamma_{u,\aleph}$
will be labelled $P'_{k}$ and $P_{j}$ respectively, $k=1,\dots, 23$ (notice
that $P_1''$ is a branch point of $x:\Gamma'_u\to \Gamma''_u$), $j=1,\dots, 46$.
\label{rmk:ptsinf}
\end{rmk}

\begin{table}[!h]
\begin{tabular}{|c|c|c|c|}
\hline
$i$ & $x(P''_i)$ & $e_y(P''_i)$ & $-\mathrm{ord} y_{P''_i}$ \\
\hline
1 & $-2$ & 1 & 1 \\
\hline
2 & $-1$ & 1 & 3 \\
\hline
3 & $-\phi$ & 1 & 5 \\
\hline
4 & $\phi^{-1}$ & 1 & 5 \\
\hline
5 & $\infty$ & 1 & 5 \\
\hline
6 &  $\infty$ & 1 & 6 \\
\hline
7 &  $\infty$ & 1 & 10 \\
\hline
8 &  $\infty$ & 1 & 10 \\
\hline
9 &  $\infty$ & 2 & 15 \\
\hline
10 &  $\infty$ & 1 & 15 \\
\hline
11 &  $\infty$ & 1 & 15 \\
\hline
12 &  $\infty$ & 1 & 30 \\
\hline
\end{tabular}

\caption{Points at infinity in $\Gamma''_{u}$. I indicate the value of their $x$-projection, their degree of
  ramification in $y$, and the order of the poles of
  $y$ in the second, third, and fourth column respectively. Here
  $\phi=\frac{\sqrt{5}+1}{2}$ is the golden ratio.}
\label{tab:ptsinf}
\end{table}

\subsection{Spectral vs parabolic vs cameral cover}
\label{sec:cameral}

The construction of $\Gamma_{u,\aleph}$ as the
non-trivial irreducible component of the vanishing locus of  \eqref{eq:xi}-\eqref{eq:xi3} realises it as a ``curve
of eigenvalues'': it is a branched cover of the space of spectral parameters
$\lambda \in \bbP^1\setminus \{0,\infty\}$ of the Lax matrix $\widehat{L_{x, y}}(\lambda)$; the
fibre over a $\lambda$-unramified point is given by the eigenvalues $\mu_\a$ of
$\widehat{L_{x, y}}(\lambda)$ that are different from 1. By
\eqref{eq:xi3}, each sheet $\mu_\a$ is labeled by a non-trivial root $\a \in \Delta^*$, and there
is an action of the Weyl group $\cW$ on $\Gamma_{u,\aleph}$ given by the
interchange of sheets corresponding to the 
Coxeter action of $\cW$ on the root space $\Delta$. 

Away from the ramification locus, this structure can be understood as
follows. Let $$\cG^{\rm red}=\{g \in \cG | \dim_\bbC C_\cG(g)=\mathrm{rank~} \cG=8\}$$ be the Zariski open set of regular
elements of $\cG$; I'll similarly append a superscript $\cT^{\rm red}$ for the
regular elements of $\cT$. Then the projection 
\bea
\pi : \cG/\cT \times  \cT^{\rm red} & \to & \cG^{\rm red}\nn \\
(g\cT, t) & \to & \mathrm{Ad}_g t 
\label{eq:adjbun}
\eea
is a principal $\cW$-bundle on $\cG^{\rm red}$, the fibre over a regular
element $g'$ being $N_\cT/\cT\simeq \cW$. We can pull this back via $\widehat{L_{x, 
    y}}$ to a $\cW$-bundle 
$$\Theta_{x, y}\triangleq\widehat{L_{x, 
    y}}^*(\cG/\cT \times  \cT^{\rm red})$$
over $\bbP^1\setminus D$, where $D=\widehat{L_{x, y}}^{-1}(\cG
\setminus \cG^{\rm red})$. This is a regular $\cW$-cover and each weight
$\omega\in \Lambda_w(\cG)$ determines a subcover $\Theta^\omega_{x, y} \simeq
\Theta_{x, y}/\cW_\omega$, where we quotient by the action of the stabiliser of $\omega$
by deck transformations. Write $\overline{\Theta_{x, y}}$ and
$\overline{\Theta^\omega_{x, y}}$ for the pull-back to
$\bbC^{\star}\simeq \bbP^1\setminus \{0,\infty\}$ of the closure of
\eqref{eq:adjbun} in $\cG/\cT \times  \cT \to \cG$. As in \cite{MR1397059}, we call
$\overline{\Theta_{x, y}}$ (resp. $\overline{\Theta^\omega_{x,
    y}}$) the {\it cameral} (resp. the $\omega${\it-parabolic}) cover
associated to $\widehat{L}_{x, y}$. 

Notice that when
$\omega=\omega_7=\alpha_0$ is the highest weight of the adjoint
representation, i.e. the highest (affine) root $\alpha_0$,
$\cW/\cW_{\alpha_0}$ is set-theoretically the root system of $\mathfrak{g}$, minus
the set of zero roots; the residual $\cW$ action is just the restriction to
$\Delta$ of the Coxeter action on $\mathfrak{h}^*$. In particular, we have
that $\overline{\Theta^\omega_{x, y}}$ is a degree
$|W/\cW_{\alpha_0}|=|\mathrm{Weyl}(\mathfrak{e}_8)/\mathrm{Weyl}(\mathfrak{e}_7)|=\frac{696729600}{2903040}=240$ branched cover of $\bbP^1$, with sheets
labelled by non-zero roots $\alpha\in \Delta^*$.
\begin{prop}
There is a birational map $\iota: \Gamma_{u, \aleph}  \dashrightarrow 
\overline{\Theta^{\omega_7}_{x, y}}$ given by an isomorphism 
\bea
\iota: \Gamma_{u, \aleph}\setminus \{\rd \mu=0\} &
\stackrel{\sim}{\rightarrow} & \Theta^{\omega_7}_{x, y} \nn \\
(\lambda, \mu_\alpha(\lambda)) & \to &  (\lambda, \alpha) 
\eea
away from the ramification locus of the $\lambda$-projection.
\label{prop:isopar}
\end{prop}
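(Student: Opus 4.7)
The plan is to exhibit $\iota$ explicitly on the open locus complementary to the $\lambda$-ramification divisor, show it is an algebraic isomorphism there, and then invoke the normality of $\Gamma_{u,\aleph}$ and its compactness to promote this to a birational equivalence of the global completions.

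First, I unpack the definition of $\overline{\Theta^{\omega_7}_{x,y}}$. Over the open set $U=\bbP^1\setminus D \subset \bbC^\star$ on which $\widehat{L_{x,y}}(\lambda)\in \cG^{\rm red}$, the cameral cover $\Theta_{x,y}$ is by construction the fibred product $\{(\lambda, g\cT, t)\in U\times \cG/\cT\times \cT^{\rm red}\mid \mathrm{Ad}_g t = \widehat{L_{x,y}}(\lambda)\}$, a principal $\cW$-bundle on $U$; the parabolic quotient $\Theta^{\omega_7}_{x,y}=\Theta_{x,y}/\cW_{\omega_7}$ then has fibre $\cW/\cW_{\omega_7}$ at each $\lambda\in U$. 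Since $\omega_7$ is the highest root $\alpha_0$ of $\mathfrak{e}_8$ (which, being simply-laced, has a single Weyl orbit in $\Delta^*$), the assignment $w\cW_{\omega_7}\mapsto w\cdot\omega_7$ establishes a canonical set-theoretic identification of the fibre with $\Delta^*$.

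Next I construct $\iota$. For $(\lambda,g\cT,t)$ representing a point of $\Theta^{\omega_7}_{x,y}$, the value $\mu:=\mathrm{e}^{w(\omega_7)(l)}$, where $\exp l = t$ and $w\in \cW$ is any representative of the coset, is well defined on $\cW_{\omega_7}$-orbits and yields an eigenvalue of $\widehat{L_{x,y}}(\lambda)$ in the adjoint by \eqref{eq:xi3}. Declaring $\iota^{-1}(\lambda, w\cW_{\omega_7})=(\lambda,\mu)$ gives a morphism $\iota^{-1}: \Theta^{\omega_7}_{x,y}\to \bbV(\Xi_{\mathfrak{g},\rm red})\subset (\bbC^\star)^2$ whose image coincides with the smooth locus of $\Gamma_{u,\aleph}$ away from $\{\rd\mu=0\}$, by \eqref{eq:xi3} applied to the reduced polynomial. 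In the opposite direction, any eigenvalue $\mu\ne 1$ of $\widehat{L_{x,y}}(\lambda)$ is of the form $\mathrm{e}^{\alpha(l)}$ for a unique nonzero root $\alpha$ whenever the Weyl orbit is free (i.e.\ away from walls of the Cartan alcove containing $l$), giving the inverse assignment $(\lambda,\mu)\mapsto (\lambda, \alpha)$; this is the map $\iota$ of the statement. Bijectivity on the complement of the ramification divisor for $\lambda$ is precisely the statement that over such $\lambda$, the $240$ non-trivial adjoint eigenvalues $\{\mathrm{e}^{\alpha(l(\lambda))}\}_{\alpha\in\Delta^*}$ are pairwise distinct and index the fibre of the parabolic cover.

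Regularity of both directions on this open locus follows from algebraicity of the constructions and standard properties of normal covers (so the map, defined away from a codimension one subset, extends across codimension $\ge 2$). To promote the isomorphism to a birational map between the smooth projective models, I observe that $\Gamma_{u,\aleph}$ is by definition the normalisation of the projective closure of $\bbV(\Xi_{\mathfrak{g},\rm red})$, and that $\overline{\Theta^{\omega_7}_{x,y}}$ is likewise a normal compactification: on a common dense open set they are isomorphic via $\iota$, so by normality they are birational, with the indeterminacy locus of $\iota$ contained in the locus $\{\rd\mu=0\}$ corresponding to coincident eigenvalues, i.e.\ to the ramification divisor of $\lambda:\Gamma_{u,\aleph}\to\bbP^1$.

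The main technical point is not the definition of the map---which is essentially tautological once one unwinds the fibred-product construction of $\Theta_{x,y}$---but rather the verification that the pairwise distinctness of eigenvalues fails exactly on the ramification locus of $\lambda$ and nowhere else. For generic moduli $(u,\aleph)\in \mathscr{B}_\mathfrak{g}$, this follows from regularity of $\widehat{L_{x,y}}(\lambda)$ on $U$, which is in turn guaranteed by \eqref{eq:discry} and the explicit analysis of the discriminant of $\Xi''_{\mathfrak{g},\rm red}$ carried out in the proof of the preceding proposition.
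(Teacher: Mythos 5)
Your proof is correct, and it supplies precisely the content that the paper's one-line proof outsources to a citation: the paper simply invokes \cite[Thm.~13]{MR1182413}, whereas you unpack the fibred-product definition of the cameral cover, identify the fibre of the $\omega_7$-parabolic quotient with $\Delta^*$ via $w\cW_{\omega_7}\mapsto w(\omega_7)$ (legitimate because $\omega_7=\alpha_0$ and, $\mathfrak{e}_8$ being simply-laced, the Weyl orbit of $\alpha_0$ is exactly $\Delta^*$), send $(\lambda, w\cW_{\alpha_0})$ to the eigenvalue $\mu=\re^{w(\alpha_0)(l)}$ using \eqref{eq:xi3}, and observe that the assignment is bijective exactly where the $240$ nontrivial eigenvalues $\{\re^{\alpha(l(\lambda))}\}_{\alpha\in\Delta^*}$ are pairwise distinct, i.e.\ where $\widehat{L_{x,y}}(\lambda)$ is regular, which is the complement of the $\lambda$-ramification locus. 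That is in substance the argument the cited theorem would give here, and the reasoning is sound.

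One small piece of tidying: the parenthetical ``extends across codimension $\geq 2$'' does no work for curves, where such a locus is empty; you should instead invoke directly that $\Gamma_{u,\aleph}$ is a normal projective curve by \cref{def:sc} and that a biregular isomorphism over a dense open subset of two curves gives a birational equivalence of their projective completions (and, after normalisation of $\overline{\Theta^{\omega_7}_{x,y}}$, an actual isomorphism). The conclusion you state is unaffected; only the intermediate justification should be rephrased.
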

\begin{proof}
The proof is nearly verbatim the same as that of \cite[Thm.~13]{MR1182413}.
\end{proof}

From the proposition, we learn that a possible source of ramification
$\lambda: \Gamma_{u, \aleph} \to \bbP^1$ comes from the spectral values
$\lambda$ such that $\widehat{L}_{x, y}(\lambda)$ is an irregular
element of $\cG$; and from \eqref{eq:xi3}, we see that this happens if and only if $\alpha(l)=0$
for some $\alpha \in \Delta$. 

\begin{prop}
For generic $(u, \aleph)$, there are exactly 18 values of $\lambda$, 
\beq
b_i^{\pm}\triangleq\lambda(Q_i^{\pm}), \quad i=1, \dots, 9,
\eeq
such that $\widehat{L}_{x, y}(\lambda)$ is irregular, i.e. $\alpha(\log
\widehat{L}_{x, y}(\lambda))=0$ for some $\alpha\in\Delta$. Furthermore,  $\alpha\in \Pi$ is a simple root in
each of these cases. 
\label{prop:br}
\end{prop}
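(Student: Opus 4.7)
The plan is to count irregular $\lambda$-values through the explicit structure of $\Xi''_{\mathfrak{g},\mathrm{red}}$ and identify the vanishing root by matching the count against the simple roots of $\mathfrak{e}_8^{(1)}$.

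By \cref{prop:isopar} and the factorisation \eqref{eq:xi3}, a spectral value $\lambda$ is irregular precisely when
$$\Xi_{\mathfrak{g},\mathrm{red}}(\lambda,1) = \prod_{\alpha\in \Delta^*}\l(1-e^{\alpha(l(\lambda))}\r) = 0,$$
with $l(\lambda) \in \mathfrak{h}$ a Cartan representative of the conjugacy class of $\widehat{L}_{x,y}(\lambda)$. I first use the $\bbZ_2\times\bbZ_2$ symmetry of $\Xi_{\mathfrak{g},\mathrm{red}}$ displayed in \eqref{eq:xi2} to push this condition down to the equation $\Xi''_{\mathfrak{g},\mathrm{red}}(y,2)=0$, where $y = \lambda + \aleph/\lambda$. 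From \eqref{eq:y9}, the leading $y^9$-coefficient of $\Xi''_{\mathfrak{g},\mathrm{red}}$ at $x=2$ equals $3^3\cdot 4\cdot 5^5\neq 0$, so $\Xi''_{\mathfrak{g},\mathrm{red}}(y,2)$ is a polynomial of exact degree $9$ in $y$. A specialisation of $u$ to a convenient explicit point (say, one on the Kostant section) shows that its $y$-discriminant does not vanish identically on $\mathscr{B}_\mathfrak{g}$. Hence for generic $(u,\aleph)$ there are $9$ distinct roots $y_1,\dots,y_9$, each lifting via $\lambda^2 - y_i\lambda + \aleph = 0$ to $2$ values $b_i^\pm$, for the claimed $18$ irregular $\lambda$-values.

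Next I identify the vanishing root. For generic $u$, at each $b_i^\pm$ only a single root $\alpha\in\Delta$ vanishes on $l(\lambda)$, since simultaneous vanishing of two roots is codimension $\geq 2$ in $\mathscr{B}_\mathfrak{g}$. Because $\mathfrak{e}_8$ is simply-laced, the Weyl group acts transitively on $\Delta$, and Cartan representatives of $\widehat{L}(\lambda)$ are unique only up to $\cW$; this lets me arrange for the vanishing root to lie in $\widehat{\Pi}$, the set of $9$ simple roots of the affine Dynkin diagram of \cref{fig:dynke8}. The counts match exactly: $|\widehat{\Pi}|=9$ versus $9$ irregular $y$-values.

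The hard part will be showing that the assignment from irregular $y$-values to $\widehat{\Pi}$ is a \emph{bijection}, rather than hitting some simple roots multiple times and missing others. I plan to pin this down by exploiting the factored form \eqref{eq:laxaff2}: restricting $\widehat{L}$ to the Chevalley subalgebra $\langle h_i, e_i, e_{-i}\rangle$ produces, for each $\alpha_i\in\widehat{\Pi}$, a rank-one degeneration condition supported at a specific $y$-value, and verifying that the $9$ resulting $y$-values are pairwise distinct for generic $u$ forces the bijection via the counting above.
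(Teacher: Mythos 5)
Your setup for the count is right in spirit: irregularity of $\widehat{L}(\lambda)$ amounts to $e^{\alpha(l(\lambda))}=1$ for some $\alpha\in\Delta^*$, equivalently $\Xi_{\mathfrak{g},\mathrm{red}}(\lambda,1)=0$, and the $\bbZ_2\times\bbZ_2$ symmetry reduces this to $\Xi''_{\mathfrak{g},\mathrm{red}}(y,2)=0$; the leading coefficient $3^3\cdot4\cdot5^5$ from \eqref{eq:y9} is indeed nonzero. But you leave the distinctness of the $9$ roots of $\Xi''(\cdot,2)$ to an unexhibited ``convenient explicit point''. Be careful: the one explicit specialisation the text supplies, the super-singular point \eqref{eq:ssu}, gives $\Xi_{\mathfrak{g},\mathrm{red}}(\lambda,1)=337500\,(\lambda-1)^8(\lambda+1)^{10}$, i.e.\ total collapse, so that point cannot serve as the witness. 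The paper avoids the issue entirely: it reads off, from the explicit polynomial furnished by \cref{lem:uispec} and \cref{claim:E8}, that the ramification index at each $Q''_i\in x^{-1}(2)$ equals $28$, and then the genus identity $60 = -120 + \tfrac12(9\cdot 28 + 108)$ both certifies there are exactly $9$ such points and rules out any further irregular $\lambda$ in the affine chart.

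The second half of your argument has a genuine error. You aim to match the $9$ irregular $y$-values bijectively to the $9$ nodes of the \emph{affine} diagram $\widehat\Pi$, ``because the counts agree''. But the proposition asserts $\alpha\in\Pi$, the $8$ finite simple roots; the extra node $\alpha_0$ is the highest root of $\mathfrak{e}_8$ (height $29$) and is not simple. If your bijection with $\widehat\Pi$ actually held, one of the $b_i^\pm$ would have $\alpha_0$ as its vanishing root, contradicting the statement you are proving. The true combinatorics is a surjection of the $9$ points onto the $8$ simple roots with exactly one fibre of size two, as the paper makes explicit in the footnote inside the proof of \cref{thm:pt}. Independently of the miscount, your Weyl-conjugation move --- ``arranging'' the vanishing root to lie in $\widehat\Pi$ --- only expresses a local gauge freedom at each $\lambda$; once the sheet labelling of \cref{prop:isopar} is fixed globally, the identity of the vanishing root at each branch value is determined and cannot be repositioned independently. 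Concluding that it is simple is therefore exactly the content to be proved, not an available normalisation.

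The missing ingredient, which no amount of counting can replace, is the representation-theoretic equality and bound the paper uses: at $Q''_i$ with vanishing positive root $\alpha_i$, the ramification index equals $\mathrm{card}\{\beta\in\Delta^+ : \beta-\alpha_i\in\Delta^+\}$, which is $\leq 28$ with equality if and only if $\alpha_i$ is simple, and the explicit computation shows the index \emph{is} $28$. Your alternative route through rank-one Chevalley subalgebras is, as stated, circular: it already presumes a correspondence between the $9$ $y$-values and a pre-selected set of $9$ root directions, which is the unestablished bijection you acknowledge as ``the hard part''.
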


\begin{proof}
To see this, look at the base curve
$\Gamma''_u$. It is obvious that $\Xi_{\mathfrak g, \rm red}$ has only double zeroes at
$x=2$, since $\Xi_{\mathfrak g}$ has only double zeroes at $\mu=1$ as roots
come in (positive/negative) pairs in \eqref{eq:xi3}. For each of the nine points
$$\{Q''_i\}_{i=1}^9 \triangleq x^{-1}(2) \subset \Gamma''_u,$$ we compute from \cref{lem:uispec,claim:E8} that
\beq
e_x(Q''_i)=28
\label{eq:exPi}
\eeq
for all $i$. Calling $\alpha_i \in \Delta^+$ the
positive root
such that $\alpha_i \cdot l(\lambda(Q_i))=0$, we see from
\eqref{eq:xi3} that
\beq
e_x(Q''_i)=\mathrm{card}\l\{\beta\in\Delta^+ |
\beta-\a_i\in \Delta^+\r\}.
\eeq
It can be immediately verified that the right hand side is less than or equal
to 28, with equality iff $\alpha_i$ is simple. It is also clear
that there are no other points of ramification in the affine part of the
curve\footnote{In principle, from \eqref{eq:xi3}, this would be the case if
  $\a(l(\lambda))=\b(l(\lambda))$ for $\a-\b \notin \Delta$, leading to a double
  zero at $\mu\neq 1$ in \eqref{eq:xi3}, which we can't {\it
    a priori} rule out without appealing to 
  \eqref{eq:genred} and \eqref{eq:exPi} as we do below.} ; indeed, from \cref{tab:ptsinf}, we have that $e_x(\infty)=120-12=108$, and from
\eqref{eq:genred} we see that
\beq
60=g(\Gamma''_{u})-1=-\deg_x \Xi''_{\mathfrak{g},
  \rm red} +\frac{1}{2}\sum_{\rd x(P)=0} e_x(P)=-120+\frac{9 \times 28+108}{2}.
\eeq
As the covering map $x:\Gamma'_{u}\to\Gamma''_{u}$ is ramified at $x=2$, and $y:\Gamma_{u,\aleph}\to
\Gamma'_{u}$ is generically unramified therein for generic
$\aleph$, we have two preimages $Q_{i,\pm}$ on $\Gamma_{  u,\aleph}$ for each $Q''_{i}\in \Gamma''_{u}$.
\end{proof}

\section{Action-angle variables and the preferred Prym--Tyurin}
\label{sec:actangl}

Since \eqref{eq:hi} are a complete set of Hamiltonians in involution on the
leaves of the foliation of $\widehat{\cP}$ by level sets of $\aleph$, 
the compact fibres of the map $(u, \aleph): \widehat{\cP}\to \bbC^9$ are
isomorphic to a $\mathrm{rank}(\mathfrak{g})=8$-dimensional torus by the (holomorphic)
Liouville--Arnold--Moser theorem. A central feature of integrable systems of
the form \eqref{eq:laxeq} is an algebraic characterisation of 
their Liouville--Arnold dynamics, the torus in question being an Abelian
sub-variety of the Jacobian of $\Gamma_{u, \aleph}$. 

I determine in this section the action-angle integration explicitly for the
$\widehat{\mathrm{E}}_8$ relativistic Toda chain, which results in endowing
$\mathscr{S}_g$ with extra data \cite{Dubrovin:1992eu,Krichever:1992qe}, as
per the following
\begin{defn}
\label{defn:dk}
We call {\rm Dubrovin--Krichever data} %$DK^{\rm Toda}_{\mathfrak{g}}$ 
a $n$-tuple
$(\mathscr{F}, \mathscr{B}, \cE_1, \cE_2, \DD, \Lambda, \Lambda^L)$, with 
\bit
\item $\pi: \mathscr{F}\to\mathscr{B}$ a family of (smooth, proper) curves over an $n$-dimensional variety $\mathscr{B}$;
\item $\DD$ a smooth normal crossing divisor intersecting the fibres of $\pi$
  transversally;
\item meromorphic sections $\mathscr{E}_i \in H^0(\mathscr{F},
  \omega_{\mathscr{F}/\mathscr{B}}(\log \DD))$
  of the relative canonical sheaf having logarithmic poles along $\DD$;
\item $(\Lambda^L, \Lambda)$ a locally-constant choice of a marked subring
  $\Lambda$ of  the first homology of the fibres, and a Lagrangian sublattice
  $\Lambda^L$ thereof.
\eit
\end{defn}
\cref{defn:dk} isolates the extra data attached to spectral curves that were
identified in  \cite{Dubrovin:1992eu,Krichever:1992qe} (see also
\cite{Dubrovin:1994hc,Krichever:1997sq}) to provide the basic ingredients for
the construction of a Frobenius manifold structure on
$\mathscr{B}$ and a dispersionless integrable dynamics on its loop space
given by the Whitham deformation of the isospectral flows
\eqref{eq:laxeq}; the logarithm of those $\tau$-functions respects the type of
constraints that arise in theory with eight global supersymmetries (rigid
special K\"ahler geometry). These will be key aspects of the story to be
discussed in \cref{sec:applI,sec:applII}; in the language of
\cite{Dubrovin:1992eu}, when the pull-back of $\cE_1$ to the fibres of the
family is exact, the associated potential is the superpotential of the
Frobenius manifold, and $\cE_2$ its associated primitive differential. Now, \cref{claim:E8} and \cref{def:sc} gave us
$\mathscr{F}=\mathscr{S}_{\mathfrak{g}}$, $\mathscr{B}=\mathscr{B}_{\mathfrak{g}}$
already. We'll see, following \cite{Krichever:1997sq}, how the remaining ingredients are determined by
the Hamiltonian dynamics of \eqref{eq:laxeq}: this will culminate with the content of
\cref{cor:actangl}. I wish to add from the outset that the process leading up to \cref{cor:actangl}
relies on both common lore and results in the literature that are established
and known to the {\it cognoscenti}
at least for the non-relativistic limit; the gist of this section is to unify
several of these scattered ideas and adapt them to the setting at
hand. For the sake of completeness, I strived to provide precise pointers to places in the literature where similar arguments
have been employed.

\subsection{Algebraic action-angle integration}

From now until the end of this section, I will be sitting at a generic point
$(x,y)\in \widehat{\cP}$, and correspondingly, smooth moduli
point $(u, \aleph) \in \mathscr{B}_{\mathfrak{g}}$. As is the case for the ordinary periodic Toda chain with $N$ particles, and
for initial data specified by $(u, \aleph)$, the compact orbits of \eqref{eq:laxeq} are geometrically encoded into a linear flow
on the Jacobian variety
$\mathrm{Pic}^{(0)}(\Gamma_{u,\aleph})$
\cite{MR815768,MR533894,MR597729,MR0413178}; I recall here why this is the
case. The eigenvalue problem\footnote{For ease of notation, and since we've
  fixed $\rho=\mathfrak{g}$ in the previous section, I am dropping here any
  reference to the representation $\rho$ of the Lax operator.}
 at time-$t$, 
\beq
\widehat{L_{x, y}}(\lambda) \Psi_{x, y} = \mu \Psi_{x, y}
\label{eq:laxpsi}
\eeq
with $x=x(\vec t)$,  $y=y(\vec t)$, endows the spectral
curve with an eigenvector line bundle $\LL_{x, y} \to \Gamma_{u,\aleph}$ and a section
$\Psi:\Gamma_{u, \aleph} \to \LL_{x, y}$ given as follows. We have an
eigenspace morphism 
\beq
\mathcal{E}_{x, y}:\Gamma_{u, \aleph} \to \bbP^{\dim \mathfrak{g}-1} = \bbP^{247}
\eeq
that, away from ramification points of the $\lambda:\Gamma_{u, \aleph}\to
\bbP^1$ projection, assigns to a point $(\lambda, \mu)\in \Gamma_{u, \aleph}$ the
(time-dependent) eigenline of \eqref{eq:laxpsi} with eigenvalue $\mu$; this
in fact extends to a locally free rank one sheaf on the whole of $\Gamma_{u, \aleph}$
\cite[Ch.~5, II Proposition on p.131]{MR1995460}.  We write 
\beq
\LL_{x,
  y} \triangleq \mathcal{E}_{x, y}^*\cO_{ \bbP^{247}}(1)\in
\mathrm{Pic}(\Gamma_{u,\aleph})
\label{eq:BAlb}
\eeq
for the pullback of the
hyperplane bundle on $\bbP^{\dim \mathfrak{g}-1}$ via the eigenline map
$\mathcal{E}_{x, y}$, and fix (non-canonically) a section of the
latter by
\beq
\Psi_j(\lambda, \mu_i(\lambda)) = \frac{\Delta_{j1}\l(\widehat{L_{x, y}}(\lambda)-\mu_i(\lambda)\r)}{\Delta_{11}\l(\widehat{L_{x, y}}(\lambda)-\mu_i(\lambda)\r)},
\label{eq:BApsi}
\eeq
where $\mu_i(\lambda)= \exp(\a_i(l(\lambda))$ (cfr. \eqref{eq:xi3}) and we
denoted by 
$\Delta_{ij}(M)$ the $(i,j)^{\rm th}$ minor of a matrix $M$. As $t$ and
$x(t)$, $y(t)$ vary, so
will $\LL_{x(t), y(t)}$, and
\beq
\cB_{x, y}(t)\triangleq \LL_{x, y}(t) \otimes \LL^*_{x, y}(0) \in \mathrm{Pic}^{(0)}(\Gamma_{u,
  \aleph})\simeq \frac{H^1(\Gamma_{u,
  \aleph},\cO)}{H^2(\Gamma_{u,
  \aleph},\bbZ)}
\label{eq:Bxy}
\eeq
 is a time-dependent degree zero line bundle on $\Gamma_{
  u, \aleph}$. 

The flows \eqref{eq:laxeq} thus determine a flow $t\to \cB_{x(t), y(t)}$ in the Jacobian of
$\Gamma_{u, \aleph}$, which is
actually
{\it linear} in Cartesian coordinates for the torus $\mathrm{Pic}^{(0)}(\Gamma_{u,
  \aleph})$. Indeed, let $\{\omega_k\}_k$ be a basis for the
$\bbC$-vector space of holomorphic differentials on $\Gamma_{u, \aleph}$,  $\bbC\bra \{\omega_k\}_k \ket = H^1(\Gamma_{u, \aleph},\cO)$,
and let
\bea
\psi:\mathrm{Sym}^{g}\Gamma_{u, \aleph} & \rightarrow & \mathrm{Pic}^{(0)}(\Gamma_{u,
  \aleph})\nn \\
(\gamma_1+ \dots+ \gamma_g) & \rightarrow & \sum_{i=1}^g \cA(p_i)
\eea
 be the surjective, degree one morphism from the $g^{\rm th}$-symmetric power of
$\Gamma_{u, \aleph}$ to its Jacobian, given by taking the Abel sums of
$g$ unordered points on $\Gamma_{u, \aleph}$; here 
\bea
\cA:\Gamma_{u, \aleph} & \rightarrow & \mathrm{Pic}^{(0)}(\Gamma_{u,
  \aleph})\nn \\
\gamma & \rightarrow & \l(\int^\gamma \rd\omega_1, \dots, \int^\gamma \rd\omega_g \r)
\eea
denotes the Abel map for some fixed choice of base point. Writing $$\mathrm{Sym}^{g}\ni \gamma(t)= (\gamma_1(t),\dots,
\gamma_g(t))=\psi^{-1}(\cB_{x(t), y(t)})$$ for the inverse of $\cB_{x(t), y(t)}$, which is unique for
generic time $t$ by Jacobi's theorem, we have that \cite[Thm.~4]{MR533894}
\beq
\Omega_{ik}\triangleq \frac{\de}{\de t_i} \sum_{j=1}^g \int^{\gamma_j(t)}\omega_k= \sum_{p\in
  \lambda^{-1}(0) \cup \lambda^{-1}(\infty)}
\mathrm{Res}_{p}\l[\omega_k P_i(\widehat{L_{x, y}}(\lambda)) \r]
%+\mathrm{Res}_{\lambda^{-1}(\infty)}\l(\omega_k
%\widehat{L_{x, y}}(\lambda))_{[\infty]}\r) \r]
\quad \forall~ k=1,\dots, g
\label{eq:flowjac}
\eeq
%
%where, for $f \in \Gamma(\bbP^1,\cM)$ and locally around $p\in \bbP^1$, we
%denoted by $f_{[p]}$ the projection to its polar part (the ``Laurent tail'')
%around $p$. 
The left hand side is the derivative of the flow on the Jacobian
(its angular frequencies) in the chart on $\mathrm{Pic}^{(0)}(\Gamma_{
  u,\aleph})$ determined by the linear coordinates $H^1(\Gamma_{
  u,\aleph},\cO)$ w.r.t the chosen basis $\{\omega_k\}_k$. The right hand side shows that this is
independent of time, and hence the flow is linear in these coordinates, since $\omega_k$ and $P_i(\widehat{L_{x, y}}(\lambda)))$ are:
the former since it only feels the dynamical phase space variables $\{x_i,
y_i\}_{i=0}^8$ in $\widehat{L_{x, y}}(\lambda)$ via $\Gamma_{
  u, \aleph}$, itself an integral of motion, and the latter by
\eqref{eq:laxeq}.\\

\subsection{The Kanev--McDaniel--Smolinsky correspondence}

The story above is common to a large variety of systems (the Zakharov--Shabat
systems with spectral-parameter-dependent Lax pairs), and the $\widehat{\mathrm{E}_8}$ relativistic
Toda fits entirely into this scheme. In particular, in the better known examples of the
periodic relativistic and non-relativistic Toda chain with $N$-particles
(i.e. $\mathfrak{g}=\mathrm{sl}_N; \rho=\square$ in \eqref{eq:laxeq}), where
the spectral curves have genus $g=N-1$, the action-angle map $\{x_i,y_i\} \to
(\Gamma_{u,\aleph}, \LL_{x,y})$ gives a
family of $\mathrm{rank}\mathfrak{g}=N-1$ commuting flows on their $N-1$-dimensional
Jacobian. A question that does {\it not} arise in these ordinary examples, however, is the following: in our case, we have way more angles than we have
actions, as the genus of the spectral curve is much higher than the rank of $\mathfrak{g}=\mathfrak{e}_8$. Indeed, the Jacobian is $495$-complex dimensional in our case by
\eqref{eq:genred}; but the (compact) orbits of \eqref{eq:genred} only span an
8-dimensional Abelian subvariety of the Jacobian. 

How do we single out this subvariety geometrically? In the non-relativistic
case, pinning down the dynamical subtorus from the
geometry of the spectral curve has been the subject of intense study since the
early studies of Adler and van Moerbeke \cite{MR597729} for
$\mathfrak{g}=\mathfrak{b}_n,\mathfrak{c}_n,\mathfrak{d}_n,\mathfrak{g}_2$,
and the fundamental works of Kanev \cite{MR1013158}, Donagi \cite{MR1397059} and
McDaniel--Smolinsky \cite{MR1401779,MR1668594} in greater generality. We
now work out how these ideas can be applied to our case as well.\\

Recall from \cref{prop:isopar} that we have a $\cW$-action on
$\Gamma_{u, \aleph}$ by deck transformations given by
\bea
\phi : \cW \times \Gamma_{u, \aleph} & \rightarrow & \Gamma_{
  u, \aleph} \nn \\
(w, \lambda, \mu_\alpha(\lambda)) & \rightarrow & (\lambda, \mu_{w(\alpha)}(\lambda))
\eea
which is just the residual action of the vertical transformations on the
cameral cover. Write $\phi_{w} \triangleq \phi(w, -)\in
\mathrm{Aut}(\Gamma_{u, \aleph})$ for the automorphism corresponding to
$w\in \cW$. Extending by linearity, $\phi_w$ induces an action on
$\mathrm{Div}(\Gamma_{u, \aleph})$ which obviously descends to give actions
on the Picard group $\mathrm{Pic}(\Gamma_{u, \aleph})$, the Jacobian $\mathrm{Pic}^{(0)}(\Gamma_{
  u, \aleph})\simeq \mathrm{Jac}(\Gamma_{u, \aleph})$ (since $\phi_w$
is compatible with degree and linear equivalence), and the $\bbC$-space of holomorphic 1-forms
$H^1(\Gamma_{u, \aleph},\cO)$. At the divisorial level we have furthermore an action of the
group ring 
\bea
\varphi : \bbZ[\cW] \times \mathrm{Div}(\Gamma_{u, \aleph}) & \rightarrow &
\mathrm{Div}(\Gamma_{u, \aleph}), \nn \\
\l(\sum_i a_i w_i, \sum_j b_j(\lambda_j, \mu_\alpha(\lambda_j))\r) & \rightarrow &
\sum_{i,j} a_i b_j (\lambda_j, \mu_{w_i(\alpha)}(\lambda_j)).
\label{eq:divact}
\eea
Recall from \cref{prop:isopar} that, since the group of deck transformations of the
cover $\Gamma_{u, \aleph}\setminus \{\rd \mu=0 \}$ is isomorphic to the
  Coxeter action of $\cW$ on the root space $\Delta\simeq \cW/\cW_{\alpha_0}$, 
%stabilises the subgroup $\cW_{\alpha_0}\triangleq
%\mathrm{Stab}_{\omega_7=\alpha_0}\cW\simeq \mathrm{Weyl}(\mathfrak{e}_7)$, where
%$\alpha_0=\omega_7$ is the highest root
the map \eqref{eq:divact} factors through the coset projection map $\cW\to \Delta $, i.e.
\beq
\varphi(w,-)= |\cW_{\alpha_0}| \sum_{\alpha \in \Delta} \tilde{a}_\alpha w_\alpha,
\label{eq:divact2}
\eeq
for some $\{\tilde a_\alpha\in \bbZ\}_{\a\in\Delta}$. Restrict now to elements
$\varphi(w,-) \in \bbZ[\cW]$ such that $\varphi(w,-) :\bbZ[\cW] \rightarrow
\bbZ[\mathrm{Aut}(\Gamma_{u,\aleph})]$ is a ring homomorphism. Then the
action \eqref{eq:divact} is the pull-back of an action of the maximal subgroup of
$\bbZ[\Delta]$ which respects the 
product structure induced from $\bbZ[\cW]$: this is the Hecke ring $H(\cW,\cW_{\alpha_0})
\simeq \bbZ[\cW_{\alpha_0} \backslash \cW/\cW_{\alpha_0}]\simeq \bbZ[\Delta]^{\cW_{\alpha_0}}$. Its additive
structure is given by the free $\bbZ$-module structure on the space of
double cosets of $\cW$ by $\cW_{\alpha_0}$, and its product is defined as the push-forward\footnote{That is, the image under the double-quotient projection of
the product of the pullback functions on $\cW$, which is well-defined on the double quotient even
when $\cW_{\alpha_0}$ is not normal, as in our case.} of
the product on $\bbZ[\cW]$. In practical terms, this forces the integers $a_\alpha$ in the
sum over roots in $\Delta^*$ (i.e. right cosets of $\cW/\cW_{\alpha_0}$) to be constant
over left cosets $\cW_{\alpha_0}\backslash \cW$ in \eqref{eq:divact2}.

The Weyl-symmetry action is the key to single out the Liouville-Arnold
algebraic torus that is home to the flows \eqref{eq:laxeq}. We first start
from the following 
\begin{defn}
Let $\mathcal{D} \in \mathrm{Div}(\Gamma \times \Gamma)$ be a
self-correspondence of a smooth projective irreducible curve $\Gamma$ and
let $\mathcal{C}\in \mathrm{End}(\Gamma)$ be the map 
\bea
\mathcal{C}: \mathrm{Jac}(\Gamma) & \to & \mathrm{Jac}(\Gamma) \nn \\
\gamma & \to & (p_2)_*(p_1^*(\gamma) \cdot \DD),
\eea
where $p_i$ denotes the projection to the $i^{\rm th}$ factor in $\Gamma
\times \Gamma$. The Abelian subvariety
\beq
\mathrm{PT}_\mathcal{C}(\Gamma)\triangleq \l(\mathrm{id}-\mathcal{C}\r)\mathrm{Jac}(\Gamma)
\eeq
is called a {\rm Prym--Tyurin} variety iff
\beq
(\mathrm{id}-\mathcal{C})(\mathrm{id}-\mathcal{C}-q_\cC)=0
\label{eq:quadeqC}
\eeq
for $q_\cC\in \bbZ$, $q_\cC \geq 2$.
\end{defn}
By \eqref{eq:quadeqC}, the tangent fibre at the identity
$T_e(\mathrm{Jac}(\Gamma))$ splits into eigenspaces $T_e(\mathrm{Jac}(\Gamma))=\mathfrak{t}_{\mathrm{PT}}
\oplus \mathfrak{t}_{\rm PT}^\vee$ of $\cC$ with
eigenvalues $1$ and $1-q_\cC$. Because $q_\cC\in \bbZ$, these exponentiate to subtori
$\cT_{\mathrm{PT}}= \exp \mathfrak{t}_{\mathrm{PT}}$, $\cT_{\mathrm{PT}}^\vee= \exp
\mathfrak{t}_{\mathrm{PT}}^\vee$, with
$\cT_{\mathrm{PT}}=\mathrm{PT}_\cC(\Gamma)$, such that $\mathrm{Jac}(\Gamma)= \cT_{\mathrm{PT}} \times \cT_{\mathrm{PT}}^\vee$. In particular, in terms of the linear
spaces $V_{\mathrm{PT}} \simeq \widetilde{\cT_{\mathrm{PT}}} $, $V_{\mathrm{PT}}^\vee \simeq \widetilde{\cT_{\mathrm{PT}}^\vee} $ which are the universal covers of the
two factor tori, we have
\beq
\mathrm{PT}_\cC(\Gamma) \simeq V_{\mathrm{PT}}/\Lambda_{\mathrm{PT}}
\eeq
where $\Lambda_{\mathrm{PT}} = H_1(\Gamma, \bbZ) \cap V_{\mathrm{PT}}$. Furthermore \cite{MR1013158}, there is a
natural principal polarisation $\Xi$ on $\mathrm{PT}_\cC(\Gamma)$ given by the restriction of
the Riemann form $\Theta$ on $H^1(\Gamma, \cO) \simeq V_{\mathrm{PT}} \oplus V_{\mathrm{PT}}^\vee$ to $V_{\mathrm{PT}}$;
we have $\Theta=q_\cC \Xi$, with $\Xi$ unimodular on $\Lambda_{\mathrm{PT}}$. In particular,
$\mathrm{id}-\cC$ acts as a projector on the space of 1-holomorphic differentials, and,
dually, 1-homology cycles on $\Gamma$, such that
\bit
\item the projection selects a symplectic vector space $V_{\mathrm{PT}}\subset H_1(\Gamma, \cO)$ and dual subring $\Lambda_{\mathrm{PT}}
  \in H_1(\Gamma, \bbZ)$; 1-forms in $V_{\mathrm{PT}}$ have zero periods on cycles in
  $\Lambda_{\mathrm{PT}}^\vee$;
\item bases $\{\omega_1, \dots, \omega_{\mathrm{dim} V_{\mathrm{PT}}}\}$,
  $\{(A_i,  B_i)\}_{i=1}^{\mathrm{dim} V_{\mathrm{PT}}}$ can be chosen such
  that the corresponding matrix minors
  of the period matrix of $\Gamma$ satisfy
\beq
\int_{A_j}\omega_i=q_\cC \delta_{ij}, \quad \int_{B_j} \omega_i = \tau_{ij}
\eeq
with $\tau_{ij}$ non-degenerate positive definite.
\eit
%

%{\bf Avoid $\bbZ[\Delta]^{\cW_{\alpha_0}}$ in the following. Make statements at
%  either the group ring $\bbZ[\cW]$ or the Hecke ring as appropriate.} 
There is a canonical element of $H(\cW,\cW_{\alpha_0})$ which has particular importance for
us, and which will eventually act as a projector on a distinguished Prym--Turin
subvariety of $\mathrm{Jac}(\Gamma_{u, \aleph})$. This is the Kanev--McDaniel--Smolinsky self-correspondence\footnote{This has also been considered
  in the gauge theory literature, implicitly in
  \cite{Martinec:1995by,Hollowood:1997pp} and more diffusely in \cite{Longhi:2016rjt}.} 
\cite{MR1013158,MR1401779,MR1668594} 
\beq
\mathscr{P}_{\mathfrak{g}} \triangleq \sum_{w \in \cW/\cW_{\alpha_0}} \bra w^{-1}(\alpha_0),\alpha_0  \ket w.
\label{eq:proj}
\eeq
I summarise here some of its key properties, some of which are easily
verifiable from the definition \eqref{eq:proj}, with others having been
worked out in meticulous detail in \cite[Sec.~3--5]{MR1401779}. Some further
explicit results that are relevant to our case, but that did not fit in the
discussion of \cite{MR1401779}, are presented below.
\begin{prop}
In the root space $(\mathfrak{h}^*, \bra, \ket)$ consider the hyperplanes 
\beq
\mathsf{H}_i= \{\beta \in \mathfrak{h}^* | \bra \beta, \alpha_0=i \ket \}.
\label{eq:hypkms}
\eeq
Then, set-theoretically, $\cW_{\alpha_0} \backslash \cW / \cW_{\alpha_0}
\simeq \{\delta_i\triangleq \mathsf{H}_i \cap \Delta\}_{i=-2}^2$. Letting $\cW
\stackrel{\pi_1}{\longrightarrow} \cW / \cW_{\alpha_0} \stackrel{\pi_2}{\longrightarrow} \cW_{\alpha_0} \backslash \cW / \cW_{\alpha_0}$ be the projection
to the double coset space and $\{s_i\}_{i=-2}^2=\pi_2(\Delta)$, we furthermore have
\beq
\mathscr{P}_{\mathfrak{g}} = \pi_2^* \l[\sum_{\delta_i \in \cW_{\alpha_0} \backslash \cW /
  \cW_{\alpha_0}} i s_i \in H(\cW,\cW_{\a_0})\r].
\label{eq:pgdh}
\eeq
\label{prop:pg}
\end{prop}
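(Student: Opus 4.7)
The plan is to reduce both assertions to the explicit orbit structure of $\cW_{\alpha_0}\simeq \cW(\mathrm{E}_7)$ on $\Delta$, which is dictated by the fact that the adjoint is the (unique) quasi-minuscule representation of $\mathrm{E}_8$.

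First I would identify $\cW/\cW_{\alpha_0}$ with $\Delta$: since $\mathrm{E}_8$ is simply laced, $\cW$ acts transitively on $\Delta$ and the stabiliser of $\alpha_0$ is $\cW_{\alpha_0}$, so the orbit map yields a $\cW$-equivariant bijection $[w]\mapsto w(\alpha_0)$ which descends to identify $\cW_{\alpha_0}\backslash \cW/\cW_{\alpha_0}$ with the set of $\cW_{\alpha_0}$-orbits on $\Delta$. Because $\cW_{\alpha_0}$ fixes $\alpha_0$, the level function $\beta\mapsto \bra\beta,\alpha_0\ket$ is $\cW_{\alpha_0}$-invariant, so each $\delta_i=\mathsf{H}_i\cap\Delta$ is a union of orbits. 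The $\mathrm{sl}_2$-string bound applied to roots of equal squared length $2$ confines these levels to $\{-2,-1,0,1,2\}$, with $\pm 2$ realised only at $\pm\alpha_0$ since $\alpha_0$ is the highest root.

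The crucial input is that each $\delta_i$ is a \emph{single} $\cW_{\alpha_0}$-orbit: $\delta_{\pm 2}=\{\pm\alpha_0\}$ is tautological; $\delta_0$ is the $\mathrm{E}_7$ sub-root system of $126$ roots orthogonal to $\alpha_0$, on which $\cW(\mathrm{E}_7)$ acts transitively (simply-lacedness again); and $\delta_{\pm 1}$ are the two $56$-element weight systems of the minuscule $\mathrm{E}_7$-module $R(\omega_7)$, on which $\cW(\mathrm{E}_7)$ also acts transitively by definition of minusculeness. This is the quasi-minuscule branching $\mathbf{248}_{\mathrm{E}_8}=\mathbf{(133,1)}\oplus\mathbf{(1,3)}\oplus\mathbf{(56,2)}$ under $\mathrm{E}_7\times\mathrm{A}_1$, and the tally $1+56+126+56+1=240=|\Delta|$ is consistent with it. This proves the first assertion.

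For the second, use the identification above to rewrite
\[ \mathscr{P}_{\mathfrak{g}} = \sum_{[w]\in\cW/\cW_{\alpha_0}} \bra w^{-1}(\alpha_0),\alpha_0\ket\,[w] = \sum_{\beta\in\Delta} \bra\beta,\alpha_0\ket\,[w_\beta] = \sum_{i=-2}^{2} i \sum_{\beta\in\delta_i} [w_\beta], \]
where in the first step the coefficient is well-defined on cosets by $\cW$-invariance of $\bra\cdot,\cdot\ket$ and $\cW_{\alpha_0}$-fixation of $\alpha_0$. By the first claim, $\pi_2^{-1}(s_i)\subset\cW/\cW_{\alpha_0}$ is exactly $\{[w_\beta]:\beta\in\delta_i\}$, so the inner sum equals $\pi_2^*(s_i)$ and \eqref{eq:pgdh} follows. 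The only real obstacle is the transitivity of $\cW_{\alpha_0}$ on each non-extremal level set, which is precisely the quasi-minuscule classification; this may be imported wholesale from the theory or verified by direct inspection of the $240$ roots, with everything else amounting to bookkeeping on $\bbZ[\cW/\cW_{\alpha_0}]$ and on the Hecke ring.
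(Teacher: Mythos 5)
Your proof is correct but takes a genuinely different route to the decisive counting step than the paper. The paper establishes $|\cW_{\alpha_0}\backslash\cW/\cW_{\alpha_0}|=5$ indirectly, by computing $\bigl\|\mathrm{ind}_{\cW_{\alpha_0}}^{\cW}\mathbf{1}\bigr\|^2$: it evaluates the permutation character $\chi_{\bbC\bra\Delta^*\ket}$ (the fixed-subspace dimension) on the conjugacy classes of $\cW(\mathrm{E}_8)$, decomposes $\bbC\bra\Delta^*\ket$ into five distinct irreducibles each with multiplicity one, and concludes $\sum m_i^2=5$; the observation that the $\cW_{\alpha_0}$-invariant level function $\beta\mapsto\bra\beta,\alpha_0\ket$ takes five values then forces each $\delta_i$ to be a single double coset by pigeonhole. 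You instead prove transitivity of $\cW_{\alpha_0}\simeq\cW(\mathrm{E}_7)$ on each $\delta_i$ directly: simply-lacedness for $\delta_0$ (the $126$ roots of the orthogonal $\mathrm{E}_7$), minusculeness of $\mathbf{56}$ for $\delta_{\pm 1}$, and triviality for $\delta_{\pm 2}$, all organized by the quasi-minuscule branching $\mathbf{248}\to(\mathbf{133},\mathbf{1})\oplus(\mathbf{1},\mathbf{3})\oplus(\mathbf{56},\mathbf{2})$ under $\mathrm{E}_7\times\mathrm{A}_1$. Your approach bypasses the $\cW(\mathrm{E}_8)$ character table entirely and has the advantage of exhibiting the orbit structure explicitly rather than merely counting it; the paper's approach trades that structural insight for a purely mechanical computation that works verbatim for any $\cW_\omega\backslash\cW/\cW_\omega$ without needing an a priori transitivity theorem. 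Once the orbit structure is in hand, the derivation of \eqref{eq:pgdh} from \eqref{eq:proj} by grouping the sum over $\cW/\cW_{\alpha_0}\simeq\Delta^*$ into level sets is the same modest bookkeeping in both arguments.
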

\begin{proof}
The fact that $\mathscr{P}_{\mathfrak{g}} \in \bbZ[\Delta]^{\cW_{\alpha_0}}=H(\cW,\cW_{\alpha_0})$ follows
immediately from its definition in \eqref{eq:proj} and the constancy of $\bra
w^{-1}(\alpha_0),\alpha_0\ket$ on left cosets. The rest of the proof follows
from explicit identification of the elements of $H(\cW,\cW_{\a_0})$ in terms
of the hyperplanes of \eqref{eq:hypkms}, and evaluation of \eqref{eq:proj} on
them. The proof is somewhat lengthy and the reader may find the details in \cref{sec:proofpg}.
\end{proof}
\begin{cor} $\mathscr{P}_{\mathfrak{g}}$ satisfies the quadratic equation in
  $H(\cW,\cW_{\alpha_0})$ with integral roots 
\beq
\mathscr{P}_{\mathfrak{g}}^2 = %\frac{|\cW| \bra \alpha_0, \alpha_0 \ket}{|\cW_{\alpha_0}| \mathrm{rank}
%  \mathfrak{g}} \mathscr{P}_{\mathfrak{g}} = 
q_{\mathfrak{g}} \mathscr{P}_{\mathfrak{g}}
\label{eq:projnorm}
\eeq
with
\beq
q_{\mathfrak{g}}=60.
\eeq
In particular, the correspondence $\mathcal{C}=1-\mathscr{P}_{\mathfrak{g}}$
defines a Prym--Tyurin variety $\mathrm{PT}_{1-\mathscr{P}_{\mathfrak{g}}}(\Gamma_{u, \aleph})\subset
\mathrm{Jac}(\Gamma_{u, \aleph})$.
\label{prop:proj}
\end{cor}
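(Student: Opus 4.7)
The strategy is to transport the claimed identity to a linear-algebraic statement about $\cW$-equivariant endomorphisms of the permutation module $\bbC[\Delta^*]$, via the canonical isomorphism $H(\cW,\cW_{\alpha_0})\simeq \mathrm{End}_\cW(\bbC[\cW/\cW_{\alpha_0}])$ and the $\cW$-equivariant bijection $\cW/\cW_{\alpha_0}\to\Delta^*$ given by $w\cW_{\alpha_0}\mapsto w(\alpha_0)$. Under this dictionary, \eqref{eq:proj} gives $\mathscr{P}_{\mathfrak{g}}\cdot e_{\alpha_0}=\sum_{w\in\cW/\cW_{\alpha_0}}\bra w^{-1}(\alpha_0),\alpha_0\ket\, e_{w(\alpha_0)}=\sum_{\alpha\in\Delta^*}\bra\alpha,\alpha_0\ket\, e_\alpha$, so the $\alpha_0$-column of its matrix in the basis $\{e_\alpha\}_{\alpha\in\Delta^*}$ is the Gram column $\alpha\mapsto \bra\alpha,\alpha_0\ket$; since $\cW$ acts transitively on $\Delta^*$ and by isometries on $\mathfrak{h}^*$, equivariance propagates this to the full Gram matrix $M_{\alpha\beta}\triangleq\bra\alpha,\beta\ket$.

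It then suffices to show $M^2=60\,M$. Computing entry-wise, $(M^2)_{\alpha\beta}=\sum_{\gamma\in\Delta^*}\bra\alpha,\gamma\ket\bra\gamma,\beta\ket=\bra\alpha,K(\beta)\ket$, where $K:\mathfrak{h}^*\to\mathfrak{h}^*$ is the $\cW$-equivariant endomorphism $K(v)\triangleq\sum_{\gamma\in\Delta^*}\bra v,\gamma\ket\gamma$. Because the reflection representation $\mathfrak{h}^*$ is an irreducible $\cW(\mathfrak{e}_8)$-module, Schur's lemma forces $K=c\cdot\mathrm{Id}$, and evaluating the trace in an orthonormal basis yields $c=\frac{1}{\rank\mathfrak{g}}\sum_{\gamma\in\Delta^*}\bra\gamma,\gamma\ket=\frac{|\Delta^*|\cdot 2}{\rank\mathfrak{g}}=\frac{240\cdot 2}{8}=60$ (equivalently $2h^\vee(\mathfrak{e}_8)$). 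Therefore $M^2=60\,M$, which upon retracing the identification is the desired relation $\mathscr{P}_{\mathfrak{g}}^2=q_{\mathfrak{g}}\mathscr{P}_{\mathfrak{g}}$ with $q_{\mathfrak{g}}=60$.

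For the Prym--Tyurin conclusion, setting $\mathcal{C}=1-\mathscr{P}_{\mathfrak{g}}$ gives $(1-\mathcal{C})(1-\mathcal{C}-60)=\mathscr{P}_{\mathfrak{g}}(\mathscr{P}_{\mathfrak{g}}-60)=0$, with integer root $q_{\mathcal{C}}=60\geq 2$, which is exactly the condition \eqref{eq:quadeqC}; hence $\mathrm{PT}_{1-\mathscr{P}_{\mathfrak{g}}}(\Gamma_{u,\aleph})$ is a Prym--Tyurin subvariety of $\mathrm{Jac}(\Gamma_{u,\aleph})$. There is no essential obstacle once \cref{prop:pg} is in hand; the only delicate point is the book-keeping that ensures the $\cW$-equivariant matrix of $\mathscr{P}_{\mathfrak{g}}$ is literally the Gram matrix of roots rather than a constant rescaling thereof, which hinges on the normalisation chosen in \eqref{eq:proj}.
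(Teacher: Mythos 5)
Your argument is correct and takes a genuinely different route from the paper's. The paper derives \eqref{eq:projnorm} as ``a straightforward calculation from \cref{eq:pgdh},'' i.e.\ by squaring the explicit Hecke-ring decomposition $\mathscr{P}_{\mathfrak{g}}=-2s_{-2}-s_{-1}+s_1+2s_2$ using the multiplication table of $H(\cW,\cW_{\alpha_0})$. You instead sidestep the Hecke structure constants entirely: transporting $\mathscr{P}_{\mathfrak{g}}$ to $\mathrm{End}_\cW(\bbC[\Delta^*])$ and identifying its matrix with the root Gram matrix $M_{\alpha\beta}=\bra\alpha,\beta\ket$ reduces the claim to $M^2=c\,M$, and Schur's lemma on the irreducible reflection representation $\mathfrak{h}^*$ gives $\sum_{\gamma\in\Delta^*}\bra\,\cdot\,,\gamma\ket\gamma=c\,\mathrm{Id}$ with $c=\frac{1}{\rank\mathfrak{g}}\sum_{\gamma}\bra\gamma,\gamma\ket=\frac{240\cdot 2}{8}=60$. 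This is cleaner, and it also exposes the intrinsic meaning of the normalisation constant: $q_{\mathfrak{g}}=2h^\vee(\mathfrak{e}_8)$, which the paper's route leaves implicit. What the paper's route buys is independence from any representation-theoretic input beyond \cref{prop:pg} itself, and compatibility with the enumeration of the five double cosets established there; but the Schur's lemma argument is self-contained once the trace computation is done, and arguably more illuminating. The one genuinely delicate point is the one you flag yourself --- that the matrix of $\mathscr{P}_{\mathfrak{g}}$ in the $\{e_\alpha\}$ basis is literally the Gram matrix, not just its $\alpha_0$-column --- and this does depend on reading the divisor action \eqref{eq:divact} as the canonical right action of $H(\cW,\cW_{\alpha_0})\simeq\mathrm{End}_\cW(\bbC[\cW/\cW_{\alpha_0}])$, which is what makes $\mathscr{P}_{\mathfrak{g}}$ a $\cW$-intertwiner in the first place; the apparently left-handed formula $e_\alpha\mapsto\sum_w a_w e_{w(\alpha)}$ is only unambiguous after the double-coset invariance guaranteed by \cref{prop:pg} is in hand. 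You handle this correctly, and I regard your book-keeping caveat as the right thing to say.
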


\begin{proof}
This is a straightforward calculation from \cref{eq:pgdh}.
%statement is \cite[Prop.~8]{MR1401779}, an easy consequence of the orthogonality relations for the
%characters of $\cW$.
\end{proof}

In the following, I will simply write
$\mathrm{PT}(\Gamma_{u, \aleph})$, dropping the $1-\mathscr{P}_{\mathfrak{g}}$ subscript
which will be implicitly assumed.

%
%
%\ben
%\item Let $S$ be a set carrying a $\cW$-action, so that $\mathscr{P}_\mathfrak{g}$ acts as a linear
%  (un-normalised) projector on the
%  permutation representation $\bbC[S]$. If the latter decomposes as
%%
%\beq
%\bbC[S] \simeq n_0 \mathfrak{h} \oplus_i n_i R_i
%\eeq
%
%with $R_i$ irreducible and $\mathfrak{h}$ the Coxeter representation, then
%$\dim_\bbC \mathrm{Im} \mathscr{P}_{\mathfrak{g}}=n_0$, with the projection consisting of a direct sum of one line
%for each Coxeter summand.
%%
%\item Cartan torus in the cameral Jacobian. Isomorphic to the spectral Prym--Tyurin.
%\item
%\een
%
The main statement about
${\rm PT}(\Gamma_{u,\aleph})$ is the subject of the next Theorem. Note that this bears a large intellectual debt to
previous work in \cite{MR1013158,MR1668594}; the modest contribution of this paper is a combination of the
results of this and the previous Section with  \cite{MR1013158,MR1668594} to prove that the Liouville--Arnold torus (the
image of the flows \eqref{eq:laxeq} on the Jacobian) is indeed isomorphic to
the full Kanev--McDaniel--Smolinsky Prym--Tyurin, rather than being just a
closed subvariety thereof. 
%{\bf
%  Question about isogeny?}
%
\begin{thm}
The flows \eqref{eq:laxeq}, \eqref{eq:flowjac} of the $\widehat{\mathrm{E}_8}$ relativistic Toda
chain linearise on the Prym--Tyurin variety
$\mathrm{PT}(\Gamma_{u, \aleph})$ and they fill it
for generic initial data $(u, \aleph)$.
\label{thm:pt}
\end{thm}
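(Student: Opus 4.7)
The plan is to combine two ingredients: (i) a containment statement showing that the Abel image of the eigenvector line bundle $\cB_{x,y}(t)$ lies inside $\mathrm{PT}(\Gamma_{u,\aleph})$, so that the flow \eqref{eq:flowjac} is tangent to the Prym--Tyurin; and (ii) a non-degeneracy statement asserting that the 8 Hamiltonian flows span the tangent space of $\mathrm{PT}(\Gamma_{u,\aleph})$ for generic $(u,\aleph)$, which together with compactness of the Liouville--Arnold torus forces the flow to fill it.

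The first preparatory step is to confirm the dimension count $\dim_{\bbC} \mathrm{PT}(\Gamma_{u,\aleph}) = \mathrm{rank}(\mathfrak{e}_8) = 8$. This is a character-theoretic calculation: the dimension equals $\tfrac{1}{q_{\mathfrak{g}}} \mathrm{tr}(\mathscr{P}_{\mathfrak{g}} \, | \, H^1(\Gamma_{u,\aleph},\cO))$, which using \cref{prop:pg,prop:proj} and the decomposition of $H^1$ into $\cW$-isotypic components can be identified with the multiplicity of the reflection representation of $\cW$. Alternatively, one may invoke Kanev's general theorem \cite{MR1013158} that for a quasi-minuscule fundamental representation the Prym--Tyurin of the associated parabolic cover has dimension equal to the rank of $\mathfrak{g}$; since the adjoint is quasi-minuscule for $\mathfrak{e}_8$, this yields $8$.

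Next I would establish containment. The eigenline bundle $\cE_{x,y}$ of \eqref{eq:BAlb} is Weyl-equivariant in the sense that, for $w\in\cW$, $\phi_w^*\LL_{x,y} \simeq \LL_{x,y}\otimes \chi_w$ for a character $\chi_w$ controlled by the natural $\cT$-action on eigenspaces; combined with \eqref{eq:divact2} and the expansion \eqref{eq:pgdh} of $\mathscr{P}_{\mathfrak{g}}$, this forces $\cB_{x,y}(t)$ to lie in the eigenspace of $\mathscr{P}_{\mathfrak{g}}$ with eigenvalue $q_{\mathfrak{g}}$, i.e. in $\mathrm{PT}(\Gamma_{u,\aleph})$. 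Dually, this can be checked directly from \eqref{eq:flowjac}: choosing an adapted symplectic basis of $H^1(\Gamma_{u,\aleph},\cO)$ decomposed along $V_{\mathrm{PT}}\oplus V_{\mathrm{PT}}^{\vee}$, one shows that the residues $\mathrm{Res}_{p}[\omega_k P_i(\widehat{L}(\lambda))]$ at $p \in \lambda^{-1}(0) \cup \lambda^{-1}(\infty)$ vanish whenever $\omega_k \in V_{\mathrm{PT}}^{\vee}$. This is because $P_i$ are $\cW$-invariant (they come from fundamental characters), and the $\cW$-equivariance of the residue pairing annihilates such cross terms. This is essentially the argument of \cite{MR1013158, MR1668594} adapted to the relativistic, affine setting.

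For filling I would compute the $8\times 8$ minor of $\Omega_{ik}$ obtained by restricting $k$ to a basis of $V_{\mathrm{PT}}$ and $i \in \{1,\dots,8\}$, and verify its generic non-degeneracy. The residues in \eqref{eq:flowjac} are supported at the preimages of $\lambda=0,\infty$, i.e. at the points $P_{j}$ of \cref{rmk:ptsinf,tab:ptsinf}; using \cref{claim:E8} together with \cref{lem:uispec} one obtains explicit Puiseux expansions for the differentials $P_i(\widehat{L})\, d\log\lambda$ and for a basis of $V_{\mathrm{PT}}$-adapted holomorphic forms near $P_j$. The resulting matrix of residues is a rational function of $(u,\aleph)$ whose generic invertibility can be checked by specialisation; the most economical route is to degenerate to a limit where the statement is already known --- e.g. the non-relativistic limit treated in \cite{MR1013158,MR1401779,MR1668594}, or the $\aleph\to 0$ leaf entering Isomorphism~2 --- and then invoke upper semi-continuity of rank to conclude on a Zariski-open neighbourhood of $\mathscr{B}_{\mathfrak{g}}$.

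The main obstacle will be the non-degeneracy part, not containment. Containment follows essentially formally from Weyl equivariance and the Kanev--McDaniel--Smolinsky machinery once \cref{prop:pg,prop:proj} are in hand. Filling, however, requires genuinely global information about the family $\mathscr{S}_{\mathfrak{g}} \to \mathscr{B}_{\mathfrak{g}}$, control of the singular locus, and verification that no exceptional symmetry of the $\mathrm{E}_8$ root system causes the rank of $\Omega$ on $V_{\mathrm{PT}}$ to drop; the specialisation-plus-semicontinuity argument above is the cleanest way I see to finesse this while leveraging the existing results for the non-relativistic chain.
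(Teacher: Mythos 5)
Your two-part strategy (containment plus filling) and the decision to lean on Kanev and McDaniel--Smolinsky for containment are broadly consistent with the paper's proof, and that half of your argument is essentially what is done there. The difficulties are concentrated in the filling half, and there the proposal has two concrete gaps.

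The first is the dimension count. There is no theorem in Kanev \cite{MR1013158} asserting that for a quasi-minuscule representation the Prym--Tyurin of the parabolic cover has dimension $\mathrm{rank}\,\mathfrak{g}$. Kanev's Theorem~8.5 is subject to Condition~8.4 on the spectral dependence of the Lax operator, which fails for $\widehat{L}_{x,y}(\lambda)$ in \eqref{eq:laxaff2}; the hypothesis is lifted only in McDaniel--Smolinsky \cite{MR1668594}. More importantly, the applicable result is a Chevalley--Weil-type \emph{formula} (\cite[Theorem~17]{MR1401779}) expressing $\dim \mathrm{PT}$ in terms of the Galois image $\mathfrak{M}(\pi_1(\bbP^1_\star))$ and the fixed subspaces $\mathfrak{h}^*_p$ of the local monodromy stabilisers at ramification points --- it does not automatically return the rank. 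The paper's proof extracts $\dim\mathrm{PT}(\Gamma_{u,\aleph})=8$ from this formula by feeding in the ramification analysis of \cref{prop:br} and \cref{tab:ptsinf}: irreducibility of the $\alpha_0$-parabolic cover forces $\mathfrak{M}(\pi_1(\bbP^1_\star))=\cW$; the 18 affine branch points have deck stabilisers generated by simple reflections, giving $\dim\mathfrak{h}^*_{Q_{i,j}}=7$; and the two points over $\lambda=0,\infty$ have monodromy a Coxeter element times a simple reflection, giving $\dim\mathfrak{h}^*_{Q_{\infty,j}}=1$. Your isotypic-multiplicity sketch, carried through, would need exactly this monodromy data --- there is no shortcut around it.

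The second gap is the degeneration-plus-semicontinuity argument for non-degeneracy of the restricted period matrix $\Omega$. You want to specialise to the non-relativistic chain or to the $\aleph\to0$ leaf and cite \cite{MR1013158,MR1401779,MR1668594}, but those references establish that the flow \emph{linearises} on $\mathrm{PT}$, not that it \emph{fills} it; the filling statement for $\mathrm{E}_8$ is precisely what \cref{thm:pt} is new for, so there is no previously established base case in the literature to specialise to. The paper sidesteps a direct period computation entirely: once $\dim\mathrm{PT}=8$ is known, it observes that the image of the flows is an $8$-dimensional subtorus (a consequence of the algebraic independence of the fundamental characters $\theta_i$ in $R(\cG)$), and a connected $8$-dimensional subtorus of an $8$-dimensional abelian variety is the whole thing. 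Replacing your period-matrix route with that shorter argument, and replacing the appeal to a nonexistent Kanev quasi-minuscule theorem with the McDaniel--Smolinsky formula fed with the ramification data, would bring the proposal into line with a complete proof.
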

\begin{proof}
The linearisation of the flows on ${\rm PT}(\Gamma_{u,\aleph})$
amounts to say that
\beq
\sum_{p \in \lambda^{-1}\{0,\infty\}} \mathrm{Res}_p\l[\omega
  P_i(\widehat{L_{x, y}(\lambda)})\r] \neq 0 \quad
  \Rightarrow \quad \mathscr{P}_{\mathfrak{g}}^* \omega = \omega
\eeq
in \eqref{eq:flowjac}. 
This is essentially the content of
\cite[Theorem~8.5]{MR1013158} and especially \cite[Theorem~29]{MR1668594}, to
which the reader is referred. The latter paper greatly relaxes an assumption on the spectral dependence of $\widehat{L}_{
  x,y}(\lambda)$ \cite[Condition~8.4]{MR1013158} which renders
incompatible \cite[Theorem~8.5]{MR1013158}
with \eqref{eq:laxaff2}; this restriction is entirely lifted in
\cite[Theorem~29]{MR1668594}, where the fact that \eqref{eq:laxaff2} depends
rationally on $\lambda$ is sufficient for our purposes. While
\cite{MR1013158,MR1668594} deal with the non-relativistic counterpart of the
system \eqref{eq:laxeq}, it is easy to convince oneself that replacing
their Lie-algebraic setting with the Lie-group arena we are playing in in this
paper amounts to a purely notational redefinition of $\mathfrak{g}$ to $\cG$ in
the arguments leading up to \cite[Theorem~29]{MR1668594}.

Since the first part of the statement has been settled in \cite{MR1668594},
I now move on to prove that the Prym--Tyurin {\it is} the Liouville--Arnold torus. Denoting $\phi^{(i)}_t:\widehat{\cP}\to \widehat{\cP}$ be the time-$t$ flow of
\eqref{eq:laxeq}, and for fixed $(x, y) \in \widehat{\cP}$, the above
proves that 
\bea
\phi^{(1)}_{t_1} \cdot \dots \cdot \phi^{(8)}_{t_8} : \bbP^1 \times \dots \times \bbP^1 &
\to & \widehat{\cP} \nn \\
(x, y) & \to & (x(\vec t), y(\vec t))
\label{eq:flowtor}
\eea
surjects to an eight-dimensional subtorus
%\footnote{This should be
%  topologically a cylinder by
%  commutativity of the flows, and actually a torus by compactness of the
%  ambient space $\mathrm{PT}_{1-\mathscr{P}_{\mathfrak{g}}}$.} 
of
${\rm PT}(\Gamma_{u,\aleph})$. 
%Let us now show that this torus is actually the
%Prym--Tyurin itself. 
To see the resulting torus is the Prym--Tyurin, we use the dimension formula of
\cite[Theorem~17]{MR1401779}. Let $C_\star \triangleq \bbP^1\setminus \{b_i^\pm
\}_{i=1}^9$, $\mathfrak{M}:\pi_1(C_\star)
\to \cW$ be the Galois map of the spectral cover $\Gamma_{u, \aleph}$, and for
$P\in \Gamma_{u, \aleph}$ write $S^{(P)}$ for the stabiliser of $P$ in the
group of deck transformations of $\Gamma_{u, \aleph}$, and
$\mathfrak{h}^*_P$ for the fixed point eigenspace of $S^{(P)} \subset \cW$. Then \cite[Theorem~17]{MR1401779}
\beq
\dim_\bbC {\rm PT}(\Gamma_{u,\aleph}) =
\frac{1}{2}\sum_{\lambda(p)|\rd \mu(p)=0} \l(8-\dim_\bbC
\mathfrak{h}^*_p\r)- 8+ \bra \mathfrak{h}, \bbC[\cW/\mathfrak{M}(\pi_1(\bbP^1_\star))]\ket 
\eeq
where one representative $p$ is chosen in each fibre of $\lambda:\Gamma_{
  u, \aleph}\to \bbP^1$. In our case, $\mathfrak{M}(\pi_1(\bbP^1_\star))=\cW$ by
\cref{prop:isopar} and the fact that the $\alpha_0$-parabolic cover is
irreducible (hence a connected covering space of $\bbP^1$), so the last term
vanishes. Then
\beq
\dim_\bbC {\rm PT}(\Gamma_{u,\aleph}) =
\frac{1}{2}\sum_{i=1,\dots 9, j=\pm} \l(8-\dim_\bbC
\mathfrak{h}^*_{Q_{i,j}}\r)+\frac{1}{2}\sum_{j=\pm} \l(8-\dim_\bbC
\mathfrak{h}^*_{Q_{\infty,j}}\r) - 8
\eeq
where $Q_{i,j=\pm}$ are the ramification points of the $\lambda$-projection as
in \cref{prop:br}. Since $\alpha_{k(i)}\cdot \mu(Q_{i,\pm})=0$ for some permutation
$k:\{1,\dots, 8\}\to \{1, \dots,8\}$, the deck transformations in $S^{(Q_{i,\pm})}$ are simple reflections that stabilise the hyperplane
orthogonal to the root $\alpha_{k(i)}$, so that $\dim_\bbC
\mathfrak{h}^*_{Q_{i,j}}=7$. As far as $Q_{\infty,\pm}$ are
concerned, the deck transformation associated to a simple loop around them corresponds to the product of the Coxeter
element of $\cW$ times a simple root, as this is the lift under the projection
to the base curve of a loop around all branch points on the affine part of the
curve\footnote{The root in question is the one that is repeated in the
  sequence $\{k(i)\}_{i=1}^9$. There could be more of them in principle, but
  this would be in contrast with $\mathfrak{M}(\pi_1(\bbP^1_\star))=\cW$; equivalently,
  {\it a posteriori},
  this would lead to $\dim_\bbC {\rm PT}(\Gamma_{u,\aleph})<8$,
  contradicting the independence of the flows \eqref{eq:laxeq}, which in turn
  is a consequence of the algebraic independence of the fundamental characters $\theta_i$ in
  $R(\cG)$.}. Then $\dim_\bbC \mathfrak{h}^*_{Q_{\infty,j}}=1$,
$\dim_\bbC {\rm PT}(\Gamma_{u,\aleph})=8$, and the flows \eqref{eq:flowtor}
surject on the latter.
\end{proof}

An explicit construction of Kanev's Prym--Tyurin
${\rm PT}(\Gamma_{u,\aleph})$, after
\cite[Section~3]{Martinec:1995by}, can be given as follows. With reference to
\cref{fig:cuts}, let $\gamma^\pm_{i}$ be a simple counterclockwise loop around the branch point
$b^\pm_i$. I will similarly write $\gamma^-_0$ (resp. $\gamma^+_0$) for
analogous loops around $\lambda=0$ (resp. $\lambda=\infty$). For $\alpha \in
\Delta^*$ and $i=1, \dots, 8$, I define $C^\alpha_i, D^\alpha_i \in
C^1(\Gamma_{u, \aleph}, \bbZ)$ to be the lifts of the contours in red
(respectively in blue) to the cover $\Gamma_{u, \aleph}$, where we fix 
arbitrarily a base point $r\in \gamma^\pm_i$ and we look at the path in
$\Gamma_{u, \aleph}$ lying over $\gamma_i^\pm$ with starting point on the $\lambda$-preimage of $r$
labelled by $\alpha$. In other words,
\bea
C_i^\alpha & \triangleq &\lambda_{\sigma_i(\alpha)}^{-1}\l(\gamma^+_i\r) \cdot
 \lambda^{-1}_{\alpha}\l(\gamma^-_0\r), \nn \\
D_i^\alpha & \triangleq & \lambda_{\sigma_i(\alpha)}^{-1}\l(\gamma^+_i\r) \cdot
\lambda_\alpha^{-1}\l(\gamma^-_i\r).
\label{eq:CD}
\eea
Let now 
\beq
A_i  \triangleq  \frac{1}{q_\mathfrak{g}} \l(\mathscr{P}_{\mathfrak g}\r)_* C^{\alpha_0}_i,
\quad B_i  \triangleq  \frac{1}{2} \l(\mathscr{P}_{\mathfrak g}\r)_*
D^{\alpha_0}_i
\label{eq:AB}
\eeq
where the normalisation factor for $A_i$, $B_i$ will be justified momentarily. Notice that $A_i, B_i \in Z_1(\Gamma_{u, \aleph},\bbQ)$ are closed paths
on the cover: every summand $C_i^\alpha$ and $D_i^\alpha$ is indeed always
accompanied by a return path $C_i^{\sigma_i(\alpha)}$ and
$D_i^{\sigma_i(\alpha)}$, which has opposite weight in \eqref{eq:AB}. Denoting
by the same letters $A_i$, $B_i$ their conjugacy classes in homology, we identify
$H_1(\Gamma_{u, \aleph}, \bbQ)\supset \Lambda_{\mathrm{PT}}\triangleq \bbZ\bra \{A_i,
B_i\}_{i=1}^8\ket$. If $\{\omega_1, \dots, \omega_8\}$ is any choice of
1-holomorphic differentials such that $\mathrm{dim} \mathscr{P}^*_{\mathfrak g} \bbC\bra \omega_1, \dots, \omega_8\ket=8$, then 
\beq
\mathrm{PT}(\Gamma_{u, \aleph})= \frac{\mathscr{P}^*_{\mathfrak g} \bbC\bra \omega_1, \dots, \omega_8\ket}{\bbZ\bra \{A_i,
B_i\}_{i=1}^8\ket}
\eeq
by construction. It is instructive to compute the intersection index of the
cycles \eqref{eq:AB}: we have, from \eqref{eq:CD}, that
\bea
\label{eq:AB1}
(A_i, B_j) &=& \frac{1}{2q_\mathfrak{g}}\sum_{\b,\gamma \in \Delta^*} (C_i^\beta, D_j^\gamma)=
\frac{\delta_{ij}}{2 q_\mathfrak{g}} \sum_{\b, \gamma \in\Delta^*} \delta_{\b\gamma} \bra \alpha_0, \beta
\ket^2 =\delta_{ij},\\
(A_i, A_j) &=& (B_i, B_j) = 0,
\label{eq:AB2}
\eea
so that they are a symplectic basis for $\Lambda_{\mathrm{PT}}$; the normalisation factor
\eqref{eq:AB} has been chosen to ensure both that this is so and to render the period
integrals on $\{A_i, B_i\}$ compatible with
the usual form of special geometry relations.

\begin{figure}
\input{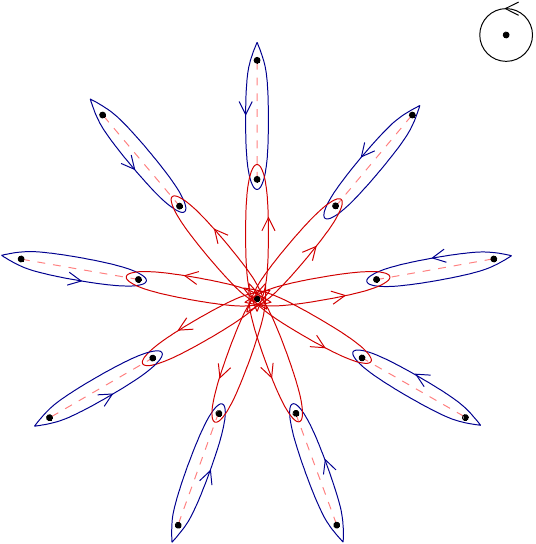_t}
\caption{Contours on $\bbP^{1}_\star = \bbC^* \setminus\{b_i^\pm
  \}_{i=1}^9$. Projections of the $A$- and $B$-cycles are depicted in red and
  blue respectively.}
\label{fig:cuts}
\end{figure}

\subsection{Hamiltonian structure and the spectral curve differential}
The fact that the isospectral flows \eqref{eq:laxeq} turn into straight line
motions on $\mathrm{PT}(\Gamma_{u,
  \aleph})$ is the largest bit in the proof of the algebraic complete
integrability of the $\widehat{\mathrm{E}_8}$ relativistic Toda. We conclude it now by
working out in detail a choice of Darboux co-ordinates $\{S_i,
\vartheta^i\}_{i=1}^8$, with $\vartheta^i \in S^1$, such that the Hamiltonians
\eqref{eq:ui} are functions of $S_i$ alone. In the process, this will complete the
construction of the Dubrovin--Krichever data of \cref{defn:dk}.
%spectral curve triplet $(\Gamma_{u, \aleph},\Lambda,
%\rd \lambda )$ by endowing the spectral cover $\Gamma_{u, \aleph}$ with a
%canonical 1-differential. {\bf mention section relative dualising sheaf?}

Composing the surjection \eqref{eq:proj} with the Abel--Jacobi map gives an
Abel--Prym--Tyurin map
\bea
\cA_{\rm PT}: \Gamma_{u, \aleph} & \to & \mathrm{PT}(\Gamma_{u, \aleph}) \nn \\
p & \to & \mathscr{P}_{\mathfrak{g}}\cdot \cA(p).
\label{eq:apt}
\eea
Since $\mathrm{PT}(\Gamma_{u, \aleph})$ is principally
polarised, an analogue of the Jacobi theorem holds for $\cA_{\rm PT}$ \cite[Lemma~2.1]{jow2016principally}, and the
Abel--Prym--Tyrin map \eqref{eq:apt} is an embedding of $\Gamma_{u, \aleph}$ into $\mathrm{PT}(\Gamma_{u,
  \aleph})$ as a $q_\mathfrak{g}=60$-multiple of %By \eqref{eq:projnorm}, the homology class $[\cA_{\rm PT}(\Gamma_{u, \aleph})]$
%of the
%Abel--Prym--Tyurin curve is $q_\mathfrak{g}=60$ times 
its minimal curve $\frac{\Xi^7}{7!}$. % of $\mathrm{PT}(\Gamma_{u, \aleph})$. 
Then, taking Abel sums of 8 points on $\Gamma_{u, \aleph}$
and projecting their image to $\mathrm{PT}$,
\bea
\cA_{\rm PT}:\mathrm{Sym}^{8}\Gamma_{u, \aleph} & \rightarrow & \mathrm{PT}(\Gamma_{u,
  \aleph})\nn \\
(\gamma_1+ \dots+ \gamma_g) & \rightarrow & \l(\mathscr{P}_{\mathfrak{g}}\r)_*\sum_{i=1}^8 \cA(\gamma_i)
\eea
gives a finite,
degree $q_\mathfrak{g}^8=2^{16} 3^8 5^8$ surjective morphism\footnote{I
  slightly abuse
  notation here and call it with the same symbol of \eqref{eq:apt}.} from the $8$-fold symmetric product of $\Gamma_{u,
  \aleph}$ to $\mathrm{PT}(\Gamma_{u, \aleph})$ which maps
the fundamental class
$[\mathrm{Sym}^8(\Gamma_{u, \aleph})] \to q_\mathfrak{g}^8 [\mathrm{PT}(\Gamma_{u, \aleph})]$ to $q_\mathfrak{g}^8$ the fundamental class of the
Prym--Tyurin. The fibre $\cA_{\rm PT}^{-1}(\xi)$ of a point $\xi\in  \mathrm{PT}(\Gamma_{u, \aleph})$ is given by $q_\mathfrak{g}^8$ unordered 8-tuples of
points $\gamma_1 + \dots + \gamma_8$ on $\Gamma_{u, \aleph}$ satisfying
\bea
\xi & = & \cA_{\rm PT}\l(\sum_i \gamma_i\r) = (\mathscr{P}_{\mathfrak{g}})_* \sum_{i=1}^8 \cA(\gamma_i) =  (\mathscr{P}_{\mathfrak{g}})_*\sum_{i=1}^8
\l(\int^{\gamma_i} \rd \omega_1, \dots, \int^{\gamma_i} \rd \omega_{495}\r), \nn \\
& = & 
\sum_{i=1}^8
\l(\int^{\mathscr{P}_{\mathfrak{g}}(\gamma_i)} \rd \omega_1, \dots,
\int^{\mathscr{P}_{\mathfrak{g}}(\gamma_i)} \rd \omega_{495}\r),
\nn \\
&=& \sum_{i=1}^8
\l(\int^{\gamma_i}  \mathscr{P}_{\mathfrak{g}}^*\rd\omega_1, \dots,
\int^{\gamma_i} \mathscr{P}_{\mathfrak{g}}^*\rd \omega_{495}\r).
\eea
Let us now reconsider the action-angle map $\{x_i, y_i\}\to (\Gamma_{u,
  \aleph}, \cB_{x, y})$ of \eqref{eq:ui}, \eqref{eq:Bxy} and
\eqref{eq:flowjac} in light of \cref{thm:pt}. By the above reasoning, the
flows $(x(t), y(t))$ are encoded into the motion of $\cB_{x,
  y}(t)$, or equivalently, any of the pre-images $\cA_{\rm PT}^{-1} \cB(t)=
(\gamma_1(t)+ \dots +\gamma_8(t))$. I want to study the motion in terms of the
latter, and argue that the Cartesian projections of $\gamma_i$ provide
logarithmic Darboux
coordinates for \eqref{eq:pl}. I begin with the following
\begin{thm}
Write $\omega_{\rm PL}$ for the symplectic 2-form on an $\aleph$-leaf of
$\cP_{\rm Toda}$ and let $\delta:\Omega^\bullet(\widehat{\cP})\to
\Omega^{\bullet+1}(\widehat{\cP})$ denote exterior differentiation on $\widehat{\cP}$. Then
\beq
\omega_{\rm PL}= \mathscr{P}_{\mathfrak{g}}^*
\sum_{i=1}^8 
 \frac{\delta\mu(\gamma_i)}{\mu(\gamma_i)} \wedge \frac{\delta \lambda(\gamma_i)}{\lambda(\gamma_i)}.
\label{eq:opllm}
\eeq
\label{thm:kp}
\end{thm}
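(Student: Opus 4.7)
The plan is a Krichever--Phong style argument comparing two symplectic forms on a common $16$-dimensional $\aleph$-leaf of $\widehat{\cP}^{\rm Toda}$, by showing that they generate identical Hamiltonian flows with respect to the involutive family $\{u_1,\dots,u_8\}$. First, I would verify that the right-hand side descends to a well-defined closed 2-form on the leaf. The sum $\sum_i \delta\log\mu(\gamma_i) \wedge \delta\log\lambda(\gamma_i)$ is manifestly symmetric in $\{\gamma_i\}_{i=1}^{8}$ and so descends to $\mathrm{Sym}^8\Gamma_{u,\aleph}$; its further descent to $\mathrm{PT}(\Gamma_{u,\aleph})$ via the Abel--Prym--Tyurin map follows from the Hecke-ring description of $\mathscr{P}_{\mathfrak{g}}$ in \cref{prop:pg}, together with the fact that $\mu_\alpha \cdot \mu_{-\alpha}=1$ on the spectral cover by \eqref{eq:xi3}, which ensures $\mathscr{P}_{\mathfrak{g}}^*$-invariance of the logarithmic pairing. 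Pulled back to $\widehat{\cP}^{\rm Toda}$ via the Baker--Akhiezer divisor map $\cA_{\rm PT}^{-1}$, this yields a closed 2-form on each $\aleph$-leaf.

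Second, I would compute the Hamiltonian vector field generated by an arbitrary $u_k$ with respect to the right-hand side. In the logarithmic Darboux coordinates $\{\log\lambda(\gamma_i),\log\mu(\gamma_i)\}_{i=1}^8$ on (the universal cover of) $\mathrm{PT}(\Gamma_{u,\aleph})$, Hamilton's equations for the putative symplectic form read simply $\dot{\log\mu(\gamma_i)}=\partial u_k/\partial\log\lambda(\gamma_i)$, and dually. Translating back through the Abel--Prym--Tyurin projection and recalling the description of $\mathrm{PT}(\Gamma_{u,\aleph})$ as a subtorus cut out by the Kanev projector $\mathscr{P}_{\mathfrak{g}}$, this linearises the motion on the Prym--Tyurin with angular frequencies computed by residues at $\lambda\in\{0,\infty\}$ of the same form as \eqref{eq:flowjac}. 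On the other side, \cref{thm:pt} establishes that $\omega_{\rm PL}$ produces linear motion on $\mathrm{PT}(\Gamma_{u,\aleph})$ with the same frequencies \eqref{eq:flowjac}. Hence both 2-forms yield identical Hamiltonian flows for every $u_k$.

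Finally, I would conclude by a closedness-plus-dimension argument. Each $\aleph$-leaf has complex dimension $16$ and is generically fibred over $\cU$ by the $8$-dimensional Liouville tori $\mathrm{PT}(\Gamma_{u,\aleph})$; the $u_k$ are transverse coordinates to the fibration, while the Hamiltonian vector fields $\{X_{u_k}\}$ span the tangent space to a torus at every smooth point by algebraic complete integrability. Consequently, the $16$ vector fields $\{X_{u_k},\partial_{u_k}\}_{k=1}^8$ span a generic tangent fibre of the leaf, and any closed 2-form annihilating all pairings $(X_{u_k},\cdot)$ and having the prescribed values on the $(\partial_{u_\ell},X_{u_k})$ pairs is determined uniquely. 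This forces $\omega_{\rm PL} = \mathscr{P}_{\mathfrak{g}}^*\sum_i \delta\log\mu(\gamma_i)\wedge\delta\log\lambda(\gamma_i)$. The main obstacle lies in the second step: rigorously identifying the residue contributions at $\{0,\infty\}$ produced by the logarithmic-Darboux Hamilton's equations with those of \eqref{eq:flowjac}. In the non-relativistic case this is a classical computation via the Baker--Akhiezer pairing \cite{Krichever:1997sq,MR1013158}; in our relativistic setting, the multiplicative spectral parameter and the co-extension data entering \eqref{eq:laxaff2} require careful bookkeeping of the $q$-difference action on residues of $\Psi^{-1}\delta \widehat{L}\wedge \delta\Psi$, but the argument is otherwise structurally identical, and the normalisation $q_{\mathfrak{g}}=60$ in \eqref{eq:AB} is precisely what converts the naive Prym--Tyurin period normalisation into the standard special-geometry one.
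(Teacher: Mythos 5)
Your route is genuinely different from the paper's: the paper proves \eqref{eq:opllm} by direct residue computation, starting from the Kostant--Kirillov form \eqref{eq:omegaKK}, rewriting it as the logarithmic Krichever--Phong form $\omega_{\rm KP}^{(2)}$ of \eqref{eq:krichph2}, and then evaluating residues at the Baker--Akhiezer poles $\epsilon_k$ (which produce the desired $\delta\ln\mu\wedge\delta\ln\lambda$ terms) and at the branch points (which cancel after adding $\Delta_{\rm KP}$). You instead propose to compare the two forms indirectly, by showing they generate the same Hamiltonian flows and then invoking a uniqueness-by-spanning argument.

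The third step of your argument has a genuine gap. Knowing a complete set of involutive Hamiltonians $\{u_k\}$ and the condition $\iota_{X_{u_k}}\omega = \delta u_k$ for all $k$ does \emph{not} determine a closed 2-form $\omega$ on a $16$-dimensional leaf: these conditions fix only the pairings $\omega(X_{u_k},\,\cdot\,)$, i.e. the components $\omega(X_{u_k},X_{u_\ell})$ (forced to vanish by involution) and $\omega(X_{u_k},\partial_{u_\ell})$. They leave the purely transverse block $\omega(\partial_{u_k},\partial_{u_\ell})$ completely unconstrained, up to closedness, which by $\LL_{X_{u_j}}\omega=0$ only says that this block is constant along the fibres --- i.e.\ it can be any closed 2-form $\beta$ on the base $\cU$, pulled back to the leaf. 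The standard counterexample is $T^*\bbR^n$ with $\omega_0$ and $\omega_0+\pi^*\beta$: both generate identical Hamiltonian flows for all base functions but are distinct symplectic forms. To close the gap you would need a normalisation that kills this $\pi^*\beta$ ambiguity, e.g.\ a Lagrangian section on which both forms are shown to restrict to zero, or an exactness argument pinning down a common primitive; you do not provide one, and it is not automatic.

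A secondary, more technical issue sits in your second step. Reading Hamilton's equations off the right-hand side in the "logarithmic Darboux coordinates" $\{\log\lambda(\gamma_i),\log\mu(\gamma_i)\}$ is not immediate, because these $16$ quantities are not independent: the $\gamma_i$ lie on the moduli-dependent curve $\Gamma_{u,\aleph}$, so $\lambda(\gamma_i)$ and $\mu(\gamma_i)$ are tied by $\Xi_{\mathfrak g,\rm red}(\lambda(\gamma_i),\mu(\gamma_i);u,\aleph)=0$. Showing that the resulting flow of $u_k$ is indeed the one in \eqref{eq:flowjac} is precisely the type II generating-function computation the paper carries out \emph{after} \cref{thm:kp}, in \cref{cor:actangl}; invoking it at this stage risks circularity unless the Abelian-differential bookkeeping is done independently. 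The paper's direct residue computation avoids both of these pitfalls, which is why it proceeds that way.
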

\begin{proof}
Recall that (see e.g.~\cite[Section~3.3]{MR1995460}) any Lax system of the
type \eqref{eq:laxeq} with rational spectral parameter and with $L(\lambda)\in
\mathfrak{g}$ can be interpreted as a
flow on a coadjoint orbit of $\mathfrak{g}^*$ which is Hamiltonian with
respect to the Kostant--Kirillov bracket. More in detail, the pull-back of the
Kostant--Kirillov symplectic 2-form reads \cite[Sections~3.3, 5.9, 14.2]{MR1995460}
\beq
\omega_{\rm KK}= \frac{1}{2 \dim \mathfrak{g}}\sum_{\lambda_k=0,\infty}
\mathrm{Res}_{\lambda_k}\mathrm{Tr}\l((A_k)_- g_k^{-1} \delta g_k \wedge
g_k^{-1} \delta g_k \r),
\label{eq:omegaKK}
\eeq
where we diagonalise\footnote{Note that the eigenvalue 1 of $\widehat{L_{x, 
      y}}(\lambda)$ has full geometric multiplicity 8, and the other eigenvalues
  are all distinct when $\lambda$ is in a punctured neighbourhood of $0$ or $\infty$.}
  $L(\lambda)=g_k^{-1} A_k g_k$ locally around the poles at $\lambda=0,
\infty$, we denote $M_-(\lambda_0)$  the projection to the Laurent tail
around $\lambda=\lambda_0$, and $\delta$ indicates exterior differentiation on $\widehat{\cP}$. This can be rewritten in terms of the
 Baker--Akhiezer eigenvector line bundle \eqref{eq:BAlb} and its marked
section \eqref{eq:BApsi} as an instance of the Krichever--Phong universal
symplectic form $\omega_{\rm KP}$ \cite{Krichever:1997sq,DHoker:2002kfd}. Let $\Psi=(\Psi_j)$ be
the $248\times 248$ matrix whose $j^{\rm th}$ column is the normalised
eigenvector \eqref{eq:BApsi}. Then  \cite[Section~5]{DHoker:2002kfd}
\beq
\omega_{\rm KK}= \omega_{\rm KP}^{(1)}\triangleq \frac{1}{\dim \mathfrak{g}}\sum_{\lambda_k=
  0,\infty}\mathrm{Res}_{\lambda_k}\mathrm{Tr}\l(\Psi_{x, y}^{-1}
\delta \widehat{L_{x, y}} \wedge \delta \Psi_{x, y} \r) \rd \lambda.
\eeq
where $\rd:\Omega(\bbP^1)\to \Omega(\bbP^1)$ is the exterior differential on
the spectral parameter space. 

This is pretty close to what need, and it would recover the results of
  obtained in \cite{Adams:1992ej} in a related context, but it actually requires two
extra tweaks to get the symplectic form we are after,
$\omega_{\rm PL}$. First off, as explained in \cite[Section~6.5]{MR1995460}, if we are interested in the $r$-matrix solution
\eqref{eq:rmat} for the Toda lattice, what we need to consider is rather a
version $\omega_{\rm KP}^{(1)}$ of the universal
symplectic form which is logarithmic in $\lambda$, i.e.
\beq
\omega_{\rm KP}^{(1)} \triangleq \frac{1}{\dim \mathfrak{g}}\sum_{\lambda_k=
  0,\infty}\mathrm{Res}_{\lambda_k}\mathrm{Tr}\l(\Psi_{x, y}^{-1}
\delta \widehat{L_{x, y}} \wedge \delta \Psi_{x, y} \r)
\frac{\rd \lambda}{\lambda}.
\label{eq:krichph1}
\eeq
Secondly and more importantly, since we are dealing with an integrable system
on a Poisson-Lie Kac--Moody {\it group}, rather than a Lie algebra, $\omega_{\rm PL}$ is given by a
different\footnote{But, non-trivially, compatible: the resulting system is
  then bihamiltonian.}
Poisson bracket, as explained in \cite[Section~5.3]{DHoker:2002kfd}. This
is the logarithmic Krichever--Phong Poisson bracket $\omega_{\rm KP}^{(2)}$
\beq
\omega_{\rm PL}=\omega_{\rm KP}^{(2)}\triangleq  \frac{1}{\dim \mathfrak{g}}\sum_{\lambda_k=
  0,\infty}\mathrm{Res}_{\lambda_k}\mathrm{Tr}\l(\Psi_{x, y}^{-1}
\widehat{L_{x, y}}^{-1} \delta \widehat{L_{x, y}} \wedge \delta \Psi_{x, y} \r)
\frac{\rd \lambda}{\lambda}.
\label{eq:krichph2}
\eeq
The calculation of the residues of \eqref{eq:krichph2} is straightforward (see
\cite[Section~5.9]{MR1995460} for a completely analogous calculation in the
context of the Kostant--Kirillov form \eqref{eq:omegaKK}). From the general theory of Baker--Akhiezer
functions\footnote{See for example the discussion in \cite[Section~2]{MR3184178}.} and \eqref{eq:flowjac},
$\ln \Psi_{x, y}$ has simple poles, with residue equal to the
identity, at a divisor $D(t)\in
\mathrm{Div}(\Gamma_{u, \aleph})$ such that
\beq
\frac{\cA (D(t))-\cA(D(0))}{t}=  \sum_{p\in
  \lambda^{-1}(0) \cup \lambda^{-1}(\infty)}
\mathrm{Res}_{p}\l[\omega_k P_i(\widehat{L_{x, y}}(\lambda))) \r],
\eeq
and by \cref{thm:pt}, the l.h.s. is actually in the Prym--Tyurin variety
$\mathrm{PT}(\Gamma_{u, \aleph})$. This means that $\Psi_{x(t), y(t)}$
has simple poles at $(\mathscr{P}_{\mathfrak g})_*(\gamma_1(t)+\dots+\gamma_8(t))$
for some $\gamma=\gamma_1(t)+\dots+\gamma_8(t) \in \mathrm{Sym}^8(\Gamma_{u,
  \aleph})$; different $\gamma$ have the same image under $(\mathscr{P}_{\mathfrak g})_*$. Write
\beq
\sum r_k \epsilon_k \triangleq \gamma= \sum_{i,\a}\bra w(\alpha_0),\alpha_0\ket w_* \gamma_i
\label{eq:sumeps}
\eeq
for some $r_k \in \bbZ$, $\epsilon_k \in \Gamma_{u, \aleph}$. Near $\epsilon_k$ we have then
\beq
\delta \Psi_{x, y}=\frac{\Psi \delta \lambda(\epsilon_k)}{\lambda-\lambda(\epsilon_k)}\l(1+\cO\l(\lambda-\lambda(\epsilon_k)\r)\r).
\eeq
It turns out that the rest of the expression \eqref{eq:krichph2} is regular at
$\epsilon_k$. Indeed, exterior differentiation of the eigenvalue equation \eqref{eq:laxpsi}
yields
\beq
\delta  \l[ \widehat{\ln L_{x, y}}-\ln\mu \Psi_{x, y}\r]= \l(\widehat{L_{x, y}}\r)^{-1}\delta
\widehat{L_{x, y}}-\frac{\delta \mu}{\mu} \Psi_{x, y}- \ln
\mu \delta \Psi_{x, y}=0
\eeq
Multiplying by $\l( \lambda \Psi_{x, 
  y}\r)^{-1}$ and exploiting the fact that $\Psi_{x, 
  y}^{-1}\l(\widehat{L_{x, y}}-\mu\r)=0$ for the dual section of
$\LL_{x, y}$,  we get
\bea
\mathrm{Res}_{\lambda(\epsilon_k)}\mathrm{Tr}\l(\Psi_{x, y}^{-1}
\widehat{L_{x, y}}^{-1} \delta \widehat{L_{x, y}} \wedge \delta \Psi_{x, y} \r)
\frac{\delta \lambda}{\lambda} &=& \mathrm{Res}_{\lambda(\epsilon_k)}\mathrm{Tr}\l(\Psi_{x, y}^{-1}
\widehat{L_{x, y}}^{-1} \delta \widehat{L_{x, y}} 
\Psi_{x, y} \r)\wedge\frac{\delta
  \lambda(\epsilon_k)}{\lambda-\lambda(\epsilon_k)}\frac{\rd \lambda}{\lambda},\nn \\
&=&
\mathrm{Tr}\l(\Psi_{x, y}^{-1}
\widehat{L_{x, y}}^{-1} \delta \widehat{L_{x, y}}
\Psi_{x, y} \r) \wedge\frac{\delta
  \lambda(\epsilon_k)}{\lambda(\epsilon_k)},\nn \\
&=& 
248 \delta \ln \mu(\epsilon_k) \wedge \delta \ln \lambda (\epsilon_k).
\eea
Swapping orientation in the contour giving the sum over residues
\eqref{eq:krichph2} amounts to picking up residues over the affine part of
$\Gamma_{u, \aleph}\setminus \lambda^{-1}(0)$. We have two possible
contributions here: one is the sum over residues at
the Baker--Akhiezer poles $\epsilon_k$ that we have just computed. Another is
given by the branch points of the $\lambda$ projection, since $\det \Psi_{x, y}(b_i^\pm)= \cO(\sqrt{\lambda-b_i^{\pm}})$: hence both $\Psi_{x, y}^{-1}$ and
  $\delta|_{\lambda=\rm const}  \Psi_{x, y}$ develop a (simple)
pole there. Whilst the residues are individually non-zero, their sum vanishes: it is a simple
observation that adding a contribution of the form 
\beq
\Delta_{\rm KP} = \sum_{\lambda_k=0,\infty}
\mathrm{Res}_{\lambda_k} \mathrm{Tr}\l(\Psi_{x, y}^{-1}
\delta \ln \mu
\Psi_{x, y} \r) \wedge\frac{\delta
  \lambda}{\lambda}
\eeq
to \eqref{eq:krichph2} exactly offsets the aforementioned non-vanishing
residues at the branch points, and it has opposite residues at
$\lambda=0$ and $\infty$. Taking into account sign changes and summing over poles, images of the Weyl
action as in \eqref{eq:sumeps}, and pre-images of the Abel--Prym--Tyurin map returns \eqref{eq:opllm}.

\end{proof}

We are now ready to write down explicitly the action-angle integration variables for the
system. Let $\pi^{\rm sym}: \mathscr{S}_{\mathfrak{g}}^{\rm sym} \to \mathscr{B}_{\mathfrak{g}}$ be the family of Abelian varieties obtained by
replacing $\Gamma_{u, \aleph}$ with its eightfold symmetric product in the top
left corner of \eqref{eq:diagsc}; this is a $q_\mathfrak{g}^8$-cover of
$\widehat{\cP}$. Let $\mathscr{D}_{\mathfrak{g}} \in \mathrm{Div}
\mathscr{S}_{\mathfrak{g}}$ be the sum of $\Sigma_i(\mathscr{B}_{\mathfrak{g}})$ in \eqref{eq:diagsc}.

On the open set where the Prym--Tyurin does not degenerate, 
\beq
\mathscr{B}_{\mathfrak{g}}^{\rm reg} \triangleq \{ (u, \aleph) \in \mathscr{B}_{\mathfrak{g}} | \dim_\bbC
\mathrm{PT}(\Gamma_{u, \aleph})=8\},
\eeq
introduce the
(vertical) 1-forms on
$\mathscr{S}_{\mathfrak{g}}^{\rm sym}$ written, on a bundle chart
$((u,\aleph); (\gamma_i)_i)$, as
\bea
\mathscr{L} & \triangleq & \sum_{i=1}^8 \frac{\rd \lambda(\gamma_i)}{\lambda(\gamma_i)} \in
\omega_{\mathscr{S}_{\mathfrak{g}}^{\rm sym}/\mathscr{B}_{\mathfrak{g}}}(\mathscr{B}_{\mathfrak{g}}^{\rm reg}), \nn \\
\mathscr{M} & \triangleq & \sum_{i=1}^8 \frac{\rd \mu(\gamma_i)}{\mu(\gamma_i)} \in
\omega_{\mathscr{S}_{\mathfrak{g}}^{\rm sym}/\mathscr{B}_{\mathfrak{g}}}(\mathscr{B}_{\mathfrak{g}}^{\rm reg}), \nn \\
\rd\cS & \triangleq & \sum_{i=1}^8 \rd \sigma(\gamma_i) \triangleq \sum_{i=1}^8 \log
\mu(\gamma_i) \frac{\rd \lambda(\gamma_i)}{\lambda(\gamma_i)} \in \omega_{\mathscr{S}_{\mathfrak{g}}^{\rm sym}/\mathscr{B}_{\mathfrak{g}}}(\mathscr{B}_{\mathfrak{g}}^{\rm reg}).
\label{eq:dS}
\eea
The same notation $\rd \cS$ and $\rd \sigma$ will indicate the pullbacks to fibres
$\mathrm{Sym}^8(\Gamma_{u, \aleph})$ and $\Gamma_{u, \aleph}$ of the
respective families; in \eqref{eq:dS}  
$\rd \sigma$ is an arbitrary choice of branch of the log-meromorphic differential $\log\mu\rd
\log\lambda$ on $\Gamma_{u, \aleph}$. Notice that
$$ 
\omega_{\rm PL}= \omega_{\rm KP}^{(1)} =
\mathscr{P}^*_{\mathfrak{g}}\mathscr{M}
\wedge \mathscr{L}.
$$
\begin{lem}
We have that
\beq
\frac{\de \rd \sigma}{\de u_i} \in H^0(\Gamma_{u, \aleph},\Omega^1) \quad
\forall i=1, \dots, 8,
\label{eq:dsdui}
\eeq
where the moduli derivative in \eqref{eq:dsdui} is taken at fixed
$\mu:\Gamma_{u, \aleph}\to \bbP^1$.
\label{lem:holder}
\end{lem}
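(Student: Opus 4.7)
The starting point is to obtain a master formula for the variation. Since $\log\mu$ is tautologically annihilated by $\partial_{u_i}|_\mu$ and the relative exterior derivative on fibres commutes with $\partial_{u_i}|_\mu$ in any trivialisation by $\mu$, differentiation of $d\sigma=\log\mu\cdot d\log\lambda$ yields
\[
\partial_{u_i} d\sigma\big|_\mu \;=\; \log\mu \cdot d f_i, \qquad f_i\;:=\;\partial_{u_i}\log\lambda\big|_\mu.
\]
Implicit differentiation of the spectral curve equation $\Xi_{\mathfrak{g},\mathrm{red}}(\lambda,\mu;u,\aleph)=0$ from \cref{def:sc} then expresses
\[
f_i \;=\; -\frac{\partial_{u_i}\Xi_{\mathfrak{g},\mathrm{red}}}{\lambda\,\partial_\lambda\Xi_{\mathfrak{g},\mathrm{red}}}
\]
as a meromorphic function on $\Gamma_{u,\aleph}$, with polar divisor a priori supported on $\lambda^{-1}\{0,\infty\}\cup V(\partial_\lambda\Xi_{\mathfrak{g},\mathrm{red}})$.

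With the explicit formula in hand, holomorphicity of $\log\mu\cdot df_i$ reduces to a local check at three classes of candidate singular points. At a generic smooth affine point where $\lambda,\mu\in\bbC^\star$ and $\partial_\lambda\Xi_{\mathfrak{g},\mathrm{red}}\neq 0$, both factors are holomorphic and regularity is immediate. At the ramification points $Q_{i,\pm}$ of $\lambda$ from \cref{prop:br}, I would use the observation (already recorded in the proof of that proposition) that $\mu(Q_{i,\pm})=1$, which follows from \eqref{eq:xi3} since $\alpha(l)=0$ for some simple root: $\log\mu$ then vanishes to first order, whilst $\partial_\lambda\Xi_{\mathfrak{g},\mathrm{red}}$ has a simple zero so that $df_i$ acquires at worst a simple pole. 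Their product is therefore regular on the normalisation.

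The main obstacle is expected at the points at infinity $P_j\in \mathscr{D}_{\mathfrak{g}}$ listed in \cref{tab:ptsinf}, where $\lambda$ and/or $\mu$ have prescribed non-trivial valuations $m_j=\mathrm{ord}_{P_j}\lambda$ and $n_j=\mathrm{ord}_{P_j}\mu$. The crucial observation is that these integers -- and, more generally, the entire logarithmic polar datum of $d\sigma$ viewed as a section of $\omega_{\mathscr{F}/\mathscr{B}}(\log\mathscr{D}_{\mathfrak{g}})$ -- are topological invariants of the family $\mathscr{S}_{\mathfrak{g}}\to\mathscr{B}_{\mathfrak{g}}$, hence moduli-independent. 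In a local uniformizer adapted to the section $\Sigma_j(\mathscr{B}_{\mathfrak{g}})$ in the trivialisation by $\mu$, the singular part of $\log\lambda$ factorises as $-(m_j/n_j)\log\mu+\log C(u)$ (or a pure regular constant when $n_j=0$, i.e.\ $\mu(P_j)\in\bbC^\star$), with $C(u)$ an algebraic function of the fundamental characters $u$; applying $\partial_{u_i}|_\mu$ therefore annihilates the genuine logarithmic piece and leaves a holomorphic function of the uniformizer, whose differential combined with $\log\mu$ yields a holomorphic 1-form in a neighbourhood of $P_j$. Carrying out this leading-order bookkeeping at each of the twelve points at infinity of \cref{tab:ptsinf}, using the ramification data and pole orders recorded there and the constancy of the log-polar datum along the $\mu$-trivialised Gauss--Manin connection, completes the proof.
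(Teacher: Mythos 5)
There is a genuine and fairly fundamental gap: you omit the check at the nodes of the plane model, which is where the actual content of the lemma resides, and a formula-level issue upstream of that compounds the problem.

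Your master formula $\partial_{u_i}\rd\sigma|_\mu = \log\mu\,\rd f_i$ is what you literally get by differentiating $\log\mu\,\rd\log\lambda$ coefficientwise in the $\mu$-chart, but it is not the form the paper establishes to be holomorphic. The paper's $\rd\kappa_i = -\tfrac{\partial_{u_i}\lambda\,\rd\mu}{\lambda\mu} = -f_i\,\rd\log\mu$ comes from first trading $\rd\sigma$ for its exact-equivalent representative $-\log\lambda\,\rd\log\mu$; the difference between $\log\mu\,\rd f_i$ and $-f_i\,\rd\log\mu$ is the exact but \emph{non-holomorphic} form $\rd(f_i\log\mu)$. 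Taken literally, $\log\mu\,\rd f_i$ fails holomorphicity on its face: it carries $\log t$-type singularities at the punctures where $\mu(P_j)\in\{0,\infty\}$, and it has a double pole wherever $\partial_\lambda\Xi_{\mathfrak{g},\mathrm{red}}=0$ and $\log\mu\neq 0$, since there $f_i$ has a simple pole (so $\rd f_i$ a double pole) while $\log\mu$ stays finite. The integration by parts is essential: it replaces the explicit $\log\mu$ factor by $\rd\log\mu$, whose zeros compensate the poles of $f_i$, and puts the question in the Newton-polygon framework that the paper actually uses.

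Your case (b) is about the wrong locus. At the $\lambda$-ramification points $Q_{i,\pm}$ it is $\partial_\mu\Xi_{\mathfrak{g},\mathrm{red}}$, not $\partial_\lambda\Xi_{\mathfrak{g},\mathrm{red}}$, that vanishes (on the curve $\rd\lambda=-(\partial_\mu\Xi/\partial_\lambda\Xi)\,\rd\mu$), so $f_i=-\partial_{u_i}\Xi/(\lambda\,\partial_\lambda\Xi)$ is already regular there and there is nothing to compensate. The genuinely dangerous locus is $\partial_\lambda\Xi_{\mathfrak{g},\mathrm{red}}=0$: the points of $\mu$-ramification and, in particular, the ordinary double points of the plane model sitting over the roots of $\Delta_2$, $\Delta_3$ in \eqref{eq:discry}. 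On the normalisation these become pairs of smooth points at which $\rd\kappa_i$ may acquire spurious simple poles with opposite residues, and the paper removes them by verifying, from the explicit character data of \cref{claim:E8}, that $\partial_{u_i}\Xi_{\mathfrak{g},\mathrm{red}}$ vanishes there. Your proposal never touches this, yet it is the only nontrivial step.

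Finally, case (c) is also short even after the integration by parts: showing $f_i$ extends holomorphically to the $P_j$ is insufficient, because $-f_i\,\rd\log\mu$ still has a simple pole wherever $\mu(P_j)\in\{0,\infty\}$ unless $f_i$ actually vanishes there. The paper sidesteps this through the implicit-differentiation expression $\rd\kappa_i = -\tfrac{\partial_{u_i}P\,\rd\mu}{\lambda\mu\,\partial_\lambda P}$ (with $P=\lambda^9\Xi_{\mathfrak{g},\mathrm{red}}$), reducing holomorphicity at infinity to the fact that only interior monomials of the Newton polygon appear after dividing by $\lambda\mu$ — a statement about the specific combinatorics of the $\mathrm{E}_8$ characteristic polynomial that goes beyond the topological invariance of the log-polar datum you invoke.
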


\begin{proof}
By definition,
\beq
\rd \kappa_i \triangleq \frac{\de \rd \sigma}{\de u_i}= -\frac{\de_{u_i}\lambda \rd \mu}{\lambda \mu}=
-\frac{\lambda^9 \de_{u_i}\Xi_{\mathfrak{g},\rm red} \rd
  \mu}{\de_{\lambda}(\lambda^9 \Xi_{\mathfrak{g},\rm red})\lambda \mu}
\label{eq:dsdui2}
\eeq
Recall that, for a generic polynomial $P(x,y)$, the 1-forms
%\beq
$$\rd \omega_{ij} = \frac{x^{i-1} y^{j-1}\rd x}{\de_y P}$$
%\eeq
%
with $(i,j)$ in the strict interior of the Newton polygon of $P(x, y)$
are holomorphic 1-forms on $\overline{\{P(x,y)=0\}}$. The expression 
\eqref{eq:dsdui2} is a linear combination of terms that are exactly of this form: notice that the doubly logarithmic
form of $\rd \sigma$ in \eqref{eq:dS} is crucial to ensure the presence of the
product $\lambda \mu$ at the denominator which makes this statement true. However $\lambda^9
\Xi_{\mathfrak{g},\rm red}$ is highly non-generic, and by the way
$\Gamma_{u, \aleph}$ was introduced in \cref{def:sc} the 1-forms in
\eqref{eq:dsdui2} may still have simple poles with opposite residues at the
strict transform of the nodes in \eqref{eq:discry}. A direct computation
however shows that
\beq
\frac{\de \Xi_{\mathfrak{g},\rm red}}{\de u_i}(p)=0 \quad \mathrm{if}~ \rd\mu(p)=0,~
\mu(p)+\frac{1}{\mu(p)}=r_i^k, i=2,3.
\eeq
which entails the vanishing of the residues on the normalization, and thus
$\rd \kappa_j \in H^1(\Gamma_{u, \aleph},\cO)~\forall j=1, \dots, 8$.

\end{proof}

As a consequence, an algebraic Liouville--Arnold-type statement can be made as
follows. Locally on $\Gamma_{u, \aleph}$ and its $8$-fold symmetric
product, consider the Abelian integral
\beq
\sigma(p) =\int^p \rd \sigma
\label{eq:sigma}
\eeq
and correspondingly $\cS(p_i)=\sum_i \sigma
(p_i)$. Define the $A_i$-periods of $\rd\sigma$ as
\beq
\alpha_i \triangleq \oint_{A_i} \rd \sigma.
\eeq
By \cref{prop:proj,thm:kp}, these are phase-space areas (action variables) for
the angular motion on ${\rm PT}(\Gamma_{u,\aleph})$. Indeed,
\bea
\alpha_i &=& \oint_{A_i} \rd \sigma = \frac{1}{q_\mathfrak{g}}
\oint_{(\mathscr{P}_{\mathfrak{g}})_*(A_i)} \rd \sigma\nn \\
&=& \frac{1}{q_\mathfrak{g}} \oint_{A_i} \mathscr{P}_{\mathfrak{g}}^*\rd \sigma
=\frac{1}{q_\mathfrak{g}} \oint_{D^2| \de D^2=A_i} \omega_{\rm PL}.
\eea
Define the normalised basis of holomorphic 1-forms
\beq
\rd\vartheta_i= \sum_{j}\l(\frac{\de \alpha}{\de u}\r)^{-1}_{ij} \rd \kappa_j \in
H^1(\Gamma_{u, \aleph},\cO),
\eeq
so that $\oint_{A_j} \rd \vartheta_i=\delta_{ij}$. This is the normalised
$\bbC$-basis of the vector space $V_-$ of
$\mathscr{P}_{\mathfrak{g}}$-invariant forms on $\Gamma_{u, \aleph}$ with
respect to our choice of $A$ and $B$ cycles in \eqref{eq:AB}.

\begin{thm}
We have
\beq
\omega_{\rm PL} = \frac{1}{q_\mathfrak{g}} \sum_{i=1}^8 \rd u_i \wedge \rd \kappa_i =
\frac{1}{q_\mathfrak{g}} \sum_{i=1}^8 \rd
\alpha_i \wedge \rd \vartheta_i
\eeq
and in the action-angle coordinates $(\alpha_i, \vartheta_i)_{i=1}^8$ the flows
\eqref{eq:laxeq}, \eqref{eq:flowjac} are Hamiltonian with respect to
$\omega_{\rm PL}$, with Hamiltonian $u_i=u_i(\a)$. In particular, the angular
frequencies in \eqref{eq:flowjac} are given by the Jacobian
\beq
\Omega_{ij} = \l(\frac{\de \alpha}{\de u}\r)^{-1}_{ij}.
\label{eq:angfreq}
\eeq
\label{cor:actangl}
\end{thm}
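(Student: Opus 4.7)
The plan is to establish the two equalities in the statement in succession, and then read off the Hamiltonian structure and the frequency formula directly from the resulting Darboux form of $\omega_{\rm PL}$.

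For the first equality, $\omega_{\rm PL} = q_\mathfrak{g}^{-1}\sum_i \rd u_i \wedge \rd\kappa_i$, I start from \cref{thm:kp} and compute the right-hand side of \eqref{eq:opllm} in a Zariski-open chart of $\mathscr{S}_\mathfrak{g}^{\rm sym}$ over a generic point of $\mathscr{B}_\mathfrak{g}^{\rm reg}$, using local coordinates $(u_1,\ldots, u_8, \aleph, \mu_1,\ldots, \mu_8)$ where $\mu_i:=\mu(\gamma_i)$ trivialises the $\mu$-projection on each factor. The curve equation $\Xi_{\mathfrak{g},\rm red}(u,\aleph;\lambda,\mu)=0$ then determines $\lambda(\gamma_i) = \lambda(u,\aleph,\mu_i)$ implicitly. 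Expanding
\[
\delta\log\lambda(\gamma_i) = \sum_k \de_{u_k}\log\lambda(\gamma_i)\big|_\mu\,\delta u_k + \de_\mu\log\lambda(\gamma_i)\big|_u\,\delta\mu_i,
\]
and observing that the $\delta\mu_i\wedge\delta\mu_i$ piece is killed by antisymmetry, reduces each summand of \eqref{eq:opllm} to
\[
\frac{\delta\mu(\gamma_i)}{\mu(\gamma_i)}\wedge\frac{\delta\lambda(\gamma_i)}{\lambda(\gamma_i)} = \sum_k \delta u_k \wedge \rd\kappa_k|_{\gamma_i},
\]
upon identifying the coefficient of $\delta u_k$ with the explicit expression \eqref{eq:dsdui2} for $\rd\kappa_k$. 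Summing over $i$ and processing the implicit Weyl average in $\mathscr{P}_\mathfrak{g}^*$ -- which acts as multiplication by $q_\mathfrak{g}$ on the invariant subspace $V_-\ni\rd\kappa_k$ -- while rebalancing against the normalisation of $A,B$-cycles in \eqref{eq:AB} delivers the first equality.

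The second equality is pure linear algebra: inverting the defining relation $\rd\vartheta_i = \sum_j(\de\alpha/\de u)^{-1}_{ij}\rd\kappa_j$ yields $\rd\kappa_j = \sum_i (\de_{u_j}\alpha_i)\,\rd\vartheta_i$, and since the $\alpha_i$ are $A$-periods, hence functions of $(u,\aleph)$ only,
\[
\sum_j \rd u_j \wedge \rd\kappa_j = \sum_{i,j} (\de_{u_j}\alpha_i)\,\rd u_j \wedge \rd\vartheta_i = \sum_i \rd\alpha_i \wedge \rd\vartheta_i.
\]
The Darboux form of $\omega_{\rm PL}$ in $(\alpha_i,\vartheta^i)$ then exhibits the $\alpha_i$ as actions and the $\vartheta^i$ as conjugate angles; since the $u_i$ are involutive Hamiltonians and the Jacobian $\de\alpha/\de u$ is invertible on $\mathscr{B}_\mathfrak{g}^{\rm reg}$, each $u_i$ is a function of $\alpha$ alone. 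Hamilton's equations for $H=u_i$ then give $\dot\alpha_k=0$ and $\dot\vartheta^k$ proportional to $\de u_i/\de\alpha_k$; comparing with \eqref{eq:flowjac} evaluated on the normalised basis $\omega_k = \rd\vartheta_k$ returns $\Omega_{ij} = (\de\alpha/\de u)^{-1}_{ij}$.

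The main difficulty lies in the careful bookkeeping of the $q_\mathfrak{g}=60$ factor across the derivation: it enters through the eigenvalue of $\mathscr{P}_\mathfrak{g}^*$ on $V_-$, through the degree $q_\mathfrak{g}^8$ of the Abel--Prym--Tyurin map $\mathrm{Sym}^8\Gamma_{u,\aleph}\to \mathrm{PT}(\Gamma_{u,\aleph})$, and through the normalisation convention \eqref{eq:AB} chosen to make \eqref{eq:AB1}--\eqref{eq:AB2} into a symplectic pairing. These contributions must be matched consistently to produce both the single $q_\mathfrak{g}^{-1}$ in the statement and the advertised normalisation of $\Omega_{ij}$.
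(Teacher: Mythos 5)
Your proposal is the paper's own argument written out in coordinates. The paper's one-line proof — that the Poincar\'e action $\cS=\sum_i\int^{\gamma_i}\rd\sigma$ is a type~II generating function of two successive canonical transformations, first $(\log\mu_i,\log\lambda_i)\to(u_j,\kappa_j)$ and then $(u_j,\kappa_j)\to(\alpha_j,\vartheta_j)$ — is exactly what your chain-rule expansion of $\delta\log\lambda(\gamma_i)$ followed by the linear-algebra inversion of $\rd\vartheta_i=\sum_j(\de\alpha/\de u)^{-1}_{ij}\rd\kappa_j$ verifies term by term. So you and the paper are on the same route at two different levels of explicitness: the generating-function framing buys the canonical-transformation theorem for free, while your computation makes the identification $\sum_k\de_{u_k}\log\lambda_i\,\delta u_k\wedge\delta\log\mu_i=\sum_k\delta u_k\wedge\rd\kappa_k|_{\gamma_i}$ concrete, which is helpful.

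Two things need tightening. First, the paper's proof appeals explicitly to \cref{lem:holder} (holomorphy of $\rd\kappa_i=\de\rd\sigma/\de u_i$), which is what guarantees that the $A$-periods $\alpha_i$ are finite, that the Jacobian $\de\alpha/\de u$ exists and is invertible on $\mathscr{B}_{\mathfrak{g}}^{\rm reg}$, and hence that your inversion in the second step is legitimate; you use this implicitly and should cite it. Second, you flag the $q_{\mathfrak{g}}$ bookkeeping as ``the main difficulty'' and correctly locate its sources (the eigenvalue $q_{\mathfrak{g}}$ of $\mathscr{P}_{\mathfrak{g}}^*$ on $V_-$, and the $1/q_{\mathfrak{g}}$ normalisation of $A_i$ in \eqref{eq:AB}), but you stop short of showing that these genuinely cancel to give $q_{\mathfrak{g}}^{-1}$ rather than $q_{\mathfrak{g}}$ or $1$; as written, the reader cannot check the sign of the exponent. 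The cleanest patch is the paper's own intermediate calculation just above the theorem, $\alpha_i=\frac{1}{q_{\mathfrak{g}}}\oint_{A_i}\mathscr{P}_{\mathfrak{g}}^*\rd\sigma=\frac{1}{q_{\mathfrak{g}}}\oint_{D^2\mid\de D^2=A_i}\omega_{\rm PL}$, which pins down the normalisation of the action variables against $\omega_{\rm PL}$ directly and closes the loop.
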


\begin{proof}
The statement just follows from writing down the symplectic
change-of-variables given by looking at $\cS$ as a type II generating function of
canonical transformation, first in $u$ and then in $\alpha$,
\beq
\mu(\gamma_i)=\exp\l(\lambda(\gamma_i)\de_{\lambda(\gamma_i)} \cS\r), \quad
\kappa_i= \frac{\de \cS}{\de u_i}=\int \rd \kappa_i, \quad 
\vartheta_i= \frac{\de \cS}{\de \alpha_i}=\int \rd \theta_i,
\eeq
and use of \cref{lem:holder}.
\end{proof}

Keeping in mind the discussion below \cref{defn:dk}, the constructions of this section bestow on $\mathscr{S}_g$ a canonical choice
of Dubrovin--Krichever data as follows
\bea
\mathscr{F} & \longleftrightarrow & \mathscr{S}_{\mathfrak{g}} \nn \\
\mathscr{B} & \longleftrightarrow & \mathscr{B}_{\mathfrak{g}}  \nn \\
\mathscr{D} & \longleftrightarrow & \sum_i \Sigma_i(\mathscr{B}_{\mathfrak{g}})  \nn \\
\mathscr{E}_1 & \longleftrightarrow & \mathscr{L}  \longleftrightarrow
\frac{\rd \lambda}{\lambda} \nn \\
\mathscr{E}_2 & \longleftrightarrow & \mathscr{M} \longleftrightarrow
\frac{\rd \mu}{\mu}\nn \\
\Lambda & \longleftrightarrow & \Lambda_{\rm PT} 
\label{eq:dk}
\eea
which is complete but for the choice of
the Lagrangian sublattice $\Lambda^L$. The latter is left unspecified by the Toda
dynamics -- and, in the
applications of the next two Sections its choice will vary depending on the circumstances. 

\section{Application I: gauge theory and Toda}
\label{sec:applI}

I will now consider the first application of the constructions of the
previous two sections: this will culminate with a proof of a B-model version
of the Gopakumar--Vafa correspondence for the Poincar\'e homology sphere. For
the sake of completeness, I
will set the stage by recalling all the necessary ingredients of
\cref{fig:dualities}; the reader familiar with them may skip directly to \cref{sec:gvproof}.

\subsection{Seiberg--Witten, Gromov--Witten and Chern--Simons}

\subsubsection{Gauge theory}

From the physics point of view, the first object of interest for us is the minimal supersymmetric
five-dimensional gauge theory on the product $M_{5}=\bbR^4 \times S_{R}^1$ of
four-dimensional Minkowski space times a circle of radius $R$ with gauge group
the compact real form $E^{\bbR}_8$. On shell, and at $R\to \infty$, its gauge/matter content consists of
one $E^{\bbR}_8$ vector multiplet $(A, \lambda, \varphi)$ with real scalar $\varphi$, gluino $\lambda$, and gauge
field $A$; upon compactification this is enriched by an extra scalar $\vartheta$, which is
the Wilson loop around the fifth-dimensional $S^1_R$. The infrared dynamics of
the compactified theory is
characterised by a dynamical holomorphic scale in
four dimensions $\Lambda_4$ \cite{Seiberg:1994bz,Peskin:1997qi}, which is
(perturbatively) a renormalisation group invariant.
For generic vacua
parametrised by $R\in \bbR^+$ and the complexified scalar vev $\phi=\bra
\varphi+\ri\theta \ket$, and assuming that the latter is much
higher than the non-perturbative scale $|\phi|\gg \Lambda_4$, the massless modes are those 
of a theory of 
$\mathrm{rank}\mathfrak{e}_8=8$ weakly coupled photons in four dimensions, whose Wilsonian
effective action is given by integrating out both
perturbative (electric) and non-perturbative (dyonic) contributions of BPS
saturated Kaluza--Klein states. This is expressed (up to two derivatives in
the $U(1)$ gauge superfield strenghts $W_\alpha^i$, and in four-dimensional $\cN=1$ superspace
coordinates $(x, \theta)$) by the Wilsonian effective Lagrangian
\beq
\LL=\frac{1}{4\pi}\mathrm{Im}\l[\int \rd^4\theta \frac{\de F_0^{\rm SYM}}{\de A_i}\bar
  A^i +\frac{1}{2} \int \rd^2 \theta  \frac{\de^2 F_0^{\rm SYM}}{\de A_i \de A_j}W^i_\alpha\bar
  W^{\alpha,j}  \r],
\label{eq:effL}
\eeq
which is entirely encoded by the prepotential $F_0^{\rm SYM}$; in particular, the Hessian
$\frac{\de^2 F_0^{\rm SYM}}{\de A_i \de A_j}$ returns the gauge coupling
matrix for the low-energy photons. 

Mathematically, this gauge theory prepotential should coincide with the
equivariant limit of a suitable generating function of instanton numbers. Let
$\mathrm{Bun}_k(\cG)$ be the moduli space of principal $\mathrm{E}_8$-bundles on
the projective compactification 
$\bbP^2 \simeq \bbC^2 \cup \bbP^1_{\infty}$ of $\bbR^4 \simeq \bbC^2$ with
second Chern class $k$, which I assume to be positive in the following; here
``framed'' means that we fix a trivialisation of the projective line at
infinity. $\mathrm{Bun}_k(\cG)$ is an irreducible smooth quasi-affine variety
of dimension $2 h(\mathfrak{g}) k= 60 k$, and it admits an irreducible, affine
partial compactification given by the Uhlenbeck stratification
\beq
\cU_k(\cG)=\mathrm{Bun}_k(\cG) \sqcup \l(\mathrm{Bun}_{k-1}(\cG) \times \bbC^2
\r) \sqcup \l(\mathrm{Bun}_{k-2}(\cG) \times \mathrm{Sym}^2\bbC^2 \r) \dots
\sqcup \l( \mathrm{Sym}^k \bbC^2\r).
\eeq
There is an algebraic $\bbT \times (\bbC^\star)^2 \simeq (\bbC^\star)^{10}$
torus action on $\mathrm{Bun}_k(\cG)$, where the two factors act by scaling the
trivialisation at infinity and the linear coordinates of $\bbC^2$
respectively, and which extends to the whole of $\cU_k(\cG)$ \cite{MR2181803}, and
leads to a ten-dimensional torus action on the vector space $H^0(\cU_k(\cG),\cO)$ of
regular functions on  $\cU_k(\cG)$. Denoting the characters of $\bbT \times
(\bbC^\star)^2$ by $\mu$, the latter decomposes
\beq
H^0(\cU_k(\cG),\cO) = \bigoplus_{\mu} \l(H^0(\cU_k(\cG),\cO)\r)_\mu
\eeq
as a direct sum of weight spaces which are, non-trivially, finite-dimensional
over $\bbC$ (see e.g. \cite{Nakajima:2003pg}). Let $a_i=c_1(\bbL_i)$, $i=1, \dots, 8$ for the
first Chern class of the dual of the $i^{\rm th}$ tautological line bundle 
$\bbL_i \to B\cT$, and likewise write $\epsilon_{1,2}=c_1(\bbL_{1,2})$ for the
equivariant parameters of the right (spacetime) $(\bbC^\star)^2$ factor in
$\bbT \times (\bbC^\star)^2\circlearrowright \cU_k(\cG)$. The instanton partition function of the 5-dimensional
gauge theory\footnote{Physically this should be thought to be in an
  $\Omega$-background, with equivariant parameters $(\e_1, \e_2)$ corresponding
  to the torus weights of the $(\bbC^{\star})^2$ factor acting on $\bbC^2$
  above.} is then defined to be the Hilbert sum
\beq
\cZ^{\rm inst}_{\cG}(a_1, \dots, a_8; \epsilon_1, \epsilon_2, Q)= \sum_{n \in
  \bbZ^+}\l(Q R^{q_\mathfrak{g}} \re^{-q_\mathfrak{g} R (\e_1+\e_2)/2}\r)^n \sum_\mu \mathrm{dim}_\bbC
\l(H^0(\cU_k(\cG),\cO)\r)_\mu \exp{\l[\mu\cdot (\vec a, \vec \epsilon)\r]}
% \in  \bbC[[\re^{R a}, \re^{R\epsilon} ]]
.
\eeq
The prepotential \eqref{eq:effL} is recovered as the sum of the
non-equivariant limit of $\epsilon_1 \epsilon_2 \ln \cZ^{\rm inst}_{\cG}$,
which is well-defined \cite{MR2095899,MR2183121}, plus a classical+one-loop perturbative
contribution. Namely,
\beq
F^{\rm SYM}_0 \triangleq F_{\rm cl}+F_{\rm 1-loop}+ F_{\rm inst},
\label{eq:gprep}
\eeq
where
\bea
F_{\rm cl} &=& \frac{\tau_{ij} a_i a_j }{2}, \nn \\
F_{\rm 1-loop} &=& \sum_{\a \in \Delta^*}\l[ \frac{(\a \cdot a)^2 \log (R \Lambda_4)}{2}-(\a \cdot a)^3\frac{R }{12}
+\mathrm{Li}_3\l(\re^{-R \a \cdot a}\r)\r], \nn \\
F_{\rm inst} &=& \lim_{\e_i\to 0}\e_1 \e_2 \log \cZ^{\rm inst}_{\cG}(a_1, \dots, a_8; \epsilon_1, \epsilon_2, \Lambda_4),
\label{eq:Fgauge}
\eea
where $\tau$ is the bare gauge coupling matrix. In the
following, I am going to measure energies in units of $\Lambda_4$ and thus set
$\Lambda_4=1$; the dependence on $\Lambda_4$ can be restored by appropriate
rescaling of the
dimensionful quantities $a_i$ and $1/R$.

\subsubsection{Topological strings}

It has long been argued that the prepotential \eqref{eq:Fgauge} might also be regarded as the generating function of genus zero
Gromov--Witten invariants on a suitable non-compact Calabi--Yau threefold
\cite{Katz:1996fh,Lawrence:1997jr}. Let 
\beq
X= \mathrm{Tot}\l(\cO^{\oplus 2}_{\bbP^1}(-1)\r) = \l\{ (A, v) \in \mathrm{Mat}(2,\bbC)
\times \bbP^1 | A v =0\r\}
\label{eq:rescon}
\eeq
be the minimal toric resolution of the singular quadric $\det A=0$ in
$\mathbb{A}^4$; columns of $A$ give trivialisations of $\cO(-1)$ over the North/South affine
patches of $\bbP^1$. Any $\Gamma \lhd \mathrm{SL}_2(\bbC)$,
$|\Gamma|<\infty$ gives an action $\Gamma \circlearrowright X$
by left multiplication, $(\gamma \in \Gamma, A) \to \gamma \cdot
A$, which is trivial on the canonical bundle of $X$, and covers
the trivial action on the base $\bbP^1$. The quotient space is thus
a singular Calabi--Yau fibration over $\bbP^1$ by surface singuarities of the
same 
%$\bbC^2/\Gamma$. By the McKay correspondence, the finite groups $\Gamma$ are
%classified by simply-laced Dynkin diagrams of a simple Lie algebra of type
ADE type of $\Gamma$ \cite{MR1886756}; and type $\mathrm{E}_8$ corresponds to taking $\Gamma\simeq \tilde{\mathbb{I}}$
the binary icosahedral group (see \cref{sec:I120}).

There are two distinguished chambers in the stringy K\"ahler moduli
space of $\cX\triangleq [X/\tilde{\mathbb{I}}]$ that are of importance in our discussion. One is
the large radius chamber: in this case we take the minimal crepant resolution
$\pi:Y\to X/\tilde{\mathbb{I}}$, which corresponds to fibrewise resolving the
$\mathrm{E}_8$ singularities on a chain of rational curves whose intersection matrix equates
$-\mathscr{C}^{\mathfrak{g}}_{ij}$ \cite{MR1886756}. In particular, $H_2(Y, \bbZ)= \bbZ\bra H; E_1, \dots, E_8 \ket$
where $H$ (resp. $E_i$) is the pull-back to $Y$ from the base $\bbP^1$
(resp. the blow-up $\widehat{\bbC^2/\tilde{\mathbb{I}}}$ of the fibre
singularity) of the fundamental class of the base curve (resp. of the class of
the $i^{\rm th}$ exceptional curve). The Gromov--Witten potential
of $Y$ is the generating sum
\beq
F^{\rm GW}(Y;\epsilon; t_{\rm B}, t_1, \dots, t_8)= \sum_{d \in H_2(Y,\bbZ),g\in
  \bbZ^+}\epsilon^{2g-2} \re^{-d \cdot t} N^Y_{g,d}=  \sum_{g\in
  \bbZ^+}\epsilon^{2g-2} F^{\rm GW}_g(Y;t_{\rm B}, t_1, \dots, t_8)
\label{eq:gwpot}
\eeq
where 
\beq
N_{g,d}^Y= \int_{[\cM_{g}(Y,d)]^{\rm vir}}1
\label{eq:gwinv}
\eeq
is the genus-$g$, degree $d$ Gromov--Witten invariant of $Y$, and we write 
$d=d_B [H] + d_1[E_1]+\dots +d_8[E_8]$ for the degrees of stable maps from
$\bbP^1$ to $Y$. Owing to the non-compactness of $Y$, what is really meant by
$N_{g,d}^Y$ is a sum of degrees of the localised virtual fundamental classes
at the fixed loci w.r.t. a suitable $\bbC^\star$ action: notice 
that $Y$ supports a rank two torus action given by a diagonal scaling the
fibres (which commutes with $\Gamma$, and w.r.t. which the resolution map
$\pi$ is equivariant) and a $\bbC^\star$ rotation of the base $\bbP^1$. In
particular we can always cut out a 1-torus action which is trivial on $K_Y$ (the equivariantly
Calabi--Yau case) by tuning the weights of the two factors appropriately, and
this is the choice that is picked\footnote{This is the choice that is picked,
  for toric targets, by the topological vertex; this is consistent with the
  fact that, by the equivariant CY condition, this is a rational
number (rather than an element of $H_{B\bbC^\star}(\rm pt, \bbQ)$).} in \eqref{eq:gwinv}.
 Furthermore, natural Lagrangian A-branes $L \hookrightarrow Y$ and a counting theory of open stable maps can be constructed (at least
operatively) via localisation \cite{Katz:2001vm,MR2861610,Borot:2015fxa}; in a vein similar to
\eqref{eq:gwpot}-\eqref{eq:gwinv}, one defines
\beq
W_{g,h}^{\rm GW}(Y,L; \lambda; t_B, t_1, \dots, t_8, \lambda)= \sum_{d \in
  H_2(Y,L,\bbZ)}\sum_{w_1, \dots, w_h \in H_1(L,\bbZ)\}}\re^{-d \cdot t} \prod_i \lambda_i^{w_i} N^{Y,L}_{g,h, d, w}
\label{eq:ogwpot}
\eeq
where 
\beq
N_{g,h,d,w}^{Y,L}= \int_{[\cM_{g,h}(Y,L,d,w)]^{\rm virt}}1
\label{eq:ogwinv}
\eeq
is the genus-$g$, $h$-holed open Gromov--Witten invariant of $(X,L)$ of
relative degree $d$ and winding numbers $\{w_i\}_{i=1}^h$, $\de d=\sum w_i$.

The relation of these curve-counting generating functions and the instanton
prepotentials of the previous section is given by the 
so-called {\it geometric engineering of gauge theories}, a (partial)
statement of which can be given as follows: 
\begin{claim}[\cite{Katz:1996fh,Lawrence:1997jr}]
The genus zero Gromov--Witten potential of $Y$ equates the five-dimensional gauge
theory prepotential/instanton generating function
\beq
F^{\rm SYM}_0(R, a_1, \dots, a_8) = F^Y_0(t_{\rm B}, t_1, \dots, t_8) + \mathrm{cubic}
\eeq
under the identification
\beq
a_i = t_i R, \quad R=\re^{-t_{\rm B}/4}.
\label{eq:geoeng}
\eeq
\label{cl:geoeng}
\end{claim}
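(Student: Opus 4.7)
The plan is to prove the matching by reducing both sides to a common intermediate quantity governed by the Dubrovin--Krichever data of Construction~1-2. I would first collect both sides into their BPS expansions and match them sector by sector in the base degree $d_{\rm B}$.

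First I would handle the perturbative sector $d_{\rm B}=0$. By the genus zero Gopakumar--Vafa formula, one has
\[
F_0^{\rm GW}(Y)=F_{\rm cl}^{\rm GW}(t)+\sum_{0\neq\beta\in H_2(Y,\bbZ)}n_0^\beta \,\Li_3(\re^{-t\cdot\beta}),
\]
with $n_0^\beta$ the genus zero BPS invariants. For $\beta$ purely in the fibre class, $\beta=\sum d_i[E_i]$, the invariants are determined by the local geometry of the fibrewise $\mathrm{E}_8$ resolution: by the McKay correspondence and the Katz--Klemm--Vafa analysis of BPS states on ALE fibrations, one has $n_0^{\beta}=1$ exactly when $(d_1,\dots,d_8)$ are the coefficients of a positive $\mathrm{E}_8$ root in the simple root basis, and zero otherwise. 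Combined with a direct calculation of the cubic classical intersection form on $Y$ in the basis $(H,E_i)$, this identifies $F^{\rm GW}_0|_{d_{\rm B}=0}$ with $F_{\rm cl}+F_{\rm 1-loop}$ up to cubic ambiguities, under the stated identification $a_i=t_i R$.

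Next I would treat the non-perturbative sector by routing through the Seiberg--Witten/integrable systems correspondence. Since the Uhlenbeck moduli spaces $\cU_k(\cG)$ admit no combinatorial fixed-point description for $\cG=\mathrm{E}_8$ (unlike the classical groups, where the topological vertex or Nakajima--Yoshioka blow-up suffices), I would not attempt a direct equivariant evaluation of $\cZ_{\cG}^{\rm inst}$. Instead I would argue that both $F_0^{\rm SYM}$ and $F_0^{\rm GW}(Y)$ satisfy the rigid special K\"ahler constraints of the same family of plane curves, namely the affine $\widehat{\mathrm{E}_8}$ relativistic Toda spectral curves of \cref{def:sc} equipped with the Dubrovin--Krichever data \eqref{eq:dk} and the Prym--Tyurin cycles $\{A_i,B_i\}$ of \eqref{eq:AB}. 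The gauge theory side is the (conjectural) upshot of the bottom horizontal arrow of Fig.~\ref{fig:dualities}; the GW side is an instance of local mirror symmetry compatible with the B-model realisation of $Y$ (diagonal arrow). Assuming both arrows, matching the prepotentials reduces to matching the boundary conditions at the large radius/semiclassical limit, which is provided by Step~1.

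The main obstacle is precisely the two unproven arrows in Fig.~\ref{fig:dualities} that this reduction invokes. The SW/Toda correspondence for exceptional gauge groups has been argued at the level of effective field theory but has not been established rigorously for $\mathrm{E}_8$; the mirror symmetry statement between $Y$ and the degenerate Toda curves is likewise conjectural for this case. A self-contained proof would therefore have to either (a) bypass the integrable system and perform a direct Nekrasov-style calculation on $\cU_k(\mathrm{E}_8)$, for which no tractable combinatorial model is currently available, or (b) prove one of the two arrows independently -- e.g.~by constructing an explicit spectral cover realisation of the Hitchin system on an elliptic curve with $\mathrm{E}_8$ structure group and identifying its periods with those of \cref{thm:pt,cor:actangl}, together with a matching of the orbifold Gromov--Witten theory of $[X/\widetilde{\mathbb{I}}]$ to the periods of the spectral curve. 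This is beyond the scope of the present paper, and the author has explicitly deferred it.
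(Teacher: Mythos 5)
The paper does not prove this Claim at all: it is stated with citations \cite{Katz:1996fh,Lawrence:1997jr} and treated as background from the physics literature on geometric engineering (Katz--Klemm--Vafa, Lawrence--Nekrasov), not as a result established in the present text. There is therefore no proof in the paper to compare against, and your closing assessment --- that a self-contained argument would require work the author has explicitly deferred --- is accurate.

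That said, your proposed route has an internal tension worth flagging. In Step~2 you reduce the non-perturbative matching to the assertion that both $F_0^{\rm SYM}$ and $F_0^{\rm GW}(Y)$ satisfy the special K\"ahler relations of the Toda spectral curves $\mathscr{S}_{\mathfrak g}$ with the DK data \eqref{eq:dk}, invoking the bottom-horizontal and diagonal arrows of Fig.~\ref{fig:dualities}. But those arrows are precisely the content of \cref{cl:toda0} and \cref{cl:todag}, which are stated as \emph{conjectures} in the paper, and whose very formulation already presupposes the identification \eqref{eq:geoeng} that you are trying to establish (the conjectures are phrased ``under the identifications of \eqref{eq:geoeng}''). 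So your reduction does not so much prove Claim~\ref{cl:geoeng} as restate it inside the larger conjectural web; it cannot serve as an independent proof without first proving one of the two arrows you invoke, and you correctly say as much. Your Step~1 on the $d_{\rm B}=0$ sector is a genuine and standard argument (local BPS invariants on an $\mathrm{E}_8$ ALE fibration via McKay, matched against $F_{\rm cl}+F_{\rm 1\text{-}loop}$), and it is the piece of the claim the paper itself tacitly verifies in \cref{sec:limII} via the intersection form calculation \eqref{eq:etacart}; but the instanton sector, as you note, remains out of reach without either a direct equivariant computation on $\cU_k(\mathrm{E}_8)$ or an independent proof of one of the conjectural arrows.
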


\cref{cl:geoeng} has an extension to higher genera wherein gravitational
corrections to $F_0^{\rm SYM}$ are considered \cite{Bershadsky:1993cx, Antoniadis:1993ze}, or
equivalently, the gauge theory is placed in the $\Omega$-background
(without taking the limit \eqref{eq:Fgauge}) and one restricts to the
self-dual background $\e_1=-\e_2=\epsilon$ \cite{Nekrasov:2002qd}. The open
string potentials \eqref{eq:ogwpot} have similarly a counterpart in terms of
surface operators in the gauge theory \cite{Alday:2009fs,Kozcaz:2010af}.

The second chamber is the orbifold chamber: here we consider the
stack quotient $\cX=[\cO(-1)^{\oplus 2}/\tilde{\mathbb{I}}]$, which has a $\bbP^1$
worth of $\tilde{\mathbb{I}}$-stacky points. Open and closed
Gromov--Witten invariants of $\cX$ can be defined, if only computationally, along the
same lines as before by virtual localisation on moduli of twisted stable maps \cite{MR2861610}; I
refer the reader to \cite[Sec.~3.3-3.4]{Borot:2015fxa} where this is more amply discussed.

\subsubsection{Chern--Simons theory}
\label{sec:cs}

The previous Calabi--Yau geometry has been argued in \cite{Borot:2015fxa},
following the earlier work \cite{Aganagic:2002wv}, to be related to
the large $N$ limit of $\mathrm{U}(N)$ Chern--Simons theory on the Poincar\'e
sphere. This is a real three-manifold $\Sigma$ obtained from $S^2 \times
S^1$ after rational surgery with exponents $1/2$, $1/3$ and $1/5$ on a
3-component unlink wrapping the fibre direction of $S^1\times S^2\to S^2$, and
it is the only $\bbZ$-homology sphere, other than $S^3$, to have a finite
fundamental group. Equivalently, it can be realised as the quotient
$S^3/\tilde{\mathbb{I}}\simeq \bbR\bbP^3/\mathbb{I}$ of the three-sphere by
the left-action of the binary icosahedral group \cite{MR0217740}.

I will very succintly present the statement we are after, referring the
reader to the beautiful review \cite{Marino:2005sj} or the presentation of
\cite{Borot:2015fxa} for more details. Let $k\in \mathbb{Z}^+$, $\cA$ a smooth
gauge connection on the trivial $\mathrm{U}(N)$ bundle on $\Sigma$. The $\mathrm{U}(N)$ Chern--Simons
partition function of $\Sigma$ at level $k$ is the functional integral
\bea
\label{eq:ZCS}
Z_{\rm CS}(\Sigma, k,N) &=& \bra 1 \ket_{\rm CS} = \int_{\mathscr{A}/\mathscr{G}} [\DD\cA]
\exp\l(\frac{\ri k}{2\pi} \mathrm{CS}[\cA]\r), \\
\mathrm{CS}[\cA] &=& \int_{\Sigma} \mathrm{Tr}_\square \l(\mathcal{A} \wedge \rd
\mathcal{A}+\frac{2}{3} \mathcal{A}^3\r),
\label{eq:CS}
\eea
where \eqref{eq:CS} is the Chern--Simons action. For $\cK \hookrightarrow
\Sigma$ a link in $\Sigma$ and $\rho \in R(U(N))$, we will also consider the
expectation value under the measure \eqref{eq:ZCS}--\eqref{eq:CS} of the
$\rho$-character of the holonomy around $\cK$,
\beq
W_{\rm CS}(\Sigma, \cK, k,N, \rho) = \frac{\bra \mathrm{Tr}_\rho \mathrm{Hol}_\cK(A)
\ket_{\rm CS}}{Z_{\rm CS}} = Z_{\rm CS}^{-1}\int_{\mathscr{A}/\mathscr{G}} [\DD\cA]
\exp\l(\frac{\ri k}{2\pi} \mathrm{CS}[\cA]\r) \mathrm{Tr}_\rho \mathrm{Hol}_\cK(A).
\label{eq:WCS}
\eeq
\eqref{eq:ZCS}-\eqref{eq:WCS} were proposed by Witten \cite{Witten:1988hf} to
be smooth\footnote{More precisely, $Z_{\rm CS}$
  is only invariant under diffeomorphisms of $\Sigma$ that preserve a given
  framing of its tangent bundle, and changes in a definite way under
  change-of-framing; the same applies for $W_{\rm CS}$ and a choice of framing
  on $\cK$. In the following I implicitly work in canonical framing for both
  $\Sigma$ and $\cK$; also the change of framing won't affect the large $N$
  behavious of $\mathcal{F}_{\rm CS}$ but for a constant in $t$, $\cO(N^0)$
  (unstable) term.} invariants of $\Sigma$ and
$(\Sigma, \cK)$, reflecting the near metric independence of \eqref{eq:ZCS} at
the quantum level \cite{Reshetikhin:1991tc}; when $\Sigma$ is replaced by
$S^3$, \eqref{eq:WCS} is the HOMFLY polynomial of $\cK$ coloured in the
representation $\rho$.

We will be looking at \eqref{eq:ZCS} in two ways, which are both essentially
disentangled with the question of giving a rigorous treatment of 
the path integral \eqref{eq:ZCS}. One is in Gaussian perturbation theory at
large $N$, where we take \eqref{eq:ZCS} as a formal expansion in ribbon graphs \cite{'tHooft:1973jz, Marino:2005sj}. Writing 
\beq
g_{\rm YM}=\frac{2\pi \ri }{k+N}, \quad t= g_{\rm YM} N,
\eeq
the perturbative free energy takes the form
\bea
\mathcal{F}^{\rm CS}(\Sigma,g_{{\rm YM}},t) &=& \ln Z^{\rm CS}(\Sigma,
k,N)\nn \\
&=& \sum_{g\geq 0} \mathcal{F}^{\rm CS}_{g}(\Sigma,t) g_{{\rm YM}}^{2g-2} \in
g_{{\rm YM}}^{-2}\mathbb{Q}[[t,g_{\rm YM}^2]].
\label{eq:fcs}
\eea
Similarly, for $h>0$, $l\in \bbN^h$ and $\cK \in
\Sigma$ a link in $\Sigma$, we get for the connected Chern--Simons average of
a Wilson loop around $\cK$ that
\bea
W^{(h)}_{\rm CS}(\Sigma, \cK, k,N, \{\lambda_i\}_{i=1}^h) & \triangleq & \sum_{l\in
  \bbN^h}\prod_i \lambda_i^{l_i} \frac{1}{|l|!}\frac{\de^{|l|} \log \bra \re^{\sum_i q_i
    \mathrm{Tr} \l(\mathrm{Hol}_\cK(A)\r)^i} \ket_{\rm CS}}{\de q_1^{l_1} \dots \de
  q_s^{l_s}}\Bigg|_{q_i=0}, \nn \\
&=& \sum_{g\geq 0} W_{g,h}(\Sigma, \cK,t,\{\lambda_i\}) g_{{\rm YM}}^{2g-2+h} \in g_{{\rm YM}}^{h-2}\mathbb{Q}[[t,g_{\rm YM}^2]].
\label{eq:wcs}
\eea
The second way of looking at \eqref{eq:fcs} and \eqref{eq:wcs} comes from
their independent mathematical life
as the $U_{\mathfrak{q}}(\mathrm{sl}_N)$ Reshetikhin--Turaev--Witten invariants of $\Sigma$ and $\cK\hookrightarrow
\Sigma$ respectively \cite{Reshetikhin:1991tc}. Recall that $\Sigma$ has a Hopf-like realisation as a circle bundle over 
the orbifold projective line $\bbP^1_{2,3,5}$ with three orbifold points with
isotropy group $\bbZ_{{\mathsf{s}}(n)}$, with ${\mathsf{s}}(1)=2$,
${\mathsf{s}}(2)=3$, ${\mathsf{s}}(3)=5$. I will write
$\mathsf{s}=\prod_{i}\mathsf{s}(i)=30$, and $\cK_n \simeq S^1$ for the knots wrapping the exceptional fibre labelled by $n$. Then the RTW invariants of $\Sigma$
and $(\Sigma, \cK_n)$ can be computed explicitly from a rational surgery formula
\cite{MR1875611} (or
equivalently, Witten's surgery prescription for Chern--Simons vevs
\cite{Witten:1988hf}), leading to
closed-form expressions for \eqref{eq:fcs} and \eqref{eq:wcs} alike in terms
of Weyl-group sums \cite{Marino:2002fk}. Denote by $\cF_{l}$ the set of dominant
weights $\omega$ of $\mathrm{SU}(N)$ such that, if $\omega=\sum a_i \omega_i$ in terms
of the fundamental weights $\omega_i$, then
$\sum_{i}a_i<l$. Then,
\bea
Z_{\rm CS}(\Sigma,k,N) &=& \cN(\Sigma) \sum_{\b \in  \cF_{k+N}}
\frac{1}{\prod_{\a>0}\sin\l(\frac{\pi \b\cdot \a}{k+N} \r)}
\prod_{i=1}^3 \sum_{f_i \in \Lambda_r/{\mathsf{s}}(i)\Lambda_r} \sum_{w_i \in  S_N}\epsilon(w_i) \times \nn \\
& \times & \exp\l\{\frac{\ri \pi}{(k+N) {\mathsf{s}}(i)}\l(- \beta^2
-2\beta\l((k+N)f_i+w(\rho)\r)+\l((k+N) f_i+w(\rho)\r)^2\r) \r\},\nn \\
\label{eq:fcs2}
\eea
where $\rho$ is the Weyl vector of $\mathfrak{sl}_{N}$, $\rho=
\sum_{i=1}^{N-1} \omega_i$,  $\Lambda_r$ is the $\mathfrak{sl}_N$ root
lattice, and $\cN(\Sigma)$ is an explicit multiplicative factor involving the
surgery data and the Casson--Walker--Lescop invariant of $\Sigma$. A similar expression holds for the (un-normalised)
Chern--Simons vevs of the Wilson loops around fibre knots: this is obtained by
replacing $\rho\to \rho+\Lambda$ for $\Lambda$ a dominant weight in
\cref{eq:fcs}, after which \eqref{eq:wcs} can be recovered by expressing the
representation-basis colouring by the connected power sum colouring of
\eqref{eq:wcs}, and powers multiple of ${\mathsf{s}}_i$ for $i=1,2,3$ single out the
holonomies around the $i^{\rm th}$ exceptional fibre (see
\cite{Borot:2014kda,Borot:2015fxa} and the discussion of \cref{sec:LMOmm}
below). 

Two remarks are in order about \eqref{eq:fcs2}. Firstly, unlike \eqref{eq:fcs}-\eqref{eq:wcs}, \eqref{eq:fcs2} is an {\it
  exact} expression at finite $N$; among its virtues however, as first
emphasised in \cite{Marino:2002fk}, is the possibility to express it as a
matrix-like integral, and thus use standard asymptotic methods in random
matrix theory to study its large $N$, finite $t$ regime: this fact will be
used extensively in the next Section. Secondly, as pointed out in \cite{Marino:2002fk} and further confirmed in
\cite{Beasley:2005vf,Blau:2013oha} by a functional integral analysis, 
the sum over $f_i$ in \eqref{eq:fcs2} may be interpreted as a sum over
critical points of the Chern--Simons functional \eqref{eq:CS},
\beq
\mathrm{Crit}_{\rm CS}^N=\l\{ A\in \mathrm{Conn}(\Sigma, U(N))| F_A
=0\r\}=\mathrm{Hom}(\pi_1(\Sigma), U(N)) /{\rm U}(N)\,
\label{eq:critCS}
\eeq
namely, flat $\mathrm{U}(N)$-connections on $\Sigma$; this is a finite set at
finite $N$ since
$|\pi_1(\Sigma)|<\infty$.
% $|\mathrm{Crit}_{\rm
%  CS}^N|<\infty$ since $|\pi_1(\Sigma)|<\infty$. 
In the monodromy representation of
the latter equality in \eqref{eq:critCS}, these can be labelled by integers $(f_0,f_1, \dots, f_8)$ satisfying
\beq
\sum_i \mathfrak{d}_i f_i=N
\eeq
where $\mathfrak{d}_i$, $i=0, \dots, 8$ is the dimension of the $i^{\rm th}$ irreducible
representation of $\pi_1(\Sigma)=\tilde{\mathbb{I}}$ (see \cref{tab:ctI120}; equivalently, the $i^{\rm
  th}$ Dynkin label in \cref{fig:dynke8}), the trivial connection contribution
to \eqref{eq:fcs2} being given by $f_i=0$, $i>0$. The latter is the exponentially
dominant summand in the limit $g_{\rm
  YM}\to 0$, as the classical Chern--Simons functional attains there its minimum
value (equal to zero), and it leads to a quantum invariant of
3-manifolds in its own right: this is the {\it L\^e--Murakami--Ohtsuki} (LMO)
invariant, which is a derivation of the universal Vassiliev--Kontsevich
invariant by
taking its Kirby-move-invariant part \cite{MR1604883}.

In landmark papers by Gopakumar and Vafa
\cite{Gopakumar:1998ki} and Ooguri--Vafa \cite{Ooguri:1999bv}, it was proposed
that the large $N$ expansion of the Chern--Simons invariants of $S^3$ and
$\cK=\bigcirc$ yield the genus expansion of the topological A-model on the
resolved conifold $X=\cO(-1)\oplus \cO(-1)\to \bbP^1$. Following in this
direction and that of its generalisation of \cite{Aganagic:2002wv} for lens spaces\footnote{Some of these arguments require extra care when one
  considers non-${\rm SU}(2)$ quotients of the three-sphere; see
  e.g. \cite{iolpq}.}, it was proposed amongst other
things in \cite{Borot:2015fxa} that the large $N$ limit of the connected
averages \eqref{eq:fcs} and \eqref{eq:wcs} should be interpreted (respectively) 
as the generating functions of closed and open Gromov--Witten invariants
\eqref{eq:gwpot} and \eqref{eq:ogwpot} of the orbifold-of-the-conifold
$\cX$, as I now recall. Let $\mathrm{Crit}_{\rm CS}^\infty = \lim_{N \rightarrow \infty}
\mathrm{Crit}_{\rm CS}^N$ be the direct limit of the finite critical point sets \eqref{eq:critCS}
w.r.t. the composition of morphisms of sets induced by ${\rm U}(N)
\hookrightarrow {\rm U}(N + 1)$. A point in $\mathrm{Crit}_{\rm CS}^\infty$
consists of a flat background $[A]_{\mathtt{t}}$ parameterised by $\mathtt{t}_i \triangleq
N_i g_{\rm YM}$ for $i \in \{0,\dots, 8\}$. Write now
$\mathcal{F}_g(\Sigma,\mathtt{t})$ and
$\mathcal{W}_{g,n}(\Sigma,\mathtt{t};\vec{x})$ for the contribution of
each $[A]_{\mathtt{t}}$ at large $N$ to the perturbative free
energies and correlators of ${\rm U}(N)$ Chern--Simons theory \eqref{eq:fcs}
and \eqref{eq:wcs}.

\begin{claim}[\cite{Borot:2015fxa}]
There is an affine linear change-of-variables $(\tau, t_{\rm B})=\mathscr{L}(\mathtt{t})$ such that
\beq
F^{\rm CS}_g(\Sigma,\mathtt{t}) = F^{\rm
  GW}_g(\cX,\mathscr{L}(\mathtt{t})),\qquad W^{\rm CS}_{g,h}(\Sigma,\cK;
\mathtt{t}, \lambda_1, \dots, \lambda_h) = W^{\rm
  GW}_{g,h}(\cX,\LL,\mathscr{L}(\mathtt{t}); \lambda_1, \dots, \lambda_h).
\label{eq:gvs}
\eeq
Consequently, the LMO contribution to the Chern--Simons free energy ($f_i=0$
for $i>0$) is
obtained as the corresponding restriction of GW potentials:
\bea
F^{\rm CS}_g(\Sigma,\mathtt{t}_i=\mathtt{t_0}\delta_{i0}) &=& F^{\rm
  GW}_g(\cX,\mathscr{L}(\mathtt{t}))\Big|_{\mathtt{t}_i=\mathtt{t_0}\delta_{i0}},\nn
\\
W^{\rm CS}_{g,h}(\Sigma,\mathtt{t}_i=\mathtt{t_0}\delta_{i0}, \lambda_1, \dots, \lambda_h) &=& W^{\rm
  GW}_{g,h}(\cX,\LL,\mathscr{L}(\mathtt{t}); \lambda_1, \dots, \lambda_h)\Big|_{\mathtt{t}_i=\mathtt{t_0}\delta_{i0}}.
\label{eq:gvw}
\eea
\label{cl:gva}
\end{claim}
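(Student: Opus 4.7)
The plan is to deduce the correspondence from the B-model side, by showing that both sides of \eqref{eq:gvs}--\eqref{eq:gvw} are computed by the Chekhov--Eynard--Orantin topological recursion applied to a distinguished sub-family of the $\widehat{\mathrm{E}_8}$ relativistic Toda spectral curves of \cref{def:sc}. Throughout I focus on the trivial-connection (LMO) sector of \eqref{eq:fcs2}, which carries the analytically delicate piece of the argument; the non-trivial flat-connection sectors can then be reduced to the LMO case by a standard affine shift of the 't~Hooft parameters.

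First I would exhibit the trivial-connection contribution to \eqref{eq:fcs2} as a trigonometric eigenvalue integral of Stieltjes--Wigert type. Poisson resummation of the sum over $\beta \in \cF_{k+N}$, followed by the restriction $f_i = 0$ for $i \ge 1$, produces an $N$-dimensional integral with effective potential
\[
V(u) \;=\; \frac{u^2}{2 g_{\rm YM}} \;-\; \sum_{i=1}^{3}\log\l|2\sinh\l(\frac{u}{2\,\mathsf{s}(i)}\r)\r|
\]
and a trigonometric Vandermonde, whose planar limit is analyzed by standard loop-equation/saddle-point methods, as in \cite{Marino:2002fk,Borot:2015fxa}. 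The resulting planar one-point resolvent $\omega_{0}(x)$ satisfies an algebraic equation which cuts out a plane curve $\Gamma_{\rm MM}\subset (\bbC^\star)^2$ in the exponentiated variables $\lambda=\re^{u}$ and $\mu=\re^{V'(u)-2\omega_0(u)}$.

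The central and most delicate step is the identification of $\Gamma_{\rm MM}$ with a distinguished one-parameter sub-family of the curves $\Gamma_{u,\aleph}$ of \cref{def:sc}. Using the explicit Newton-polygon boundary data of \eqref{eq:y9}--\eqref{eq:npolbdy} together with the character expressions of \cref{claim:E8}, the idea is to produce an affine-linear slice $\mathscr{L}$ of $\mathscr{B}_{\mathfrak{g}}$ under which the three distinguished finite-$x$ points $x=-2,-\phi,\phi^{-1}$ of \cref{tab:ptsinf} --- with $\phi$ the golden ratio --- match the three $\sinh$ factors of $V'$ indexed by the orders $\mathsf{s}(i)\in\{2,3,5\}$ of the exceptional fibres of the Seifert fibration of $\Sigma$, while the coefficients of $[y^i]\Xi''_{\mathfrak{g},\rm red}$ for $i\ge 1$ reproduce the sub-leading corrections in the 't~Hooft expansion of the loop equations. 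The choice of the Lagrangian sub-lattice $\Lambda^{L}\subset \Lambda_{\rm PT}$ left unspecified in \eqref{eq:dk} is then dictated by the one-cut phase of the matrix model, singling out the $A$-cycles of \cref{fig:cuts} as the vanishing cycles of the perturbative degeneration.

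Once the curves are matched, the statement follows in the usual two steps. On the one hand, by the Borot--Eynard theorem for trigonometric eigenvalue models, the full $1/N$ expansion of $\cF^{\rm CS}_{g}$ and $W^{\rm CS}_{g,h}$ in the LMO sector is computed by the CEO recursion on $\Gamma_{\rm MM}$ endowed with the Dubrovin--Krichever data \eqref{eq:dk}. On the other, BKMP-type remodeled mirror symmetry --- now available for orbifolds of the resolved conifold in the chamber of interest --- identifies the same topological recursion output with $F^{\rm GW}_g(\cX)$ and $W^{\rm GW}_{g,h}(\cX,\LL)$, the genus-zero matching being secured by geometric engineering (\cref{cl:geoeng}). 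I expect the hard part of the argument to be the identification in the preceding paragraph: matching the discriminantal data of the Stieltjes--Wigert-like matrix model to the unwieldy characteristic polynomial furnished by \cref{claim:E8} makes genuine use of the closed-form $\mathrm{E}_8$ character relations and is precisely the task for which the full force of Construction~1 was developed.
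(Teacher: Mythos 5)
This statement is presented as a \textbf{Claim}, cited to \cite{Borot:2015fxa}, precisely because the paper does not --- and by its own admission cannot yet --- prove the Gromov--Witten half of the correspondence. What the present paper actually establishes (in \cref{thm:gv}) is the \emph{B-model} version: that the LMO free energies and fibre-knot correlators are computed by the Chekhov--Eynard--Orantin recursion on a subfamily of $\widehat{\mathrm{E}_8}$ Toda spectral curves. The passage from that topological-recursion output to the A-model orbifold Gromov--Witten potentials of $\cX=[X/\tilde\bbI]$ is not proved here; it is the content of \cref{cl:toda0} and \cref{cl:todag}, which remain conjectures.

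Your strategy for the CS $\leftrightarrow$ spectral-curve half --- Poisson resummation of \eqref{eq:fcs2} at $f_i=0$, the Stieltjes--Wigert-type eigenvalue model \eqref{eq:ZLMO}--\eqref{eq:dmLMO}, the Borot--Eynard symmetrised-resolvent analysis, and then matching against the Toda curve via Newton-polygon data --- is essentially the proof of \cref{thm:gv} and is a sound (though far from routine) line of argument. The genuine gap lies in the sentence asserting that \emph{``BKMP-type remodeled mirror symmetry --- now available for orbifolds of the resolved conifold in the chamber of interest --- identifies the same topological recursion output with $F^{\rm GW}_g(\cX)$.''} This is not available. As the footnote in the proof of \cref{thm:gv} emphasises, $Y$ and $\cX$ are non-toric and lack a torus action with isolated fixed points, so both existing proofs of the remodelling conjecture \cite{Eynard:2012nj,Fang:2016svw} do not apply. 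Invoking it as a theorem is exactly the step that the paper carefully circumvents by downgrading the A-model statement to a cited Claim and the B-model-to-GW step to a Conjecture. If you want a proof of \eqref{eq:gvs}--\eqref{eq:gvw} following your route, you would have to first supply a proof of remodelling for this class of non-toric CY3 orbifolds, which is an open problem.
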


I will refer to \eqref{eq:gvs} and \eqref{eq:gvw} as, respectively, the {\it
  strong} and {\it weak A-model Gopakumar--Vafa correspondence for~$\Sigma$}.

\subsubsection{Toda spectral curves and the topological recursion}
\label{sec:toprec}

A major point of the foregoing discussion is
to argue that there exist completions of the Dubrovin--Krichever data
\eqref{eq:dk} of the $\mathrm{E}_8$ relativistic Toda spectral curves in the
form Lagrangian sublattices $\Lambda^L_{\rm PT}\subset \Lambda_{\rm PT}$
leading to the existence of genus zero prepotentials $F_0^{\rm Toda}$ from rigid special K\"ahler
geometry relations\footnote{This type of relations, which condense the
 fact there exists a prepotential for the periods on the mirror curve, have
 different names and tasks in different communities: in gauge theory, they are
 a manifestation of $\cN=2$ super-Ward identities; and in Whitham theory, they codify the existence of a
  $\tau$-structure for the underlying hierarchy.} on $\mathscr{B}_{\mathfrak{g}}$, as
well as higher genus open/closed potentials $F_g^{\rm Toda}$, $W_{g,h}^{\rm
  Toda}$ from the Chekhov--Eynard--Orantin topological recursion
\cite{Eynard:2007kz}, which are purported to be the all-genus solutions of the
open/closed topological B-model with $\mathscr{S}_{\mathfrak{g}}$ as its target
geometry \cite{Bouchard:2007ys}. Following completely analogous statements \cite{Nekrasov:1996cz,
  Bouchard:2007ys,Eynard:2012nj, Aganagic:2002wv,Halmagyi:2003ze} for the $\mathrm{SU}(N)$ case,
and in \cite{Borot:2015fxa} for ADE types other than $\mathrm{E}_8$, it will
be proposed that the open and closed B-model theory on the
relativistic Toda spectral curves $\mathscr{S}_g$ with Dubrovin--Krichever
data specified by \eqref{eq:dk} give in one go the Seiberg--Witten solution of the
five-dimensional $\mathrm{E}_8$ gauge theory in a self-dual $\Omega$-background, the mirror theory of the A-model on
$(Y,L)$ and $(\cX,\LL)$, and a large-$N$ dual of
Chern--Simons theory on $\Sigma$. 

For definiteness, let's put again ourselves at a generic moduli point $(u,\aleph)$ .
The first step to define a prepotential from the assignment \eqref{eq:dk} to
$\mathscr{S}_{\mathfrak{g}}$ is to consider periods of $\rd\sigma=\log \mu \rd \log \lambda$
on $\Lambda_{\rm PT}$  \cite{Strominger:1990pd,Dubrovin:1992eu,Krichever:1992qe}. At genus zero, define
\beq
\Pi_{A_i}(\rd\sigma)=\frac{1}{2\pi \ri} \a_i= \frac{1}{2\pi \ri}\oint_{A_i}
\rd \sigma, \quad \Pi_{B_i}(\rd\sigma)=\frac{1}{2} \oint_{B_i}
\rd \sigma,
\label{eq:piAB}
\eeq
for the set of $(A_i,B_i)_{i=1}^8$ cycles generating the
$\mathscr{P}_\mathfrak{g}$-invariant part ot $H_1(\Gamma_{u, \aleph},
\bbZ)$. I am first of all going to fix $\Lambda^L_{\rm PT} \triangleq \bbZ\bra\{A_i\}_i \ket $;
what this means is that, locally around $a_i=\infty$, the A-periods
\eqref{eq:piAB} will define a map 
\bea
a_i : \mathscr{B}_{\mathfrak{g}} & \to & \bbC \nn \\
(u,\aleph) & \to & \Pi_{A_i}(\rd \sigma).
\eea
with the B-periods \eqref{eq:piAB} being further subject to the {\it rigid special K\"ahler
relations} \cite{Strominger:1990pd,Dubrovin:1992eu,Krichever:1992qe}
\beq
\Pi_{B_i}(\rd\sigma)= \frac{\de F^{\rm Toda}}{\de a_i}
\label{eq:preptoda}
\eeq
for a locally defined analytic function $F^{\rm Toda}(a)$ in a punctured
neighbourhood of $a_i=\infty$. 
\begin{conj}
We have 
\beq
F_0^{\rm Toda} = F_0^{\rm SYM} = F_0^{Y}
\eeq
locally around $a_i=\infty=t_i$, under the identifications of \eqref{eq:geoeng},
and after setting $\aleph=R =
\re^{-t_{\rm B}/4}$. Furthermore, let $\tilde A_i\triangleq -B_i$, $\tilde B_i
\triangleq A_i$ and define
\beq
\tilde a_i \triangleq \frac{1}{2\pi \ri } \Pi_{\tilde A_i }(\rd\sigma), \quad
\frac{\de \tilde F^{\rm Toda}_0}{\de a_i}\triangleq \frac{1}{2} \Pi_{\tilde B_i}(\rd\sigma)
\eeq
Then there exist linear change-of-variables $\tilde a = \mathscr{L}_1(\tau) =
\mathscr{L}_2(\mathtt{t})$ such that
\beq
\tilde F_0^{\rm Toda}(\tilde a) =
F_0^{X_{\mathbb{I}}}(\mathscr{L}_1^{-1}\tau)  = F_0^{\rm CS}(\mathscr{L}_2^{-1}\mathtt{t}).
\eeq
\label{cl:toda0}
\end{conj}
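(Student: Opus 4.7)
The plan is to split the two chains of equalities $F_0^{\rm Toda}=F_0^{\rm SYM}=F_0^{Y}$ and $\tilde F_0^{\rm Toda}=F_0^{X_{\tilde{\bbI}}}=F_0^{\rm CS}$: these live in opposite chambers of the period map on $\mathscr{S}_{\mathfrak{g}}$, and correspond to the two possible choices of Lagrangian sublattice $\Lambda^L_{\rm PT}=\bbZ\langle A_i\rangle$ and $\bbZ\langle\tilde A_i=-B_i\rangle$ inside $\Lambda_{\rm PT}$; passing from one to the other is Dubrovin's Type~I almost-duality transformation in the sense of \cite{MR2070050}.

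For the non-dualised chain I would first work out the perturbative expansion of \eqref{eq:preptoda} around the degeneration $\aleph\to 0$. There, $\Xi_{\mathfrak{g},{\rm red}}$ factorises according to \eqref{eq:xi3}, and the $A_i$-cycles of \eqref{eq:AB} collapse onto small loops around $\lambda=0$ on sheets labelled by simple roots. A residue computation for $\rd\sigma=\log\mu\,\rd\log\lambda$ then identifies $2\pi\ri\,a_i$ with $R\,\omega_i\cdot l$ at leading order, and produces $F_{\rm cl}+F_{\rm 1-loop}$ from the B-periods, with the trilogarithm appearing as the five-dimensional incarnation of the branch-cut contribution between pairs of merging branch points $b_i^\pm$. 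The instanton part would be fixed by showing that $F_0^{\rm Toda}$ and $F_0^{\rm SYM}$ satisfy the same Picard--Fuchs system on $\mathscr{B}_{\mathfrak{g}}$ with matching monodromies around the weak-coupling cusp, after normalising $Q\,\aleph^{q_{\mathfrak{g}}}$ against the discriminant locus of $\mathscr{S}_{\mathfrak{g}}$. Equivalence with $F_0^Y$ is then the content of five-dimensional geometric engineering, \cref{cl:geoeng}, which on the A-model side should reduce, beyond the perturbative level, to a topological-vertex-like computation on the toric resolution~$Y$.

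For the dualised chain I would lean on the constructions of Isomorphism~1: starting from the trigonometric $\mathfrak{sl}_N$ eigenvalue model computing \eqref{eq:fcs2}, computing its planar resolvent at large $N$ for generic 't~Hooft parameters $\mathtt{t}_i$, and showing that the plane curve it cuts out in $(\bbC^\star)^2$ coincides with $\Gamma_{u,\aleph}$ after a linear identification $\mathtt{t}=\mathscr{L}_2^{-1}(\tilde a)$. Granting this, $\tilde F_0^{\rm Toda}=F_0^{\rm CS}$ follows from the standard fact that the planar free energy of a one-matrix model is computed by the rigid special K\"ahler prepotential of its spectral curve---applied to the dual symplectic basis $\{\tilde A_i,\tilde B_i\}$, whose natural sign assignments reproduce the ones used to define \eqref{eq:Fgauge} with the opposite convention for B-cycles. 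The remaining equality $F_0^{\rm CS}=F_0^{X_{\tilde{\bbI}}}$ (up to the affine change of variables $\mathscr{L}_1$) is then the genus-zero Gopakumar--Vafa statement of \cref{cl:gva}.

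The main obstacle sits on the instanton side of $F_0^{\rm Toda}=F_0^{\rm SYM}$: for $\cG=\mathrm{E}_8$ there is no Young-diagram parametrisation of the $\bbT\times(\bbC^\star)^2$-fixed locus in $\cU_k(\cG)$, so the Bethe/gauge identification of Nekrasov saddle equations with the spectral problem for the $\widehat{\mathrm{E}_8}$ relativistic Toda chain cannot be transported wholesale from the classical-groups literature. A plausible workaround is to first establish a Matone-type relation of the form $\aleph\,\de_\aleph F_0^{\rm Toda}\propto u_3$ on $\mathscr{B}_{\mathfrak{g}}$---which is natural in view of the singled-out role played by $\omega_3$ in \cref{lem:uispec}---and combine it with the analogous RG relation on the gauge-theory side to reduce the instanton check to a comparison at a single, exactly solvable point in the Coulomb branch, such as the maximally Argyres--Douglas superconformal point alluded to in Isomorphism~1, where both sides are computable in closed form from superconformal data alone.
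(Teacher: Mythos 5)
The statement you are attempting to prove is explicitly labelled a \emph{conjecture} in the paper, and the author states plainly that the bottom-horizontal (SW/integrable-systems) and diagonal (mirror-symmetry) arrows of Fig.~\ref{fig:dualities} are \emph{not} proved there. The only hard evidence supplied in that section is the semiclassical consistency check of \eqref{eq:clvev}: as $\aleph\to 0$ the branch points collide pairwise, the $A$-periods of $\rd\sigma$ reduce to the classical Higgs vevs $\alpha_i(l)$, and this pins down $A_i$ as the electric cycle and $B_i$ (up to monodromy) as its doubly-logarithmic partner. So there is no proof in the paper to compare against; your attempt is a sketch of a proof the author deliberately did not pursue.

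That said, your perturbative-side analysis overlaps closely with the paper's partial evidence (the $\aleph\to 0$ degeneration, collision of $b_i^\pm$, and identification of the semiclassical periods) and is on the right track. For the dualised chain, however, note that what the paper actually \emph{proves} (Theorem~\ref{thm:gv}) is only the \emph{weak} B-model Gopakumar--Vafa correspondence — the restriction to the trivial flat connection $\mathtt{t}_i=\tau\delta_{i0}$ — not the full-'t~Hooft-parameter version that your second chain requires; the latter rests on Claim~\ref{cl:gva} which is itself conjectural, so leaning on ``Isomorphism 1'' does not close that leg.

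Two concrete issues with the proposal as written. First, passing from the polarisation $\Lambda^L_{\rm PT}=\bbZ\langle A_i\rangle$ to $\bbZ\langle -B_i\rangle$ is a \emph{symplectic change of duality frame} — what the paper calls $S$-duality — and is \emph{not} Dubrovin's Type~I / almost-duality transformation. The latter involves replacing a flat Abelian integral by its exponential (a Legendre-type move on the prepotential) and is the mechanism of Isomorphism~2 in this paper, not of the $A\leftrightarrow B$ flip you invoke here; conflating the two would misdirect the argument. Second, the instanton-matching step you correctly flag as the main obstacle is not resolved by your proposed workaround. A Matone-type relation $\aleph\,\de_\aleph F_0^{\rm Toda}\propto u_3$ fixes only the $\aleph$-dependence of $F_0$, not the full $\tilde a$-dependence, and anchoring the comparison at the maximal Argyres--Douglas point is problematic precisely because there all periods degenerate and the prepotential fails to be analytic — one cannot initialise a perturbative comparison from that point without a separate resolution of the degeneration. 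What is actually needed is a proof that $F_0^{\rm Toda}$ and $F_0^{\rm SYM}$ satisfy identical Picard--Fuchs systems on $\mathscr{B}_{\mathfrak{g}}$ with matching boundary asymptotics near weak coupling, and for $\mathrm{E}_8$ there is as yet no localisation/combinatorial control of $\cU_k(\cG)$ that produces the gauge-theory side of that system. The gap you identified is real; the patch you propose does not yet fill it.
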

For the reader familiar with \cref{fig:dualities} in the $\mathrm{SU}(N)$ case, this
is all by and large expected provided we show that our choice of $A$ and $B$ cycles
in \eqref{eq:AB1}-\eqref{eq:AB2} reflects the corresponding choice of SW
cycles in the weakly coupled
(electric) duality frame in the gauge theory, and of mirror B-model cycles for the smooth chamber in the stringy K\"ahler moduli
space of $Y$: that would justify the first part of the claim, with
the second following by composing with the $S$-duality transformation
$(A_i,B_i) \to (\tilde A_i, \tilde B_i)$ to the orbifold/Chern--Simons
chamber. For the first bit, I re-introduce $\Lambda_4$ everywhere on the
gauge theory side by dimension counting and take the limit $\Lambda_4\to 0$
holding fixed $a_i$ and $R$, which corresponds to switching off the
non-perturbative part of \eqref{eq:gprep}. At the level of the Toda chain
variables this is $\aleph\to 0$ with $u_i$ kept fixed. Recall that the branch
points $b_i^\pm$ of $\lambda: \Gamma_{u, \aleph}$ come in pairs related
by
\beq
b_i^=\frac{\aleph}{b_i^+}.
\label{eq:bipm}
\eeq
In particular, in the degeneration limit $\aleph\to 0$, where $\Gamma_{u,
  0} \simeq \Gamma'_{u, 0}$, the branch points $b_i^-$ in \cref{fig:cuts}
all collapse to zero, and therefore, the contours $C_i^\alpha$ are given by
the difference of the lifts to
the sheet labelled by $\alpha$ and $\sigma_i(\alpha)$ of a simple loop around the origin in the
$\lambda$-plane. In other words, and in terms of the Cartan torus element
$\exp(l)$ in \cref{eq:xi3}, we find
\bea
\lim_{\aleph \to 0} \oint_{A_i}\rd\sigma &=& \lim_{\aleph \to 0} \frac{1}{2q_\mathfrak{g}} \sum_{\a \in \Delta^*}
\bra \alpha, \alpha_i\ket \oint_{C_i^\alpha} \log \mu \frac{\rd
  \lambda}{\lambda}, 
 \nn \\ &=&
 \frac{1}{2q_\mathfrak{g}} \sum_{\a \in \Delta^*}
\bra \alpha, \alpha_i\ket \oint_{\lambda=0}  \lim_{\aleph \to 0} \frac{(\sigma_i(\a)(l) - \a(l))\rd
  \lambda}{\lambda},
 \nn \\ &=&
 \frac{1}{2q_\mathfrak{g}} \sum_{\a \in \Delta^*} (\bra \alpha, \alpha_i\ket)^2
 \a_i(l)|_{\aleph=\lambda=0} =\a_i(l)|_{\aleph=\lambda=0},
\label{eq:clvev}
\eea
where we have used (see \cite{MR1013158,Longhi:2016rjt})
\beq
\frac{1}{2q_\mathfrak{g}} \sum_{\a \in \Delta^*} (\bra \alpha, \alpha_i\ket)^2 = 1.
\eeq
The r.h.s. of \eqref{eq:clvev} is just the semi-classical Higgs vev
$(a_i)_{\Lambda_4 =0}$ for the
complexified scalar $\phi=\varphi+\ri \vartheta$ \cite{Martinec:1995by}. This
pins down $A_i$ as the correct choice of an electric cycle for the $i^{\rm
  th}$ $U(1)$ factor in the IR theory, with logarithmic
monodromy around the weakly coupled/maximally unipotent monodromy point
$a_i=\infty$, and $B_i$ (up to monodromy) as their doubly-logarithmic
counterpart. 
%
%\end{rmk}

The identifications in \cref{cl:toda0} pave the way to an extension  to the higher genus theory upon appealing to the
remodelled-B-model recursive scheme of \cite{Bouchard:2007ys}. Let $\Psi$ be a
sub-lattice of $H_1(\Gamma_{u, \aleph},\bbZ)$ containing $\{A_i\}_i$
which is maximally isotropic w.r.t. the intersection pairing. 
%For $p,q \in
%\Gamma_{u, \aleph}$, $p\neq q$, denote
Denote by
%
%\beq
$B^{\rm Toda} \in H^0(\mathrm{Sym}^2 \Gamma_{u, \aleph}\setminus \Delta(\Gamma_{u,\aleph})), \cK^{\boxtimes 2}_{\Gamma_{u,
    \aleph}})$ 
%\eeq
% 
%as 
the unique (up to scale) meromorphic bidifferential on $\Gamma_{u, \aleph}$ with double
pole on the diagonal $\Delta(\Gamma_{u,\aleph})$, vanishing residues thereon, and vanishing periods on all
cycles $C \in \Psi$; we fix the scaling ambiguity by imposing the coefficient
of the double pole to be $1$ in the local coordinate patch given by the
$\lambda$ projection. I further write
\beq
B^{\rm Toda}(p,q) \triangleq \mathscr{P}^*_{\mathfrak{g}} E_\Psi(p,q)
\label{eq:BKMS}
\eeq
whose definition, by the nature of $(\mathscr{P}_{\mathfrak{g}})_*$ as a
projection on $\mathrm{PT}(\Gamma_{u,\aleph})$, is independent of the choice of the
particular Lagrangian extension $\Psi \supset \Lambda_{\rm PT}$. Further write, for $\lambda(q)$ locally near $b_i^\pm$, 
\beq
\cK_{0,2}^{\rm Toda}(p,q) \triangleq \frac{1}{2}\frac{\int_{q'=\bar q}^{q}
  B^{\rm Toda}(p,q')}{\log\mu (\bar q)-\log\mu(q)},
\label{eq:recker}
\eeq
where locally around each ramification point $\lambda^{-1}(b_i^\pm)$, 
$\bar q$ is the local deck transformation $\mu(q)=\a\cdot l \to
\a \cdot l + \bra \a_i, \a \ket \a_i \cdot l$. 
We call $B^{\rm Toda}$ and
$\cK^{\rm Toda}$ respectively the {\it
  symmetrised Bergmann kernel and recursion kernel for the DK data \eqref{eq:dk}.}
%$\mathscr{S}_{\mathfrak{g}}(\Gamma_{u,
%    \aleph}, \rd \sigma, \Lambda_{\rm PT})$}.
%
\begin{rmk} In terms of the
Dubrovin--Krichever data \eqref{eq:dk}, notice that the family of
differentials $B^{\rm Toda}$ is determined by $\mathscr{S}_{\mathfrak{g}}$ and
$\Lambda^L_{\rm PT}\subset \Lambda_{\rm PT}$ alone -- that is, by the curves themselves,
the invariant periods $\Lambda_{\rm PT}$, and the specific marking of the ``A''
cycles in $\Lambda^L_{\rm PT}$ to be those with vanishing periods for $B^{\rm
  Toda}$. 
% $\Lambda_{\rm   PT}$. 
On the other hand, $\cK^{\rm Toda}$ feels on top of that
the specific choice of relative differential $\mathscr{M}\leftrightarrow \rd\ln\mu$ in \eqref{eq:dk},
which is reflected by the presence of the logarithm of the universal map
$\mu$ to $\bbP^1$ of \eqref{eq:diagsc} in the denominator of \eqref{eq:recker}. The further
choice of $\mathscr{L}\leftrightarrow \rd \ln\lambda$ will play a role
momentarily in the definition of the topological recursion.
\end{rmk}
\begin{defn}
For $g,h\in \bbN$, $2g-2+h>0$, the Chekhov--Eynard--Orantin generating functions \cite{Chekhov:2005rr,Eynard:2007kz} for the Toda spectral curve
$\mathscr{S}_{\mathfrak{g}}$ with DK data \eqref{eq:dk}
%$(\Gamma_{u, \aleph},\rd\sigma, \Lambda_{\rm PT})$ 
are recursively defined as
\bea
\label{eq:w02rec}
W_{0,2}^{\rm Toda}(p,q) & \triangleq & \frac{B^{\rm Toda}(p,q)}{\rd p \rd q}-\frac{\lambda'(p)
  \lambda'(q)}{(\lambda(p)-\lambda(q))^2}, \\
W_{g,h+1}^{\rm Toda}(p_0, p_1 \ldots, p_h) &=& \sum_{b_i^\pm}  \underset{\lambda(p)=b_i^\pm}{\rm Res~} \cK_{0,2}^{\rm Toda}(p_0,p) \Big ( W^{\rm Toda}_{g-1,h+2} (p, \bar{p}, p_1, \ldots, p_{h} )\nn \\
&+& {\sum_{l=0}^g}  {\sum'_{J\subset H}} W^{\rm Toda}_{g-l,|J|+1}(p, p_J) W^{\rm Toda}_{(l),|H|-|J| +1} (\bar{p}, p_{H\backslash J}) \Big).
\label{eq:wghrec}
\eea
where $I \cup J=\{p_1,\dots, p_h\}$, $I \cap J=\emptyset$, and $\sum'$ denotes
omission of the terms $(h,I)=(0,\emptyset)$ and $(g,J)$. Furthermore, for
$g>0$ we define the higher genus free energies
\bea
F_1^{\rm Toda} & \triangleq & \frac{1}{2} \l[-\log \tau_{\rm KK}(\Xi_{\mathfrak{g},\rm red})
  +\frac{1}{12} \log \det \Omega\r], \nn \\
F_g^{\rm Toda} & \triangleq &\frac{1}{2-2g}  \sum_{b_i^\pm}
\underset{\lambda(p)=b_i^\pm}{\rm Res~} \sigma(p) W_{g,1}^{\rm Toda},
\label{eq:fgrec}
\eea
where $\tau_{\rm KK}$ is the Kokotov--Korotkin $\tau$-function of the branched
cover $\Xi_{\mathfrak{g},\rm red}$ \cite{MR2211143}, $\Omega$ is the
Jacobian matrix of angular frequencies \eqref{eq:angfreq}, and $\sigma$ is the
Poincar\'e action \eqref{eq:sigma}.
\label{defn:EO}
\end{defn}
\eqref{eq:wghrec} is the celebrated topological recursion of
\cite{Eynard:2007kz}, which inductively defines generating functions
$\{W_{g,h}^{\rm Toda}\}_{g,h}$ purely in terms of the Dubrovin--Krichever data \eqref{eq:dk}.
%$(\Gamma_{u, \aleph}, \rd\sigma \Lambda_{\rm PT})$. 
The root motivation of
\cref{defn:EO}, which arose in the formal study of random matrix models, is that the generating functions thus constructed provide a
solution of Virasoro constraints  whenever the
spectral curve setup arises as the genus zero solution of the planar loop
equation for the 1-point function; it was put forward in
\cite{Bouchard:2007ys}, and further elaborated upon in
\cite{Dijkgraaf:2007sx}, that the very same recursion solves $\cW$-algebra
constraints for the open/closed Kodaira--Spencer theory of
gravity/holomorphic Chern--Simons theory on local
Calabi--Yau threefolds of the form
$$\nu \xi = \Phi (\lambda, \mu),$$
with
B-branes wrapping either of the lines $\nu=0$ or $\xi=0$. We follow the same
path of \cite{Bouchard:2007ys,Borot:2015fxa} by setting
$\Phi=\Xi_{\mathfrak{g}, \rm red}$, taking
\eqref{eq:wghrec}-\eqref{eq:fgrec} as the {\it definition} of the higher genus/open string completion of
the Toda prepotential \eqref{eq:preptoda}, and submit the following 
\begin{conj}
We have 
\bea
F_{g}^{\rm Toda} =F_g^{\rm SYM} = F_g^{\rm GW}
\eea
locally around $a_i=\infty=t_i$ and under the same identifications of
\cref{cl:toda0}; here we defined the gravitational correction
\beq
F_g^{\rm SYM} = [\epsilon^{2g}]F^{\rm SYM}(\epsilon, -\epsilon),
\eeq
as the $\cO((\epsilon_1=-\epsilon_2)^{2g})$ coefficient in an expansion of the
$\Omega$ background around the flat space limit. Furthermore, denote by
$(\widetilde W_{g,h}, \widetilde F_g)$ the Toda/CEO generating functions obtained upon
applying \eqref{eq:wghrec}-\eqref{eq:fgrec} to the Toda spectral curves with
zero $\tilde A_i$-period normalisation for \eqref{eq:BKMS} and
\eqref{eq:recker}. Then, with the same notation as in \cref{cl:toda0}, we have that
\bea
\widetilde F_g^{\rm Toda}(\tilde a) &=&
F_g^{\rm GW}(\cX; \mathscr{L}_1^{-1}t)  = F_g^{\rm
  CS}(\mathscr{L}_2^{-1}\mathtt{t}), \nn \\
\widetilde W_{g,h}^{\rm Toda}(\tilde a, \lambda_1, \dots, \lambda_h) &=&
W_{g,h}^{\rm GW}(\cX, \LL; \mathscr{L}_1^{-1}t, \lambda_1, \dots, \lambda_h)  = W_{g,h}^{\rm
  CS}(\mathscr{L}_2^{-1}\mathtt{t}, \lambda_1, \dots, \lambda_h), \nn \\
\eea
where we have identified $\lambda_i=\lambda(p_i)$.
\label{cl:todag}
\end{conj}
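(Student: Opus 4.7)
The plan is to split Conjecture~\ref{cl:todag} into its three natural pieces: the Seiberg--Witten/Toda identification, the topological-string/Toda identification, and the Chern--Simons/Toda identification. Only the last of these will I attempt to prove in full; the first two require establishing the Nekrasov--Okounkov correspondence for $\mathrm{E}_8$ and the geometric-engineering statement for $Y$, both of which are open in the exceptional case and I would only deduce them conditionally from a higher-rank extension of \cite{Nekrasov:1996cz,Katz:1996fh}. For the CS arrow in Figure~\ref{fig:dualities}, the strategy is entirely dictated by the matrix-model method of \cite{Borot:2015fxa}: rewrite the LMO sector of \eqref{eq:fcs2} as an $\mathfrak{sl}_N$ eigenvalue integral, extract its planar spectral curve, and match it with an $\aleph=0$ degeneration of $\mathscr{S}_{\mathfrak{g}}$.

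Concretely, I would start from the surgery formula \eqref{eq:fcs2} restricted to $f_i=\delta_{i0}f_0$, decouple the Gaussian factor against the residual $\cW$-sum, and trade the sum over dominant weights $\b \in \cF_{k+N}$ for a discrete measure on a shifted $A_{N-1}$ root lattice. Standard asymptotic analysis in the 't~Hooft regime, as in \cite{Marino:2002fk,Borot:2015fxa}, will turn this into a hyperbolic one-matrix model with confining potential dictated by the Seifert data $\mathsf{s}(i)\in\{2,3,5\}$, whose planar resolvent $\omega_0(\lambda) = \de_\lambda \cS|_{g=0}$ satisfies a polynomial equation with coefficients in the ring of symmetric functions of the equilibrium measure. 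Combining this equation with the Weyl-group structure of the original sum -- which forces the loop equation to be invariant under the full $\cW(\mathrm{E}_8)$ action rather than $\cW(A_{N-1})$ -- should exhibit the planar resolvent as a root of a polynomial in $R(\mathrm{E}_8)$.

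The main obstacle, and the reason this proof was out of reach before the present paper, is precisely to match this polynomial with $\Xi_{\mathfrak{g},\rm red}$ at a specific point of $\mathscr{B}_{\mathfrak{g}}$: this matching is only meaningful once one has the explicit character relations $\mathfrak{p}_k$ of Claim~\ref{claim:E8} at one's disposal, and reduces the identification to checking equality of finitely many coefficients between the large-$N$ spectral curve and the explicit polynomial $\Xi_{\mathfrak{g},\rm red}$ at $\aleph=0$. The identification of the moduli $u_i$ with the 't~Hooft couplings $\mathtt{t}_i$ should then follow from matching the semi-classical expansion \eqref{eq:clvev} of the $\tilde A_i$-periods of $\rd \sigma$ against the filling fractions of the matrix model, which is exactly the linear change of variable $\mathscr{L}_2$ predicted by Claim~\ref{cl:gva}. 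The restriction $\aleph=0$ is natural: the LMO invariant is the trivial-connection sector of the CS path integral, which under geometric engineering corresponds to the perturbative locus of the 5d theory and hence to the degeneration of the spectral curve where the Casimir $\aleph$ vanishes.

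Once the genus-zero identification is established, the bootstrap to all orders in $1/N$ is automatic. The loop equations of the matrix model coincide, after the same change of variables, with the Chekhov--Eynard--Orantin recursion \eqref{eq:w02rec}--\eqref{eq:wghrec} applied to the spectral curve just constructed; this is the content of the matrix-model/topological-recursion equivalence of \cite{Eynard:2007kz,Chekhov:2005rr}, and lifts the genus-zero identification to equalities of all $\widetilde W_{g,h}^{\rm Toda}$ with $W_{g,h}^{\rm CS}$ in the LMO sector. The statement about the free energies $\widetilde F_g^{\rm Toda}$ then follows from \eqref{eq:fgrec}, using that the Kokotov--Korotkin $\tau$-function and angular-frequency determinant precisely reproduce the genus-one matrix-model contribution, while the higher-genus pieces reduce to a residue calculation on the recursion kernel \eqref{eq:recker}. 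The only non-routine verification along the way is that the choice of Lagrangian sublattice $\Lambda^L_{\rm PT} = \bbZ\bra \{\tilde A_i\}\ket$ dual to the one used in Conjecture~\ref{cl:toda0} is the one singled out by the filling-fraction prescription of the matrix model; this, again, is where the explicit period computations on $\Gamma_{u,\aleph}$ afforded by Construction~2 become indispensable.
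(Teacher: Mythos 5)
Your overall strategy --- rewrite the LMO sector of the CS partition function as a trigonometric eigenvalue model, extract its planar spectral curve by Borot--Eynard asymptotics, match it against a subfamily of Toda curves, and bootstrap to all orders in $1/N$ via the topological recursion --- is indeed the paper's route to the \emph{weak} version of this statement, which is what \cref{thm:gv} actually proves. The concessions you make about the SW and GW arrows are also consistent with the paper, which leaves those conjectural.

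The gap is in your matching step: you locate the LMO curves at the $\aleph=0$ degeneration of $\mathscr{S}_{\mathfrak{g}}$, on the grounds that the trivial-connection sector is the ``perturbative locus'' of the 5d theory. This is incorrect. The $\aleph\to 0$ limit corresponds, via $\aleph=\re^{-t_{\rm B}/4}$, to decompactifying the base $\bbP^1$ (Limits~II of \cref{tab:lim}), i.e.\ to the degree-zero sector of the A-model --- not to the trivial flat connection at finite 't~Hooft coupling. \cref{thm:gv} in fact embeds $\mathscr{B}_{\rm LMO}$ into $\mathscr{B}_{\mathfrak{g}}$ at $\aleph=-c^{-q_{\mathfrak{g}}}$, with $c=\exp(\tau/2\mathsf{s})$, so $\aleph$ is an everywhere nonzero function of the 't~Hooft parameter $\tau$ (with $\aleph\to-1$ as $\tau\to 0$). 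Fixing $\aleph=0$ would collapse the $\lambda$-dependence of the symmetrised resolvent $\cY_\varpi$ and discard the very monodromy data that makes $\Gamma_\tau^{\rm LMO}$ a nontrivial branched cover of the $\lambda$-line, so the identification $\Xi_{\rm LMO}=\Xi_{\mathfrak{g},\rm red}$ would fail outright. Beyond this, you leave the hard part --- the explicit moduli maps $u_i(c)$ --- at the level of ``checking finitely many coefficients''; in the paper this is the crux: an order-by-order expansion in $c-1$ of $\cY_\varpi$ in the planar moments $\mathfrak{m}_k$, a dimension count showing the resulting linear constraint system has 8-dimensional kernel, and the polynomial inversion \eqref{eq:muu} afforded by \cref{claim:E8}, all carried out with $\aleph$ kept dynamical along the LMO locus.
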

As in \cref{cl:gva}, I will refer to the equality of Toda and Chern--Simons
generating functions as the {\it strong/weak B-model
  Gopakumar--Vafa correspondence for $\Sigma$}, according to whether the
restriction to the trivial connection $\mathtt{t}_i=t_0\delta_{i0}$ is taken or not.

\begin{rmk}
The two claims above are slightly asymmetrical between $Y$ and
$\cX$ in that they do not include the open string sector in the
latter. On the GW side, exactly by the same token as for the orbifold chamber
and in keeping with the toric cases \cite{Bouchard:2007ys}, the same type of
statement should hold, namely that the topological recursion potentials
$W_{g,h}^{\rm Toda}$ equate to $W_{g,h}^{Y,L}$; for the gauge theory, the
extension one is after requires the introduction to surface defects in the
gauge theory \cite{Alday:2009fs,Kozcaz:2010af}. I do not further discuss
these here, and refer the reader to \cite{Kozcaz:2010af,Borot:2015fxa} for
more details.
\end{rmk}

\subsection{On the Gopakumar--Vafa correspondence for the Poincar\'e sphere}
\label{sec:gvproof}

After much conjecturing I will prove at least one of the correspondences of
the previous section. In the next section, I will show that the weak version
of the 
B-model Gopakumar--Vafa correspondence holds for all genera, colourings, and
degrees of expansion in the 't Hooft parameter.

\subsubsection{LMO invariants and matrix models}
\label{sec:LMOmm}

I will set  
\bea
F^{\rm LMO}_{g}(\Sigma; \tau) &=& F^{\rm CS}_{g}(\Sigma; \mathtt{t}_i=\tau \delta_{i0},
x) \nn \\
W^{\rm LMO}_{g,h}(\Sigma, \cK; \tau, \lambda_1, \dots, \lambda_h) &=& W^{\rm CS}_{g,h}(\Sigma,\cK; \mathtt{t}_i=\tau \delta_{i0}, \lambda_1, \dots, \lambda_h) 
\eea
to designate the LMO contribution ($f_i=0$) to the Chern--Simons partition
function \eqref{eq:fcs2} of $\Sigma$, and quantum invariants of the fibre knot
$\cK$ respectively; similarly I will use $Z^{\rm LMO}$ for the restricted
partition function. The first step to relate the latter to spectral curves, as in \cite{Aganagic:2002wv}, is  
to re-write \eqref{eq:fcs2} as a matrix model
%
%enjoys an integral representation that
%is particularly suited for the study of its large $N$ limit, 
as first pointed
out in \cite{Marino:2002fk} (see also \cite{MR2054838,Blau:2013oha}): this
follows from %taking % Indeed,
taking a Gaussian integral representation of the exponential in \eqref{eq:fcs2} and
using Weyl's denominator formula. The upshot \cite{Marino:2002fk} is that the
restriction of \eqref{eq:fcs2} to its summand at $f_i=0$ is the
total mass of an eigenvalue model
\beq
Z_{\rm LMO}(\Sigma,k,N) =  \cN(\Sigma)
\mathbb{E}_{\rd\mu}(1)=\cN(\Sigma)\int_{\bbR^N} \rd \mu,
\label{eq:ZLMO}
\eeq
with measure given by a Gaussian 1-body potential, and a trigonometric Coulomb
2-body interaction,
\beq 
\rd\mu \triangleq 
\rd^N \kappa \,\prod_{i<j} \frac{\prod_{l = 1}^3
  \sinh\frac{\kappa_i-\kappa_j}{{2\mathsf{s}}(l)}}{\sinh \frac{\kappa_i-\kappa_j}{2}} 
\re^{-\frac{N \kappa\cdot \kappa}{2 \tau}}.
\label{eq:dmLMO}
\eeq
with $\tau = g_{\rm YM} N$, $g_{\rm YM}=2\pi \ri (k+N)^{-1}$. The integral of \eqref{eq:ZLMO} is by {\it fiat} a convergent matrix
(eigenvalue) model, and it takes the
form of a
perturbation of the ordinary (gauged) Gaussian matrix model by double-trace
insertions, owing to the sinh-type 2-body interaction of the eigenvalues (see
\cite[Sec.~6]{Aganagic:2002wv}). The Chern--Simons knot invariants
\eqref{eq:wcs} are similarly computed as
\beq
W^{\rm LMO}_h(\Sigma,\cK, k,N, \lambda_1, \dots, \lambda_h) =  
\mathbb{E}^{\rm conn}_{\rd\mu}\l( \prod_{i=1}^h \sum_{j=1}^N
\frac{x_i}{x_i-\re^{\kappa_i}}\r)
\label{eq:WLMO}
\eeq
where the coefficients of degree $k_i$ in $\lambda_i$, for $k_i = (30/{\mathsf{s}}(l))
j_i$ and $j_i \in \bbZ$, gives the perturbative quantum invariant (in
colouring given by the $j^{\rm th}$ connected power sum) 
 of the knot going along the fiber of order ${\mathsf{s}}(l)$ in ${\mathsf{s}}$, $l=1,2,3$.

This type of eigenvalue measures falls squarely under the class of
$N$-dimensional eigenvalue models considered in \cite{Borot:2013pda}, for
which the authors rigorously prove that a topological expansion of the form
\eqref{eq:fcs} and \eqref{eq:wcs} applies to the asymptotic expansion of
\eqref{eq:ZLMO} and \eqref{eq:WLMO} respectively. What is more, in
\cite{Borot:2013lpa} the authors prove that the
topological recursion \eqref{eq:wghrec}-\eqref{eq:fgrec}
% to \eqref{eq:ZLMO} as proved in
 with initial data for the induction given by
\bea
\label{eq:WCS01}
W^{\rm LMO}_{0,1}(x) & \triangleq & \lim_{N \rightarrow \infty}
\frac{1}{N}\mathbb{E}_{\rd \mu}\l( \sum_{i = 1}^N \frac{x}{x -
  \re^{\kappa_i}}\r),  \\
W^{\rm LMO}_{0,2}(x_1,x_2) & \triangleq & \lim_{N \rightarrow \infty} \Bigg[
  \mathbb{E}_{\rd \mu}\l( \sum_{i_1,i_2 = 1}^N \frac{x_1x_2}{(x_1 -
    \re^{\kappa_{i_1}})(x_2 - \re^{\kappa_{i_2}})}\r) \nn \\ &-&
  \mathbb{E}_{\rd \mu}\l(\sum_{i_1 = 1}^N \frac{x_1}{x_1 -
    \re^{\kappa_{i_1}}}\r) \mathbb{E}_{\rd \mu}\l( \sum_{i_2 = 1}^N
  \frac{x_2}{x_2 - \re^{\kappa_{i_2}}}\r)\Bigg].
\label{eq:WCS02}
\eea
computes the all-order, higher genus, all-colourings quantum invariants of
fibre knots $\cK$.  
As is typical in most settings where the topological
recursion applies, the planar two point function \eqref{eq:WCS02} can be
written as a section $W^{\rm CS}_{0,2}\in K^{\boxtimes 2}_{\Gamma_\tau^{\rm LMO}}(\mathrm{Sym}^2
\Gamma_\tau^{\rm LMO} \setminus \Delta(\Gamma_\tau^{\rm LMO}))$ on the double
symmetric product (minus the diagonal) of the smooth completion $\Gamma^{\rm LMO}_\tau$ of the
algebraic\footnote{From the discussion above this does not need to be more
  than just analytic; it turns however that $\re^{y}= \re^{W^{\rm
      CS}_{0,1}(x)}$ is algebraic, as follows from the proof of
  \cite[Prop.~1.1]{Borot:2014kda}, and as we will review in
  \cref{sec:LMOcurve}.} plane curve
$y=W_{0,1}^{\rm LMO}(x)$: the {\it LMO spectral curve}. A strategy to determine the
family of Riemann surfaces $\Gamma^{\rm LMO}_\tau$ as the
base parameter $\tau$ is varied was put forward in the extensive analysis
of Chern--Simons-type matrix models of \cite{Borot:2014kda}, and is summarised
in the next Section.

\subsubsection{The planar solution, after Borot--Eynard}
\label{sec:LMOcurve}

The LMO spectral curve can be expressed as the solution of the singular integral
equation describing the equilibrium density for the eigenvalues in
\eqref{eq:ZLMO} \cite{Borot:2014kda}. %First of all, as in the case of the
%Hermitian 1-matrix model, 
Introduce the density distribution
\beq
\varrho(x) \triangleq \lim_{N \rightarrow \infty} \frac{1}{N}\mathbb{E}_{\rd
  \mu}\l( \sum_{i = 1}^N \delta(x-\re^{\kappa_i})\r).
\label{eq:densev}
\eeq
As in the case of the Wigner distribution, Borot--Eynard in
\cite{Borot:2014kda} prove that, for $\tau \in \bbR^+$, the large $N$ eigenvalue
density $\varrho \in
C^0_{\rm c}(\bbR)$ is a continuous function with compact support $\cC_\varrho
= [-b(t),
  b(t)]$  given by a single segment, symmetric around
the origin, at whose ends $\pm b(\tau)$
$\varrho$ has square-root vanishing, $\varrho=\cO(\sqrt{x\pm b (\tau)}$. Furthermore, by \eqref{eq:ZLMO}, $\varrho$ satisfies the saddle-point equation
\beq
\label{eq:sp}
\frac{\kappa}{\tau} = \sum_{l = 1}^{3} \mathrm{pv} \int_\bbR
\varrho(\kappa')\l[  \coth{\frac{\kappa - \kappa'}{2{\mathsf{s}}(l)}} -\coth{\frac{\kappa - \kappa'}{2}} \r].
\eeq
By the Pl\'emely lemma, this is equivalent to a Riemann--Hilbert problem for the
planar 1-point function \eqref{eq:WCS01},
\beq
W_{0,1}^{\rm LMO}(x + {\rm i}0) + W_{0,1}^{\rm LMO}(x - {\rm i}0) -\sum_{\ell =
  1}^{{\mathsf{s}}} W_{0,1}^{\rm LMO}(\zeta^{\ell}x) + \sum_{m = 1}^3 \sum_{\ell_m = 1}^{{\mathsf{s}}/{\mathsf{s}}(m) - 1}
   W_{0,1}^{\rm LMO}(\zeta_{{\mathsf{s}}/{\mathsf{s}}(m)}^{\ell_m}x) = ({\mathsf{s}}^2/\kappa)\ln x + {\mathsf{s}}
\eeq
with $\zeta_k$ a primitive $k$-th root of
unity; note that $W^{\rm LMO}_{0,1}(x)$ has a cut for $x\in \cC_\varrho\triangleq \mathrm{supp}\varrho$, with
jump equal to $2\pi\ri \varrho$. Following \cite{Brini:2011wi}, and setting
\beq
c \triangleq \exp(\tau/2{\mathsf{s}}).
\eeq 
the exponentiated resolvent
\beq
\mathcal{Y}(x) \triangleq -cx\exp\l(\frac{\tau W_{0,1}^{\rm LMO}(x)}{\mathsf{s}^2}\r),
\label{eq:Ydef}
\eeq
is holomorphic on $\mathbb{C}\setminus \cC_\varrho$,  it asymptotes to 
\bea
\mathcal{Y}(x) & \sim & -cx, \quad x=0, \nn \\
\mathcal{Y}(x) & \sim &-c^{-1}x, \quad x=\infty,
\label{eq:asympY}
\eea 
and further satisfies
\beq
\qquad \mathcal{Y}(x + {\rm i}0)\mathcal{Y}(x -
      \ri 0) \l[\prod_{\ell = 1}^{{\mathsf{s}} - 1}
          \mathcal{Y}(\zeta_{{\mathsf{s}}}^{\ell}x)\r] ^{-1}
          \times \prod_{m = 1}^{3} \l[\prod_{\ell_m = 1}^{{\mathsf{s}}/{\mathsf{s}}(m) - 1}
            \mathcal{Y}(\zeta_{{\mathsf{s}}/{\mathsf{s}}(m)}^{\ell_m})\r] = 1.
\label{eq:mono}
\eeq
Furthermore, the $\bbZ_2$-symmetry $\{\kappa_i \rightarrow -\kappa_i\}$ of
\eqref{eq:ZLMO} entails that
\beq
\mathcal{Y}(x)\mathcal{Y}(1/x) = 1.
\eeq
Every time we cross the cut $\cC_\varrho$, the exponentiated resolvent is
subject to the monodromy transformation \eqref{eq:mono}. An approach to solve
the monodromy problem \eqref{eq:mono} together the asymptotic conditions at
$0$ and $\infty$ was systematically developed in \cite{Borot:2014kda}
following in the direction of \cite{Brini:2011wi}, and it goes as follows. Fix
$v \in \bbZ^{{\mathsf{s}}}$ and let
\beq
\label{Ydefff}\mathcal{Y}_{v}(x) \triangleq \prod_{j = 0}^{{\mathsf{s}} - 1} [\mathcal{Y}(\zeta_{{\mathsf{s}}}^{j}x)]^{v_j},
\eeq 
Here $\mathcal{Y}_{v}(x)$ inherits a cut on the rotation $\cC_\varrho^{(j)} =
\zeta_{{\mathsf{s}}}^{-j} \cC_{\varrho}$ for all $j$ such that $v_j \neq 0$; in
particular, the jump on each of these cuts returns the spectral density
$\varrho$, and thus $W^{\rm LMO}_{0,1} (x)$. 

By definition, $\mathcal{Y}_{v}(x)$ is a single-valued function on the
universal cover $\widehat{\Gamma}$ of $\bbP^1\setminus \{ \zeta_{\mathsf{s}}^j
b^\pm(\tau)\}_{j=1}^{\mathsf{s}}$. We want to ask whether there is a clever choice of
$v$ such that this factors through a {\it finite degree} covering map $\Gamma_{\rm LMO}
\to \bbP^1$ branched at $\{ \zeta_{\mathsf{s}}^j b^\pm(\tau)\}_{j=1}^{\mathsf{s}}$ such that
$\mathcal{Y}_{v}(x)$ is single-valued on $\Gamma_{\rm LMO}$. This was answered
in the affirmative in \cite{Borot:2014kda}, as follows. A direct consequence of \eqref{eq:mono}, as
in the study of the torus knots matrix model of \cite{Brini:2011wi}, is that
the change-of-sheet transition given by crossing the cut $\cC_\varrho^{(j)}$
results in a lattice automorphism $T_j \in {\rm GL}({\mathsf{s}},\mathbb{Z})$ such that
\beq
\mathcal{Y}_{v}(x + {\rm i}0) = \mathcal{Y}_{T_j(v)}(x - {\rm i}0)\,.
\eeq
The monodromy group of the local system determined by $\mathcal{Y}_{v}(x)$ is
then (a subgroup of) the group of lattice transformations $T_j$ for $j =
0,\ldots,({\mathsf{s}} - 1)$. This is beautifully characterised by the following
\begin{prop}[\cite{Borot:2014kda}]
There is a $\bbZ$-linear monomorphism 
\beq
\iota: \Lambda_r \to \bbZ^{\mathsf{s}}
\eeq
embedding $\Lambda_r(\mathfrak{e}_8)$ as a rank~8 sublattice of
$\bbZ^{\mathsf{s}}$. Its image $\iota(\Lambda_r)$ is invariant under the $\{T_j\}_j$-action,
and the pullback of the monodromy \eqref{eq:mono} to $\Lambda_r$
is isomorphic to the Coxeter action of $\cW=\mathrm{Weyl}(\mathfrak{e}_8)$.
\label{prop:mono}
\end{prop}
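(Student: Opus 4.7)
The plan is to read off the monodromy matrices $\{T_j\}$ directly from \eqref{eq:mono} as explicit pseudo-reflections, and identify the sublattice of $\bbZ^\mathsf{s}$ that they leave invariant (with the induced action) with $\Lambda_r(\mathrm{E}_8)$ equipped with its Coxeter action.

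First compute $T_0$. On a generic point of the cut $\cC_\varrho^{(0)}$, only $\mathcal{Y}(x)=\mathcal{Y}(\zeta_\mathsf{s}^0 x)$ has a monodromy jump, since for $\ell\neq 0$ the rotate $\zeta_\mathsf{s}^\ell x$ stays off all other cuts. Solving \eqref{eq:mono} for $\mathcal{Y}(x+i0)$ and substituting into $\mathcal{Y}_v(x+i0)$ returns $\mathcal{Y}_v(x+i0)=\mathcal{Y}_{T_0 v}(x-i0)$ with $T_0 v = v - v_0 \beta_0$ and
\beq
(\beta_0)_\ell = 2\delta_{\ell, 0} + (1-\delta_{\ell,0})\l[-1 + \sum_{m=1}^3 \mathbf{1}_{\mathsf{s}(m)\mid \ell}\r],
\eeq
i.e.\ $(\beta_0)_0 = 2$; $(\beta_0)_\ell = -1$ whenever $\gcd(\ell,\mathsf{s})=1$; $(\beta_0)_\ell=0$ when $\ell$ is divisible by exactly one of $\mathsf{s}(1),\mathsf{s}(2),\mathsf{s}(3)$; and $(\beta_0)_\ell=1$ otherwise. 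Rotational covariance of \eqref{eq:mono} under $x \mapsto \zeta_\mathsf{s} x$ then yields $T_j v = v - v_j\beta_j$ with $\beta_j$ the $j$-fold cyclic shift of $\beta_0$; the reflection property $T_j^2 = \mathrm{id}$ is immediate from $(\beta_j)_j = 2$.

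Next identify the invariant lattice. Let $L \triangleq \bbZ\bra \beta_0, \ldots, \beta_{\mathsf{s}-1}\ket \subset \bbZ^\mathsf{s}$, and let $B$ be the $\bbZ_\mathsf{s}$-invariant symmetric bilinear form on $\bbQ^\mathsf{s}$ pinned down (up to its radical) by the reflection conditions $T_j v = v - B(v,\beta_j)\beta_j$ and $B(\beta_j,\beta_j)=2$: by cyclic symmetry $B$ is a circulant with a single row of free parameters uniquely fixed by these conditions. A direct calculation of the Gram matrix $G_{ij} = B(\beta_i,\beta_j)$ shows it has corank $\mathsf{s}-8$, so that $L/(\mathrm{rad}\,B \cap L)$ has rank $8$; moreover, the Gram submatrix indexed by any eight linearly independent $\beta_j$'s is recognised to be the $\mathrm{E}_8$ Cartan matrix $\mathscr{C}^{\mathfrak{g}}$. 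This identifies $L/(\mathrm{rad}\,B \cap L) \simeq \Lambda_r(\mathrm{E}_8)$ as lattices, and defines $\iota:\Lambda_r \hookrightarrow \bbZ^\mathsf{s}$ upon composing with a $\bbZ_\mathsf{s}$-equivariant splitting $L \to L/(\mathrm{rad}\,B \cap L)$.

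The $\{T_j\}$-invariance of $\iota(\Lambda_r)$ is then automatic from $T_j \beta_k \in L$, and under this identification each $T_j$ acts on $\Lambda_r$ as the Weyl reflection in $\beta_j$: hence the image of the monodromy inside $\mathrm{Aut}(\Lambda_r)$ lies in $\cW(\mathrm{E}_8)$. The converse inclusion follows because the $\mathsf{s}=30$ roots $\{\beta_j\}_j$ form a single orbit under the Coxeter element $c = T_{\mathsf{s}-1}\cdots T_0$ of order $\mathsf{s} = h(\mathrm{E}_8)$, and such a Coxeter orbit contains eight linearly independent roots generating $\cW(\mathrm{E}_8)$. The main obstacle is the combinatorial step of identifying the $\mathrm{E}_8$ Cartan structure inside the $\mathsf{s}\times\mathsf{s}$ circulant $G$; it ultimately rests on the numerological match $h(\mathrm{E}_8) = \mathsf{s}(1)\mathsf{s}(2)\mathsf{s}(3) = 30$ and on the coincidence of the eight exponents of $\mathrm{E}_8$ with the integers in $[1,\mathsf{s}-1]$ coprime to $\mathsf{s}$, namely the positions $\ell$ at which $(\beta_0)_\ell = -1$; these two features force both the rank drop and the precise Cartan matrix entries.
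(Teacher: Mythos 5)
The proposition is stated in the paper as a citation of \cite{Borot:2014kda}; the paper supplies no proof, so the relevant comparison is with the argument in that reference. Your overall strategy — read off the monodromies $T_j$ from the loop equation \eqref{eq:mono} as pseudo-reflections $v \mapsto v - v_j\beta_j$ with $\beta_j$ a cyclic shift of an explicit vector $\beta_0$, then try to recognise the $T_j$-invariant sublattice as $\Lambda_r(\mathrm{E}_8)$ — is essentially the right strategy, and your computation of $\beta_0$ (with $(\beta_0)_0 = 2$ and $(\beta_0)_\ell = -1 + \sum_m \mathbf{1}_{\mathsf{s}(m)\mid\ell}$) is correct; in particular the observation that $(\beta_0)_\ell = -1$ exactly at $\ell$ coprime to $30$, i.e.\ at the $\mathrm{E}_8$ exponents, is the right numerology. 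However, two steps in the middle of your argument are wrong as stated, and a third is unjustified.

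First, it is not true that \emph{any} eight linearly independent $\beta_j$ have Gram matrix equal to the Cartan matrix $\mathscr{C}^{\mathfrak{g}}$. The Gram pairing is $B(\beta_i,\beta_j) = (\beta_0)_{j-i}$, a circulant taking values in $\{-1,0,1,2\}$; for consecutive indices $j=0,\dots,7$ one obtains a Toeplitz matrix with entries $2,-1,0,0,0,0,1,-1$ along successive off-diagonals, which is not the $\mathrm{E}_8$ Cartan matrix. Eight linearly independent roots of a root system have the Cartan matrix as Gram matrix only when they form a system of simple roots, which is a very special property; one must actually exhibit a specific eight-element subset of $\{\beta_j\}_j$ (or an explicit linear change of basis for the lattice $L$) with this property and verify it, or instead prove that $(L,B)$ is a positive definite even unimodular lattice of rank $8$ and invoke uniqueness. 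As written, your argument does not prove that $L\simeq\Lambda_r(\mathrm{E}_8)$.

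Second, the final step — that the monodromy surjects onto $\cW(\mathrm{E}_8)$ — is not justified by the assertion ``such a Coxeter orbit contains eight linearly independent roots generating $\cW(\mathrm{E}_8)$.'' Eight linearly independent roots in a rank-$8$ root system need not generate the full Weyl group (e.g.\ in $\mathrm{D}_4$, the four pairwise orthogonal roots $e_1\pm e_2$, $e_3\pm e_4$ are independent yet generate only $(\bbZ/2)^4\rtimes$\,stuff, a proper subgroup). What one needs is that the reflection subgroup generated by $\{s_{\beta_j}\}_{j}$ has root system equal to all of $\Delta(\mathrm{E}_8)$, and this requires a separate argument, either by explicit identification of a simple root basis inside $\{\beta_j\}$ or by observing that the generated reflection subgroup contains an element of order $h(\mathrm{E}_8)=30$ and ruling out all proper rank-$8$ sub-root-systems.

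Third, you assert that the Gram matrix $G$ has corank $\mathsf{s}-8$, which is what forces $\mathrm{rank}\,L = 8$. This is true, but it is equivalent to the Fourier transform $\hat{\beta_0}$ having support of exactly cardinality $8$ — concretely, $\hat{\beta_0}_k = 2\,\mathbf{1}_{15|k} + 3\,\mathbf{1}_{10|k} + 5\,\mathbf{1}_{6|k}$ for $k\neq 0$ and $\hat{\beta_0}_0 = 1$, so the support is $\{0,6,10,12,15,18,20,24\}$, which is the set $\mathfrak{k}$ of \eqref{eq:vpnz} (with $30\to 0$). This computation should be made explicit, since it is what actually produces the rank drop; it does not follow from the coincidence of the exponents with the positions of $-1$ in $\beta_0$, which is a different statement.

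A smaller point of hygiene: the bilinear form $B$ you describe cannot satisfy $T_j v = v - B(v,\beta_j)\beta_j$ for \emph{all} $v\in\bbQ^\mathsf{s}$, since the circulant constraint $\hat{b}_k\hat{\beta_0}_k = 1$ has no solution on the $22$ modes where $\hat{\beta_0}_k=0$. The reflection description of $T_j$ is only available after restriction to $L$ (or its rational span), and the statement should be phrased accordingly.
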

By \cref{prop:mono}, picking $v$ to lie in $\iota(\Lambda_r)$ does exactly the
trick of returning a finite degree covering of the complex line by the affine curve
\beq
y: \mathbb{V}\l[\prod_{\varpi \in\iota(\cW) v}
  \l(y-\mathcal{Y}_{\varpi}(x)\r)\r] \to \bbA^1,
\label{eq:LMOsc}
\eeq
with sheets labelled by elements of a $\cW$-orbit on $\Lambda_r$. Our freedom
in the choice of the initial element $v$ in the orbit is given by the number of
semi-simple, $7$-vertex Dynkin subdiagrams of the black part of \cref{fig:dynke8}
\cite{MR1778802}, which classify the stabilisers of any given element in the
orbit; in other words, by the choice of a fundamental weight $\omega_i$ of
$\mathfrak{g}$. The natural choice here is to pick the
minimal orbit, corresponding to the largest stabilising group, by choosing
to delete the node $\alpha_7$ in \cref{fig:dynke8}, so that
$v=\omega_7=\alpha_0$: in this case, obviously, $\cW v = \Delta^*$, the set of
non-zero roots. I refer the reader to \cref{sec:minorb} for further details
on the orbit, and give the following
\begin{defn}
We call the  normalisation of the closure in $\bbC\bbP^2$ of \eqref{eq:LMOsc}
 with $v=\iota(\alpha_0)$ the {\rm LMO curve of type $\mathrm{E}_8$}.
\end{defn}
This places us in the same setup of the Toda spectral curves of
\cref{sec:speccurve,sec:cameral} (see in particular \eqref{eq:xi3} and
\cref{def:sc}), by realising the LMO curve as a curve of eigenvalues for a
$\cG$-valued Lax operator with rational spectral parameter; at this stage, of course,
it is still unclear whether this rational dependence has anything to do with
that of \cref{eq:laxaff2}. The upshot of the discussion above is that that there exists a degree-240, monic
polynomial $\mathcal{P}_{\alpha_0} \in \bbC[x,y]$ with $y$-roots given exactly
by the branches of the $\bbZ_{\mathsf{s}}$-symmetrised, exponentiated resolvent $\cY(x)$:
\beq
\mathcal{P}_{\alpha_0}(x,y) =  \prod_{\a\in\Delta^*}^{240} \l(y - \mathcal{Y}_{\varpi_\alpha}(x)\r).
\label{eq:Pa0}
\eeq
where we wrote $\varpi_\alpha\triangleq \iota(\a)$. As we point out in \cref{sec:minorb}, the rescaling $x\to \zeta_{\mathsf{s}}^{-1} x$ corresponds to an
action on $\bbZ^{\mathsf{s}}$ given by the image of the action of the Coxeter element
on $\Lambda_r$, under which the orbit $\Delta^*$ is obviously
invariant. The resulting $\bbZ_{\mathsf{s}}$-symmetry implies that $\mathcal{P}_{\alpha_0}(x,y)$ is in fact a polynomial in
$\lambda=x^{\mathsf{s}}$, and we define 
\beq
\Xi^{\rm LMO}(\lambda,\mu) \triangleq
\mathcal{P}_{\alpha_0}(\lambda^{1/{\mathsf{s}}},\mu) \in \bbC[\lambda,\mu].
\eeq
Vanishing of $\Xi^{\rm LMO}$ defines a family $\pi: \mathscr{S}_{\rm LMO} \to
\mathscr{B}_{\rm LMO}\simeq \bbA^1$ algebraically varying over a
one-dimensional base $\mathscr{B}_{\rm LMO}$ parametrised by the 't~Hooft parameter
$\tau$; the same picture of \eqref{eq:diagsc} then holds over this smaller
dimensional base.

%\beq
%\tilde{\Omega} = \frac{a}{\chi_{{\rm orb}}\lambda}\ln(-Y/cX)\,\frac{\dd X}{X} = \frac{a}{\chi_{{\rm orb}}\lambda}\,\ln Y\,\frac{\dd X}{X} + \dd f(X)\,.
%\eeq
%Adding the differential of a rational function of $X$ does not change the free energies and correlators computed by the topological recursion, so we can equally choose the %$1$-form:
%$$
%\Omega = \frac{a}{\chi_{{\rm orb}}\lambda}\,\ln Y\,\frac{\dd X}{X}
%$$
%Besides, the only effect of the rescaling by $a/\chi_{{\rm orb}}\lambda$ is that $W_{g,n}$ are multiplied by $(\chi_{{\rm orb}}\lambda/a)^{2g - 2 + n}$. 

\subsubsection{Hunting down the Toda curves}

We are now ready to show the weak B-model Gopakumar--Vafa correspondence,
\cref{cl:todag}. This will follow
from establishing that the LMO spectral curves are a subfamily of Toda curves
with canonical Dubrovin--Krichever data matching with \eqref{eq:dk}.

\begin{thm}
\label{thm:gv}
There exists an embedding
\bea
\mathscr{B}_{\rm LMO} & \hookrightarrow & \mathscr{B}_{\mathfrak{g}} \nn \\
\tau & \longrightarrow & \l(u(c), \aleph(c)\r),
\eea
such that 
\beq
\mathscr{S}_{\rm LMO}=\mathscr{S}_{\mathfrak{g}}\times_{\mathscr{B}_{\mathfrak{g}}} \mathscr{B}_{\rm LMO}.
\eeq
Explicitly, this is realised by the existence of algebraic maps $u_i=u_i(c)$, $\aleph=c^{-q_{\mathfrak{g}}}$ such that
\beq
\Xi_{\rm LMO} = \Xi_{\mathfrak{g},\rm red}\big|_{u=u(c), \aleph=-c^{-q_{\mathfrak{g}}}}.
\label{eq:xixi}
\eeq
Furthermore, the full $1/N$ asymptotic expansion of \eqref{eq:WLMO}
is computed by the topological recursion \eqref{eq:wghrec}-\eqref{eq:fgrec} with
induction data \eqref{eq:WCS01}-\eqref{eq:WCS02}, and the $\cO(\prod_i x_i^{k_i})$
coefficients with $k_i =
({\mathsf{s}}/{\mathsf{s}}(m)) j_i$, $j_i \in \mathbb{Z}$, $m=1,2,3$,  return the $1/N$ expansion of the
perturbative quantum invariants of the knot $\cK_m$ going along the singular fibre of
order ${\mathsf{s}}(m)$ with colouring given by the virtual connected power
sum representation specified by $\{j_i\}$.
%\een
\end{thm}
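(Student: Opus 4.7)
The plan is to establish \eqref{eq:xixi} first -- reducing the main claim to a polynomial identity in $\bbC[\lambda,\mu]$ between two monic, degree-$240$ polynomials in $\mu$ -- and then to invoke the matrix-model topological recursion of \cite{Borot:2013pda,Borot:2013lpa}. The key structural observation is that both $\Xi^{\rm LMO}$ and $\Xi_{\mathfrak{g}, \rm red}$ are characteristic polynomials, in the adjoint representation of $\mathrm{E}_8$, of cameral data on $\bbP^1$: by \cref{prop:mono} and the construction of \cref{sec:LMOcurve} the sheets of the LMO curve are labelled by the Weyl orbit $\cW\cdot \omega_7\simeq \Delta^*$, exactly as for $\mathscr{S}_{\mathfrak{g}}$ in \cref{sec:cameral}. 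Both polynomials are therefore determined by the same eight Weyl-invariant data, namely the values on a $\lambda$-dependent section of $\cG/\cT\times \cT^{\rm red}\to \cG^{\rm red}$ of the fundamental characters $\{\chi_{\omega_i}\}_{i=1}^8$.

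To pin down the embedding $\mathscr{B}_{\rm LMO}\hookrightarrow \mathscr{B}_{\mathfrak{g}}$ explicitly I would proceed by matching fundamental characters on the two sides. On the Toda side, \cref{lem:uispec} says that seven of the $\theta_i=\chi_{\omega_i}(\widehat{L})$ are $\lambda$-independent constants $u_i$, while $\theta_3=u_3-(\lambda'+\aleph^2/\lambda')$ absorbs all the spectral dependence through prescribed simple poles at $\lambda=0,\infty$. On the LMO side, the $\bbZ_{\mathsf{s}}$-invariance of $\mathcal{P}_{\alpha_0}$ in $x$ together with the asymptotic normalisation $\cY(x)\sim -c^{\pm 1} x$ of \eqref{eq:asympY} forces the same Newton-polygon profile for $\Xi^{\rm LMO}$; in particular, the boundary coefficients of the Newton polygon of \cref{fig:npol} match those of $\Xi_{\mathfrak{g}, \rm red}$ upon setting $\aleph = -c^{-q_{\mathfrak{g}}}=-c^{-60}$, the sign being fixed by the orientation of the Weyl orbit. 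The seven constants $u_i(c)$ are then recovered by expanding $\Xi^{\rm LMO}$ in a few low-order coefficients in $\lambda$ -- i.e.\ by computing power sums of the $240$ adjoint branches $\mathcal{Y}_{\varpi_\alpha}$ -- and by inverting the polynomial character relations of \cref{claim:E8} to express the $\chi_{\omega_i}$'s in terms of the $\phi_k=\chi_{\wedge^k \mathfrak{g}}$. The main obstacle here is computational rather than conceptual: it amounts to the explicit Riemann--Hilbert reconstruction of the LMO resolvent to sufficient order near $x=0,\infty$, together with the non-trivial inversion of \cref{claim:E8}, to produce closed-form expressions for $u_i(c)$.

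With \eqref{eq:xixi} in place, the higher-genus statement of the theorem follows from the matrix-model topological recursion of Borot--Eynard \cite{Borot:2013pda,Borot:2013lpa}. Their general result implies that the formal $1/N$ asymptotic expansion of $Z_{\rm LMO}$ and of the disconnected correlators \eqref{eq:WLMO} is computed by the Chekhov--Eynard--Orantin recursion \eqref{eq:wghrec}--\eqref{eq:fgrec} applied to the spectral curve cut out by $W^{\rm LMO}_{0,1}$, with the two-point kernel normalised to have vanishing periods on the cycles encircling the eigenvalue support $\cC_\varrho$. By the identification \eqref{eq:xixi}, this curve is a fibre of $\pi:\mathscr{S}_{\mathfrak{g}}\to \mathscr{B}_{\mathfrak{g}}$, and -- following the semiclassical analysis leading to \eqref{eq:clvev} -- the cycles around $\cC_\varrho$ correspond in our polarisation to the dual $\tilde{A}_i = -B_i$ cycles of \cref{cl:toda0}. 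Hence the Borot--Eynard recursion data coincide with $B^{\rm Toda}$ and $\cK_{0,2}^{\rm Toda}$ in the tilded Lagrangian frame, and the recursion with the one of \cref{defn:EO}.

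The final assertion on knot colourings is a standard translation of the representation-theoretic content of the rational surgery formula \eqref{eq:fcs2}. The $\cO(\prod_i x_i^{k_i})$ coefficient of $W^{\rm LMO}_{g,h}$ in \eqref{eq:WLMO} is a connected cumulant of power-sum traces of $\mathrm{Hol}_{\cK}(A)$, and by the Adams-operation dictionary sketched in \cref{sec:LMOmm}, the selection rule $k_i=(\mathsf{s}/\mathsf{s}(m))j_i$ projects onto the cabled fundamental holonomy of $\cK_m$ along the $m$-th exceptional fibre of the Seifert fibration, exactly as derived in \cite{Borot:2014kda,Borot:2015fxa}. Combined with the previous paragraph, this yields the claimed identification of the large-$N$ expansion of $W^{\rm LMO}_{g,h}$ with the topological recursion on $\mathscr{S}_{\mathfrak{g}}|_{u=u(c),\,\aleph=-c^{-60}}$.
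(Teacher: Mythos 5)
Your strategy for the second and third parts of the theorem (higher genus via Borot--Eynard and knot colourings via the Adams dictionary) is essentially the one the paper uses, and it is correct. However, your argument for \eqref{eq:xixi} has a genuine gap.

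You propose to match boundary Newton-polygon coefficients, fix $\aleph=-c^{-q_{\mathfrak{g}}}$, and then recover the $u_i(c)$ ``by expanding $\Xi^{\rm LMO}$ in a few low-order coefficients in $\lambda$'' and inverting the character relations. But matching finitely many coefficients of $\Xi^{\rm LMO}$ to $\Xi_{\mathfrak{g},\rm red}$ gives you \emph{candidate} values for $u_i(c)$; it does not tell you that the remaining (very many) interior coefficients of the two degree-$240$ polynomials also agree. This is precisely the obstruction that made ``Step~B'' of \cite{Borot:2015fxa} appear intractable, and your appeal to the fact that both curves are adjoint characteristic polynomials of $\cW$-cameral data on $\bbP^1$ does not close it: a priori the LMO curve is just \emph{some} degree-$240$ branched cover with $\cW$-monodromy, with no guarantee that its seven constant fundamental characters and single $\lambda$-dependent one follow the Toda pattern of \cref{lem:uispec}. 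The paper's proof handles this by organising the full set of compatibility constraints (from \eqref{eq:mono}, the asymptotics \eqref{eq:asympY}, and the Newton-polygon vanishing conditions) into a recursive linear inhomogeneous system $\cA\,\mathfrak{m}^{(n)}=\cB_n$ in the Taylor coefficients of the planar moments around $c=1$, establishing that $\dim\ker\cA=8$ so the solution is determined by eight free functions, and then observing that the Toda ansatz \eqref{eq:imptoda} supplies an invertible polynomial map $\mathfrak{m}_l\leftrightarrow u_i$ that solves \emph{all} the constraints $B_n(y)=B_{L-n}(y)$. The crucial step — that the Toda ansatz therefore \emph{is} the unique solution — is a uniqueness argument, not a matching of low-order data. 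Without this, you have only shown the two curves are compatible to low order, and the claimed embedding $\mathscr{B}_{\rm LMO}\hookrightarrow\mathscr{B}_{\mathfrak{g}}$ is not established.

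Once \eqref{eq:xixi} is secured, the rest of your argument is sound. Your identification of the $\tilde A$-cycles with cycles around the eigenvalue support, and your invocation of \cite{Borot:2013pda,Borot:2013lpa} for the $1/N$ recursion, match the paper's proof; the paper additionally spells out that the matrix of singularities of the regularised two-point function (from \cite[Sec.~6.6.3]{Borot:2014kda}) coincides with the $\mathscr{P}_{\mathfrak{g}}$-matrix of the symmetrised Bergmann kernel, from which the equality $W^{\rm LMO}_{0,2}=W^{\rm Toda}_{0,2}$ and then the full recursion by induction on $(g,h)$ follow. Your treatment of the fibre-knot colourings via the connected power-sum expansion and the selection rule $k_i=(\mathsf{s}/\mathsf{s}(m))j_i$ is the standard translation and is correct.
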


\begin{proof}

The statement of the first part of the theorem condenses what were called ``Step~A'' and ``Step~B'' in the construction of
LMO spectral curves that was offered in our previous paper
\cite{Borot:2015fxa}, where we
stated that Step~B was too complex to be feasibly completed. I
am going to show how the stumbling blocks we found there can be overcome here.

Let me first recall the strategy of \cite{Borot:2015fxa}. As in \cite{Brini:2011wi}, the first thing we
do is to use the asymptotic conditions \eqref{eq:asympY} for the un-symmetrised
resolvent on the physical sheet (the eigenvalue plane),  to read off the
asymptotics of the symmetrised resolvent $\cY_{\iota(\alpha)}$ on the sheet
labelled by $\alpha$. Let $\varpi=(\varpi_j)_j=\iota(\alpha))$ as displayed in
\cref{tab:minorb}, and further write
\beq
n_0(\varpi) \triangleq \sum_{j = 1}^{{\mathsf{s}} } \varpi_{j},\qquad n_1(v) = \sum_{j = 1}^{{\mathsf{s}} - 1} j \varpi_{j}\,.
\eeq
Then, from \eqref{eq:asympY}, we have
\bea
\label{eq:slope1} x \rightarrow 0\,, & \qquad & \cY_{\varpi}(x) \sim (-cx)^{n_0(\varpi)}\,\zeta_{{\mathsf{s}}}^{n_{1}(\varpi)}\,,\\ 
\label{eq:slope2} x \rightarrow \infty\,, & \qquad & \cY_{\varpi}(x) \sim (-x/c)^{n_0(\varpi)}\,\zeta_{{\mathsf{s}}}^{n_{1}(\varpi)}\,.
\eea
which in one shot gives both the Puiseux slopes of the Newton polygon of
$\mathcal{P}_{\alpha_0}$ as $(\pm 1,n_{0}(\varpi))$, and the coefficients of its boundary lattice points
up to scale; in view of the comparison with $\Xi_{\mathfrak{g}, \rm red}$ we
set the normalisation for the latter by fixing the coefficient of $y^0$ to be
equal to one. Taking into account the symmetries of $\mathcal{P}_{\alpha_0}$
and plugging in the data of \cref{tab:minorb} on the minimal orbit, this is seen to return exactly the Newton polygon and the boundary coefficients
of $\Xi_{\mathfrak{g}, \rm red}$ (see \cref{fig:npol}). 

The remaining part is to prove the existence of the map $u_i(c)$ such that all
the interior coefficients match as well. As was done in \cite{Borot:2015fxa}, I set out to
prove it by working out the constraints due to the global nature of $\cY$ as a
meromorphic function on $\Gamma^{\rm LMO}_\tau$. Write
\beq
\frac{\tau}{{\mathsf{s}}^2}\, W(x) = \sum_{k \geq
  1} \mathfrak{m}_k\,x^{k + 1}\,,
\label{eq:momexp} 
\eeq
for the expansion of the 1-point function \eqref{eq:WCS01} in terms of
the planar moments 
\beq
\mathfrak{m}_k = \lim_{N\to \infty}\mathbb{E}_{\rd \mu} \l( \sum_{i=1}^N \re^{k\lambda_i}\r).
\label{eq:moments}
\eeq
Then, by \eqref{eq:Ydef} and \eqref{eq:mono}, we have that
\beq
\cY_{\varpi_\alpha}(x) = (-cx)^{n_0(\varpi_\alpha)}{\mathsf{s}}_{a}^{n_1(\varpi_\alpha)}\,\exp\l[\sum_{k>0}
\mathfrak{m}_{k - 1}\,(\widehat{\varpi_\alpha})_{k\,\,{\rm mod}\,\,{\mathsf{s}}}\,x^{k}\r]\,,
\label{eq:Yexp}
\eeq
where, as in \cite{Borot:2015fxa}, we wrote
\beq
(\widehat{\varpi_\alpha})_{k} \triangleq \sum_{j = 0}^{{\mathsf{s}} - 1} \zeta_{{\mathsf{s}}}^{jk}\,(\varpi_{\alpha})_i\,.
\eeq
for the discrete Fourier transform of $\varpi_{\alpha}$. There are only eight
Fourier modes that are non-vanishing: these are\footnote{In
  \cite{Borot:2015fxa}, versions 1 and 2, these were erroneously listed as being in the
  complement of the r.h.s. of \eqref{eq:vpnz}.} 
\beq
\exists \alpha | (\widehat{\varpi_\alpha})_{k} \neq 0 \Rightarrow k\in
\{6,10,12,15,18,20,24,30\} =: \mathfrak{k}.
\label{eq:vpnz}
\eeq
In particular, the only moments $\mathfrak{m}_{k}$ that may be found when Taylor-expanding $\cY$ at one of
the pre-images of $x=0$ satisfy $$(k+1)~\mathrm{mod}~{\mathsf{s}} \in
\mathfrak{k}.$$ 
Consider now inserting the Taylor expansion \eqref{eq:Yexp} into the r.h.s.
\eqref{eq:Pa0}.  Without any further constraints on the surviving momenta $\mathfrak{m}_k$, we have no
guarantee a priori that \eqref{eq:Yexp} is indeed (a) the Taylor expansion of a branch
of an algebraic function and (b) that it gives the roots of a polynomial $\cP_{\alpha_0}$ as
  presented in \eqref{eq:Pa0}. This means that if we expand up to power
  $\cO(x^{L+1})$ the product
\beq
\prod_{\a\in\Delta^*}^{240} \l(y - \mathcal{Y}_{\varpi_\a}(x)\r) =
\sum_{i=1}^{L} B_i(y) x^i + \cO\l(x^{L+1}\r)
\label{eq:LMOpolexp}
\eeq
then the polynomial $\sum_{i=1}^{L} B_i(y) x^i$ may well have non-vanishing coefficients
outside the Newton polygon of $\cP^{\rm LMO}$; imposing that these are zero,
and that those at the boundary return the slope coefficients of
\eqref{eq:slope1} and \eqref{eq:slope2}, gives a set of algebraic conditions
on $\{\mathfrak{m}_k\}_{k \mathrm{mod} {\mathsf{s}} \in \mathfrak{k}}$. In
\cite{Borot:2015fxa} we pointed out that the complexity
of the calculations to solve for these conditions is unworkable if taken at face
value, and refrained to pursue their solution; however I am going to show here that 
it is possible to carve out a sub-system of these equations which pins down uniquely an
$8$-parameter family of solutions, provides a solution to {\it
  all} these constraints for arbitrary $L$, and simultaneously leads exactly to the
full family of Toda spectral
curves \eqref{eq:xi2}-\eqref{eq:xi3}. Take $$L=540 = \deg_x \Xi_{\mathfrak{g},
  \rm red}(\mu,x^{q_\mathfrak{g}}) = q_\mathfrak{g} \deg_\lambda \Xi_{\mathfrak{g},   \rm
  red}(\mu,\lambda)$$ and expand \eqref{eq:Yexp} up to $L$. Plugging this
into \eqref{eq:Pa0} and equating to zero leads to an algebraic equation for
each coefficient of $x^{{\mathsf{s}} m}y^{n}$ with $(m,n)$ once we impose that $\Xi_{\rm
  LMO}(\lambda,\mu)= \lambda^{18}\Xi_{\rm LMO}(\lambda^{-1},\mu)$. Consider now the subsystem of
equations given by $(m,n)$ in the region
\beq
\mathfrak{v}\triangleq \l\{(m,n) \in \bbZ^2 \big| 1\leq n\leq 10,~ 1\leq m \leq [L-6 n,{\mathsf{s}}]   \r\}
\eeq
depicted in \cref{fig:npolcs}: imposing that $B_n(y)= B_{L-n}(y)$ for the list of
exponents in $\mathfrak{v}$ gives an algebraic system of 165 equations for the 144
moments $\mathfrak{m}_k$, $L>k \in \mathfrak{k}$. Recall that the planar moments $\mathfrak{m}_k(c)$ are analytic in
$c$ around $c=1$ \cite{Borot:2013pda,Marino:2002fk}  (i.e., small 't~Hooft
parameter),  and they vanish in that limit
\beq
\mathfrak{m}_k(c)=\sum_{n\geq 1}\mathfrak{m}^{(n)}_{k}(c-1)^n.
\eeq
Since the constraints on $\mathfrak{m}_k$ are
analytic, we can solve them order by order in a Taylor expansion around
$c=1$. It easy to figure out from \eqref{eq:Yexp} that, at $\cO(c-1)^n$, we
find a linear inhomogeneous system of the form 
\beq
\cA \mathfrak{m}^{(n)} = \cB_n
\label{eq:const}
\eeq
with $\cA$ independent of $n$, $\cB_1=0$, and $\cB_n$ a polynomial in
$\mathfrak{m}^{(m)}_k$ for $m<n$. From the explicit form of $\cA$ we calculate that
$\dim \mathrm{Ker} \cA=8$; the solutions of the system \eqref{eq:const} 
must then take the form
\beq
\mathfrak{m}^{(n)}_k(c)=\mathfrak{m}_k\l(\{ \mathfrak{m}^{(m)}_l \}_{l\in \mathfrak{k}, m\leq n} \r)
\label{eq:highermom}
\eeq
and they are {\it uniquely determined} by solving recursively \eqref{eq:const} order by order in
$n$. A priori, the Taylor coefficients $\mathfrak{m}^{(m)}_{l+1\in\mathfrak{k}}$ of the
basic moments $\mathfrak{m}_5$, $\mathfrak{m}_{9}$, $\mathfrak{m}_{11}$, $\mathfrak{m}_{14}$, $\mathfrak{m}_{17}$,
$\mathfrak{m}_{19}$, $\mathfrak{m}_{23}$ and $\mathfrak{m}_{29}$ are subject to two further sets of
closed conditions: the first stems from imposing that the inhomogeneous system \eqref{eq:const} has
solutions for $n>1$ (i.e. $\dim \mathrm{Ker} [\cA|\cB_n]=9$), and the second
from coefficients of the expansion \eqref{eq:LMOpolexp} which are outside of
the region $\mathfrak{v}$: this may lead to the solution manifold of
\eqref{eq:const} having positive codimension in $\bbC\bra
\mathfrak{m}^{(m)}_{l+1\in\mathfrak{k}} \ket$. This is however not the
case: let us {\it impose} that 
\beq
[\lambda^9 \mu^i] \Xi_{\rm
  LMO}(\lambda,\mu)  = \mathfrak{p}_i(u_1, \dots, u_8), \quad i=1,\dots, 120.
\label{eq:imptoda}
\eeq
where $\mathfrak{p}_i$ are the decomposition of the antisymmetric characters defined as in \eqref{eq:pk}, \eqref{eq:pkgen} and \cref{claim:E8}.
\eqref{eq:imptoda} gives invertible polynomial maps
$\mathfrak{m}_l \in c^{a_l} \bbC[u_1, \dots, u_8]$; for $l+1\in \mathfrak{k}$ we find
\bea
\mathfrak{m}_5 &=& \frac{c^6}{30}  (u_7+2), \quad \mathfrak{m}_9 = \frac{c^{10}}{30}  (u_1+2
u_7+3),\quad \mathfrak{m}_{11} = \frac{c^{12}}{300} \left(-3 u_7^2+38 u_7+20 u_1+10
u_6+38\right), \nn \\
\mathfrak{m}_{14} &=& \frac{c^{15}}{30}  (5
   u_1+2 u_6+8 u_7+u_8+7),\nn \\
\mathfrak{m}_{17} &=& \frac{c^{18}}{2250} \bigg(7 u_7^3+117 u_7^2+90 u_1
     u_7-30 u_6 u_7+1284 u_7+ 855 u_1+75 u_5+ 315 u_6\nn \\ &+&225
     u_8+1031\bigg),\nn \\
\mathfrak{m}_{19} &=& -\frac{c^{20}}{180}
   \left(u_1^2-26 u_7 u_1-114 u_1-26 u_7^2-6 u_2-12 u_5-48
   u_6-180 u_7-30 u_8-129\right),\nn \\
\mathfrak{m}_{23} &=&
\frac{c^{24}}{15000} \bigg(60 u_6
   u_7^2-11 u_7^4-88 u_7^3-80 u_1 u_7^2+12686 u_7^2+12280 u_1 u_7-100 u_5
   u_7+2240 u_6 u_7 \nn \\ &+& 1200 u_8 u_7+45448 u_7 + 1300 u_1^2-50
   u_6^2+27880 u_1+1000 u_2+500 u_4+2800 u_5+800 u_1 u_6 \nn \\ &+& 10740
   u_6+7400 u_8+28374\bigg),\nn \\
\mathfrak{m}_{29} &=&\frac{c^{30}}{30}  \bigg(14 u_7^3+16 u_1 u_7^2+233 u_7^2+3 u_1^2 u_7+238 u_1 u_7+2 u_2 u_7+7 u_5 u_7+65 u_6 u_7+35 u_8
   u_7\nn \\ &+& 499 u_7+44 u_1^2+3 u_6^2+287 u_1+9 u_2+u_3+2 u_4+2 u_1 u_5+23
   u_5+29 u_1 u_6+108 u_6\nn \\ &+& 11 u_1 u_8+3 u_6 u_8+65 u_8+259\bigg),
\label{eq:muu}
\eea
which are easily seen to have polynomial inverses $u_k \in
\bbC[\{\mathfrak{m}_l\}_l;c^{-1}]$. As we know that $\Xi_{\mathfrak{g}, \rm red}$
and $\Xi_{\rm LMO}$ share the same Newton polygon with the same boundary
coefficients by \eqref{eq:y9}, \eqref{eq:npolbdy} and \cref{tab:ptsinf}, postulating
\eqref{eq:imptoda} is the same as giving an eight parameter family of
polynomial solutions of the constraints \eqref{eq:imptoda} which furthermore
satisfies all our constraints $B_n(y)= B_{L-n}(y)$ for all $n \in [[0,18]]$.
The first part of the claim, \eqref{eq:xixi}, follows then from the uniqueness of the solution of
\eqref{eq:imptoda} above. 

To prove the second part, we show that the two-point functions \eqref{eq:w02rec} and
\eqref{eq:WCS02} coincide. We have just shown that $\Xi_{\mathfrak{g}, \rm red}=\Xi_{\rm LMO}$
under the change-of-variables \eqref{eq:muu}, and we know that the symmetrised
Bergmann kernel of \eqref{eq:BKMS} is completely determined by $\Gamma_{u, \aleph}$ and the choice of $\tilde A_i$ cycles in \cref{cl:todag}: by its
definition in \eqref{eq:BKMS}, it is the unique bidifferential on
$\Gamma_{u, \aleph}$ with vanishing $\tilde A$-periods and double poles
with zero residues at the $240\times 240$ components of
the image of the diagonal in
$\Gamma_{u, \aleph}^{[2]}$ under the correspondence
$\mathscr{P}_{\mathfrak{g}}\times \mathscr{P}_{\mathfrak{g}}$, the
coefficients of the double poles being specified by \eqref{eq:proj} in terms of a
$240 \times 240$ matrix of integers $\mathrm{B}^{\rm Toda}_{ij}$. As
was proved in \cite{Borot:2014kda}, the regularised two-point function,
\beq
\frac{B_{\rm LMO}(p,q)}{\rd p \rd q}  \triangleq  W_{0,2}^{\rm LMO}(p,q)
+\frac{\lambda'(p)
  \lambda'(q)}{(\lambda(p)-\lambda(q))^2},
\eeq
has precisely the same properties: its matrix of singularities in
\cite[Sec.~6.6.3]{Borot:2014kda} can be shown to coincide with
$\mathrm{B}_{ij}^{\rm Toda}$ above, and the vanishing of the $\tilde A$-periods can
be proven exactly as in the case of the ordinary Hermitian 1-matrix model 
to be a consequence of the planar loop equation for the 2-point function \cite{Eynard:2009phf};
we conclude by uniqueness that 
\beq
W_{0,2}^{\rm LMO}=W_{0,2}^{\rm Toda}
\label{eq:weq}
\eeq
under the identification \eqref{eq:muu}. This suffices to reach the conclusion
of the second part of the claim: on the Toda side, the higher generating functions
satisfy the topological recursion relations \eqref{eq:wghrec}-\eqref{eq:fgrec}
by definition\footnote{From a physics point of view, a first-principles heuristic argument to
  prove straight from the Kodaira--Spencer theory of gravity that these are genuine open/closed B-model
  amplitudes may be found in \cite{Dijkgraaf:2007sx}. Also note that, as $Y$ is
  non-toric, it momentarily lies outside the scope of existing proofs of the
  remodelling-the-B-model approach of \cite{Bouchard:2007ys}, which rely
  either on the existence of a topological vertex formalism
  \cite{Eynard:2012nj} or of a torus action on the target with
  zero-dimensional fixed loci \cite{Fang:2016svw}. I nonetheless believe these
  obstructions to be merely of a technical nature.}. On the LMO side, the
higher generating functions \eqref{eq:ZLMO}-\eqref{eq:WLMO} fall within the
class of integrals studied in \cite{Borot:2013lpa}, for which the authors
prove that the Chekhov--Eynard--Orantin recursion determines the
ribbon graph expansion in $1/N$ via \eqref{eq:wghrec}-\eqref{eq:fgrec}. Since
both sides satisfy the recursion, the
recursion kernels coincide from \eqref{eq:weq}, and so do the initial data
$W_{0,1}$ and $W_{0,2}$, the statement of the theorem follows by induction on $(g,h)$.
\end{proof}

\begin{figure}[t]
\centering
\includegraphics{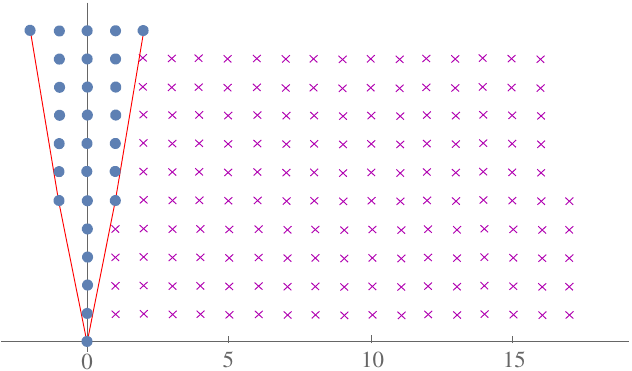}
\label{fig:npolcs}
\caption{Points in $\bbZ^2$ corresponding to the region $\mathfrak{v}$ of non-vanishing coefficients of the
  Taylor expansion of $\cY_{\varpi_\alpha}$ around zero; those indicated with
  a purple cross lie outside the Newton polygon of $\Xi_{\rm LMO}$.}
\end{figure}

\begin{rmk}
The proof of the existence of the embedding $\mathscr{B}_{\rm LMO}
\hookrightarrow \mathscr{B}_{\mathfrak{g}}$ is only constructive up to the point where
the fibres of the LMO family are shown to be determined by the planar moments,
and in turn by the Toda Hamiltonians and Casimirs via
\eqref{eq:muu}. Providing explicit algebraic equations for the restriction $u_i(c)$ 
to the codimension~8 locus $\mathscr{B}_{\rm LMO}$ is however a separate
problem. It is worth pointing out that a direct way of calculating the restriction exists in
perturbation theory around $c=1$ using the Gaussian perturbation theory
methods of \cite{Marino:2002fk}, which allow to determine
$\mathfrak{m}_k^{(n)}$ for arbitrary order in $n$; it would be however
desirable to present a closed-form algebraic
solution by altenative methods, such as the one provided for spherical
3-manifolds of type $\mathrm{D}$ in
\cite{Borot:2014kda} and type $\mathrm{E}_6$ in \cite{Borot:2015fxa}.
\end{rmk}

\subsection{Some degeneration limits}

To conclude this section, I will highlight three degeneration limits of the $\mathrm{E}_8$
relativistic Toda spectral curves which have a neat geometrical interpretation
on each of the other three corners of \cref{fig:dualities}. This is
summarised in the following table, and discussed in detail in the next three
Sections.

\begin{table}[h]
\begin{tabular}{|c|c|c|c|c|c|c|c|c|c|}
\hline
Limit & $\aleph$ & $u_1$ & $u_2$ & $u_3$ & $u_4$ & $u_5$ & $u_6$ & $u_7$ &
$u_8$\\
\hline
I & $1$ & $1$ & $3$ & $0$ & $3$ & $-3$ & $3$ & $-2$ & $-2$\\
\hline
II & 0 & $\gg 1$ & $\gg 1$ & $\gg 1$ & $\gg 1$ & $\gg 1$ & $\gg 1$ & $\gg 1$ & $\gg 1$ \\
\hline
III  & $ \cO(\epsilon) $ & $ \dim \rho_{\omega_1} $ & $ \dim
\rho_{\omega_2} $ & $ \dim \rho_{\omega_3} $ & $ \dim
\rho_{\omega_4} $ & $ \dim \rho_{\omega_5} $ & $ \dim
\rho_{\omega_6} $ & $ \dim \rho_{\omega_7} $ & $ \dim \rho_{\omega_8} $ \\
& & $+\cO(\epsilon^2)$ & $+\cO(\epsilon^2)$ & $+\cO(\epsilon^2)$ &
$+\cO(\epsilon^2)$ & $+\cO(\epsilon^2)$ & $+\cO(\epsilon^2)$ & $+\cO(\epsilon^2)$ &
$+\cO(\epsilon^2)$ \\
\hline
Limit & \multicolumn{3}{|c|}{5d gauge theory} & \multicolumn{3}{|c|}{GW
  target} & \multicolumn{3}{|c|}{CS}  \\ \hline
&   \multicolumn{3}{|c|}{maximal Argyres--} &
 \multicolumn{3}{|c|}{$\tilde \bbI$-orbifold of the}  &
 \multicolumn{3}{|c|}{zero 't Hooft} \\

I & 
\multicolumn{3}{|c|}{--Douglas SCFT} &  \multicolumn{3}{|c|}{singular conifold} & \multicolumn{3}{|c|}{limit}\\
%& \multicolumn{3}{|c|}{SCFT} &  \multicolumn{3}{|c|}{} &
%\multicolumn{3}{|c|}{limit}\\ 
\hline
II & 
 \multicolumn{3}{|c|}{perturbative limit} &
\multicolumn{3}{|c|}{$\widehat{\bbC^2/\tilde{\bbI}}\times \bbC $}  &
\multicolumn{3}{|c|}{?}\\ \hline
III & 
\multicolumn{3}{|c|}{4d limit} &
\multicolumn{3}{|c|}{?}  &
\multicolumn{3}{|c|}{?}\\
\hline
\end{tabular}
\caption{Notable degeneration limits of the Toda spectral curves.}
\label{tab:lim}
\end{table}

\subsubsection{Limits I: the maximal Argyres--Douglas point and geometry}

I am gonna start with the case in which we set
\beq
\aleph=1,~ u_1=1,~ u_2=3=u_4=u_6=-u_5,~ u_7=u_8=-2.
\label{eq:ssu}
\eeq
In terms of the LMO variables \eqref{eq:muu}, this corresponds to $c=1$,
$\mathfrak{m}_i=0$ for all $i$: this is the limit of zero 't~Hooft coupling  of the
measure \eqref{eq:dmLMO}, in which the support of the eigenvalue density
shrinks to a point. In this limit the $\cY$-branches of the LMO curve are
given by the slope asymptotics of \eqref{eq:slope1}, which is in turn entirely encoded by the
orbit data of \cref{tab:minorb}. From \eqref{eq:LMOpolexp} and \cref{tab:minorb}, we get
\bea
\Xi_{\mathfrak{g},\rm red}(\lambda,\mu)\big|_{u(\mathfrak{m}=0),\aleph=1} & = &  (\mu+1)^2 \left(\mu^2+\mu+1\right)^3 \left(\mu^4+\mu^3+\mu^2+\mu+1\right)^5
 \left(\lambda+\mu^5\right) \left(\lambda \mu^5+1\right)
 \nn \\
& \times & \left(\lambda \mu^6-1\right)
   \left(\lambda-\mu^{10}\right)^2   \left(\lambda \mu^{10}-1\right)^2  \left(\lambda^2-\mu^{15}\right)
   \left(\lambda+\mu^{15}\right)^2  \left(\lambda \mu^{15}+1\right)^2 \nn \\ &
   \times & \left(\lambda^2 \mu^{15}-1\right)
   \left(\lambda-\mu^{30}\right) \left(\lambda \mu^{30}-1\right)  \left(\lambda-\mu^6\right).
\label{eq:ss}
\eea
I'll call \eqref{eq:ss} the {\it super-singular limit} of the $\mathrm{E}_8$
Toda curves: in this limit, $\mathrm{Spec}\bbC[\lambda, \mu]/\bra
\Xi_{\mathfrak{g},\rm red}\ket$ is a reducible, non-reduced scheme with the
radicals of its 19 distinct non-reduced components given by lines or plane
cusps. In particular, denoting by$~^h\Xi$ the homogenisation of $\Xi$, the Picard group of the corresponding reduced scheme is trivial,
\beq
\mathrm{Pic}^{(0,\dots,0)} \l(\mathrm{Proj}\frac{\bbC[\lambda, \mu, \nu]}{\sqrt{\bra
^h\Xi_{\mathfrak{g},\rm red} \ket}}\r) \simeq 0,
\eeq
the resolution of singularities $\Gamma_{u(\mu=0), 1}$ is a disjoint
union of 19 $\bbP^1$'s, and the whole Prym--Tyurin
$\mathrm{PT}(\Gamma_{u_{\mu=0},1})$ collapses to a point
in the super-singular limit. This is more tangibly
visualised by what happens to \cref{fig:cuts} when we approach
\eqref{eq:ssu}: since $\aleph=1$, the branch points of the
$\lambda$-projection satisfy $b_i^+b_i^-=1$ from \eqref{eq:bipm}, and from
\cref{prop:isopar} and the discussion that follows it, they correspond to
$\alpha_i(l)=0$ for some simple root $\alpha \in \Pi$. The corresponding
ramification points on the curve are then at $\mu=\exp(\alpha_i(l))=1$, and
substituting into \eqref{eq:ss} we get
\beq
\Xi_{\mathfrak{g},\rm red}(\lambda,1)\big|_{u(\mathfrak{m}=0),\aleph=1} = 337500 (\lambda -1)^8 (\lambda +1)^{10},
\eeq
which means that the branch points collide together in four pairs with
$b_i^+=b_-=1$, and five with $b_i^+=b_-=-1$. It is immediate to see that the
$A/\tilde B$-periods of $\rd \sigma$ vanish in the limit (as the corresponding cycles
shrink), as do the $B/\tilde A$ periods upon performing the elementary cycle
integration explicitly. 

This degeneration limit should have a meaningful physical counterpart in the
dynamics of the corresponding compactified 5d theory at this particular point
on its Coulomb branch, and in particular it should correspond  to the UV fixed
point of \cite[Section~7-8]{Intriligator:1997pq} (see also the recent works
\cite{Xie:2017pfl,Jefferson:2017ahm}). I won't pursue the details here, but I
will give some comments on the resulting A- and B-model geometries, and
on the broad type of physics implications it might lead to. The first
comment is on the geometrical character of \eqref{eq:ss}: it is clearly expected
that singularities in the Wilsonian 4d action should arise from vanishing
cycles in the family of Seiberg--Witten curves \cite{Seiberg:1994rs}, and in turn from the
development of nodes as we approach its discriminant; and furthermore, more
exotic phenomena related typically related to superconformal symmetry arise
whenever these vanishing cycles have non-trivial intersection
\cite{Argyres:1995jj}, leading to the
appearance in the low energy spectrum of mutually non-local BPS
solitons, and cusp-like singularities in the SW geometry (see
\cite{SEO:2013mla} for a review). \eqref{eq:ss} provides a limiting version of
this phenomenon whereby {\it all SW periods} vanish\footnote{There is no room
  for cusp-like singularities like this in the simpler setting of pure ${\rm SU}(2)$ $\cN=2$ pure
  Yang--Mills with SW curve $y^2=(x^2-u)^2 - \Lambda_4^4$, unless
we put ourselves in the physically
  degenerate situation where we sit at the point of classically unbroken gauge
  symmetry $u=0$ and take the classical limit $\Lambda_4\to 0$: the theory
  is then {\it classical} pure $\cN=2$ gluodynamics, where we have essentially imposed by
  {\it fiat} to discard the quantum corrections that give a gapped vacuum and
  the breaking of superconformal symmetry.}. I will refer to \eqref{eq:ss}
as the {\it maximal Argyres--Douglas point} of the $\mathrm{E}_8$ gauge theory, and as
in the more classical cases of Argyres--Douglas theories, it presents several
hallmarks of a theory at a superconformal fixed point. Besides 
the vanishing of the central charges of its BPS saturated states, we see that
the way we reach the super-singular vacuum is akin to the mechanism of
\cite{Seiberg:1996bd,Intriligator:1997pq} to engineer fixed points from
five-dimensional gauge theories: since the engineering dimension of the
five-dimensional gauge coupling $1/g_{\rm YM}^{(5)}$ is that of mass, the
theory is non-renormalisable and quantising it requires a cutoff; in the
$M$-theoretic version \cite{Intriligator:1997pq,Lawrence:1997jr} of the
geometric engineering of \cite{Katz:1996fh}, this is naturally given in terms
of the inverse of the radius of the eleventh-dimensional circle $R$ in
\eqref{eq:geoeng}. Considerations about brane dynamics in \cite{Seiberg:1996bd} allow
to conclude that the limit in which the bare gauge coupling diverges leads to
a sensible quantum field theory at an RG fixed point with enhanced global
symmetry; and notice that, under the identifications
\eqref{eq:geoeng}, setting the Casimir $\aleph=1$ amounts to taking precisely
that limit. Indeed, upon reintroducing the four-dimensional scale $\Lambda_4$
and identifying $\Lambda_{\rm UV}=1/R$ as the cutoff scale,
the second equality in \eqref{eq:geoeng} reads
\beq
\aleph=\frac{\Lambda_4}{\Lambda_{\rm UV}}=\re^{-t_{\rm B}/4}.
\label{eq:altb}
\eeq
Recall that $\Lambda_4=\Lambda_{\rm UV}\re^{-\frac{1}{g_{\rm UV}}}$, the RG
invariant scale in four dimensions; the Seiberg limit $g_{\rm UV}\to \infty$
for the fixed point theory is given then by $\aleph=1$, with the vanishing of the
masses of BPS modes being realised by \eqref{eq:ssu}. 

In light of
\cref{thm:gv}, there is an A-model/Gromov--Witten take on this as well, which
also allows us to reconnect the above to the work of
\cite{Intriligator:1997pq,Xie:2017pfl}. Let us put ourselves in the
appropriate duality frame for \eqref{eq:ssu}-\eqref{eq:ss}, which corresponds
to the choice of $\tilde A_i$ as the cycles whose $\rd\sigma$-periods serve as
flat coordinates around \eqref{eq:ssu}. By \cref{cl:geoeng} 
%{\bf where the
%  part about $\cX$ must be added!!}, 
this corresponds to the maximally singular chamber in
the extended K\"ahler moduli space of $Y$ given by the orbifold GW theory
of $\cX$. Notice first that $\aleph=1$ corresponds to the
shrinking limit of  the K\"ahler volume of the base $\bbP^1$, $t_{\rm B}=0$. Furthermore, as remarked in our earlier paper
\cite{Borot:2015fxa}, the Bryan--Graber Crepant Resolution Conjecture \cite{MR2483931}
for the $\mathrm{E}_8$ singularity prescribes that the orbifold point in its stringy
K\"ahler moduli space should be given by a vector $\mathrm{OP} \in \mathfrak{h}^*\simeq
H_2\l(\widehat{\bbC^2/\tilde\bbI}, \bbZ\r)$ such that
\beq
\tau_i({\rm OP})= \l(\frac{2 \pi \ri \a_0}{|\tilde\bbI|}\r)_i=\frac{\pi \ri \mathfrak{d}_i}{
  15}
\label{eq:tauOP}
\eeq
the second equality being taken w.r.t. the root basis for
$\mathfrak{h}^*$. The values \eqref{eq:ssu} for the Toda Hamiltonians
correspond exactly to the values of the fundamental traces of a Cartan torus
element corresponding to \eqref{eq:tauOP}: \eqref{eq:ss} is then
the spectral curve mirror of the A-model at the $\mathrm{E}_8$-orbifold-of-the-conifold
singularity, that is, the tip of the K\"ahler cone of $Y$. 

The above, together with the constructions in \cref{sec:E8chain,sec:actangl} provides some
preliminary take, in this specific 
$\mathrm{E}_8$ case, to a few  
of the questions raised at the end of \cite{Xie:2017pfl} regarding
the Seiberg--Witten geometry, Coulomb branch and prepotential of 5d SCFT corresponding to Gorenstein singularities.
%provide ely  are determined. 
A detailed study and the determination of some of the relevant quantities for the 5d
SCFT (such as the superconformal index) is left for future study, and will be
pursued elsewhere.

\subsubsection{Limits II: orbifold quantum cohomology of the $\mathrm{E}_8$
  singularity}

\label{sec:limII}
Since the correspondence of the left vertical line of \cref{fig:dualities} was shown to hold in the
context of \cref{thm:gv}, I will offer here some calculations giving
plausibility (other than the expectation from the underlying physics) for the lower horizontal and
the diagonal arrow in the diagram. This will be done in a second interesting
limit, given by taking $\aleph\to 0$ while keeping all
the other parameters finite (but possibly large). By \eqref{eq:geoeng} and \cref{cl:toda0}, this
corresponds to a partial decompactification
limit in which we send the K\"ahler parameter of the base $\bbP^1$ in $Y\to
\bbP^1$ to infinity; the resulting A-model theory has thus the resolution of the threefold
transverse $\mathrm{E}_8$ singularity $\widehat{\bbC^2/\tilde \bbI} \times \bbC$ as its
target, or equivalently, by \cite{MR2483931}, the orbifold
$[\bbC^2/\tilde \bbI \times \bbC]$ upon analytic continuation in the
K\"ahler parameters. Accordingly, on the gauge theory side, this corresponds to sending
$\Lambda_4 \to 0$ while keeping the classical order parameters $u_i$ constant,
and it singles out the perturbative part in the prepotential
\eqref{eq:gprep}. And finally, in the Toda context, this type of limit was considered in
\cite{Braden:2000he,Krichever:2000uw} for the non-relativistic type $A$ chain,
where it was shown to recover, after a suitable change-of-variables, the
non-periodic Toda chain.

To bolster the claim, let me show that special geometry on the space of
$\mathrm{E}_8$ Toda curves does indeed reproduce correctly the degree-zero part of the genus zero GW potential\footnote{Physically, this is $g_s\to 0$, $\a'\to 0$.} of
$\widehat{\bbC^2/\tilde \bbI} \times \bbC$ in the sector where we have at least
one insertion of $\mathbf{1}_Y$: by the string equation, this is the $tt$-metric on
the Frobenius manifold $QH(\widehat{\bbC^2/\tilde \bbI} \times \bbC)\simeq
QH(\widehat{\bbC^2/\tilde \bbI})$ (see \cref{sec:frob} for more details on this). As vector spaces, we have $$QH(\widehat{\bbC^2/\tilde
  I})=H(\widehat{\bbC^2/\tilde \bbI})=H^0(\widehat{\bbC^2/\tilde \bbI})\oplus
H^2(\widehat{\bbC^2/\tilde \bbI})\simeq \bbC \oplus \mathfrak{h}.$$  
Let us use linear coordinates $\{l_i\}_{i=0}^8$ for the decomposition in the last two
equalities, where we write $H(\widehat{\bbC^2/\tilde \bbI}) \ni v = l_0
\mathbf{1}_Y \oplus_i l_i [E_i]$, with $E_i$ the $i^{\rm th}$ exceptional
curve in the canonical resolution of singularities $\pi:\widehat{\bbC^2/\tilde
  \bbI} \to \bbC^2/\tilde \bbI$, and likewise $\{l_i\}_{i=1}^8$ in the second
isomorphism are taken w.r.t. the $\a$-basis of $\mathfrak{h}^*$. On the GW side, the McKay correspondence
implies that 
\beq
\eta_{ij}= (E_i,E_j)_Y = -\mathscr{C}^{\mathfrak{g}}_{ij}
\label{eq:intmat}
\eeq
On the other hand, by \eqref{eq:piAB}-\eqref{eq:preptoda} (see
\cite[Lecture~5]{Dubrovin:1994hc}, and \cref{sec:frob} below),
the $tt$-metric on the Frobenius manifold on the base of the family of Toda
spectral curves is
\beq
\eta_{ij}=-\sum_{{\rd\mu(p)=0}} \mathrm{Res}_{p}\frac{\de_{l_i} \mu \de_{l_j} \mu}{ \mu \de_\lambda \mu}\frac{\rd \lambda}{q_\mathfrak{g}\lambda^2}.
\eeq
where, in the language of \cite[Lecture~5]{Dubrovin:1994hc} and
\cite{Dunin-Barkowski:2015caa} and as will be reviewed more in detail in \cref{sec:frob,sec:hur}, we view the family of Toda spectral curves as
a closed set in a Hurwitz space with $\mu$, $\ln \mu$ and $\rd \ln \lambda$ identified with the covering map, the
superpotential, and the prime form respectively (see \cref{sec:hur}); this identification follows
straight from the special K\"ahler relations \eqref{eq:piAB}. The argument of the residue has poles at $\de_\lambda \mu =0$,
$\lambda,\mu=0,\infty$. Swapping sign and orientation in the
contour integral we pick up the residues at the poles and zeroes of $\lambda$
and $\mu$. Let me start from the zeroes of $\lambda$. Note that
\bea
\Xi_{\mathfrak{g},\rm red}(0,\mu)\Big|_{\aleph=0} &=& \Xi'_{\mathfrak{g},\rm
  red}(0,\mu)=\prod_{\a\in\Delta^*}(\mu-\re^{\a \cdot l}) \nn \\
&=& \prod_{\a\in\Delta^*}(\mu-\re^{\sum_j \a^{[j]} l_j}),
\label{eq:xil0}
\eea
so that $\lambda=0$ amounts to $\mu= \re^{\sum_j \a^{[j]} l_j}$ for some non-zero
root $\a$. Then,
\bea
\mathrm{Res}_{\mu=\re^{\sum_j \a^{[j]} l_j}}\frac{\de_{l_i} \mu ~ \de_{l_j} \mu}{
  \mu \de_\lambda \mu}\frac{\rd \lambda}{q_\mathfrak{g}\lambda} &=& -\mathrm{Res}_{\mu=\re^{\sum_j \a^{[j]} l_j}}\frac{\de_{l_i} \lambda \de_{l_j} \lambda}{
  \lambda \de_{\mu} \lambda}\frac{\rd \mu}{q_\mathfrak{g}\mu^2} \nn \\
&=& 
-\frac{\de_{l_i} \Xi'_{\mathfrak{g},\rm
  red}\de_{l_j} \Xi'_{\mathfrak{g},\rm
  red} }{q_\mathfrak{g}\mu^2 (\de_\mu\Xi'_{\mathfrak{g},\rm
  red})^2}\bigg|_{\mu=\re^{\sum_j \a^{[j]} l_j}} \nn \\
&=& -\frac{\re^{2\sum_j \a^{[j]} l_j} \a^{[i]} \a^{[j]} \prod_{\beta,\gamma\neq \alpha}\l(\re^{\sum_j \a^{[j]} l_j}-\re^{\sum_j \b_j l_j}\r)\l(\re^{\sum_j
    \a^{[j]} l_j}-\re^{\sum_j \gamma_j l_j}\r)}{q_\mathfrak{g}\re^{2\sum_j \a^{[j]} l_j}
  \prod_{\beta,\gamma\neq \alpha}\l(\re^{\sum_j \a^{[j]}
    l_j}-\re^{\sum_j \b_j l_j}\r)\l(\re^{\sum_j \a^{[j]} l_j}-\re^{\sum_j
    \gamma_j l_j}\r)} \nn \\
&=& -\frac{\a^{[i]} \a^{[j]}}{q_\mathfrak{g}}
\label{eq:reseta}
\eea
where we have used the ``thermodynamic identity''\footnote{Namely, the fact that the exchange $\mu
\leftrightarrow \lambda$ is an anti-canonical transformation of the
symplectic algebraic torus $((\bbC^\star)^2, \rd \ln\mu \wedge \rd\ln\lambda)$ the
curve $\bbV(\Xi_{\mathfrak{g},\rm red})$ embeds into, leading to $F \to
-F$ in the expression of the Frobenius prepotential, and thus $\eta\to-\eta$.} of
\cite[Lemma~4.6]{Dubrovin:1994hc} to switch $\mu\leftrightarrow \lambda$ at
the cost of a swap of sign in the first line, the implicit function
theorem for the derivatives $\de_\bullet \lambda$ in the second line, and
finally \eqref{eq:xil0}. It is easy to see that the poles at $\mu=0,\infty$
have vanishing residues; summing over the pre-images of $\lambda=0$ then gives
\beq
\eta_{ij} = \sum_{\lambda(p)=0} \mathrm{Res}_{p}\frac{\de_{l_i} \mu~
  \de_{l_j} \mu}{ \mu \de_\lambda \mu}\frac{\rd \lambda}{q_\mathfrak{g}\lambda^2} =
-\sum_{\a \in \Delta^*} \frac{\a^{[i]}\a^{[j]}}{q_\mathfrak{g}} = -\mathscr{C}^{\mathfrak{g}}_{ij},
\label{eq:etacart}
\eeq
where we used \cite[Appendix~E]{Longhi:2016rjt}
\beq
\sum_{\a \in \Delta^*} \bra \a, \a_i\ket \bra \a, \a_j \ket = q_\mathfrak{g} \mathscr{C}^{\mathfrak{g}}_{ij},
\eeq
and we find precise agreement with \eqref{eq:intmat}. The calculation of the
Frobenius product (namely, the 3-point function $\de^3_{ijk} F\eta^{il}$)
\beq
c_{ijk}=\sum_{\stackrel{p\in \Gamma'_{u(l),0}}{\rd\mu(p)=0}}
\mathrm{Res}_{p}\frac{\de_{l_i} \mu ~ \de_{l_j} \mu  ~\de_{l_k} \mu}{ \mu
  \de_\lambda \mu}\frac{\rd \lambda}{q_\mathfrak{g}\lambda^2},
\label{eq:cijk}
\eeq
is slightly more involved due to the necessity to expand the integrand in
\eqref{eq:reseta} to higher order at $\lambda=0$; in other words, and unsurprisingly, the product does depend on the expression of the higher order terms in
$\lambda$ of $\Xi_{\mathfrak{g},\rm red}$, unlike $\eta_{ij}$ for which all we
needed to know was $\Xi_{\mathfrak{g},\rm red}(\lambda=0,\mu)$ in
\eqref{eq:xil0}. Let us content ourselves with noticing, however, that by the
same token of the preceding calculation for $\eta_{ij}$, the r.h.s. of
\eqref{eq:cijk} is necessarily a rational function in exponentiated flat
variables $t_j$: this is in keeping with the trilogarithmic nature of the
1-loop correction \eqref{eq:Fgauge}, whose triple derivatives have precisely
such functional dependence on the flat variables $a_j$.

\subsubsection{Limits III: the 4d/non-relativistic limit}
\label{sec:limIII}

The last limit we consider involves the fibres of $\pi:\mathscr{S}_{\mathfrak{g}} \to
\mathscr{B}_{\mathfrak{g}}$. We take
\beq
\mu=\re^{\epsilon \chi}, \quad \mathfrak{l}(\lambda) \to \epsilon
\mathfrak{l}(\lambda)
\eeq
and take the $\epsilon\to 0$ limit while holding $\chi$, $\lambda$ and
$\mathfrak{l}$ fixed; note that rescaling the Cartan torus representative
$\mathfrak{l}(\lambda)$ of the conjugacy class of $\widehat{L}_{x, 
  y}$ and taking $\epsilon \to 0$ corresponds to the limits in row~III of
\cref{tab:lim} at the level of $u_i$ and $\aleph$. Then \eqref{eq:xi3} becomes
\bea
\epsilon^{-d_\mathfrak{g}} \Xi_{\mathfrak{g}}(\lambda, \mu)
%=\det_{\mathfrak{g}}(\widehat{L_{x, 
%    y}}-\mu 1)
&=&\epsilon^{-d_\mathfrak{g}}(\mu-1)^8\prod_{\a\in\Delta^*}(\re^{\a \cdot \mathfrak{l}}-\mu)
= \chi^8 \prod_{\a\in\Delta^*}(\a \cdot \mathfrak{l}-\chi)+\cO(\epsilon),\nn \\
&=& \det_{\mathfrak{g}}\l(\log \widehat{L_{x, y}}-\chi 1\r) +
\cO(\epsilon)
\label{eq:nonrel}
\eea
so in this limit the curve $\bbV( \Xi_{\mathfrak{g}}(\lambda, \mu))$
degenerates to the spectral curve of the family of Lie-algebra elements
$\log\widehat{L}_{x, y}$. These coincide with the spectral-parameter
dependent Lax operators of the $\widehat{\mathrm{E}}_8$ {\it non-relativistic} Toda
chain \cite{MR0436212}, to which \eqref{eq:lax} reduce upon taking
$\epsilon\to 0$. As the picture of \eqref{eq:nonrel} as a curve-of-eigenvalues
carries through to this setting\footnote{A more accurate way of putting it
  would be to point out that the original setting of
  \cite{MR1013158,MR1401779,MR1668594,MR1397059} dealt precisely with
  Lie algebra-valued systems of this type; since $\cG$ is simply-laced, the construction of the PT variety
  dates back to \cite{MR1013158}; and since $\log\widehat{L}_{x, y}$
  depends rationally on $\lambda$, Theorem~29 of \cite{MR1397059} applies despite
  $\mathfrak{g}$ not being minuscule.}, so does the construction of the
preferred Prym--Tyurin; on the other hand, the $\epsilon\to 0$ degenerate
limit of \cref{thm:kp}, which amounts in its proof to pick up the
Lie-algebraic Krichever--Poisson Poisson bracket $\omega_{\rm KP}^{(1)}$,
leads to a non-relativistic spectral differential of the form
\beq
\rd\sigma_{\epsilon \to 0} \to \chi \frac{\rd \lambda}{q_\mathfrak{g} \lambda}.
\label{eq:dseps}
\eeq
As the non-relativistic limit is equivalent to the shrinking limit of the
five-dimensional circle in $\bbR^4 \times S^1$, the corresponding limit on the
gauge theory side leads to pure ${\rm E}_8$ $\cN=2$ super Yang--Mills theory in
four dimensions, with \eqref{eq:dseps} being the appropriate Seiberg--Witten
differential in that limit. Then \cref{claim:E8} solves the problem of giving
an explicit Seiberg--Witten curve for this theory; it is instructive to
present what the polynomial \eqref{eq:nonrel} looks like more in detail. We
have
\beq
\lim_{\epsilon \to 0}\epsilon^{-d_\mathfrak{g}} \Xi_{\mathfrak{g}}(\lambda,
\mu) = \chi^8 \sum_{i=0}^{120} \mathfrak{q}_{120-k}(v_1, \dots, v_8) \chi^{2k}
\eeq
where the $\chi \to -\chi$ parity operation reflects the reality of
$\mathfrak{g}$, and $v_1, \dots v_8$ is a set of generators of
$\bbC[\mathfrak{h}]^\cW$. Taking the power sum basis
$v_1=\mathrm{Tr}_\mathfrak{g}(\mathfrak{l}^{2})$, $v_i=\mathrm{Tr}_\mathfrak{g}(\mathfrak{l}^{2i+6})$, we get
\beq
\mathfrak{q}_0=1, \quad \mathfrak{q}_1= -\frac{v_1}{2}, \quad  \mathfrak{q}_2=
-\frac{7}{40} v_1, , \quad  \mathfrak{q}_3=
-\frac{49}{240} v_1, \quad  \mathfrak{q}_4=
-\frac{1697}{9600} v_1  - \frac{v_2}{8}, \dots~.
\eeq

\section{Application II: the $\widehat{\mathrm{E}_8}$ Frobenius manifold}
\label{sec:applII}

\subsection{Dubrovin--Zhang Frobenius manifolds and Hurwitz
  spaces}

\subsubsection{Generalities on Frobenius manifolds}
\label{sec:frob}

I gather here the basic definitions about Frobenius manifolds for the
appropriate degree of generality that is needed here. The
reader is referred to the classical monograph \cite{Dubrovin:1994hc} for more details.

\begin{defn} An $n$-dimensional complex manifold $X$ is a {\rm semi-simple Frobenius manifold}
  if it supports a pair $(\eta, \star)$, with $\eta$ a non-degenerate,
  holomorphic symmetric $(0, 2)$-tensor with flat Levi--Civita connection
  $\nabla$, and a commutative, associative, unital, fibrewise $\cO_X$-algebra
  structure on $TX$ satisfying

\begin{description}
\item[Compatibility]
\beq
\eta(A\star B, C)=\eta(A, B \star C)\qquad \forall A, B, C \in \cX(X);
\eeq
\item[Flatness]
the 1-parameter family of connections
\beq
\label{eq:defEC}
\nabla^{(\hbar)}_A B\triangleq \nabla_A B+\hbar A\star B\qquad \hbar \in \bbC
\eeq
is flat identically in $\hbar \in \bbC$;
\item[String equation]
the unit vector field $e \in \cX(X)$  for the product $\star$ is
$\nabla$-parallel, $$\nabla e=0~;$$
\item[Conformality]
there exists a vector field $E\in\cX(X)$ such that
$\nabla E \in \Gamma(\mathrm{End}(TX))$ is diagonalisable, $\nabla$-parallel,
and the family of connections \cref{eq:defEC} extends
to a holomorphic connection $\nabla^{(\hbar)}$ on $X\times \bbC^\star$ by
\bea
\label{eq:defECint}
\nabla^{(\hbar)}\frac{\partial}{\partial \hbar}&=& 0  \\
\nabla^{(\hbar)}_{\partial/\partial \hbar} A &= & \frac{\partial}{\partial \hbar}A+E\star
A-\frac{1}{\hbar}\widehat{\mu}A
\label{eq:defECz}
\eea
where $\widehat{\mu}$ is the traceless part of $-\nabla E$;
\item[Semi-simplicity]
the product law $\star|_{x}$ on the tangent fibres $T_xX$ has no nilpotent elements for generic $x\in X$.
\end{description}
\label{def:frob1}
\end{defn}

\begin{defn} $X$ is a semi-simple Frobenius manifold iff there exists an open
  set $X_0$, a coordinate chart $t^1, \dots, t^n$ on $X_0$, and a regular
  function $F\in \cO(X_0)$ called the {\rm Frobenius prepotential}
  such that, defining $c_{ijk}\triangleq \de^3_{ijk}F$, we have
\ben
\item $\de^3_{1jk}F= \eta_{jk}=\mathrm{const},~\det\eta\neq 0$
\item Letting $\eta^{ij}=(\eta^{-1})_{ij}$ and summing over repeated indices,
  the Witten--Dijkgraaf--Verlinde--Verlinde equations hold
\beq
c_{ijk}\eta^{kl}c_{lmn} = c_{imk}\eta^{kl}c_{ljn} \quad \forall i,j,m,n
\eeq
\item there exists a linear vector field and numbers $d_i$, $r_i$, $d_F$ 
\beq 
E=\sum_i d_i t^i \de_i + \sum_{i| d_i=0} r^i \de_i \in\cX(X_0)
\eeq
such that
\beq
\LL_E F= d_F F + \text{\rm quadratic in}~t
\eeq
\item there is a positive co-dimension subset $X_{0}^* \subset X_0$ and 
coordinates $u^1,
  \dots, u^n$ on $X_0 \setminus X_0^*$ such that for all $m$
\beq
\de_i u^m \eta^{ij}c_{jkl} = \de_k u^m \de_l u^m.
\eeq
\een
\end{defn}
Upon defining $\de_{t_i} \star \de_{t_j}=\eta^{kl}c_{lij}\de_{t_k}$, $e=\de_{t_1}$
the latter definition is easily seen to be equivalent to the previous
one. Point 1) ensures non-degeneracy of the metric\footnote{As is customary in
  the subject, I use the word ``metric'' without assuming any positivity of
  the symmetric bilinear form $\eta$.} $\eta$, its flatness and the
String Equation; Point 2) and the fact that the structure constants come from
a potential function implies the restricted flatness condition, with the
extension due to conformality coming from Point 3); and Point 4) establishes that
$\de_{u_i}$ are idempotents of the $\star$ product on $X_0\setminus X_{0}^*$; the reverse
implications can be worked out similarly \cite{Dubrovin:1998fe}. 

The Conformality property has an important consequence, related to the
existence of a bihamiltonian structure on the loop space of the $X$. Define a
second metric $g$ by
\beq
g(E \star A,B) = \eta(A,B)
\eeq
which makes sense on all tangent fibres $T_pX$ where $E$ is in the group of
units of $\star|_p$. In flat coordinates $t_i$, this reads
\beq
g_{ij}=E^k c_{kij}.
\eeq
A central result in the theory of Frobenius manifolds is that this second
metric is flat, and that it forms a non-trivial\footnote{I.e., it does not
  share a flat co-ordinate frame with $\eta$.} flat pencil of metrics with
$\eta$, namely $g+\lambda \eta$ is a flat metric $\forall \lambda \in
\bbC$. Knowledge of the second metric in flat coordinates for the first is
sufficient to reconstruct the full prepotential: indeed, the induced metric on
the cotangent bundle (the {\it intersection form}) reads
\beq
g^{\a\b}=\l(2-d_F+d_\a+d_\b\r)\eta^{\a\lambda}\eta^{\b\mu}\de^2_{\lambda,\mu}F
\label{eq:getaF}
\eeq
from which the Hessian of the prepotential can be read off.

\subsubsection{Extended affine Weyl groups and Frobenius manifolds}
\label{sec:dzfrob}

A classical construction of Dubrovin \cite[Lecture~4]{Dubrovin:1994hc}, proved to be complete in \cite{MR1924259}, gives a classification of all Frobenius manifolds
with {\it polynomial} prepotential: these are in bijection with the finite Euclidean
reflection groups (Coxeter groups). I'll recall briefly here their construction in the case
in which the group is a Weyl group $\cW$ of a simple Lie algebra
$\mathfrak{g}$ of dimension $d_{\mathfrak{g}}$. Let
$(\mathfrak{h}, \bra, \ket)$ be the Cartan subalgebra with $\bra,
\ket$ being the $\bbC$-linear extension of the 
Euclidean inner product given by the Cartan--Killing form, and let $\{x_i\}_i$
be orthonormal coordinates on $(\mathfrak{h}^*, \bra, \ket)$. It is well-known
\cite{MR0573068} that the
$\cW$-invariant part $S(\mathfrak{h^*})^\cW$ of the polynomial algebra
$S(\mathfrak{h^*})=H^0(\mathfrak{h}, \cO)$ is a
graded polynomial ring in $r_{\mathfrak{g}}=\mathrm{dim}_\bbC(\mathfrak{h})$ homogeneous variables $y_1,
\dots, y_{r_{\mathfrak{g}}}$; the degrees of the basic invariants $d_i\triangleq \deg_x y_i$,
which are distinct and ordered so that $d_i>d_{i+1}$, are
the {\it Coxeter exponents} of the Weyl group\footnote{A parallel and somewhat more common
  convention is to call $d_i-1$ the exponents of the group (the eigenvalue of
  a Coxeter element), rather than the degrees $d_i$ themselves.}; also 
$d_1=h(\mathfrak{g})=\frac{\mathrm{dim}\mathfrak{g}}{\mathrm{rank}\mathfrak{g}}-1$, the
Coxeter number. Let now
\beq
\mathrm{Discr}_\cW(\mathfrak{h})=\mathrm{Spec}\frac{\bbC[x_1, \dots, x_{r_{\mathfrak{g}}}]}{\bra \{\a_i \cdot x\}_{i=1}^{{r_{\mathfrak{g}}}} \ket }= \bigcup_i H_i
\eeq
where $H_i$ are root hyperplanes in $\mathfrak{h}$: the open set
\beq
\mathfrak{h}^{\rm reg} \triangleq \mathfrak{h}\setminus \mathrm{Discr}_\cW(\mathfrak{h})
\eeq
is the set of regular Cartan algebra elements (i.e. $\mathrm{Stab}_\cW(h)=e$
for $h\in \mathfrak{h}^{\rm reg}$). We will be interested in the unstable and
stable quotients 
\bea
X_{\mathfrak{g}}^{\rm us} & \triangleq & \mathfrak{h}/\cW = \mathrm{Spec}\bbC[x_1, \dots, x_{r_{\mathfrak{g}}}]^\cW = \mathrm{Spec}\bbC[y_1, \dots,
  y_{{r_{\mathfrak{g}}}}],  \nn \\
X_{\mathfrak{g}}^{\rm st} & \triangleq &  \mathfrak{h} /\!\!/ \cW = \mathfrak{h}^{\rm
  reg}/\cW = \mathrm{Spec}\l(\cO_{\mathfrak{h}}(\mathfrak{h}^{\rm reg})^\cW\r) 
\label{eq:coxeter}
\eea
Notice that $\pi:\mathfrak{h}^{\rm reg} \to X_{\mathfrak{g}}^{\rm st}$ is a regular cover (a
principal $\cW$-bundle) of $X_{\mathfrak{g}}^{\rm st}$, and linear co-ordinates on
$\mathfrak{h}^{\rm reg}$ can serve as a set of local co-ordinates on $X^{\rm st}$.

Dubrovin constructs a polynomial Frobenius structure on $X_{\mathfrak{g}}^{\rm st}$ as
follows. First off, the Coxeter exponents are used to define a vector field
\beq
E \triangleq \frac{1}{d_1} \sum_i x^i \de_{x_i} = \de_{y_1}+\sum_{i=1}^{r_{\mathfrak{g}}} \frac{d_i}{d_1} \de_{y_i}.
\eeq
Also, view the Cartan--Killing pairing on $\mathfrak{h}$ as giving a flat
metric $\xi$ on $T\mathfrak{h}$, i.e. $\xi(\de_{x_i}, \de_{x_j})=\delta_{ij}$. If $V=\pi^{-1}(U)= V_1 \sqcup \dots \sqcup V_{r_{\mathfrak{g}}}$ for $U \subset
X_{\mathfrak{g}}^{\rm st}$, and for $i=1,\dots, |\cW|$, let $\sigma_i:U \to \mathfrak{h}^{\rm reg}$ be a
section of $\pi:V\to U$ lifting $U$ isomorphically to the $i^{\rm
  th}$ sheet of the cover, so that $\sigma_i(U) \simeq V_i$, and define
\beq
g \triangleq (\sigma_i)^* \xi.
\eeq
By the Weyl invariance of $\xi$ and $\{y_j\}_j$, it is immediately seen that $g$
defines a well-defined pairing on $T^*X_{\mathfrak{g}}^{\rm st}$ (i.e., the r.h.s. is invariant
under deck transformations of the cover $\mathfrak{h}^{\rm reg}$, see \cite[Lemma~4.1]{Dubrovin:1994hc}). Armed with
this, a Frobenius structure with unit $\de_{y_1}$, Euler vector field $E$,
intersection form $g$ and flat pairing $\eta=\LL_{\de_{y_1}} g$ is defined on
$X_{\mathfrak{g}}^{\rm st}$ upon proving that $g+\lambda \eta$ thus defined give a flat
pencil of metrics on $T^*X_{\mathfrak{g}}^{\rm st}$
\cite[Theorem~4.1]{Dubrovin:1994hc}. In the same paper,  it
is further proved that such Frobenius structure is polynomial in flat
co-ordinates for $\eta$, semi-simple, and unique given $(e, E, g)$.

In a subsequent paper \cite{MR1606165}, Dubrovin--Zhang consider a group
theory version of the above construction, as follows. Fix a node $\bar i \in
\{1,\dots,{r_{\mathfrak{g}}}\}$ in the Dynkin diagram of $\mathfrak{g}$, and let
$\a_{\bar i}$, $\omega_{\bar i}$ be the corresponding simple root and
fundamental weight. The $\cW$-action on
$\mathfrak{h}$ 
%gives a $\cW$ action on the Cartan torus
%$\cT=\exp(\mathfrak{h})$, as in \cref{sec:notation}, which 
can be lifted to
an action of the affine Weyl group $\widehat{\cW}\simeq \cW \rtimes
\Lambda_r(\cG)$ by affine transformations on $\mathfrak{h}$, 
\bea
\widehat w ~ : ~\widehat{\cW} \times \mathfrak{h} & \longrightarrow &  \mathfrak{h} \nn \\
((w,\a),l) & \longrightarrow &  w(l) + \a,
\eea
which is further covered by a $\widetilde{\cW}_{\bar i}\triangleq \widehat{\cW}\rtimes \bbZ$-action on $\mathfrak{h}\times
\bbC$ given by
\bea
\widetilde w ~ : ~\widetilde{\cW}_{\bar i} \times \mathfrak{h} \times \bbC &
\longrightarrow &  \mathfrak{h} \times \bbC \nn \\
((w,\a,{r_{\mathfrak{g}}}),(l, v)) & \longrightarrow &  \l(w(l) + \a+{r_{\mathfrak{g}}} \omega_{\bar i}, x_{r_{\mathfrak{g}}+1}-{r_{\mathfrak{g}}}\r).
\eea
$\widetilde{\cW}_{\bar i}$ is called {\it the extended affine Weyl group} with
marked root $\a_{\bar i}$. In \cite{MR1606165}, the authors give
a characterisation of the ring of invariants of $\widetilde{\cW}_{\bar i}$,
which may be reformulated as follows. Set $\mathsf{g}=\re^{2 \pi \ri l}\in\cG$ and let $u_i=\chi_{\rho_i}(\mathsf{g})$ be as in \eqref{eq:hi}
the regular fundamental characters of $\mathsf{g}$; also define $d_j \triangleq \bra
\omega_j, \omega_{\bar i}\ket$.  Then\footnote{%\begin{rmk}
The reader familiar with \cite{MR1606165} will notice the slight difference
between what we call $u_i$ here and the basic Laurent polynomial
  invariants $\tilde y_i$ in \cite{MR1606165}, the latter being defined as the
  Weyl-orbit sums
\beq
\tilde y_i(t) \triangleq  \re^{2 \pi \ri d_1 t_{{r_{\mathfrak{g}}}+1}} \sum_{w\in \cW} \re^{2 \pi \ri
  \bra w(\omega_i), t\ket}.
\eeq
It is immediate from the definition that there exists a linear, triangular change-of-variables
with rational coefficients
\beq
u_i=\sum_j \frac{\mathrm{Mult}_{\rho_{\omega_i}}(\omega_j)}{|\cW_{\omega_j}|}
\tilde y_j(t) + \mathrm{Mult}_{\rho_{\omega_i}}(0).
\eeq
with $\mathrm{Mult}_{\rho}(\omega)$ being the multiplicity of $\omega\in
\Lambda_w$ in the weight system of $\rho\in R(\cG)$, so that \cite[Theorem~1.1]{MR1606165} holds as in \eqref{eq:invext}.
%\end{rmk}
} \cite[Theorem~1.1]{MR1606165},
\beq
\bbC[t_1, \dots, t_{r_{\mathfrak{g}}}; t_{{r_{\mathfrak{g}}}+1}]^{\widetilde{\cW}_i} \simeq \bbC[\re^{2 \pi \ri d_1
    t_{{r_{\mathfrak{g}}}+1}} u_1, \dots, \re^{2 \pi \ri d_{r_{\mathfrak{g}}} t_{{r_{\mathfrak{g}}}+1}} u_{r_{\mathfrak{g}}};
  u_{{r_{\mathfrak{g}}}+1} \re^{2\pi \ri t_{{r_{\mathfrak{g}}}+1}}]
\label{eq:invext}
\eeq
As before, define
\bea
\widetilde{X}_{\mathfrak{g}, \bar i}^{\rm us} & \triangleq & \mathrm{Spec}\bbC[u_1, \dots, u_{{r_{\mathfrak{g}}}+1}] \simeq
\l(\mathfrak{h}\times \bbC\r)/\widetilde{\cW}_i \simeq \cT/\cW \times
\bbC^\star \\
\widetilde{X}_{\mathfrak{g},\bar i}^{\rm st} & \triangleq & \l(\mathfrak{h}^{\rm reg} \times \bbC\r) /\!\!/
\widetilde{\cW}_i = 
\mathrm{Spec}\l(\cO_{\mathfrak{h}\times \bbC}(\mathfrak{h}^{\rm reg} \times
\bbC)\r)^{\widetilde{\cW_i}} \simeq \cT^{\rm reg}/\cW \times
\bbC^\star
\eea
with $\cT^{\rm reg}=\exp(\mathfrak{h}^{\rm reg})$ and $\cT^{\rm
  reg}/\cW$ being the set of regular elements of $\cT$ and 
regular conjugacy classes of $\cG$ respectively. A Frobenius structure
polynomial in $u_1, \dots, u_{{r_{\mathfrak{g}}}+1}$ can be constructed along the same lines as
for the classical case of finite Coxeter groups: adding a further linear
coordinate $x_{{r_{\mathfrak{g}}}+1}$ for the right summand in $\mathfrak{h} \oplus \bbC$, 
we define a metric $\xi$ with signature $({r_{\mathfrak{g}}},1)$ on
$\mathfrak{h} \times \bbC$ by orthogonal extension of $4\pi^2$ times the Cartan--Killing pairing on
$\mathfrak{h}$, and normalising $\|\de_{x_{{r_{\mathfrak{g}}}+1}}\|^2=-4\pi^2$:
\beq
\xi(\de_{x_i}, \de_{x_j})=\l\{
\bary{cc}
4\pi^2 \delta_{ij}, & i,j<{r_{\mathfrak{g}}}+1,\\
-4\pi^2, & i=j={r_{\mathfrak{g}}}+1,\\
0 & {\rm else}.
\eary
\r.
\eeq
Exactly as in the
previous discussion of the finite Weyl groups, we have a $\cW$-principal bundle 
\beq
\xymatrix{\cT^{\rm reg} \times \bbC^\star  \ar[d]^u   \\
\widetilde{X}_{\mathfrak{g},\bar i}^{\rm st}   \ar@/^1pc/[u]^{\widetilde{\sigma}_i} 
}
\eeq
with sections $\widetilde{\sigma}_i$, $i=1, \dots, |\cW|$ defined as before. Then the following theorem holds \cite[Theorem~2.1]{MR1606165}:
\begin{thm}
\label{thm:uniqdz}
There is a unique semi-simple Frobenius manifold structure
$\l(\widetilde{X}_{\mathfrak{g}, \bar i}^{\rm st}, e,E,\xi,
g, \star\r)$ on
$\widetilde{X}_{\mathfrak{g}, \bar i}^{\rm st}$ such that
\ben
\item in flat co-ordinates $t^1, \dots, t^{r_{\mathfrak{g}}}, t^{{r_{\mathfrak{g}}}+1}$ for $\xi$,
  the prepotential is polynomial in $t^1, \dots, t^{r_{\mathfrak{g}}}$ and $\re^{t^{{r_{\mathfrak{g}}}+1}}$;
\item $e=\de_{u_{\bar i}}=\de_{t^{\bar i}}$;
\item $E=\frac{1}{2\pi\ri d_{\bar i}}\de_{x_{r_{\mathfrak{g}}+1}}=\sum_j \frac{d_j}{d_{\bar
    i}}t^j\de_{t_j}+\frac{1}{d_{\bar i}}\de_{t^{{r_{\mathfrak{g}}}+1}}$;
\item $g=\widetilde{\sigma}_i^*\xi$.
\een
\end{thm}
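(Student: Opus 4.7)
The plan is to follow Dubrovin's flat-pencil-of-metrics strategy, upgraded to the extended setting of \cite{MR1606165}. First I would show that the pull-back $g\triangleq \widetilde{\sigma}_i^\ast \xi$ is well-defined on $\widetilde{X}_{\mathfrak{g}, \bar i}^{\rm st}$, i.e.\ that it does not depend on the branch $\widetilde\sigma_i$: this is just $\widetilde{\cW}_{\bar i}$-invariance of $\xi$ under its action by affine transformations on $\mathfrak{h}\oplus \mathbb{C}$, which follows from Weyl invariance of the Cartan--Killing pairing and the normalisation $\|\partial_{x_{r_\mathfrak{g}+1}}\|^2=-4\pi^2$ chosen to absorb the shift by $\omega_{\bar i}$. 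With $g$ in hand, the candidate metric $\eta\triangleq \LL_{\partial_{t^{\bar i}}} g$ is manifestly symmetric and bilinear; its flatness, and that of the whole pencil $g+\lambda \eta$, should then be checked on the regular locus by working in the logarithmic co-ordinates $t^j=\langle \omega_j, x\rangle$, $t^{r_\mathfrak{g}+1}=x_{r_\mathfrak{g}+1}$ downstairs from $\cT^{\rm reg}\times \mathbb{C}^\star$: $\xi$ is by construction flat and constant in these co-ordinates, and the statement will follow from a direct Christoffel symbol computation together with the observation that the generator $\partial_{t^{\bar i}}$ acts as a translation that preserves all discriminantal hyperplanes.

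Having established the flat pencil, the construction of a Frobenius structure is automatic by the general result of \cite[Lecture 3]{Dubrovin:1994hc}: the intersection form recovers the Hessian of the prepotential via \eqref{eq:getaF}, provided the homogeneity of the Euler vector field is compatible with the quasi-homogeneity of $F$. Point~3 of the theorem gives the Euler field $E$ explicitly, with degrees $d_j/d_{\bar i}$; one checks that $\LL_E \eta = (2-d_F)\eta$ with $d_F = -2/d_{\bar i} + 2$, and semi-simplicity follows from the non-degeneracy of $\mathrm{disc}(\widehat{L})$ on the open locus where the conjugacy class of $\widetilde\sigma_i$ meets $\cG^{\rm red}$.

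The hard part will be polynomiality in $(t^1,\ldots,t^{r_\mathfrak{g}}; \re^{t^{r_\mathfrak{g}+1}})$. The strategy is to use \cite[Theorem~1.1]{MR1606165}, or rather its reformulation \eqref{eq:invext}, to identify generators of the invariant ring and then to verify that both $g$ and $\eta$, when transported to the $u$-coordinates by the Jacobian of the change-of-variables $u_i=u_i(t)$, have entries polynomial in the claimed variables; this is where the specific weighting of $d_j=\langle \omega_j,\omega_{\bar i}\rangle$ in the Dynkin labelling plays the decisive role, and where one must check that no denominators survive despite the apparent division by Vandermonde-like discriminants in inverting the character map. Finally uniqueness: given two structures satisfying (1)--(4), their intersection forms agree by condition (4); the unit and Euler field are prescribed by (2)--(3); hence $\eta=\LL_e g$ coincides, and therefore so does the prepotential up to quadratic terms fixed by the string equation and the quasi-homogeneity \eqref{eq:invext}. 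In practice, the cleanest way to execute the uniqueness step is to invoke the rigidity lemma of Dubrovin \cite[Proposition~3.1]{Dubrovin:1994hc}, which says that a flat pencil with prescribed $g$, $\eta$ and $e$ determines $F$ uniquely up to quadratic polynomials, these being fixed by condition (1).
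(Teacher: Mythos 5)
The paper does not actually prove this statement: it is quoted verbatim, with citation, as \cite[Theorem~2.1]{MR1606165} (Dubrovin--Zhang), and the author's own contribution begins only afterwards with the Landau--Ginzburg description of \cref{thm:frob}. So there is no internal proof to compare against; what you have attempted is a reconstruction of the original Dubrovin--Zhang argument.

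As a reconstruction, the overall architecture is correct and does follow the Dubrovin--Zhang route: well-definedness of $g=\widetilde{\sigma}_i^*\xi$ by $\widetilde{\cW}_{\bar i}$-invariance of $\xi$, definition of $\eta=\LL_{\partial_{t^{\bar i}}}g$, flatness of the pencil $g+\lambda\eta$, and then the Frobenius structure from the pencil plus $(e,E)$. But two things are glossed over in a way that would not survive scrutiny. First, the co-ordinates $t^j=\langle\omega_j,x\rangle$, $t^{r_\mathfrak{g}+1}=x_{r_\mathfrak{g}+1}$ are flat for $g$, not for $\eta$; you cannot check flatness of $\eta$ or the compatibility of the pencil by observing that $\xi$ has constant coefficients in those co-ordinates. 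The actual argument requires the general theory of flat pencils (Dubrovin's Appendix~D of \cite{Dubrovin:1994hc}) applied to the \emph{Gram matrix of $g$ in the Saito-type invariant co-ordinates}, where one must verify by hand that the contravariant Christoffel symbols of $g$ and $\eta$ satisfy the flat-pencil identities. Second, and most importantly, the polynomiality claim in point~(1) is the real content of the theorem, and your proposal only flags it as ``the hard part'' without addressing it. In Dubrovin--Zhang this is proved through a careful analysis of the degree filtration on the invariant ring \eqref{eq:invext} and an induction on the grading that shows the flat co-ordinates $t^j$ and the Hessian of $F$ are polynomial in $\tilde y_1,\dots,\tilde y_{r_\mathfrak{g}},\re^{2\pi\ri t^{r_\mathfrak{g}+1}}$; this cannot be reduced to ``checking no denominators survive,'' since the structure of the Jacobian $\partial u_i/\partial t^j$ is precisely what is at stake.

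A small slip: your conformal charge formula $d_F=-2/d_{\bar i}+2$ is not consistent with the Euler field in point~(3). For the anti-diagonal Gram matrix of $\eta$ (pairing $t^j$ with $t^{r_\mathfrak{g}+1-j}$, say), the condition $\LL_E\eta=(2-d)\eta$ fixes $d$ uniquely and in Dubrovin--Zhang one obtains $d=1$ for all choices of $\mathfrak{g}$ and $\bar i$; the factor of $d_{\bar i}$ cancels against the sum of dual exponents $d_j+d_{r_\mathfrak{g}+1-j}$.

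The uniqueness step is fine: once $(g,\eta,e,E)$ are prescribed, the flat-pencil machinery determines $F$ up to quadratic terms, which are then fixed by the polynomiality constraint.
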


\subsubsection{Hurwitz spaces and Frobenius manifolds}
\label{sec:hur}

As was already hinted at in \cref{sec:limII}, a further source of semi-simple
Frobenius manifolds is given by Hurwitz spaces
\cite[Lecture~5]{Dubrovin:1994hc}. For $r \in \bbN_0$, $\mathtt{m} \in \bbN_0^r$, these are moduli spaces
$\HH_{g,\mathtt{m}}=\cM_{g}(\bbP^1, \mathtt{m})$ of isomorphism classes of
degree $|\mathtt{m}|$ covers $\lambda$ of the complex projective line by a smooth genus-$g$ curve
$C_g$, with marked ramification profile over $\infty$ specified by
$\mathtt{m}$; in other words, $\lambda$ is a meromorphic function on $C_g$
with pole divisor
$(\lambda)_-= -\sum_i \mathtt{m}_i P_i$ for points $P_i \in C_g$, $i=1, \dots,
r$. Denoting as in \cref{def:sc} by $\pi$, $\lambda$ and $\Sigma_i$ respectively the universal
family, the universal map, and the sections marking the $i$-th point in
$(\lambda)_-$, this is
\beq
\xymatrix{ C_g \ar[d]  \ar@{^{(}->}[r]& \mathcal{\cC}\ar[d]^\pi  \ar[r]^{\lambda}  &  \bbP^1 \\
                     [\lambda]  \ar@{^{(}->}[r]^{\rm pt}   \ar@/^1pc/[u]^{P_i}&                                             \HH  \ar@/^1pc/[u]^{\Sigma_i}& 
}
\label{eq:hurdiag}
\eeq
As a result, $\HH_{g,\mathtt{m}}$ is a reduced, irreducible complex variety with $\dim_\bbC
\HH_{g,\mathtt{m}}=2g+ \sum_i \mathtt{m}_i+r-1$, which is typically smooth
(i.e. so long as the ramification profile is incompatible with automorphisms of the
cover). 

Dubrovin provides in \cite{Dubrovin:1994hc} a systematic way of
constructing a semi-simple Frobenius manifold structure on
$\HH_{g,\mathtt{m}}$, for which I here provide a simplified account. As in
\cref{sec:speccurve}, let $\rd=\rd_\pi$ denote the relative differential with respect to the
universal family (namely, the differential in the fibre direction), and let
$p_i^{\rm cr}\in C_g\simeq \pi^{-1}([\lambda])$ be the critical points $\rd
\lambda =0$ of the universal map (i.e., the ramification points of the
cover). By the Riemann existence theorem, the critical values
\beq
\mathfrak{u}^i=\lambda\l(p_i^{\rm cr}\r)
\eeq
are local co-ordinates on $\HH_{g,\mathtt{m}}$ away from the discriminant
$\mathfrak{u}^i=\mathfrak{u}^j$. We then locally define an $\cO_{\HH_{g,\mathtt{m}}}-$al\-gebra structure
on the space of vector fields $\cX(\HH_{g,\mathtt{m}})$ by imposing that the
co-ordinate vector fields $\de_{\mathfrak{u}^i}$ are idempotents for it:
\beq
\de_{\mathfrak{u}^i}\star \de_{\mathfrak{u}^j}=\delta_{ij}\de_{\mathfrak{u}^i}.
\eeq
The algebra is obviously unital with unit $e=\sum_i \de_{\mathfrak{u}^i}$; a linear (in
these co-ordinates) vector field $E$ is further defined as $\sum_i
\mathfrak{u}^i \de_{\mathfrak{u}^i}$. The one missing ingredient in the
definition of a Frobenius manifold is a flat pairing of the vector fields,
which is provided by specifying some auxiliary data. Let then
$\phi\in \Omega^1_{C}(\log (\lambda))$ be an {\it exact}
 meromorphic one form having {\it simple} poles\footnote{Exactness
     and simplicity of the poles  can be disposed of by looking instead at suitably normalised
  Abelian differentials w.r.t. a chosen symplectic basis of 1-homology cicles
  on $C_g$; a fuller discussion, with a classification of the five types of differentials
  that are compatible with the existence of flat structures on the resulting
  Frobenius manifold, is given in the discussion preceding
  \cite[Theorem~5.1]{Dubrovin:1994hc}. The generality considered here however
  suits our purposes in the next section.}
 at the support of $(\lambda)_-$ with
constant residues; the pair $(\lambda, \phi)$ are called respectively {\it
  superpotential} and the {\it primitive differential} of $\HH_{g,\mathtt{m}}$. A
non-degenerate symmetric pairing $\eta(X,Y)$ for vector fields $X,Y \in
\cX(\HH_{g,\mathtt{m}})$ is defined by
\beq
\label{eq:hurxi}
\eta(X,Y) \triangleq  \sum_{i}\Res_{p_i^{\rm cr}}\frac{X(\lambda)
  Y(\lambda)}{\rd \lambda}\phi^2,
\eeq
where, for $p$ locally around $p_i^{\rm cr}$, the Lie derivatives
$X(\lambda)$, $Y(\lambda)$ are taken at constant $\mu(p)=\int^p\phi$. It turns
out that $\eta$ thus defined is flat, compatible with $\star$, with $E$ being
linear in flat co-ordinates, and it further satisfies
\bea
\label{eq:hurprod}
\eta(X,Y \star Z) &=&   \sum_{i}\Res_{p_i^{\rm cr}}\frac{X(\lambda)
  Y(\lambda) Z(\lambda)}{\rd \mu \rd \lambda}\phi^2,  \\
g(X,Y) &=&  \sum_{i}\Res_{p_i^{\rm cr}}\frac{X(\log \lambda)
  Y(\log \lambda) }{\rd \log \lambda}\phi^2.
\label{eq:hurintf}
\eea
\begin{rmk}
There is a direct link between the prepotential of the Frobenius manifold
structure above on $\HH_{g,\mathtt{m}}$ and the special K\"ahler prepotential
of familes of spectral curves (see \eqref{eq:preptoda} in the Toda case),
whenever the latter is given by moduli of a generic cover of the line with
ramification profile $\mathtt{m}$: 
the two things coincide upon identifying the superpotential and
primitive Abelian integral $(\lambda, \mu)$ on the Hurwitz space side with the marked meromorphic
functions $(\lambda, \mu)$ on the spectral curve end \cite{Dubrovin:1994hc,Krichever:1992qe}.
It is a common situation, however, that the $\lambda$-projection is highly
non-generic: the Toda spectral curves of \cref{sec:curvdet} are an obvious
example in this sense. One might still ask, however, what type of geometric
conditions ensure that a semi-simple, conformal Frobenius manifold structure
exists on the base of the family $\cB \stackrel{\iota}{\hookrightarrow}
\HH_{g,\mathtt{m}}$: an obvious sufficient condition is that, away 
from the discriminant and locally on an open set $\Omega \subset
\HH_{g,\mathtt{m}}$ with a chart $\mathsf{t}:\Omega\to \bbC^{\dim \HH_{g,\mathtt{m}}}$
given by flat co-ordinates for $\eta$, 
\ben
\item $\cB$ embeds
as a linear subspace of $H\subset \bbC^{\dim \HH_{g,\mathtt{m}}}$;
\item $H\simeq T_0H \simeq \bbC\bra e \ket \oplus H'$ contains the line
  through $e$;
% (i.e. $H \ni
%  \mathsf{t} \Leftrightarrow \mathsf{t}+\{\mathtt{s},0,\dots, 0\} \in H
%  \forall s \in \bbC$);
\item the minor correponding to the restriction to $H$ of the Gram matrix of $\eta$ is non-vanishing.
\een
In this case, \eqref{eq:hurxi}-\eqref{eq:hurprod} define a semi-simple,
conformal Frobenius manifold structure with flat identity on the base $\cB$ of
the family of spectral curves, with all ingredients obtained being projected
down from the parent Frobenius manifold. We will see in the next section that
the family of $\widehat{\mathrm{E}}_8$ Toda spectral curves falls precisely within this class.
\label{rmk:frobsub}
\end{rmk}

\subsection{A 1-dimensional LG mirror theorem}

\subsubsection{Saito co-ordinates}
\label{sec:saito}

I will now elaborate on the previous \cref{rmk:frobsub} in the case of the
degenerate limit $\aleph \to 0$ of the family of Toda curves over $\cU\times
\bbC$. Recall from \cref{sec:curvdet} that there is an intermediate branched
double cover $\Gamma'_{u}\simeq \Gamma_{u,0}$  of the base curve
$\Gamma''_{u}$, defined as
\beq
\Gamma'_{u}=\overline{\bbV\l[\Xi''_{\mathfrak{g}, \rm red}\l(\mu+\frac{1}{\mu}, \lambda\r)\r]}
\eeq
For future convenience, rescale $\lambda \to \frac{\lambda}{u_0}$ in the
following. Looking at $\lambda$ as our marked covering map gives, by
\eqref{eq:genred} and \cref{tab:ptsinf}, we have an embedding of 
\beq
\iota : X_{\mathfrak{g}}^{\rm Toda} \hookrightarrow \HH_{g,\mathtt{m}}
\eeq
of $X_{\mathfrak{g}}^{\rm Toda} \simeq \bbC_{u_0} \times \cU$ into the Hurwitz
space $\HH_{g,\mathtt{m}}$ with $g=128$ and, letting $\varepsilon_k=\re^{2\pi \ri/k}$,
\beq
\mathtt{m}=\l(\stackrel{\mu=-1}{\overbrace{2}},\stackrel{\mu=\varepsilon_3^j\neq
1}{\overbrace{3,3}},\stackrel{\mu=\varepsilon_5^j\neq
1}{\overbrace{5,5,5,5}},~\stackrel{\mu=0}{\overbrace{5,6,10,10,15,15,15,30}},~\stackrel{\mu=\infty}{\overbrace{5,6,10,10,15,15,15,30}}\r).
\label{eq:ramprof}
\eeq
Mindful of \cref{rmk:frobsub} I am going to declare $(\lambda, \phi)$ with $\phi\triangleq\rd\ln \mu$
to be the superpotential and primary differential and proceed to examine the
pull-back of $\eta$ to $\bbC_{u_0}\times \cU$. An important point to stress
here is that {\it this will not be a repetition of what was done in \cref{sec:limII}}: in
that case, we were looking at \eqref{eq:hurxi} with $\log\lambda$ as the
superpotential (up to $\mu\leftrightarrow \lambda$, $F\leftrightarrow -F$);
this means that the computation leading up to the flat metric \eqref{eq:etacart} was rather
computing {\it the intersection form} $g$ of $X^{\rm Toda}_{\mathfrak{g}}$, by
\eqref{eq:hurintf}. The relation between the Frobenius manifold structure on
$X^{\rm Toda}$ defined by \eqref{eq:hurxi}-\eqref{eq:hurintf} and $QH(\widehat{\bbC/\tilde{I}})$
is indeed a non-trivial instance of Dubrovin's notion of almost-duality of Frobenius
manifolds \cite{MR2070050}, with the almost-dual product being given by \eqref{eq:cijk}.

\begin{lem}
Let $X,Y \in \cX(X_{\mathfrak{g}}^{\rm Toda})$ %, TX_{\mathfrak{g}}^{\rm
                                %Toda})$.
be holomorphic vector fields on $X_{\mathfrak{g}}^{\rm Toda}$.
 Then \eqref{eq:hurxi} defines a flat non-degenerate pairing on $TX_{\mathfrak{g}}^{\rm
  Toda}$, with flat co-ordinates given by 
\bea
\mathsf{t}_0(u) &=& \frac{\ln u_0}{30}, \nn \\
\mathsf{t}_i(u) &=& M_i \mathfrak{m}_{\mathfrak{l}_i}(c=u_0^{1/30},u_1,\dots,
u_8), \quad i>0.
\label{eq:tflat}
\eea
where $\{\mathfrak{m}_{\mathfrak{l}_i}\}_i$ are the planar moments
\eqref{eq:muu} and $M_i\in \bbC$. Furthermore, the metric has constant
anti-diagonal form in these coordinates.
\label{lem:flatc}
\end{lem}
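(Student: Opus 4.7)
The plan is to exhibit the restricted Frobenius structure on $X_{\mathfrak{g}}^{\rm Toda}$ as a subspace of the Hurwitz space Frobenius structure on $\HH_{g,\mathtt{m}}$ and then identify the flat frame explicitly using Dubrovin's residue formulas. First I would verify the sufficient conditions of \cref{rmk:frobsub}: the embedding $\iota$ is visibly linear in the natural moduli $(u_0,u_1,\dots,u_8)$; the unit vector field $e=\de_{u_{\bar i}}$ of the prospective structure lies in the image; and non-degeneracy of the restricted Gram matrix of $\eta$ will follow a posteriori from the explicit diagonal form computed below.

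The hands-on step is the identification of the flat coordinates. Near each pole of $\lambda$ the Puiseux data of \cref{tab:ptsinf} allow one to pick a local coordinate $z_i$ with $\lambda=z_i^{-m_i}$; since the primitive differential $\phi=\rd\ln\mu$ has simple poles exactly at the preimages of $\mu=0,\infty$ (which coincide with the poles of $\lambda$), Dubrovin's general recipe \cite[Lecture~5]{Dubrovin:1994hc} instructs us to take, in each local chart, residues of the form $\Res_{P_i}\lambda^{-k/m_i}\phi$ for $1\leq k< m_i$, together with a single ``third-kind'' coordinate $\Res_{P_i}\ln\lambda\,\phi$. The latter is what produces the $\tfrac{1}{30}\ln u_0$ coordinate: after rescaling $\lambda\to \lambda/u_0$ all $30$-fold pole contributions collapse to a single logarithmic mode due to the $\bbZ_{30}$ cyclic symmetry visible in \eqref{eq:Yexp}. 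The remaining residue coordinates, computed branch-by-branch using the expansion \eqref{eq:Yexp}, pick out precisely the discrete Fourier modes \eqref{eq:vpnz}, hence exactly the eight moments $\mathfrak{m}_l$ with $l+1\in\mathfrak{k}$; the normalising constants $M_i$ absorb the combinatorial prefactors coming from $(\widehat{\varpi_\alpha})_k$ and the orders $m_i$.

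Invertibility of the change of variables $(u_0,\dots,u_8)\leftrightarrow(\mathsf{t}_0,\dots,\mathsf{t}_8)$ is then immediate from \eqref{eq:muu}, whose Jacobian is triangular in a suitable ordering with non-vanishing diagonal entries. The constant anti-diagonal form of the metric follows from the pairing structure dictated by \eqref{eq:hurxi}: under the $\bbZ_2$ involution $\mu\to 1/\mu$ coming from \eqref{eq:xi2}, the residues at $\mu=0$ and $\mu=\infty$ interchange, pairing $\mathfrak{m}_l$ against the analogous moment at the opposite pole; matching the degrees $l$ with $30-l$ on the two sides of \cref{fig:npol} produces the nine-by-nine anti-diagonal Gram matrix, while constancy follows because $\eta(\mathsf{t}_i,\mathsf{t}_j)$ localises to boundary residues that only see the leading Puiseux slopes, independent of the moduli $(u,u_0)$.

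The main obstacle I anticipate is bookkeeping: the ramification profile \eqref{eq:ramprof} is highly non-generic, so the naive count of Hurwitz flat coordinates vastly exceeds $\dim X_{\mathfrak{g}}^{\rm Toda}=9$, and one must check that all the ``extra'' Dubrovin residues either vanish identically on the image of $\iota$ or are algebraically determined by the nine distinguished ones via the character relations of \cref{claim:E8}. The cleanest way to handle this is to argue that the $\cW$-equivariance of $\mathcal{Y}_{\varpi_\alpha}$ across branches forces all non-$\mathfrak{k}$-indexed Fourier modes $(\widehat{\varpi_\alpha})_k$ to vanish, which immediately suppresses the unwanted flat coordinates and explains why only the eight moments in \eqref{eq:muu} survive.
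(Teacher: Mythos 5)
The conceptual frame of your sketch — reversing the orientation of the residue contour in \eqref{eq:hurxi} to pick up poles of $\lambda$, invoking Dubrovin's recipe for flat coordinates from Puiseux data near $(\lambda)_-$, and reading off the anti-diagonal Gram matrix from the involution — is the same as the paper's. But two of your replacements for the paper's key computational step do not go through.

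First, the $\cW$-equivariance/Fourier-mode argument you propose as the ``cleanest way'' to handle the excess of Hurwitz flat coordinates is borrowed from the LMO setting of \cref{sec:LMOcurve}, where the curve is $\Gamma_{u(c),\aleph(c)}$ with $\aleph=-c^{-q_\mathfrak{g}}\neq 0$ and the expansion \eqref{eq:Yexp} is a Taylor expansion of the resolvent $\cY$ in the eigenvalue variable $x$ around $x=0$. The lemma you are proving, by contrast, lives on the degenerate $\aleph=0$ curve $\Gamma'_u$ of \cref{sec:saito}, the residues \eqref{eq:resmom} are computed at the pole divisor $(\lambda)_-$ from the Puiseux expansion of $\lambda$ in $\mu$ near $\mu=0,\infty,-1,\varepsilon_3^{\pm1},\varepsilon_5^{\pm j}$, and what enters is the characteristic polynomial $\Xi''_{\mathfrak{g},\rm red}$ via \cref{claim:E8}, not the LMO resolvent. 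The fact that the same polynomials \eqref{eq:muu} appear on both sides is the content of \cref{thm:gv}, but you cannot transfer the vanishing of the Fourier modes $(\widehat{\varpi_\alpha})_k$ for $k\notin\mathfrak{k}$ directly to the $\aleph=0$ curve without re-deriving the expansion there; as written, your argument circularly presupposes the LMO/Toda identification to prove a statement about the Toda side only.

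Second, you claim that the extraneous Hurwitz residues ``vanish identically on the image of $\iota$'' and that constancy of $\eta$ ``follows because \dots residues \dots only see the leading Puiseux slopes, independent of the moduli.'' Both are too strong and, in fact, false: the residues $\mathtt{r}_i^j$ of \eqref{eq:resmom} do not vanish, they are \emph{proportional} with constant coefficients $\cN_i^j$ to one of the nine moments $\mathsf{t}_{\ell(i,j)}$, and it is precisely the subleading Puiseux coefficients of $\lambda$ near each $P_i$ (controlled by the full character data of \cref{claim:E8}, not just the slopes) that enter these residues and make this true. Once one has the linear relation $\mathtt{r}_i^j=\cN_i^j\,\mathsf{t}_{\ell(i,j)}$, constancy of the Gram matrix follows by the chain of equalities $\eta(\de_{\mathsf{t}_i},\de_{\mathsf{t}_j})=-\sum_{l,k}\de_{\mathsf{t}_i}\mathtt{r}_l^k\,\de_{\mathsf{t}_j}\mathtt{r}_l^{\mathtt{m}_l-k}$; but this requires verifying \eqref{eq:rij}, which is a genuine finite computation (\cref{tab:resmom}) rather than a consequence of symmetry. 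Your proposal supplies the scaffolding around this step but leaves the step itself unproved.
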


\begin{proof}
As in \cref{sec:limII}, let's reverse orientation in the residue formula
\eqref{eq:hurxi} and pick up residues on $\Gamma'_{u}\setminus \{b_i^{\rm
  cr}\}_i$; these are all located at the poles $P_i$ of $\lambda$.
As in \cite{Dubrovin:1994hc}, I define local
coordinates $\nu_i$ centred around $P_i$ such that
$\lambda=\nu_i^{-\mathtt{m}_i}+\cO(1)$, as well as functions $r_i^j$ with 
$(i,j) \in \mathtt{R}=\{(k,l) | 1\leq k \leq 23=l(\mathtt{m}), 1\leq j\leq \mathtt{m}_i\}$ by
\beq
\mathtt{r}_i^j \triangleq \l\{ 
\bary{ccc} 
\mathrm{pv} \int_{P_8}^{P_i} \frac{\rd \mu}{\mu} & \mathrm{for} & j=0\\
\Res_{P_i} \nu_i^{j} \mathrm{pv}\ln\mu \rd\lambda & \mathrm{for} & j=1, \dots,
\mathtt{m}_i-1,\\
\Res_{P_i} \lambda \frac{\rd\mu}{\mu} & \mathrm{for} & j=\mathtt{m}_i,
\eary\r.
\label{eq:resmom}
\eeq
and where 
$$\mathrm{pv}\ln\mu(P_i)=\mathrm{pv}\int_{P_8}^{P_i} \frac{\rd \mu}{\mu}.$$
Here $P_8$, in the numbering of marked points of \eqref{eq:ramprof}, is the
lowest (fifth) order pole of $\lambda$ at $\mu=0$. A remarkable fact, that can
be proven straightforwardly from the Puiseux expansion of $\lambda$ near $P_i$
using \eqref{eq:xi2}-\eqref{eq:pkgen} and \cref{claim:E8}, is that
$\mathtt{r}_i^j$ is in all cases a multiple of one of the planar moments \eqref{eq:muu}:
\beq
\mathtt{r}_i^j(u) = \cN_i^j \mathsf{t}_{\ell(i,j)}(u)
\label{eq:rij}
\eeq
for some map of finite sets $\ell: \mathtt{R} \to [[0,8]]$ and complex constants $\cN_i^j \in \bbC$. For
$0<j<\mathtt{m}_i$ the result is
collected in \cref{tab:resmom}, where we denoted  
$$\upsilon_1=5-\sqrt{5}-\ri \sqrt{10 +2\sqrt{5}}, \quad
\upsilon_2=-5-\sqrt{5}+i \sqrt{10-2 \sqrt{5}},$$ $$\upsilon_3=\l(1+i
\sqrt{5+2 \sqrt{5}}\r) \upsilon_1/2, \quad\upsilon_4=-(-1)^{3/5}
\upsilon_1~.$$ 
We furtermore have $r_i^0=0$ and
$r_i^{\mathtt{m}_i}=0$ for $\mu(P_i)\neq 0,\infty$, and
$r_i^0=1=-r_i^{\mathtt{m_i}}=-r_{i+8}^0=r_{i+8}^{\mathtt{m_i}}$ for
$\mu(P_i)=0$. It turns out that \eqref{eq:rij} suffices to prove constancy of
$\eta$ in the coordinate chart given by $\mathsf{t}_{i}$ above; indeed, from
\eqref{eq:resmom} and \cref{tab:resmom} we obtain
\bea
\eta(\de_{\mathsf{t}_i},\de_{\mathsf{t}_j}) &=&  
-\sum_{l}\Res_{P_l}\frac{\de_{\mathsf{t}_i}\lambda 
\de_{\mathsf{t}_j}\lambda}{\mu \de_\mu \lambda} \rd\mu,
=-\sum_{l}\Res_{P_l}\frac{\de_{\mathsf{t}_i}\ln \mu \rd \lambda
\de_{\mathsf{t}_j}\ln\mu \rd\lambda
}{\rd \lambda} 
%\l(\frac{\rd\mu}{\mu}\r)^2 
\nn \\
&=& -\sum_{l}\sum_{k=0}^{\mathtt{m}_l} \de_{\mathsf{t}_i} \mathtt{r}_l^k
\de_{\mathsf{t}_j} \mathtt{r}_l^{\mathtt{m}_l-k}=-\sum_i
\sum_{k=0}^{\mathtt{m}_i} \delta_{i,\ell(l,k)}
\delta_{j,\ell(l,\mathtt{m}_l-k)} \cN_l^k \cN_l^{\mathtt{m}_l-k} \nn \\
&=&
\delta_{i,8-j} \mathfrak{y}_{i},
\eea
for numbers $\mathfrak{y}_{i}$, so that the Gram matrix of $\eta$ is constant
and anti-diagonal in these coordinates. These can be scaled away by an
appropriate rescaling of $M_i$ in the definition of $\mathsf{t}_i$ in \eqref{eq:tflat}.

\end{proof}

\begin{table}[h]
\begin{tabular}{|c|c|c|c||c|c|c|c||c|c|c|c||c|c|c|c|}
\hline
$i$ & $j$ & $\ell(i,j)$ & $\cN_i^j$ & $i$ & $j$ & $\ell(i,j)$ & $\cN_i^j$ & $i$ & $j$ & $\ell(i,j)$ & $\cN_i^j$ \\
\hline
 $1$ & $1$ & $5$ & $\ri$ & $10$ & $5$ & $3$ & $-\frac{5 \sqrt[3]{-1}}{2}$ & $13$ & $6$ & $4$ & $\frac{3 \phi }{10}$ \\ \hline
 $2$ & $1$ & $3$ & $-\frac{\sqrt[6]{-1}}{\sqrt{3}}$ & $10$ & $6$ & $4$ & $\frac{3}{20} (-1)^{2/5}$ & $13$ & $9$ & $6$ & $-\frac{\phi}{25}$ \\ \hline
 $2$ & $2$ & $7$ & $\frac{(-1)^{5/6}}{6 \sqrt{3}}$ & $10$ & $9$ & $6$ & $\frac{1}{50} (-1)^{3/5}$ & $13$ & $10$ & $7$ & $-\frac{5}{6}$ \\ \hline
 $4$ & $1$ & $2$ & $\frac{\upsilon _1}{20}$ & $10$ & $10$ & $7$ & $\frac{5}{12} (-1)^{2/3}$ & $13$ & $12$ & $8$ & $-\frac{3}{500 \phi }$ \\ \hline
 $4$ & $2$ & $4$ & $\frac{\upsilon _2}{200}$ & $10$ & $12$ & $8$ & $\frac{3 (-1)^{4/5}}{1000}$ & $13$ & else & $-$ & $0$ \\ \hline
 $4$ & $3$ & $6$ & $\frac{\upsilon _3}{1500}$ & $10$ & else & $-$ & $0$ & $14$ & $3$ & $2$ & $-3 \phi $ \\ \hline
 $4$ & $4$ & $8$ & $\frac{u_4}{10000}$ & $11$ & $2$ & $2$ & $-2 \sqrt[5]{-1} \phi $ & $14$ & $5$ & $3$ & $-5$ \\ \hline
 $5$ & $1$ & $2$ & $-\frac{\upsilon _3}{20}$ & $11$ & $4$ & $4$ & $\frac{(-1)^{2/5}}{5 \phi }$ & $14$ & $6$ & $4$ & $-\frac{3}{10 \phi }$ \\ \hline
 $5$ & $2$ & $4$ & $\frac{\upsilon _1}{200}$ & $11$ & $5$ & $5$ & $5 \ri$ & $14$ & $9$ & $6$ & $\frac{1}{25 \phi }$ \\ \hline
 $5$ & $3$ & $6$ & $\frac{u_4}{1500}$ & $11$ & $6$ & $6$ & $\frac{2 (-1)^{3/5}}{75 \phi }$ & $14$ & $10$ & $7$ & $-\frac{5}{6}$ \\ \hline
 $5$ & $4$ & $8$ & $-\frac{\upsilon _2}{10000}$ & $11$ & $8$ & $8$ & $-\frac{(-1)^{4/5} \phi}{250}  $ & $14$ & $12$ & $8$ & $\frac{3 \phi }{500}$ \\ \hline
 $8$ & $1$ & $2$ & $1$ & $11$ & else & $-$ & $0$ & $14$ & else & $-$ & $0$ \\ \hline
 $8$ & $2$ & $4$ & $-\frac{1}{10}$ & $12$ & $2$ & $2$ & $\frac{2 \sqrt[5]{-1}}{\phi }$ & $15$ & $6$ & $2$ & $-6 \sqrt[5]{-1}$ \\ \hline
 $8$ & $3$ & $6$ & $\frac{1}{75}$ & $12$ & $4$ & $4$ & $-\frac{(-1)^{2/5} \phi}{5}  $ & $15$ & $10$ & $3$ & $-10 \sqrt[3]{-1}$ \\ \hline
 $8$ & $4$ & $8$ & $-\frac{1}{500}$ & $12$ & $5$ & $5$ & $5 \ri$ & $15$ & $12$ & $4$ & $-\frac{3(-1)^{2/5}}{5} $ \\ \hline
 $9$ & $2$ & $3$ & $2 \sqrt[3]{-1}$ & $12$ & $6$ & $6$ & $-\frac{2(-1)^{3/5}}{75}  \phi $ & $15$ & $15$ & $5$ & $-15 \ri$ \\ \hline
 $9$ & $3$ & $5$ & $-3 \ri$ & $12$ & $8$ & $8$ & $\frac{(-1)^{4/5}}{250 \phi }$ & $15$ & $18$ & $6$ & $-\frac{2(-1)^{3/5}}{25} $ \\ \hline
 $9$ & $4$ & $7$ & $-\frac{(-1)^{2/3}}{3} $ & $12$ & else & $-$ & $0$ & $15$ &
$20$ & $7$ & $\frac{5 (-1)^{2/3}}{3} $ \\ \hline
 $9$ & else & $-$ & $0$ & $13$ & $3$ & $2$ & $\frac{3}{\phi }$ & $15$ & $24$ &
$8$ & $-\frac{3  (-1)^{4/5}}{250}$ \\ \hline
 $10$ & $3$ & $2$ & $\frac{3 \sqrt[5]{-1}}{2}$ & $13$ & $5$ & $3$ & $-5$ &
 $15$ & else & $-$ & $0$ \\
\hline
\end{tabular}
\caption{Residues \eqref{eq:resmom} in terms of the planar moments
  \eqref{eq:muu}; poles of $\lambda$ not appearing in the table are related to
those listed above by $\mu\to 1/\mu$ and a sign-flip in the residue.}
\label{tab:resmom} 
\end{table}

Before we carry on to examine the product structure on $X_{\mathfrak{g}}^{\rm
  Toda}$ let us first check that the unit $e$ and Euler vector field $E$
satisfy indeed the String Equation and (part of the) Conformality properties of
\cref{def:frob1} by verifing that $\nabla e=0$, $\nabla \nabla E=0$ with
$\nabla=\nabla^{(\eta)}$. It is easy to verify the following 
\begin{prop}
\beq
e=\frac{\de}{\de\mathsf{t}_8}, \quad E=\sum_j \frac{d_j}{d_8}t^j \de_{\mathsf{t}_j}+\frac{1}{d_8}\frac{\de}{\de\mathsf{t}_0},
\eeq
with 
\beq
d_1=6,~ d_2= 10,~ d_3 = 12,~ d_4 =15,~ d_5 = 18,~ d_6=20,~ d_7=24,~ d_8=30.
\label{eq:di}
\eeq
\label{prop:eE}
\end{prop}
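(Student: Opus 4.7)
The plan is to read off both assertions directly from the Hurwitz-space definition of the Frobenius structure on $X_\mathfrak{g}^{\rm Toda}$ encoded by \eqref{eq:hurxi}--\eqref{eq:hurprod}, using the explicit flat coordinate frame established in \cref{lem:flatc}.

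First I would identify the unit. In the Hurwitz setup the unit vector field for $\star$ is $\sum_i \de_{\mathfrak{u}^i}$, shifting all critical values of the superpotential simultaneously, or equivalently the unique $e \in \cX(X_\mathfrak{g}^{\rm Toda})$ with $e(\lambda)|_{\mu} \equiv 1$. The task is to single out the flat-coordinate direction that generates this shift. By \cref{lem:uispec}, the only fundamental Hamiltonian with nontrivial spectral dependence is $H_3$, entering through the additive combination $u_3 + (\lambda' + \aleph^2/\lambda')$; after the $\aleph\to 0$ degeneration and the auxiliary rescaling $\lambda \to \lambda/u_0$ of \cref{sec:saito}, this collapses to a clean $u_3 + \lambda$ shift in the reduced characteristic polynomial $\Xi_{\mathfrak{g},\rm red}$. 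Inspecting the triangular moment-to-character map \eqref{eq:muu}, $u_3$ appears linearly in $\mathfrak{m}_{29}$ alone, with coefficient $c^{30}/30 = u_0/30$, and in no other moment. Hence $\de/\de\mathsf{t}_8 \propto \de/\de\mathfrak{m}_{29}$ is, up to an overall scalar, the vector field translating $\lambda$ along the fibres of $\pi: \mathscr{S}_\mathfrak{g} \to \mathscr{B}_\mathfrak{g}$; choosing the normalisation constant $M_8$ in \cref{lem:flatc} so that this scalar equals $1$ yields $e = \de/\de\mathsf{t}_8$.

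For the Euler field I would exploit the scaling symmetry of the Hurwitz data. The ambient Hurwitz space $\HH_{g,\mathtt{m}}$ admits a natural conformal $\bbC^\star$-action sending $\lambda \to \kappa \lambda$ while leaving $\phi = \rd\log\mu$ invariant. Tracked back to the Toda parameters through $\lambda \to \lambda/u_0$, this is implemented by $u_0 \to \kappa^{-d_8} u_0$ with $u_1,\dots,u_8$ unaffected, i.e.\ $c = u_0^{1/30} \to \kappa^{-1} c$. From \eqref{eq:muu}, each planar moment $\mathfrak{m}_{\mathfrak{l}_i}$ is manifestly homogeneous in $c$ of definite positive degree; reading these off in turn gives $d_i = 6, 10, 12, 15, 18, 20, 24, 30$ for $i = 1,\dots,8$, which is precisely the set $\mathfrak{k}$ of surviving Fourier modes from \eqref{eq:vpnz}, while $\mathsf{t}_0 = (\log u_0)/d_8$ transforms by an additive constant. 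Differentiating the scaling action at $\kappa = 1$ and rescaling so that $E(\mathsf{t}_8) = \mathsf{t}_8$ reproduces the announced expression for $E$; linearity of $\nabla E$ in the flat chart, and hence the remaining conformality conditions of \cref{def:frob1}, are then immediate from the displayed formula.

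The main obstacle is the unit calculation: one must verify that, in the $\aleph\to 0$ degenerate family and after the auxiliary rescaling of $\lambda$, no fundamental character other than $\chi_{\omega_3}$ gives rise to infinitesimal translations of the superpotential on $\Gamma'_u$. This relies crucially on \cref{lem:uispec} and on the strict triangularity of the polynomial change of variables \eqref{eq:muu}; once it is in place, both the Euler field formula and the list of weights $d_i$ reduce to a transparent homogeneity count on the explicit planar moments.
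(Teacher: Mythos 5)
Your proof is correct and follows essentially the same route as the paper's. The paper too identifies $e$ through the property $\LL_e\lambda=\mathrm{const}$, tracing the constant shift in $\lambda$ to the shift $u_3\to u_3+\epsilon$ via \cref{lem:uispec}, and matching this with $\de_{\mathsf{t}_8}$ using \eqref{eq:muu}; and it identifies $E$ by writing the scaling vector field $u_0\de_{u_0}$ (from $\LL_E\lambda=\lambda$) in the flat chart, reading off the degrees $d_j$ from the $c$-homogeneity of the moments in \eqref{eq:muu}. You are a little more explicit than the paper on two points it states without comment — that $u_3$ enters \eqref{eq:muu} only in $\mathfrak{m}_{29}$ and linearly (so the triangular inversion forces $\de_{\mathsf{t}_8}\propto\de_{u_3}$ at fixed $u_0$), and that the $c$-exponents of the moments are precisely the elements of the Fourier set $\mathfrak{k}$ of \eqref{eq:vpnz} — but these are exactly the observations the paper's terse "from \eqref{eq:muu} we see" and "writing down $u_0\de_{u_0}$ in flat co-ordinates" are invoking, so the argument is the same.
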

\begin{proof}
The easiest way to see this is to realise that, by their definition in the
Hurwitz space setting, $e$ and $E$
generate an affine subgroup of $\mathrm{PSL}(2,\bbZ)$ on the target $\bbP^1$
in \eqref{eq:hurdiag} by 
\beq
\LL_e \lambda= \mathrm{const}, \quad \LL_E \lambda= \lambda, 
\eeq
Now, recall first of all that in the natural Hurwitz (B-model) co-ordinates $u_i$, a
shift in $u_3$ keeping all other variables fixed gives a constant shift in $\lambda/u_0$ by
\eqref{lem:uispec} with $\lambda \to \lambda/u_0$ and $\aleph \to 0$, as we are considering here. Moreover,
from \eqref{eq:muu} we see that
in flat co-ordinates for $\eta$, and since $\de_{\mathsf{t}_i} u_j=
\delta_{j3} \re^{-\mathsf{t}_0} \mathsf{t}_8$, a constant shift in $\mathsf{t}_8$
leaving all other flat co-ordinates constant gives a constant shift in the
rescaled $\lambda \to \lambda/u_0$:
\beq
\frac{\de \lambda}{\de \mathsf{t}_8} = \mathrm{const},
\eeq
Then $L_e\lambda \propto L_{\de_{\mathsf{t}_8}} \lambda$, from which we deduce
$e\propto \de_{\mathsf{t}_8}$ as there is no continous symmetry on $\lambda$
that holds up identically in $\mu$; the proportionality can be turned into an
equality upon appropriate choice of $M_8$, which gives in any case an
isomorphism of Frobenius manifolds. As far as the Euler vector field is
concerned, recall similarly that a rescaling in $u_0$ at constant $u_i$ gives a
rescaling of $\lambda$ with the same scaling factor, so that $E= u_0 \de_{u_0}$. Writing down $u_0
\de_{u_0}$ in flat co-ordinates using \eqref{eq:resmom} and \eqref{eq:muu} concludes the proof.
\end{proof}

\subsubsection{The mirror theorem}

Let us now dig deeper into the Frobenius manifold structure of
$X_{\mathfrak{g}}^{\rm Toda}$. By \eqref{eq:getaF}, the
$\star$-structure on $TX_{\mathfrak{g}}^{\rm Toda}$ can be retrieved from
knowledge of the intersection form $g$ in flat co-ordinates for $\eta$; whilst
theoretically this would only require the calculation of a Jacobian from the flat
co-ordinates for $g$ in \cref{sec:limII} to \eqref{eq:tflat}, such calculation
is however unviable due to the difficulty in inverting the Laurent polynomials
$u_i(t)$. We proceed instead from an analysis of \eqref{eq:hurprod}, and prove
the following

\begin{thm}
There is an isomorphism of Frobenius manifolds
\beq
\widetilde{X}_{\mathfrak{g}, 3}^{\rm us} \simeq X_{\mathfrak{g}}^{\rm Toda}
\label{eq:isofrob}
\eeq
\label{thm:frob}
\end{thm}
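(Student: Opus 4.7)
\medskip

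\noindent\textbf{Proof plan.} My strategy is to invoke the uniqueness statement of Theorem~\ref{thm:uniqdz} directly: the extended affine Weyl Frobenius structure on $\widetilde{X}_{\mathfrak{g}, \bar i}^{\rm st}$ is uniquely characterised by conditions (1)--(4) therein, so it suffices to verify that the Hurwitz-type Frobenius structure on $X_{\mathfrak{g}}^{\rm Toda}$ constructed in Section~\ref{sec:saito} from the superpotential/primitive-form pair $(\lambda, \rd\ln\mu)$ satisfies these conditions for the choice of marked root $\bar i = 3$. By \cref{rmk:frobsub}, the pullback of $(\eta, \star, e, E)$ from $\HH_{g,\mathtt{m}}$ to the linear subspace $\iota(X_{\mathfrak{g}}^{\rm Toda})$ already produces a conformal, semi-simple Frobenius manifold structure (on the complement of the discriminant), so the task reduces to a pointwise comparison of its defining data with those of $\widetilde{X}_{\mathfrak{g}, 3}^{\rm us}$.

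\medskip

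Conditions (2)--(3) are essentially already in place. Proposition~\ref{prop:eE} gives $e = \partial_{\mathsf{t}_8}$ and $E = \sum_j (d_j/d_8)\,\mathsf{t}_j\,\partial_{\mathsf{t}_j} + d_8^{-1}\,\partial_{\mathsf{t}_0}$ with the weights of \eqref{eq:di}; the same proposition shows that in Hamiltonian coordinates $e \propto \partial_{u_3}$, matching (2) for the choice $\bar i = 3$. A direct computation with the $\mathrm{E}_8$ Cartan matrix then verifies $d_j = \langle \omega_j, \omega_3\rangle$, which is condition (3). For condition (4), I would reprise the residue calculation of \eqref{eq:etacart}, which was performed in \cref{sec:limII} to compute the intersection form under the alternative superpotential $\ln\lambda$: reading it in the character coordinates $(l_0, l_1, \ldots, l_8) \in \mathbb{C} \oplus \mathfrak{h}$ pulled back from $\mathcal{T} \times \mathbb{C}^\star$ exhibits $g$ as the orthogonal sum of the Cartan--Killing form on $\mathfrak{h}$ and a trivial line, which up to the overall normalisation pinned down by (3) coincides with the pullback metric $\tilde\sigma_i^* \xi$ on the Dubrovin--Zhang side.

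\medskip

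The main task, and the expected obstacle, is condition (1): polynomiality of the prepotential $F$ in $(\mathsf{t}_1,\ldots,\mathsf{t}_8, e^{\mathsf{t}_0})$. By Lemma~\ref{lem:flatc} and the explicit formulas \eqref{eq:muu}, the change of variables from the Toda Hamiltonians $(u_1,\ldots,u_8)$ to the flat coordinates $(\mathsf{t}_1,\ldots,\mathsf{t}_8)$ is a triangular polynomial map after setting $c = u_0^{1/30} = e^{\mathsf{t}_0}$, and its inverse is polynomial by direct recursion; it therefore suffices to prove that $\partial_i \partial_j \partial_k F \in \mathbb{C}[u_0^{\pm 1/30}, u_1,\ldots,u_8]$ for each triple $(i,j,k)$. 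Each such third derivative is, via \eqref{eq:hurprod}, a residue of a rational $1$-form on $\Gamma'_{u}$, and the plan is to control these residues by a careful bookkeeping of Puiseux expansions at the points of $\lambda^{-1}(\{0,\infty\})$ listed in Table~\ref{tab:ptsinf}, using the character relations of Claim~\ref{claim:E8} to cancel the naively non-polynomial contributions coming from poles of different multiplicities.

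\medskip

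The hard part, then, is this polynomiality verification: it hinges on the intricate pole structure uncovered in \cref{tab:ptsinf} and on the exact form of the character polynomials $\mathfrak{p}_k$ of \cref{sec:charE8}. Once polynomiality is established, semi-simplicity of $X_{\mathfrak{g}}^{\rm Toda}$ on the locus where the Prym--Tyurin does not degenerate matches the semi-simple locus of $\widetilde{X}_{\mathfrak{g}, 3}^{\rm st}$, so that the uniqueness clause of Theorem~\ref{thm:uniqdz} promotes the identification of $(\eta, e, E, g)$ on the common open stratum to the desired isomorphism of Frobenius manifolds on the whole of $\widetilde{X}_{\mathfrak{g}, 3}^{\rm us}$ by analytic continuation.
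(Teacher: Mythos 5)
Your high-level strategy — invoke the uniqueness clause of Theorem~\ref{thm:uniqdz}, dispose of conditions (2)--(4) quickly via Proposition~\ref{prop:eE} and the intersection-form computation, and concentrate on the polynomiality of the prepotential — is indeed the architecture of the paper's proof. Your treatment of the easy conditions is fine (the paper carries out the check of (4) by computing the entries $g_{k0}$ and $g_{00}$ of the intersection form directly, rather than by reusing the calculation of \cref{sec:limII}, but this is a cosmetic difference).

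There is, however, a real gap in your attack on the polynomiality condition (1). You propose to prove $\partial_i\partial_j\partial_k F\in\bbC[u_0^{\pm 1/30},u_1,\dots,u_8]$ by ``careful bookkeeping of Puiseux expansions at the points of $\lambda^{-1}(\{0,\infty\})$'' listed in Table~\ref{tab:ptsinf}. This omits a crucial class of contributions to the Landau--Ginzburg residue formula for $c_{ijk}$ that is not present in the computation of $\eta_{ij}$: when you reverse the contour picking up residues at the critical points $b_i^\pm$ of $\lambda$, the complementary residues are located not only at the pole divisor of $\lambda$, but also at the $240$ ramification points of the $\mu$-projection (the points $\rd\mu=0$ with $\mu(q)=-1$ or $\mu+\mu^{-1}$ a root of $\Delta_1$ in \eqref{eq:discry}). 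This happens because the moduli derivatives $\partial_{\mathsf{t}_i}\lambda$ entering the three-point residue formula are taken at fixed $\mu$, and near such a point $\lambda\sim \lambda_0+\lambda_1\sqrt{\mu-q_i}+\dots$ develops a singularity in $\partial_{\mathsf{t}_i}\lambda|_{\mu}$ that is only partially cancelled by the vanishing of $\rd\mu$ and $1/\partial_\mu\lambda$. These residues \emph{do} contribute, and the moduli dependence of the $q_i$ is essentially intractable in closed form, so your bookkeeping plan stalls before it starts.

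The paper's proof sidesteps this with a trick you do not mention. Instead of computing $c_{ijk}$ in the flat frame, one first computes $\tilde c_{0,j,k}$ in B-model coordinates $\tilde u_0=\ln u_0,\tilde u_i=u_i$: there $\partial_{\tilde u_0}$ is the Euler vector field, $\partial_{\tilde u_0}\lambda=\lambda$ is a global meromorphic function, and the problematic $\mu$-ramification residues vanish identically. Those correlators receive contributions only from the pole divisor of $\lambda$, where the Puiseux coefficients are under control via Claim~\ref{claim:E8}, and one checks by hand that the would-be denominators (powers of $\mathsf{t}_1,\mathsf{t}_2,\mathsf{t}_4$ from the poles at $\mu^5=1$, $\mu^3=1$, $\mu=-1$) cancel. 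One then recovers $\tilde c_{ijk}$ for $i>0$ from WDVV specialised to $n=0$: this gives a linear inhomogeneous system whose maximal rank is verified at a generic moduli point, so the solution is uniquely determined, and polynomiality is confirmed by substituting a graded polynomial ansatz and checking that it solves the system. Without the Euler/WDVV bootstrap, the polynomiality claim does not reduce to an analysis at $\lambda^{-1}(\{0,\infty\})$ alone, and your argument does not go through.
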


\begin{proof}
We have already come a long way proving \eqref{eq:isofrob}: by \cref{prop:eE}
the pairs $(e,E)$ match on the nose already, since $d_i = \bra \omega_j,
\omega_{3}\ket$ by \eqref{eq:di}. Also, by \eqref{eq:intmat} and since
\bea
g_{k0} &=& -\sum_{\lambda(p)=\infty}\Res_{p}\frac{\de_{x_k}\lambda}{q_\mathfrak{g}\mu^2 \de_\mu
  \lambda} \rd\mu=0, \nn \\
g_{00} &=& -\sum_{\lambda(p)=\infty}\Res_{p}\frac{\lambda}{q_\mathfrak{g}\mu^2 \de_\mu
  \lambda}\rd\mu= \frac{1}{q_\mathfrak{g}}\sum_{\lambda(p)=\infty}\mathrm{ord}_p \lambda = \frac{107}{900},
\eea
so do the two intersection forms up to a linear change-of-variables. By \cref{thm:uniqdz}, the remaining and largest bit in the
proof resides then in the proof of the polynomiality of the $\star$-product in
flat co-ordinates \eqref{eq:tflat}, which I am now going to show. This would be achieved
once we show that
\beq
c_{i,j,k}=  \sum_{i}\Res_{b_i^{\pm}}\frac{\de_{\mathsf{t}_i} \lambda \rd
  \mu \de_{\mathsf{t}_j} \lambda \rd \mu \de_{\mathsf{t}_k} \lambda \rd
  \mu}{q_\mathfrak{g} \mu^2 \rd \mu \rd \lambda} \in \bbC[\mathsf{t}_1, \dots, \mathsf{t}_8,
  \re^{\mathsf{t}_0}]
\label{eq:cijkf}
\eeq
for all $i,j,k$. As in the proof of \cref{lem:flatc}, because of the difficulty in
controlling the moduli dependence of $b_i^\pm$ in either $u_i$ or
$\mathsf{t}_i$, we turn the contour around and pick up residues in the
complement of $\{b_i^\pm\}$. However, one major difference here with the case of the
calculation of $\eta$ is that not only do poles of $\lambda$ contribute, but
also the 240 ramification points of the
$\mu$-projection $q_i$ (counted with multiplicity) satisfying
either $\mu(q_i)=-1$ or $\{\mu(q_i)+\mu^{-1}(q_i)=r_1^k\}_{k=1}^{133}$ (see
\eqref{eq:discry}), whose dependence on $u_i$ is even more involved. Indeed, near one of
those points, the superpotential behaves like
\beq
\lambda(p) = \lambda_0(\mathsf{t}) + \lambda_1(\mathsf{t}))\sqrt{\mu-q_i}+
\cO\l(\mu-q_i\r), 
\eeq
and therefore the moduli derivatives of $\lambda$ at $\mu=\mathrm{const}$ which appear in
\eqref{eq:cijkf} develop a simple pole as soon as $\de_{\mathsf{t}_i}\lambda_1
\neq 0$. This leads to a non-vanishing contribution to the residue as the triple
pole resulting from them in \eqref{eq:cijkf} is now, unlike for $\eta$, only partially offset by the
vanishing of $\rd \mu$ and $1/\de_\mu \lambda$ at the branch points. It is a
straightforward calculation to check that these contributions do
  contribute, but are best avoided calculating directly as their moduli dependence is intractable. Luckily, there's a workaround to do precisely so, as
follows. Instead of \eqref{eq:cijkf}, consider the
3-point function in {\it B-model} co-ordinates $\tilde u_0=\ln u_0, \tilde
u_1=u_1,..., \tilde u_8=u_8$,
\beq
\tilde c_{i,j,k}=  -\sum_{\rd \mu(p)=0 \mathrm{~or~} \lambda(p)=\infty}\Res_{p}\frac{\de_{u_i} \lambda \rd
  \mu \de_{u_j} \lambda \rd \mu \de_{u_k} \lambda \rd
  \mu}{q_{\mathfrak{g}} \mu^2 \rd \mu \rd \lambda} 
\label{eq:cijk2}
\eeq
sticking to the case $i=0$ to begin with. Now, $\de_{\tilde u_0}$ is the
Euler vector field, $\de_{\tilde u_0} \lambda=\lambda$, and we have $\de_{\tilde u_0}\lambda_1
= 0$ at all ramification points of $\mu$: this means that the problematic residues at
$\rd\mu=0$ give individually vanishing contributions
to \eqref{eq:cijk2}, unlike for the flat 3-point functions $c_{0,i,j}$. For
this restricted set of correlators and in this particular set of co-ordinates,
the only contribution to the LG formula \eqref{eq:cijk2} may come from the
poles $P_i$: here, a direct calculation
from the Puiseux expansion of $\lambda$ at its pole divisor immediately shows that the
Puiseux coefficients of $\lambda$ are polynomial in $u_0, u_1, \dots, u_8$ at
$\mu=0,\infty$. Furthermore, while the Puiseux coefficients at $\mu=-1$,
$\mu^3=1$ and $\mu^5=1$ are only Laurent polynomials in $\mathsf{t}_i$ with denominators given
by powers of $\mathsf{t}_4$, $\mathsf{t}_2$ and $\mathsf{t}_1$ respectively,
these powers turn out to delicately cancel from the final answer in
\eqref{eq:cijk2}. All in all, we find
\beq
\tilde c_{0,j,k}= \in \bbQ[\re^{u_0/30}, u_1, \dots, u_8].
\eeq
To consider the case $i>0$ in \eqref{eq:cijk2}, we use the WDVV equation in
these co-ordinates:
\beq
\tilde c_{ijk} \tilde \eta^{kl} \tilde c_{lmn} = \tilde c_{imk} \tilde
\eta^{kl} \tilde c_{ljn}.
\label{eq:cwdvv}
\eeq
Setting $n=0$, and letting $i,j,m$ go for the ride, \eqref{eq:cwdvv} gives a linear
inhomogeneous system with unknowns $\tilde c_{ijk}$, $i>0$  with coefficients 
being given by (complicated) polynomials in $\re^{u_0/30}, u_1, \dots, u_8$
with rational coefficients. One way to circumvent the complexity of solving it
explicitly is as follows: firstly, it is immediate to prove that the system
has maximal rank, which is an open condition, by evaluating the
coefficients at a generic moduli point, so that
$\tilde c_{ijk}$ are uniquely determined rational functions in $\re^{u_0/30}, u_1, \dots, u_8$. To check
that the solution is indeed polynomial, we just plug a general polynomial ansatz into
\eqref{eq:cwdvv} satisfying the degree conditions of \cref{prop:eE} and solve
for its coefficients, and find that such an ansatz does indeed solve \eqref{eq:cwdvv}. The claim
follows by uniqueness, the polynomiality of the inverse of \eqref{eq:muu}, and \cref{thm:uniqdz}.

\end{proof}

%\begin{rmk}
One immediate bonus of \cref{thm:frob}, and a further vindication of taking
great pains to give a closed-form calculation of the mirror in
\cref{claim:E8}, is that both the Saito--Sekiguchi--Yano coordinates
\eqref{eq:muu} and the
prepotential of $\widetilde{X}_{\mathfrak{g},3}$, for which an
explicit form was unavailable to date\footnote{This is a private communication from
  Boris Dubrovin and Youjin Zhang.}, can now be computed
straightforwardly: the reader may find an expression for the latter in \cref{sec:E8prep}. %\end{rmk} 
 A further bonus is a mirror theorem for the Gromov--Witten theory of the
polynomial $\bbP^1$-orbifold of type $\mathrm{E}_8$ \cite{MR2672302,Zaslow:1993fa}:
\begin{cor}
Let $C_{\mathfrak{g}}\simeq \bbP_{2,3,5}$ denote the orbifold base of the
Seifert fibration of the Poincar\'e sphere $\Sigma$ (see \cref{sec:cs}). Then, 
\beq
QH_{\rm orb}\l(C_{\mathfrak{g}}\r) \simeq X_{\mathfrak{g}}^{\rm Toda}.
\eeq
as Frobenius manifolds.
\label{cor:p1orb}
\end{cor}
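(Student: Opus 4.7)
The plan is to chain together Theorem~\ref{thm:frob}, which identifies $X_{\mathfrak{g}}^{\rm Toda}$ with the Dubrovin--Zhang Frobenius manifold $\widetilde{X}_{\mathfrak{g},3}^{\rm us}$, with a separate identification $QH_{\rm orb}(C_{\mathfrak{g}})\simeq \widetilde{X}_{\mathfrak{g},3}^{\rm us}$. The natural device for the second step is the uniqueness statement of Theorem~\ref{thm:uniqdz}: it suffices to check that the orbifold quantum cohomology of $C_{\mathfrak{g}}=\bbP_{2,3,5}$ satisfies the four characterising conditions of that theorem, with $\bar i=3$ as the marked node.

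First I would match the linear data. The Chen--Ruan cohomology of $\bbP_{2,3,5}$ has a basis consisting of $\mathbf{1}$, the point class, and the twisted sector generators $\mathbf{1}_{k/n_\ell}$ for $0<k<n_\ell$, $n_\ell \in\{2,3,5\}$; this gives $1+(2-1)+(3-1)+(5-1)+1=9=r_{\mathfrak{g}}+1$ generators, matching $\dim_\bbC \widetilde{X}_{\mathfrak{g},3}^{\rm us}$. The unit is $\mathbf{1}$ and, under the orbifold Poincar\'e pairing, its $\eta$-dual is the point class, which plays the role of the exponentiated variable $e^{t^{r_{\mathfrak{g}}+1}}$. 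The Chen--Ruan grading assigns to the twisted generator $\mathbf{1}_{k/n_\ell}$ the degree $k/n_\ell$; scaling by the Coxeter number $h(\mathfrak{e}_8)=30=\mathrm{lcm}(2,3,5)$ converts these into the integers $\{1,7,11,13,17,19,23,29\}$, which are exactly the Coxeter exponents of $\mathrm{E}_8$ and thus the scaling weights $d_j$ appearing in item~3 of Theorem~\ref{thm:uniqdz} with $\bar i=3$. Together with the standard fact that the small quantum parameter multiplying the point class carries the correct additive shift under the Euler field, this fixes $(e,E,\eta)$ on the GW side to match those of $\widetilde{X}_{\mathfrak{g},3}^{\rm us}$.

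The remaining requirement is the polynomiality of the orbifold GW prepotential of $\bbP_{2,3,5}$ in the flat coordinates together with one exponential variable. The cleanest route is to invoke reconstruction: the WDVV equations together with the string and divisor axioms express every genus-zero orbifold GW invariant in terms of the three-point invariants with at most one insertion of the point class, and those are polynomial in the degree variable, by the finiteness of Chen--Ruan's triple product and the selection rules forced by the age grading; the degrees prescribed by the Euler field then bound each correlator to a polynomial of fixed total weight, ruling out an honest power series in $e^{t^{r_{\mathfrak{g}}+1}}$.

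The main obstacle I expect is precisely this last polynomiality input: while it is part of the expected folklore for polynomial $\bbP^1$-orbifolds of ADE type, writing it out for $\bbP_{2,3,5}$ requires either an explicit enumeration of the Chen--Ruan triple intersections and a degree-bound argument via the string/divisor axioms, or an appeal to a pre-existing mirror theorem for orbifold $\bbP^1$'s (e.g.\ the LG mirror construction of Milanov--Tseng and Satake--Takahashi in the simply-laced case) to transfer polynomiality from a Landau--Ginzburg model on the Hurwitz space of genus-zero covers with ramification profile $(2,3,5)$. Once polynomiality is secured, uniqueness in Theorem~\ref{thm:uniqdz} concludes, and combining with Theorem~\ref{thm:frob} yields the claimed isomorphism.
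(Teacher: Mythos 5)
Your route diverges from the paper's on the key intermediate step. The paper proves this corollary in a single line, by citing the mirror theorem of \cite{MR2672302} for $QH_{\rm orb}(C_{\mathfrak{g}})\simeq \widetilde{X}_{\mathfrak{g},3}$ and then composing with \cref{thm:frob}. You instead propose to re-derive $QH_{\rm orb}(\bbP_{2,3,5})\simeq\widetilde{X}_{\mathfrak{g},3}^{\rm us}$ from scratch by verifying the hypotheses of the Dubrovin--Zhang uniqueness theorem \cref{thm:uniqdz}. That is a legitimate strategy — indeed it is the strategy of the cited references — but as written it has three concrete problems.

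First, a factual slip in the degree count. The Chen--Ruan ages of $\bbP_{2,3,5}$ are $\{1/2;\,1/3, 2/3;\,1/5, 2/5, 3/5, 4/5\}$, and multiplying by $h(\mathfrak{e}_8)=30$ gives $\{6, 10, 12, 15, 18, 20, 24\}$, \emph{not} the Coxeter exponents $\{1,7,11,13,17,19,23,29\}$. The former set is precisely $\{d_j=\langle\omega_j,\omega_3\rangle\}_{j=1}^{7}$ from \eqref{eq:di}, which is what item 3 of \cref{thm:uniqdz} actually requires; the Coxeter exponents are a different set (the integers in $[1,29]$ coprime to $30$) and play no role in this verification. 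Your conclusion that the Euler weights match is correct, but the set you name and the identification you offer are wrong.

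Second, you never check the intersection form. Item 4 of \cref{thm:uniqdz} fixes $g = \widetilde{\sigma}_i^*\xi$, and the uniqueness statement does not survive if you only match $(e,E,\eta)$: the intersection form is independent data. On the GW side this amounts to verifying that the degree-zero contribution to the three-point function at the orbifold points reproduces the pushforward of the Cartan--Killing pairing — essentially the McKay--Cartan identification of \eqref{eq:intmat}--\eqref{eq:etacart}, but carried out directly on $QH_{\rm orb}(\bbP_{2,3,5})$ rather than on the Toda side. This is not automatic and needs to be supplied.

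Third, and as you yourself flag, polynomiality of the prepotential is the substance of the claim and is left as a gap. Of your two proposed fallbacks, the second (appeal to the existing LG mirror theorems of Milanov--Tseng, Satake--Takahashi, or Rossi) \emph{is} the citation the paper deploys, so that version of your argument reduces to the paper's one-line proof rather than improving on it; the first (explicit enumeration of triple Chen--Ruan intersections plus a degree bound from the Euler grading) is plausible but not carried out, and is not entirely routine since one must also control the divisor-equation dependence on the exponentiated K\"ahler variable. Until one of these is actually done, the argument is not self-contained.
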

This follows from composing the isomorphism $QH_{\rm
  orb}\l(C_{\mathfrak{g}}\r) \simeq \widetilde{X}_{\mathfrak{g}, 3}$ (see \cite{MR2672302}) with \cref{thm:frob}. \\

\begin{rmk} In \cite{MR2672302}, a different type of mirror theorem was proved in terms of
a polynomial three-dimensional Landau--Ginzburg model; it would be interesting
to deduce directly a relation between the two mirror pictures, along the lines
of what was done in a related context in \cite{Lerche:1996an}. As in \cite{Lerche:1996an}, the two mirror
pictures have complementary virtues: the threefold mirror of \cite{MR2672302}
has a considerably simpler form than the Toda/spectral curve mirror. On the
other hand, having a spectral curve mirror pays off two important dividends:
firstly, at genus zero, the calculation of flat co-ordinates for the Dubrovin
connection \eqref{eq:defEC} is simplified down to 1-dimensional (as opposed to
3-dimensional) oscillating integrals. Furthermore, and more remarkably, Givental's formalism
and the topological recursion might allow one to foray into the higher genus
theory, recursively to all
genera. This second aspect of the story will indeed be the subject of \cref{sec:p1orb}.
\end{rmk}

\subsection{General mirrors for Dubrovin--Zhang Frobenius manifolds}

There's a fairly compelling picture emerging from \cref{thm:frob} and the
constructions of \cref{sec:E8chain,sec:actangl} relating the Dubrovin--Zhang Frobenius
manifolds of \cref{sec:dzfrob} to relativistic Toda spectral curves. I am
going to propose here what the most general form of the conjecture should
be. For this section only, the symbols $\mathfrak{g}$, $\mathfrak{h}$,
$\cG=\exp(\mathfrak{g})$, $\cT=\exp(\mathfrak{t})$, $\cW$ will refer to an arbitrary simple, not
necessarily simply-laced, complex
Lie algebra, the corresponding Cartan subalgebra, simple simply-connected
complex Lie group, Cartan torus and Weyl group. 
%I also denote $r_{\mathfrak{g}}=\dim_\bbC\mathfrak{h}$ for the rank.  
As in  \cref{sec:E8chain}, let $\rho\in R(\cG)$ be an
irreducible representation of $\cG$ and for $\mathsf{g}\in \cG$ consider the
characteristic polynomial
\bea
\Xi_{\rho}(\theta_1, \dots, \theta_{r_\mathfrak{g}};\mu) &=& \det_{\rho}\l(\mu
\mathbf{1}-\mathsf{g}\r) = \sum_{k=0}^{\dim \rho} \mu^{\dim \rho-k}
\mathfrak{p}_k(\theta_1, \dots, \theta_{r_\mathfrak{g}}) \in \bbZ[\theta, \mu],
\eea
with $\mathfrak{p}_k\in \bbZ[\theta]$ and $\theta_i$ defined as in
\cref{sec:speccurve}. Recall that $\Xi_{\rho}$ reduces to a
product over Weyl orbits $W_i^\rho$
\beq
\Xi_{\rho}(\theta;\mu) = \prod_k \Xi^{(k)}_{\rho}(\theta;\mu)= \prod_k
\prod_{j=1}^{|W_k^\rho|}\l(\mu-\re^{\omega^{(k)}_j \cdot l}\r)
\label{eq:xiweyl}
\eeq
where $\re^{l}$ with $[\re^{l}]=[\mathsf{g}]$ is any conjugacy class representative in the
Cartan torus, and $l\in \mathfrak{h}$; for example, when $\rho=\mathfrak{g}$, we have two factors $\Xi^{(0)}_\rho
=(\mu-1)^{r_\mathfrak{g}}$ ($W^{\mathfrak{g}}_0=\Delta^{(0)}$)
and $\Xi_{\mathfrak{g},\rm red}\triangleq \Xi^{(1)}_\rho$ irreducible of
degree $d_{\mathfrak{g}}-r_{\mathfrak{g}}$ 
($W^{\mathfrak{g}}_1=\Delta^+ \cup \Delta^-$); this
was the case we considered for $\cG={\rm E}_8$. In general, let $\bar k$
be any integer in the product over $k$ in \eqref{eq:xiweyl} such that $W_{\bar
  k}^\rho$ is non-trivial and define $\Xi_{\rho,\rm red}\triangleq
\Xi_{\rho}^{(\bar k)}$.
% to be any one factor in \eqref{eq:xiweyl} corresponding to
%a non-trivial Weyl orbit $W_k^\rho$, 
Fixing $\alpha_{\bar i}\in \Pi$ a simple root, write
\beq
\Gamma_u^{(\bar i)} = \overline{\bbV\l(
  \Xi_{\rho,\mathrm{red}}\l(\theta_j=u_j-\delta_{\bar ij}\frac{\lambda}{u_0} \r)\r)},
\eeq
where the overline sign once again indicates taking the normalisation of the
projective closure, and letting $\omega^{(\bar k)}_1$ be the dominant
weight in $W_{\bar k}^\rho$, denote 
\beq
q_{\rho, \bar k} = \frac{1}{2} \sum_{j=1}^{|W_{\bar k}^\rho|} \bra \omega^{(\bar i)}_j,
\omega^{(\bar i)}_1 \ket^2.
\eeq
Finally, writing $X_{\mathfrak{g}, \bar i}^{\rm Toda} = \bbC^\star \times
(\bbC^\star)^{r_{\mathfrak{g}}}$ for the $r_{\mathfrak{g}}+1$ dimensional
torus with co-ordinates $(u_0; u_1, \dots,
u_{r_\mathfrak{g}})$, define pairings $(\eta,g)$ and product
structure $\de_{u_i} \star \de_{u_j}$ on $TX_{\mathfrak{g}, \bar i}^{\rm Toda}$ by 
\bea
\label{eq:todaeta}
\eta(\de_{u_i}, \de_{u_j}) &=&   \sum_{l}\Res_{p_l^{\rm cr}}\frac{\de_{u_i}\lambda
  \de_{u_j}\lambda}{q_{\rho, \bar k} \mu^2 \de_\mu \lambda}\rd\mu,  \\
\label{eq:todac}
\eta(\de_{u_i}, \de_{u_j}\star \de_{u_k}) &=&  
 \sum_{l}\Res_{p_l^{\rm cr}}\frac{\de_{u_i}\lambda \de_{u_j}\lambda
  \de_{u_k}\lambda}{q_{\rho, \bar k} \mu^2 \de_\mu \lambda}\rd\mu, \\
\label{eq:todag}
g(\de_{u_i}, \de_{u_j}) &=&  \sum_{l}\Res_{p_l^{\rm cr}}
\frac{\de_{u_i}\lambda \de_{u_j}\lambda
  \de_{u_k}\lambda}{q_{\rho, \bar k} \mu^2 \de_\mu \lambda}\rd\mu, 
\eea
where $\{p_l^{\rm cr}\}_l$ are the ramification points of $\lambda :
\Gamma_u^{(\bar i)} \to \bbP^1$.

\begin{conj}[Mirror symmetry for DZ Frobenius manifolds]
The Landau--Ginzburg formulas \eqref{eq:todaeta}--\eqref{eq:todag} define a
semi-simple, conformal Frobenius manifold $(X_{\mathfrak g,\bar i}^{\rm
  Toda}, \eta, e, E, \star)$,  which is independent of the choice of irreducible
representation $\rho$ and non-trivial Weyl orbit $W_{\bar k}^\rho$. In particular,
\eqref{eq:todaeta} and \eqref{eq:todag} define flat non-degenerate metrics on $TX_{\mathfrak g,i}^{\rm  Toda}$, and the identity and Euler vector
fields read, in curved coordinates $u_0, \dots, u_{r_\mathfrak{g}}$,
\beq
e= u_0^{-1} \de_{u_{\bar i}}, \qquad E= u_0 \de_{u_0}.
\eeq
Moreover,
\beq
X_{\mathfrak g,\bar i}^{\rm Toda} \simeq \widetilde{X}_{\mathfrak{g},\bar i}.
\label{eq:geniso}
\eeq
\label{conj:mirror}
\end{conj}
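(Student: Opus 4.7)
The plan is to mirror the strategy leading to Theorem \ref{thm:frob}, systematically extending each step to the general Lie-algebraic setting. First I would establish representation-independence: by the same cameral cover argument of \cref{sec:cameral}, all factors $\Xi_\rho^{(k)}$ for $k$ labelling non-trivial Weyl orbits arise as parabolic quotients of a common $\cW$-Galois cover of the $\lambda$-line, so that a differential of the third kind of the form $\log\mu \, \rd\log\lambda$ descends from that cover to any such $\Gamma_u^{(\bar k)}$, weighted by the pairing $\bra \omega_j^{(\bar k)}, \omega_1^{(\bar k)}\ket$. The factor $q_{\rho,\bar k}^{-1}$ in \eqref{eq:todaeta}--\eqref{eq:todag} then normalises away the choice of orbit exactly as $q_\mathfrak{g}$ did in \eqref{eq:clvev}, so the residue formulas pull back to the same object on the cameral cover. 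This reduces the entire conjecture to a canonical $\rho = \mathfrak{g}$ (or minuscule, where available) case.

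Next, I would verify the basic structural data. The identification $e=u_0^{-1}\de_{u_{\bar i}}$ and $E=u_0\de_{u_0}$ follows exactly as in \cref{prop:eE}: a rigid rescaling $\lambda\to c\lambda$ corresponds to $u_0\to c u_0$, while the translational symmetry $\lambda/u_0 \to \lambda/u_0 + \mathrm{const}$ is absorbed by shifting the marked Hamiltonian $u_{\bar i}$. The computation of $\eta$ in flat coordinates proceeds as in \cref{lem:flatc}: reversing the contour picks up residues only at the poles $P_l$ of $\lambda$, where Puiseux expansions in a uniformiser $\nu_l$ define residue moments $\mathtt{r}_l^j$ analogous to \eqref{eq:resmom}, and one checks that each such moment coincides, up to a rational constant, with one entry of a set of Laurent polynomials $\{\mathsf{t}_k(u_0,\dots,u_{r_\mathfrak{g}})\}$. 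The metric then becomes anti-diagonal and constant in these coordinates by the standard residue pairing between Puiseux data at conjugate orders; flatness and non-degeneracy of $\eta$ follow, and flatness of $g$ is a consequence of the flat-pencil property of the Hurwitz construction recalled in \cref{sec:hur}.

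To obtain the isomorphism \eqref{eq:geniso}, the final step is to invoke the Dubrovin--Zhang uniqueness theorem \cref{thm:uniqdz}. One must check that the pair $(e,E)$ has the prescribed form with $d_i = \bra\omega_i,\omega_{\bar i}\ket$, that the intersection form \eqref{eq:todag} matches (up to a linear change of variables) the Weyl-invariant one induced by the Cartan--Killing form, and that the prepotential is polynomial in $\mathsf{t}_1,\dots,\mathsf{t}_{r_\mathfrak{g}}, \re^{\mathsf{t}_0}$. The first two are group-theoretic identities that follow from standard residue calculus at the poles of $\lambda$. The polynomiality of the three-point functions is again the heart of the matter, and I would attack it in two stages as in the proof of \cref{thm:frob}: first compute the correlators $\tilde c_{0,j,k}$ in curved $u$-coordinates, where only the pole contributions survive (the Euler field kills the branch-point residues of $\mathscr{L}_E\lambda_1 = \lambda_1$), and these are manifestly polynomial from the Puiseux structure at $P_l$; then use the WDVV equation \eqref{eq:cwdvv}, which has maximal rank generically, to solve linearly for $\tilde c_{i,j,k}$ with $i>0$, verifying by a graded polynomial ansatz of the correct Euler-weights that a polynomial solution exists and is unique.

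The main obstacle is precisely the last step. In the $\mathrm{E}_8$ case of \cref{thm:frob} I could lean on \cref{claim:E8} to write down $\Xi_{\mathfrak{g},\rm red}$ explicitly, and then check polynomiality directly from the Puiseux expansions at the $23$ poles of $\lambda$. For general $\mathfrak{g}$, $\rho$, and $\bar i$ such an explicit global model of the curve is not available, and the Puiseux data at the poles of $\lambda$ depend on the representation-theoretic combinatorics of $W_{\bar k}^\rho$ (in particular on its orbit decomposition under the parabolic $\cW_{\omega_1^{(\bar k)}}\subset \cW$, which controls the ramification profile $\mathtt{m}$ of $\lambda:\Gamma_u^{(\bar k)}\to\bbP^1$). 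Carrying out the flat-coordinate analysis uniformly in $(\mathfrak{g},\rho,\bar i)$ will require a general residue identity expressing the moments $\mathtt{r}_l^j$ as linear combinations of Weyl characters, together with a uniform argument — most plausibly via the almost-duality of \cite{MR2070050} and a twist by a Type~I WDVV symmetry as anticipated in \cref{sec:about} — showing that the Type~II Hurwitz superpotential $\lambda$ and the Type~I one $\log\lambda$ exchange the DZ Frobenius manifold with its intersection-form dual on the space of regular conjugacy classes.
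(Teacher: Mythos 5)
The statement you were given is \cref{conj:mirror}, which the paper states as an open \emph{conjecture} rather than a theorem; the author explicitly writes that a proof is deferred to a separate publication, so there is no proof in the paper to compare your attempt against. What the paper does offer is the proof of the $\mathrm{E}_8$ case (\cref{thm:frob}) together with circumstantial evidence: the type~$\mathrm{A}$, $\mathrm{BCD}$ mirror theorems of \cite{MR1606165} and \cite{Dubrovin:2015wdx}, the McDaniel--Smolinsky Prym--Tyurin hierarchy for representation-independence, and the expectation that different marked roots are related by Type~I WDVV symmetries.

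Your proposal is a careful and essentially correct reconstruction of the strategy used in \cref{thm:frob}, extrapolated to the general setting exactly along the lines the paper anticipates, and you are candid about where it stops being a proof: for a general triple $(\mathfrak{g},\rho,\bar i)$ there is no analogue of \cref{claim:E8} furnishing an explicit global model of $\Xi_{\rho,\mathrm{red}}$, so the Puiseux analysis at the poles of $\lambda$ (\cref{lem:flatc}) cannot be carried out uniformly and polynomiality of the three-point functions cannot be checked. A further point worth sharpening is the representation-independence step: the cameral-cover descent you sketch shows that $\log\mu\,\rd\log\lambda$ and the $q_{\rho,\bar k}$-normalised residues descend to each parabolic quotient $\Gamma_u^{(\bar i)}$, but does not by itself show that the resulting Hurwitz-space Frobenius data agree for different choices of $(\rho,\bar k)$; one must also compare the ramification profiles $\mathtt m$ of the $\lambda$-projection across distinct parabolic covers, which is precisely the content of the Prym--Tyurin hierarchy of \cite{MR1182413,MR1401779,MR1668594}. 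So your plan aligns with the paper's stated program, but — as you yourself recognise — it is a plan with an acknowledged central gap rather than a proof, and the paper leaves the statement open as well.
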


There is a fair amount of circumstantial evidence in favour of the validity of
the mirror conjecture in the form and generality proposed.

\ben
\item Firstly, the independence on the choice of representation should be a consequence of
the work of \cite{MR1182413,MR1401779,MR1668594} on the ``hierarchy'' of Jacobians of spectral curves for the
periodic Toda lattice and associated, isomorphic preferred Prym--Tyurins. The
very same calculation of \cref{sec:limII} of the intersection form
\eqref{eq:todag} in this case does indeed show that the sum-of-residues in
different representations and Weyl orbits $(\rho, \bar k)$, $(\rho', \bar k')$ coincide up to an overall factor of
$q_{\rho, \bar k}/q_{\rho', \bar k'}$, which in \eqref{eq:todaeta}--\eqref{eq:todag} is
accounted for by the explicit inclusion of $q_{\rho, \bar k}$ at the denominator.
\item Isomorphisms of the type \eqref{eq:geniso} have already appeared in the
  literature, and they all fit in the framework of \cref{conj:mirror}. In their
  original paper \cite{MR1606165}, Dubrovin--Zhang formulate a mirror theorem
  for the $A$-series which is indeed the specialisation of \cref{conj:mirror} to
  $\mathfrak{g}=\mathfrak{sl}_N$ and $\rho=\square=\rho_{\omega_1}$ the fundamental representation. Their mirror theorem was
  extended to the other classical Lie algebras $\mathfrak{b}_N$, $\mathfrak{c}_N$ and $\mathfrak{d}_N$ by the same
  authors with Strachan and Zuo in \cite{Dubrovin:2015wdx}: the Toda mirrors
  of \cref{conj:mirror} specialise to their LG models for
  $\mathfrak{g}=\mathfrak{so}_{2N+1}$, $\mathfrak{sp}_N$ and
  $\mathfrak{so}_{2N}$ with $\rho$ being in all cases the defining vector
  representation $\rho=\rho_{\omega_1}$. 
\item At the opposite end of the simple Lie
  algebra spectrum, \cref{thm:frob}
  gives an affirmative answer to \cref{conj:mirror} for the most exceptional
  example of $\cG={\rm E}_8$; it is only natural to speculate that the missing
  exceptional cases should fit in as well.
\item Some further indication that \cref{conj:mirror} should hold true comes from the
  study of Seiberg--Witten curves in the same limit considered for
  \cref{sec:limIII}, together with $\Lambda_4\to 0$. It was speculated already
  in \cite{Lerche:1996an} that the perturbative limit of 4d SW curves with ADE
  gauge group should be related to ADE topological Landau--Ginzburg models
  (and hence the finite Coxeter Frobenius manifolds of \eqref{eq:coxeter})
  via an operation foreshadowing the notion of almost-duality in
  \cite{MR2070050}; this was further elaborated upon in
  \cite{Eguchi:1996nh} for $\mathfrak{g}=\mathfrak{e}_6$, and
  \cite{Eguchi:2001fm} for $\mathfrak{g}=\mathfrak{e}_7$.
\item Finally, our way of accommodating the extra datum of the choice of
  simple root $\alpha_{\bar i}$ is not only consistent with the results
  \cite{Dubrovin:2015wdx}, but also with the general idea that these Frobenius
  manifolds should be related to each other by a Type I symmetry of WDVV (a
  Legendre-type transformation) in the language of
  \cite[Appendix~B]{Dubrovin:1994hc}. Indeed, different choices of fundamental
  characters $u_{\bar i}$ shifting the value of the superpotential correspond precisely
  to a symmetry of WDVV where the new unit vector field is one of the old non-unital coordinate
  vector fields. This parallels precisely the general construction of
  \cite{MR1606165, Dubrovin:2015wdx}.
\een

\begin{table}[t]
\begin{tabular}{|c|c|c|c|c|c|c|c|}
\hline
$\cG$ & $\rho$ & $\bar k$ & $q_{\rho, \bar k}$ & $\dim\rho$ & $\deg_\mu \Xi_{\rho, \rm red}$ & $\deg_{u_j} \Xi_{\rho, \rm red}$ & $g\big(\Gamma_u^{(\bar i)}\big)$ \\
\hline
$\mathrm{E}_6$ & $\rho_{\omega_1}$ & $1$ & $6$ & $27$ &  $27$ & $5, 3, 2, 3, 5, 3$ & $5$ \\
\hline
$\mathrm{E}_7$ & $\rho_{\omega_6}$ & $6$ & $12$ & $56$ & $56$ & $6, 4, 3, 4, 5, 10, 6$ & $33$ \\
\hline
$\mathrm{E}_8$ & $\rho_{\omega_7}$ & $7$ & $60$ & $248$ & $240$ & $23, 13, 9, 11, 14, 19, 29, 17$ &
$128$ \\
\hline
${\rm F}_4$ & $\rho_{\omega_4}$ & $4$ & $6$ & $26$ & $24$ & $3, 2, 3, 5$ & $4$ \\
\hline
${\rm G}_2$ & $\rho_{\omega_1}$ & $1$ & $6$ & $7$ & $7$ & $2, 1$ & $0$ \\
\hline
\end{tabular}
\caption{Degree and genera of minimal spectral curve (putative) mirrors of
  $\widetilde{X}_{\mathfrak{g},\bar i}$ for the exceptional series EFG; for
  simplicity I only indicate the genus for the original choice of marked node
  $\bar i$ in \cite[Table I]{MR1606165}.}
\label{tab:EFG}
\end{table}

It should be noticed that, away from the classical ABCD series and the
exceptional case $\mathrm{G}_2$,
$\Gamma'^{(\bar i)}_{u}$ is typically not a rational curve, not even for the
``minimal'' case in which $\alpha_{\bar i}$ is chosen as the root
corresponding to the attaching node of the external root in the Dynkin
diagram, and $\rho$ is a minimal non-trivial irreducible representation. For the time being, I'll content myself to
provide some data on the
exceptional cases in \cref{tab:EFG}, and defer a proof of \cref{conj:mirror} to a separate publication.

\subsection{Polynomial $\bbP^1$ orbifolds at higher genus}
\label{sec:p1orb}

As a final application, I restrict my attention to $\cG$ being
simply-laced. In this case, \cref{conj:mirror} and \cite{MR2672302} would imply the following
\begin{conj}
With notation as in \cref{conj:mirror}, let $\bar i$ be an arbitrary node of
the Dynkin diagram for $\mathfrak{g}$ of type $A$, or the node corresponding to the
highest dimensional fundamental representation\footnote{Equivalently, this is the attaching node of the external root(s) in the
Dynkin diagram.} for type $D$ and $E$:
\beq
\bar i =
\l\{
\bary{lcrcl}
i=1,\dots, n, & ~ & \mathfrak{g} & = & \mathrm{A}_n \\
 n-2, & ~ & \mathfrak{g} & = & \mathrm{D}_n \\
3, & ~ & \mathfrak{g} & = & \mathrm{E}_n \\
\eary
\right.
\eeq
Then,
\beq
X_{\mathfrak g,\bar i}^{\rm Toda} \simeq QH_{\rm orb}(C_{\mathfrak{g}})
%\label{eq:geniso2}
\eeq
where $C_{\mathfrak{g}}$ is the polynomial $\bbP^1$-orbifold of type
$\mathfrak{g}$:
\beq
C_{\mathfrak{g}} = 
\l\{
\bary{lcrcl}
\bbP(\bar i,n-\bar i+1), & ~ & \mathfrak{g} & = & \mathrm{A}_n \\
\bbP(2,2,n-2) & ~ & \mathfrak{g} & = & \mathrm{D}_n \\
\bbP(2,3,n-3) & ~ & \mathfrak{g} & = & \mathrm{E}_n \\
\eary
\right.
\eeq
\label{conj:mirrorp1}
\end{conj}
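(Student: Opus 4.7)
The plan is to derive \cref{conj:mirrorp1} as the composition of two isomorphisms: the spectral-curve mirror identification of \cref{conj:mirror}, $X_{\mathfrak g,\bar i}^{\rm Toda}\simeq \widetilde{X}_{\mathfrak g,\bar i}$, and the Milanov--Tseng mirror theorem \cite{MR2672302}, which realises the Dubrovin--Zhang Frobenius structure $\widetilde{X}_{\mathfrak g,\bar i}$ as $QH_{\rm orb}(C_{\mathfrak{g}})$ precisely when $\bar i$ is the marked root selected in the statement. The particular choice of $\bar i$ in \cref{conj:mirrorp1} has been made exactly so that the two sides match: on the Dubrovin--Zhang side, this marking produces the unique polynomial Frobenius prepotential of \cref{thm:uniqdz} whose intersection form matches the orbifold Poincar\'e pairing, and on the GW side, this matches the Chen--Ruan degree spectrum of $C_{\mathfrak g}$, with the combinatorics of Coxeter exponents/attaching node of the external root corresponding to the orders $\mathsf{s}(i)$ of the isotropy groups at the stacky points.

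Once one takes the Milanov--Tseng identification as a black box, the entire content of \cref{conj:mirrorp1} is the status of \cref{conj:mirror} in the simply-laced case. The cases in which \cref{conj:mirror} is already established are type $\mathrm{A}_n$ for every choice of $\bar i$, which is the original result of \cite{MR1606165} for the periodic Toda chain in the defining representation; type $\mathrm{D}_n$ with marked fork node, settled in \cite{Dubrovin:2015wdx}; and type $\mathrm{E}_8$ with $\bar i=3$, which is \cref{thm:frob}. The remaining open cases are thus $\mathrm{E}_6$ and $\mathrm{E}_7$ with the attaching nodes singled out in the statement.

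The hard part is therefore to run the programme of \cref{sec:E8chain,sec:actangl,sec:saito} for the two missing exceptional cases. The corresponding reduced characteristic polynomials $\Xi_{\rho,\rm red}$ have degree $27$ in $\mu$ for $\mathrm{E}_6$ with $\rho=\rho_{\omega_1}$ and degree $56$ for $\mathrm{E}_7$ with $\rho=\rho_{\omega_6}$ (see \cref{tab:EFG}), both dramatically smaller than the degree-$240$ object for $\mathrm{E}_8$; the corresponding character relations in $R(\cG)$ generalising \eqref{eq:pk} are in principle accessible by the elementary representation-theoretic toolkit used in \cref{sec:charE8} without need of the heavy cluster-based computations of \cite{E8comp}. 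With $\Xi_{\rho,\rm red}$ in hand, the analogues of \cref{lem:flatc} and \cref{thm:frob} are procedural: identify Saito flat coordinates as normalised residues of $\log\mu\,\rd\lambda$ at the poles of $\lambda$ via a direct analogue of \cref{tab:resmom}, verify constancy and antidiagonal form of $\eta$, and conclude polynomiality of $\star$ in flat coordinates by first computing $\tilde c_{0,j,k}$ as in the proof of \cref{thm:frob} (exploiting that $\de_{\tilde u_0}$ is the Euler vector field and hence cancels the problematic residues at $\rd\mu=0$), and then propagating polynomiality to all other $\tilde c_{i,j,k}$ through the WDVV constraints together with the uniqueness statement of \cref{thm:uniqdz}.

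The only technical subtlety I anticipate, beyond the brute verification, is the handling of the two extra irreducible components of $\Xi_{\rho}/\Xi_{\rho,\rm red}$ that appear for non-self-dual $\rho$ (as in $\mathrm{E}_6$, where $\rho_{\omega_1}^\vee\simeq\rho_{\omega_5}\not\simeq\rho_{\omega_1}$), which modifies the Kanev--McDaniel--Smolinsky self-correspondence of \cref{sec:cameral} and shifts the intermediate base curve $\Gamma''_u$ accordingly; this, however, merely affects the bookkeeping of the residues at $\mu\neq \pm 1$ branch points and does not alter the logic of the argument, which proceeds representation-by-representation via \eqref{eq:xiweyl}.
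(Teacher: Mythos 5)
The paper does not prove \cref{conj:mirrorp1}; it is stated as a conjecture, presented as the composition of \cref{conj:mirror} with the Milanov--Tseng mirror theorem \cite{MR2672302}, and the text surrounding \cref{conj:mirror} says exactly which cases are established (type $A$ by \cite{MR1606165}, type $D$ by \cite{Dubrovin:2015wdx}, type $\mathrm{E}_8$ by \cref{thm:frob}) and that ``the two remaining exceptional cases can be treated along the same lines of \cref{thm:frob}\dots and are left as an exercise to the reader.'' Your proposal matches this approach point for point, and in fact spells out the outline of the exercise more explicitly than the paper does; so the overall logic is right and in line with the paper.

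One technical slip is worth flagging. In your last paragraph you claim that $\Xi_{\rho}/\Xi_{\rho,\rm red}$ has ``two extra irreducible components'' for non-self-dual $\rho$, citing $\mathrm{E}_6$ with $\rho_{\omega_1}$. That is not correct: $\rho_{\omega_1}$ of $\mathfrak{e}_6$ is minuscule, its $27$ weights form a single Weyl orbit, so $\Xi_\rho=\Xi_{\rho,\rm red}$ and there are no trivial factors to strip off -- this is already visible from \cref{tab:EFG}, where $\deg_\mu\Xi_{\rho,\rm red}=27=\dim\rho$ (and likewise $56=\dim\rho$ for $\mathrm{E}_7$). You have correctly spotted the real structural difference -- the non-self-duality of $\rho_{\omega_1}$ means the curve carries no $\mu\mapsto \mu^{-1}$ involution, so the reduction to an intermediate double cover as in \cref{sec:curvdet} and the pairing of sheets in the Kanev--McDaniel--Smolinsky correspondence need to be adapted -- but the manifestation is the absence of the $\bbZ_2$ symmetry on the reduced curve, not the appearance of extra irreducible components. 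The fix does not change the outcome of your argument, but the stated reason is off.
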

There are two noteworthy implications of such a statement. The first is that
the LG model of the previous section would provide a dispersionless Lax
formalism for the integrable hierarchy of topological type on the loop space
of the Frobenius manifold $QH_{\rm orb}(C_{\mathfrak{g}})$ \cite[Lecture~6]{Dubrovin:1994hc}; for type~A, this is well-known to be the extended
bi-graded Toda hierarchy of \cite{MR2246697} (see also \cite{MR2433616}), and
for all ADE types, a construction was put forward in \cite{Milanov:2014pma} for these
hierarchies in the form of Hirota quadratic equations. The zero-dispersion
Lax formulation of the hierarchy could be a key to relate such remarkable, yet obscure hierarchy
to a well-understood parent 2+1 hierarchy such as 2D-Toda, as was done in a
closely related context in
\cite{Brini:2014mha}. A more direct consequence is a Givental-style,
genus-zero-controls-higher-genus statement, as follows. On the Gromov--Witten
side, and as a vector space, the Chen--Ruan co-homology of $C_{\mathfrak{g}}$
is the co-homology of the inertia stack $IC_{\mathfrak{g}}$
\cite{MR1950941,Zaslow:1993fa}, which is generated by the identity class
$\phi_{r_{\mathfrak{g}}}\triangleq \mathbf{1}_0$, the K\"ahler class
$\phi_{0}\triangleq p$, and twisted cohomology
classes concentrated at the stacky points of $C_{\mathfrak{g}}$,
\beq
\phi_{\nu(i,r)} \triangleq \mathbf{1}_{\l(\frac{i}{s_r},r\r)} \in H^{(\frac{i}{s_r},r)}(C_{\mathfrak{g}}) \simeq H(B \bbZ_{s_r}), \quad i=1, \dots, r-1
\eeq
where $r=1,2$ for type A and $r=1,2,3$ for type D and E label the orbifold
points of $C_{\mathfrak{g}}$, $s_r$ is the order of the respective isotropy
groups, we label components of the $IC_{\mathfrak{g}}$ by
$(\frac{i}{r},s_r)$, and $\nu(i,r)$ is a choice of a map to $[[1, r_{\mathfrak{g}}]]$
increasingly sorting the sets of pairs $(i,r)$ by the value of
$i/s_r$.\footnote{For the case $\mathfrak{g}=\mathfrak{e}_8$, since ${\rm gcd}(2,3,5)=1$, there is no
  ambiguity in the choice of $\nu$, and the choice of labelling of $\phi_\a$ here
  was made to match  that of the 
  Saito vector fields $\de_{\mathtt{t_\a}}$ of \cref{lem:flatc}: up to scale,
  we have $\phi_\a=\de_{\mathsf{t}_\a}$, $\de_{\mathsf{t}_0}=p$, $\de_{\mathsf{t}_8}=\mathbf{1}_0$,
  $\de_{\mathsf{t}_1}=\mathbf{1}_{\l(\frac{1}{5},3\r)}$,
  $\de_{\mathsf{t}_2}=\mathbf{1}_{\l(\frac{1}{3},2\r)}$,
  $\de_{\mathsf{t}_3}=\mathbf{1}_{\l(\frac{2}{5},3\r)}$,   
  $\de_{\mathsf{t}_4}=\mathbf{1}_{\l(\frac{1}{2},1\r)}$,
  $\de_{\mathsf{t}_5}=\mathbf{1}_{\l(\frac{3}{5},3\r)}$,
  $\de_{\mathsf{t}_6}=\mathbf{1}_{\l(\frac{2}{3},2\r)}$,
  $\de_{\mathsf{t}_7}=\mathbf{1}_{\l(\frac{4}{5},2\r)}$.}
Define now the genus-$g$ full-descendent Gromov--Witten potential of
$C_{\mathfrak{g}}$ as the formal power series
\beq
\cF_g^{C_{\mathfrak{g}}} = \sum_{n\geq 0} \sum_{d\in
  \mathrm{Eff}(C_{\mathfrak{g}})} \sum_{\stackrel{\a_1, \dots, \a_n}{k_1,
    \dots, k_n}}\frac{\prod_{i=1}^n t_{\a_i, k_i}}{n!} \bra
\tau_{k_1}(\phi_{\a_1}) \dots \tau_{k_n}(\phi_{\a_n}) \ket^{C_{\mathfrak{g}}}_{g,n,d},
\eeq
where $\mathrm{Eff}(C_{\mathfrak{g}})\subset
H^2(C_{\mathfrak{g}},\bbZ)/H^2_{\rm tor}(C_{\mathfrak{g}},\bbZ)$ is the set of
degrees of twisted stable maps to $C_{\mathfrak{g}}$, and the usual
correlator notation for multi-point descendent Gromov--Witten invariants was employed,
\beq
\bra
\tau_{k_1}(\phi_{\a_1}) \dots \tau_{k_n}(\phi_{\a_n})
\ket^{C_{\mathfrak{g}}}_{g,n,d} \triangleq
\int_{[\cM_{g,n}(C_{\mathfrak{g}},d)]^{\rm vir}}\prod_{i=1}^n \mathrm{ev}_i^*
\phi_{\a_i} \psi_i^{k_i}.
\eeq
Since $QH(C_{\mathfrak{g}})$ is semi-simple, the Givental--Teleman
Reconstruction theorem applies \cite{MR2917177}. I will refer the reader to
\cite{MR1901075,lee2004frobenius,Brini:2013zsa} for the relevant background
material, context, and detailed explanations of origin and inner workings of the formula; symbolically and somewhat
crudely, this is, for a general target $X$ with semi-simple quantum cohomology,
\beq
\exp\l(\sum_{g}\epsilon^{2g-2} \cF_g^{X}(t)\r) =
\widehat{S_{{\rm GW}, X}^{-1}} \widehat{\psi_{{\rm GW}, X}} \widehat{R_{{\rm GW}, X}}
\prod_{i=1}^{r_{\mathfrak{g}}+1} \tau_{\rm KdV}(u)
\label{eq:giv}
\eeq
where the {\it calibrations} $S_{{\rm GW}, X}$ and $R_{{\rm GW}, X}$ are elements of the linear symplectic loop group
of $QH(X)\otimes \bbC[\hbar,\hbar^{-1}]]$ given by flat coordinate
 frames for the restricted Dubrovin connection to the internal direction of
 the Frobenius manifold \eqref{eq:defEC}, which are respectively analytic in
 $\hbar$ and formal in $1/\hbar$. The hat symbol signifies normal-ordered
 quantisation of the corresponding linear symplectomorphism (namely, an
 exponentiated quantised quadratic Hamiltonian), $\psi_{{\rm GW}, X}$ is the Jacobian matrix
 of the change-of-variables from flat to normalised canonical frame, and
 $\tau_{\rm KdV}$ is the Witten--Kontsevich Kortweg--de-Vries $\tau$-function,
 that is, the exponentiated
 generating function of GW invariants of the point. The essence of \eqref{eq:giv}
 is that there exists a judicious composition of explicit, exponentiated quadratic differential operators in
 $t_{\a,k}$ and changes of variables $u_k^{(i)} \to t_{\a, k}$ from the
 $k^{\rm th}$ KdV time of the $i^{\rm th}$ $\tau$-function in \eqref{eq:giv}
 which returns the full-descendent, all-genus GW partition function of
 $X$. In our specific case $X=C_{\mathfrak{g}}$ (and in general, whenever we
 consider non-equivariant GW invariants), by the Conformality axiom of
 \cref{def:frob1}, both $\widehat{S}_{{\rm GW}, X}$ and $\widehat{R}_{{\rm GW}, X}$ are determined by 
 the Frobenius manifold structure of $QH(X)$ alone, without any
 further input \cite{MR2917177}: the grading condition given by the flatness
 in the $\bbC^\star_{\hbar}$ direction of the Dubrovin connection fixes uniquely
 the normalisation of the canonical flat frames $S$ and $R$ at $\hbar=0,
 \infty$ respectively. For reference, the $R$-action on the Witten--Kontsevich
 $\tau$-functions gives the {\it ancestor potential} in the normalised canonical frame
\beq
\exp\l(\sum_{g}\epsilon^{2g-2} \cA_g^{X}(t)\r) \triangleq \widehat{R_{{\rm GW}, X}}
\prod_{i=1}^{r_{\mathfrak{g}}+1} \tau_{\rm KdV}(u)
\eeq
to which the descendent generating function \eqref{eq:giv} is related by a
linear change of variables (via $\psi$) and a triangular transformation of the
full set of time variables (via $S^{-1}$); see
\cite[Chapter~2]{lee2004frobenius}.

On the Toda/spectral curve side, a similar higher genus reconstruction theorem exists in light
of its realisation as a Frobenius submanifold of a Hurwitz space: this is, as
in \cref{sec:toprec}, the Chekhov--Eynard--Orantin (CEO) topological recursion procedure, giving a 
sequence $(F_g^{\rm CEO}(\mathscr{S}), W_{g,h}^{\rm CEO}(\mathscr{S}))$ of generating
functions \eqref{eq:wghrec}-\eqref{eq:fgrec} specified by the
Dubrovin--Krichever data of \cref{defn:dk}.
%datum $\mathscr{S}=(\mathscr{G}^{\rm Toda}_\mathfrak{g}, \rd \lambda,
%\rd\ln\mu, \Lambda^-)$. 
Having proved, or taking for granted the isomorphism of the underlying Frobenius manifolds as in \cref{conj:mirror}, it is natural to ask
whether the two higher genus theories are related at all. A precise answer
comes from the work of \cite{DuninBarkowski:2012bw}, where the authors
show that there exists an explicit change of variables $t_{\a,k} \to v_{i,j}$
and an $R$-calibration of the Hurwitz space Frobenius (sub)manifold associated
to $\mathscr{S}_{\mathfrak{g}}$ such that
\beq
\exp\l(\epsilon^{2g-2} \sum_{g,d}W^{\rm CEO}(\mathscr{S})_{g,d}(v)\r) = \widehat{R_{\rm CEO}}(\mathscr{S})
\prod_{i=1}^{r_{\mathfrak{g}}+1} \tau_{\rm KdV}(u)
\label{eq:giv2}
\eeq
where the independent variables $v_{i,j}$ on the l.h.s. are obtained from the
arguments of the CEO multi-differentials upon expansion
around the $i^{\rm th}$ branch point of the spectral curve (see
\cite[Theorem~4.1]{DuninBarkowski:2012bw} and the discussion preceding it for
the exact details). In other words, the topological recursion reconstructs the {\it
  ancestor} potential of a two-dimensional semi-simple cohomological field
theory, with $R$-calibration $R_{\rm CEO}(\mathscr{S})$ entirely specified by the spectral
curve geometry via a suitable Laplace transform of the Bergman kernel.  One upshot of this is that, up to a further
change-of-variables and a (non-trivial) shift by a quadratic term, \eqref{eq:giv2} can be
put in the form of \eqref{eq:giv}.

So, in a situation where $\mathscr{S}_X$ is a spectral curve mirror to $X$, 
we have two identical reconstruction theorems for the higher genus ancestor potential
starting from genus zero CohFT data, both being unambiguosly specified in terms
of $R$-actions $R_{\rm GW, X}$ and $R_{\rm CEO}(\mathscr{S}_X)$. If
these agree, then the full higher genus potentials agree, and the higher genus ancestor invariants of $X$
are computed by the topological recursion on $\mathscr{S}_X$ by
\eqref{eq:giv2}. Happily, it is a result of Shramchenko that in
non-equivariant GW theory this is always precisely
the case \cite{MR2429320} (see also \cite[Theorem~7]{dunin2016primary}): 
\beq
R_{\rm GW, X}=R_{\rm CEO}(\mathscr{S}_X).
\eeq
In other words, the $R$-calibration $R_{\rm CEO}(\mathscr{S}_X)$, which is uniquely specified by the
Bergmann kernel of a family of spectral
curves $\mathscr{S}_X$ whose prepotential coincides with the genus zero GW
potential of a projective variety\footnote{More generally, a Gorenstein
  orbifold with projective coarse moduli space.} $X$, coincides with the
$R$-calibration $R_{{\rm GW}, X}$ uniquely picked by the de Rham grading in the
(non-equivariant) quantum cohomology of $X$. We get to the following
\begin{cor}
Suppose that \cref{conj:mirror} holds. Then the ancestor higher genus
potential of $C_{\mathfrak{g}}$ equates to the
higher genus topological recursion potential
\beq
\cA_g^{C_{\mathfrak{g}}}= \sum_{h}W^{\mathscr{S}_{\mathfrak{g}}}_{g,h}
\eeq
up to the change-of-variables of \cite[Theorem~4.1]{DuninBarkowski:2012bw}.
\end{cor}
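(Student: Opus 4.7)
The plan is to leverage the Givental--Teleman reconstruction theorem on both sides of the mirror correspondence and reduce the statement to matching a single piece of data, the $R$-calibration, whose identification on the two sides is the content of a theorem of Shramchenko. First, I would invoke \cref{conj:mirror} to identify the underlying Frobenius manifolds $(X_{\mathfrak{g},\bar i}^{\mathrm{Toda}},\eta,e,E,\star)\simeq QH_{\mathrm{orb}}(C_{\mathfrak{g}})$; by the corollary to Dubrovin--Zhang's uniqueness statement (\cref{thm:uniqdz}), this isomorphism canonically identifies the Saito flat frames and the canonical frames on semi-simple loci, so that the Jacobian $\psi$ in \eqref{eq:giv} and the $S$-calibration coincide on both sides after the linear change-of-variables of \cite{DuninBarkowski:2012bw}.

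Next, because $C_{\mathfrak{g}}$ is a smooth proper DM stack with projective coarse moduli space and we are working in non-equivariant theory, the Conformality axiom of \cref{def:frob1} holds, and Teleman's theorem guarantees both that $QH_{\mathrm{orb}}(C_{\mathfrak{g}})$ carries a semi-simple CohFT reconstructible from genus zero, and that its $R$-calibration $R_{\mathrm{GW},C_{\mathfrak{g}}}$ is the \emph{unique} formal-in-$1/\hbar$ flat section of the extended Dubrovin connection \eqref{eq:defECz} pinned down by the homogeneity condition $\nabla^{(\hbar)}_{\partial_\hbar}R=0$. The ancestor potential of $C_{\mathfrak{g}}$ then takes the form $\widehat{R_{\mathrm{GW},C_{\mathfrak{g}}}}\prod_i\tau_{\mathrm{KdV}}$ as in \eqref{eq:giv}.

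On the B-model side, the Dubrovin--Krichever data \eqref{eq:dk} determine, via \cref{defn:EO}, the topological recursion amplitudes $\{W_{g,h}^{\mathscr{S}_{\mathfrak{g}}}\}$, and \cite[Theorem~4.1]{DuninBarkowski:2012bw} recasts their generating function, after the prescribed expansion of each argument at the ramification points of $\lambda$ and the symplectic change of variables from spectral variables to ancestor times, in the Givental form $\widehat{R_{\mathrm{CEO}}(\mathscr{S}_{\mathfrak{g}})}\prod_i\tau_{\mathrm{KdV}}$. Here $R_{\mathrm{CEO}}(\mathscr{S}_{\mathfrak{g}})$ is built explicitly as a Laplace transform of the symmetrised Bergmann kernel $B^{\mathrm{Toda}}$ of \eqref{eq:BKMS} along steepest descent contours at the critical points of the superpotential $\lambda$, with primitive $\phi=\rd\ln\mu$.

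The heart of the proof would then be the invocation of Shramchenko's theorem \cite{MR2429320} (see also \cite[Thm.~7]{dunin2016primary}), which asserts that the $R$-matrix $R_{\mathrm{CEO}}(\mathscr{S}_{X})$ built from such Laplace transforms on a Hurwitz-type spectral curve is, under the Frobenius manifold isomorphism of \cref{rmk:frobsub}, precisely the Givental--Teleman $R$-calibration of the target Frobenius manifold. Applied to $\mathscr{S}_{\mathfrak{g}}$ and $C_{\mathfrak{g}}$, and in conjunction with the Frobenius manifold isomorphism of \cref{conj:mirror}, this yields
\[
R_{\mathrm{GW},C_{\mathfrak{g}}} \;=\; R_{\mathrm{CEO}}(\mathscr{S}_{\mathfrak{g}}),
\]
whence the two ancestor partition functions agree after quantisation, proving the corollary. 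The one point requiring care, and where I expect the main technical obstacle to lie, is verifying that the homogeneity/grading normalisation singled out by Conformality on the A-side matches the analytic normalisation of the Laplace asymptotic expansion on the B-side: concretely, one has to check that the spectral curve primitive $\rd\ln\mu$, rather than any other differential compatible with the flat identity, is what feeds into the homogeneous $R$-calibration, which amounts to matching the scaling weights induced by the Euler field $E=u_0\partial_{u_0}$ of \cref{prop:eE} with the degree-preserving condition imposed by $\widehat\mu$ in \eqref{eq:defECz}. Once this normalisation is pinned down, the rest of the argument is purely formal.
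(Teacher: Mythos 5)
Your proposal reproduces the paper's own line of argument essentially verbatim: both sides reconstruct the ancestor potential via Givental--Teleman, \cite[Theorem~4.1]{DuninBarkowski:2012bw} gives the CEO generating functions the same Givental form with $R$-calibration $R_{\rm CEO}(\mathscr{S}_{\mathfrak{g}})$, and Shramchenko's theorem \cite{MR2429320} identifies $R_{\rm GW, C_{\mathfrak{g}}}=R_{\rm CEO}(\mathscr{S}_{\mathfrak{g}})$ once the Frobenius manifolds are matched by \cref{conj:mirror}. The normalisation caveat you flag at the end is exactly the content already handled in \cite{MR2429320,dunin2016primary}, so it is correct to note it but it does not constitute an additional gap to fill.
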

In particular, such all-genus full-ancestor statements hold in type A by
\cite[Theorem~3.1]{MR1606165}, type D by \cite[Theorem~5.6]{Dubrovin:2015wdx} and type
$\mathrm{E}_8$ by \cref{thm:frob}. The two remaining exceptional cases can be
treated along the same lines of \cref{thm:frob}, and far more easily than the
case of $\mathrm{E}_8$, and are left as an exercise to the reader.

\begin{rmk}[On an ADE Norbury--Scott theorem]

For the case of the Gromov--Witten theory of $\bbP^1$, it was proposed by
Norbury--Scott in \cite{MR3268770}, supported by a low-genus proof and a heuristic all-genus argument, and
later proved in full generality by the authors of
\cite{DuninBarkowski:2012bw} using \eqref{eq:giv2}, that the residue at infinity of the CEO
differentials $W^{\mathscr{S}_{\mathfrak{g}}}_{g,n}$ gives the $n$-point,
genus-$g$ stationary GW invariant of $\bbP^1$,
\beq
\prod_{j=1}^n \Res_{z_j=\infty}\frac{z_j^{m_j+1}}{(m_j+1)!}W^{\rm Toda}_{g,n}(z_1,
\dots, z_n)  = (-)^n \bra\prod_{i=1}^n \tau_{m_j}(p) \ket.
\label{eq:ns}
\eeq
I fully expect that a completely analogous ADE orbifold version of \eqref{eq:ns}, which allows for a very efficient way
to compute GW invariants at higher genera, would hold for
all polynomial $\bbP^1$-orbifolds. For type A and D, where the curve is
rational, the statement of \eqref{eq:ns} would probably carry forth verbatim,
with the r.h.s. being given by $n$-pointed, non-stationary, untwisted GW
invariants. For type E, it will perhaps be necessary to sum over all
branches above $\infty$ (8, in the case of $\mathrm{E}_8$) to obtain the
desired result. I also expect that in type D and E, poles of $\lambda$ at
finite $\mu$ will presumably compute twisted invariants, with twisted insertions 
being labelled by the location of the poles. In particular, in type $\mathrm{E}_8$, the poles
at $\frac{\ln \mu}{2\pi \ri}\in \{1/5,1/3,2/5,1/2,3/5,1/3,4/5\}$ should
correspond to insertions of $\mathbf{1}_{i/s_r,r}$ for the corresponding value
of $i/s_r$.
\end{rmk}

\begin{appendix}

\section{Proof of \cref{prop:pg}}
\label{sec:proofpg}

I am going to prove \cref{prop:pg} by first establishing the following
\begin{lem}
The number of double cosets of $\cW$ by $\cW_{\alpha_0}$ is 
\beq
|\cW_{\alpha_0} \backslash \cW / \cW_{\alpha_0}|=5.
\eeq
\label{lem:numcosets}
\end{lem}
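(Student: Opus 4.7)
The plan is to identify double cosets with $\cW_{\alpha_0}$-orbits on the set of non-zero roots, and then to count those orbits by stratifying $\Delta^*$ according to the value of $\langle -,\alpha_0\rangle$.

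First I would use the fact that $\mathrm{E}_8$ is simply-laced to conclude that $\cW$ acts transitively on $\Delta^*$, so $w\mapsto w(\alpha_0)$ furnishes a $\cW$-equivariant bijection
\[
\cW/\cW_{\alpha_0}\;\xrightarrow{\ \sim\ }\;\Delta^*.
\]
Under this identification, $\cW_{\alpha_0}\backslash\cW/\cW_{\alpha_0}$ is in bijection with the orbit space of $\cW_{\alpha_0}$ acting on $\Delta^*$.

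Next, since $\langle -,\alpha_0\rangle$ is obviously $\cW_{\alpha_0}$-invariant, and since $|\langle\beta,\alpha_0\rangle|\le 2$ for all $\beta\in\Delta$ with equality iff $\beta=\pm\alpha_0$, the level-set decomposition
\[
\Delta^*=\{\alpha_0\}\,\sqcup\,\{-\alpha_0\}\,\sqcup\,\delta_0\,\sqcup\,\delta_1\,\sqcup\,\delta_{-1},\qquad
\delta_i\triangleq\mathsf{H}_i\cap\Delta^*,
\]
exhibits at most five $\cW_{\alpha_0}$-orbits. The first two pieces are singletons, hence automatically single orbits. The orthogonal roots $\delta_0$ form the $\mathrm{E}_7$ sub-root system determined by $\alpha_0^\perp\cap\Delta$, and $\cW_{\alpha_0}$ contains the Weyl group of this $\mathrm{E}_7$, which acts transitively on its own root system (simply-laced).

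The main point is the transitivity of $\cW_{\alpha_0}$ on $\delta_{\pm 1}$. The plan is to invoke the fact that the 56-element sets $\delta_{\pm 1}$ are exactly the weight systems of the two minuscule $\mathrm{E}_7$-representations $\mathbf{56}$ and $\mathbf{56}^*$ (equivalently, each root in $\delta_{\pm 1}$ is an $\mathrm{E}_7$-weight at $\cW(\mathrm{E}_7)$-distance equal to the full Weyl orbit, since the Weyl group acts transitively on the weights of any minuscule representation). Alternatively one may argue directly that $\cW(\mathrm{E}_8)$ is transitive on ordered pairs of roots with a prescribed inner product, from which the stabiliser of $\alpha_0$ acts transitively on $\{\beta\in\Delta:\langle\beta,\alpha_0\rangle=k\}$ for each fixed $k\in\{-1,1\}$. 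Either argument is elementary given the minuscule/simply-laced setup, so I expect the entire proof to be short.

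Finally, the count $1+1+126+56+56=240=|\Delta^*|$ provides a consistency check confirming that no level set is empty and no two level sets merge under $\cW_{\alpha_0}$, completing the proof that $|\cW_{\alpha_0}\backslash\cW/\cW_{\alpha_0}|=5$.
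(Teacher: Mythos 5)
Your proof is correct, but it takes a genuinely different route from the paper's. The paper invokes the identity $|\cW_{\alpha_0}\backslash\cW/\cW_{\alpha_0}| = \langle \mathrm{ind}_{\cW_{\alpha_0}}^{\cW}1,\mathrm{ind}_{\cW_{\alpha_0}}^{\cW}1\rangle$ (Frobenius reciprocity, applied to the permutation module $\bbC\langle\Delta^*\rangle$) and then grinds through the $\cW(\mathrm{E}_8)$ character table to decompose $\bbC\langle\Delta^*\rangle$ into five multiplicity-one irreducibles, whence $\sum m_i^2 = 5$. You instead identify double cosets with $\cW_{\alpha_0}$-orbits on $\Delta^*$ and count them directly, stratifying by the $\cW_{\alpha_0}$-invariant $\langle-,\alpha_0\rangle$ and establishing transitivity on each stratum: trivially for $\pm\alpha_0$, by simply-lacedness of $\mathrm{E}_7$ on $\delta_0$, and by minusculeness of the $\mathbf{56}$ on $\delta_{\pm1}$. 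Both are sound. The paper's method is heavier (it needs the full 112-class character table) but yields the explicit irreducible decomposition of the permutation module as a by-product. Your method is cleaner and, usefully, already produces the explicit identification of the double cosets with the sets $\delta_i = \mathsf{H}_i\cap\Delta$ — precisely the geometric description that the paper defers to the remainder of the proof of \cref{prop:pg}, so your argument effectively does both steps at once.

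One small caveat: your parenthetical alternative — deducing transitivity on $\delta_{\pm1}$ from ``$\cW(\mathrm{E}_8)$ is transitive on ordered pairs of roots with a prescribed inner product'' — is essentially a restatement of the lemma being proved and would be circular unless supported independently; the minuscule argument you give as the primary route is the one that should be used.
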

\begin{proof}
The order of the double coset space $\cW_{\alpha_0} \backslash \cW / \cW_{\alpha_0}$
is the square norm of the character of the trivial
representation of $\cW_{\alpha_0}$, induced up to $\cW$ \cite[Ex.~7.77a]{MR1676282},
\beq
|\cW_{\alpha_0} \backslash \cW / \cW_{\alpha_0}|=\bra \mathrm{ind}_{\cW_{\alpha_0}}^\cW
1, \mathrm{ind}_{\cW_{\alpha_0}}^\cW 1\ket.
\label{eq:orddoubcs}
\eeq
Now, $\mathrm{ind}_{\cW_{\alpha_0}}^\cW 1$ is just the permutation representation
$\bbC\bra \Delta^* \ket$ on the free vector space on the set of non-zero roots
$\Delta^*\simeq \cW/\cW_{\alpha_0}$. Suppose that
\beq
\bbC\bra \Delta^* \ket = \bigoplus m_i R_i
\eeq
for irreducible representations $R_i \in R(\cW)$ and $m_i \in \bbZ$. Then, by \eqref{eq:orddoubcs}, 
\beq
|\cW_{\alpha_0} \backslash \cW / \cW_{\alpha_0}|=\sum_i m_i^2.
\eeq
The multiplicity of $R_i$ in $\bbC\bra \Delta^* \ket$ is easily computed as
follows. Let $c \in \cW$ and $[c]$ its conjugacy class. Then its $\bbC\bra\Delta^*\ket$-character 
\beq
\chi_{\bbC\bra \Delta^* \ket}([c]) = \mathrm{dim}_\bbC\{v \in \mathfrak{h}^* |
cv=v\} 
\label{eq:chiCDdim}
\eeq
is equal to the dimension of the eigenspace of fixed points of $c$. In the
standard labelling \cite{MR1266626,MR1285208} of conjugacy classes of
$\cW=\mathrm{Weyl}(\mathfrak{e}_8)$, we compute the r.h.s. of
\eqref{eq:chiCDdim} to be
\beq
\chi_{\bbC\bra \Delta^* \ket}([c])=
\l\{
\bary{crcl}
2 & [c] & \in &
\{\mathtt{2b},\mathtt{4c},\mathtt{6a},\mathtt{12a},\mathtt{4e},\mathtt{4g},\mathtt{30b},\mathtt{10d},\mathtt{6m},\mathtt{3d},\mathtt{24c},\nn \\ & & & ~\mathtt{6s},
\mathtt{18c},\mathtt{6x},\mathtt{12n},
\mathtt{7a},\mathtt{
   14a},\mathtt{6y},\mathtt{6z},\mathtt{6ab}\}, \\
4 & [c] & \in &
\{\mathtt{4h},\mathtt{2e},\mathtt{10b},\mathtt{30c},\mathtt{6i},\mathtt{30d},\mathtt{12p},\mathtt{12r}\}, \\
6 & [c] & \in &
\{\mathtt{8c},\mathtt{4b},\mathtt{8e},\mathtt{6c},\mathtt{2c},\mathtt{12h},\mathtt{8h},\mathtt{6q},\mathtt{14b},\mathtt{12q}\}, \\
8 & [c] & \in &
\{\mathtt{4d},\mathtt{4m}\}, \\
12 & [c] & \in &
\{\mathtt{12f},\mathtt{2f},\mathtt{5a},\mathtt{6g},\mathtt{10e}\}, \\
14 & [c] & = & 
\mathtt{12m}, \\
20 & [c] & = & 
\mathtt{18d}, \\
24 & [c] & \in &
\{\mathtt{8a},\mathtt{24b},\mathtt{20b}\}, \\
26 & [c] & = & 
\mathtt{6b}, \\
30 & [c] & = & 
\mathtt{12d}, \\
40 & [c] & = & 
\mathtt{6h}, \\
60 & [c] & = & 
\mathtt{12e}, \\
72 & [c] & = & 
\mathtt{4f}, \\
126 & [c] & = & 
\mathtt{6e}, \\
240 & [c] & = & 
\mathtt{1a}, \\
0 & \mathrm{else.} & &
\eary
\r.
\label{eq:charCD}
\eeq
From \eqref{eq:charCD}, we obtain\footnote{I have labelled here irreps of $\cW$ as
  $\mathrm{dim} R_{\mathrm{sgn}\chi_R(\mathtt{8d})}$. A rather standard use in
  the literature is to label irreps of exceptional Weyl groups by how they are
  stored in the
    \texttt{GAP} library; for reference, here the summands in the decomposition
    \eqref{eq:charCD} would be called $\mathtt{X.1}$ (trivial),
    $\mathtt{X.3}$ (Coxeter), $\mathtt{X.8}$, $\mathtt{X.15}$ and $\mathtt{X.16}$.}
\bea
m_i & = &\bra \chi_{R_i}, \chi_{\bbC\bra \Delta^* \ket }\ket \nn \\
& = & \l\{
\bary{cl}
1 & R_i \in \{\mathbf{1}_+\simeq 1, \mathbf{8}_+\simeq \mathfrak{h}^*, \mathbf{35}_-,
\mathbf{84}_-, \mathbf{112}_+\}, \\
0 & \mathrm{else},
\eary
\r.
\label{eq:deccharCD}
\eea
from which the claim follows.
\end{proof}

\cref{prop:pg} is an easy consequence of the \cref{lem:numcosets}: by
$\cW_{\alpha_0}$-invariance, \eqref{eq:proj} defines an element of the Hecke
ring, and it is immediately seen to assume exactly five constant values
$-2$,$-1$, $0$, $1$, $2$ on the hyperplanes $\mathsf{H}_i$, which are then in bijection
with the elements of $H(\cW,\cW_{\alpha_0})$.

\section{Some formulas for the $\mathfrak{e}_8$ and $\mathfrak{e}_8^{(1)}$ root system}

I gather here some reference material for the finite and affine $\mathrm{E}_8$
root systems. Let $\{e_i\}$, $i=1, \dots, 8$ be an orthonormal basis for
$\bbR^8$. The simple roots $\{\a_i\}$, $i=1,\dots, 8$ have components in this
basis given by
%
%\subsection{Simple roots in the orthogonal basis}
%
\beq
%\Delta_{{\rm E}_8}=\left\{
\begin{array}{rcl}\a_1 &=& (\frac{1}{2}, -\frac{1}{2}, -\frac{1}{2}, -\frac{1}{2}, -\frac{1}{2},
-\frac{1}{2}, -\frac{1}{2}, \frac{1}{2}), \\ \a_2 &=& (0, 0, 0, 0, 0, -1, 1, 0), \\ \a_3 &=& (0, 0, 0, 0, -1,
1, 0, 0), \\ \a_4 &=& (0, 0, 0, -1,
   1, 0, 0, 0), \\ \a_5 &=& (0, 0, -1, 1, 0, 0, 0, 0), \\ \a_6 &=& (1, 1, 0, 0, 0, 0,
   0, 0), \\ \a_7 &=& (0, -1, 1, 0, 0, 0, 0, 0), \\ \a_8 &=& (-1, 1, 0, 0, 0, 0, 0,
   0)\end{array}
%\right\}
.
\label{eq:roots}
\eeq
The affine root system is obtained from \eqref{eq:roots} upon adding the
affine root
\beq
%\Delta_{{\rm E}_8^{(1)}}= \Delta_{{\rm E}_8} \cup \{
\a_0 = (0,0,0,0,0,0,-1,-1).
%\}
\label{eq:affroot}
\eeq
The respective Cartan matrices are given, from
\eqref{eq:roots}--\eqref{eq:affroot}, by
\beq
\mathscr{C}^{\mathfrak{e}_8}=
\left(
\begin{array}{cccccccc}
 2 & 0 & 0 & 0 & 0 & 0 & 0 & -1 \\
 0 & 2 & -1 & 0 & 0 & 0 & 0 & 0 \\
 0 & -1 & 2 & -1 & 0 & 0 & 0 & 0 \\
 0 & 0 & -1 & 2 & -1 & 0 & 0 & 0 \\
 0 & 0 & 0 & -1 & 2 & 0 & -1 & 0 \\
 0 & 0 & 0 & 0 & 0 & 2 & -1 & 0 \\
 0 & 0 & 0 & 0 & -1 & -1 & 2 & -1 \\
 -1 & 0 & 0 & 0 & 0 & 0 & -1 & 2 \\
\end{array}
\right),
\label{eq:cartE8}
\eeq
\beq
\mathscr{C}^{\mathfrak{e}_8^{(1)}}=
\left(
\begin{array}{cccccccccc}
2 & 0 & 1 & 0 & 0 & 0 & 0 & 0 & 0 \\
0 & 2 & 0 & 0 & 0 & 0 & 0 & 0 & -1 \\
1 & 0 & 2 & -1 & 0 & 0 & 0 & 0 & 0 \\
0 &  0 & -1 & 2 & -1 & 0 & 0 & 0 & 0 \\
0 & 0 & 0 & -1 & 2 & -1 & 0 & 0 & 0 \\
0 & 0 & 0 & 0 & -1 & 2 & 0 & -1 & 0 \\
0 & 0 & 0 & 0 & 0 & 0 & 2 & -1 & 0 \\
0 & 0 & 0 & 0 & 0 & -1 & -1 & 2 & -1 \\
0 & -1 & 0 & 0 & 0 & 0 & 0 & -1 & 2 \\
\end{array}
\right).
\label{eq:cartE81}
\eeq
The resulting simple Lie algebra for $\mathfrak{e}_8$ has rank 8 and dimension
240. In the $\alpha$-basis, \eqref{eq:roots}, the affine root
\eqref{eq:affroot} reads
\beq
\a_0 = \sum_i \mathfrak{d}_i \a_i= 2 \a_1 + 4 \a_2 +6\a_3 +5 \a_4 +
4\a_5+3\a_6 +2 \a_7+3\a_8.
\eeq
Since $\det\mathscr{C}^{\mathfrak{e}_8}=1$, we have
$\Lambda_r(\mathfrak{e}_8)\simeq \Lambda_w(\mathfrak{e}_8)$.

\subsection{On the minimal orbit of $\cW$}
\label{sec:minorb}
I group here the details of the minimal orbit of $\cW$ in $\Lambda_r \subset
\bbZ^{\mathsf{s}}$ generated by the adjoint weight $\omega_7$, in terms of $240$
vectors in a ${\mathsf{s}}=30$-dimensional lattice. Since the orbit is in bijection
with the set of non-zero roots of $\mathfrak{g}$, $\omega$ is in the orbit iff
$-\omega$ is; also the cyclic shift of the components in $\bbZ^{\mathsf{s}}$
corresponds to the action of the Coxeter element on the orbit, which is thus
preserved if we send $\omega \to (\omega_{j + 1\,\,{\rm
    mod}\,\,{\mathsf{s}}})_j$. The resulting $\bbZ_2 \times \bbZ_{30}$ action breaks
up the orbit into sub-orbits, representatives for which are
displayed\footnote{The entries in the $7^{\rm th}$ and $8^{\rm th}$ column of
  row $(\omega)_7$, as well as those of the $26^{\rm th}$ and $27^{\rm th}$
  column of row $(\omega)_{11}$ correct four typos in the table of \cite[Appendix~F.1]{Borot:2015fxa}.} in \cref{tab:minorb}.

\begin{table}[h]
\begin{tabular}{|c|c|c|c|c|c|c|c|c|c|c|c|c|}
\hline
$n_0$ & $\pm 6$ & $\pm 5$ & $\pm 4$ & $\pm 3$ & $\pm 3$ & $\pm 2$ & $\pm 2$ &
$\pm 1$ & $0$ & $0$ & $0$ & $0$ \\ \hline
card & $10$ & $12$ & $30$ & $20$ & $20$ & $30$ & $30$ & $60$ & $10$ & $10$ &
$6$ & $2$ \\ \hline
$(\omega)_1$ & $1$ & $-1$ & $-1$ & $-1$ & $1$ & $-1$ & $1$ & $2$ & $-1$ & $1$ & $1$ & $1$ \\ \hline
$(\omega)_2$ &  $0$ & $1$ & $1$ & $1$ & $-1$ & $1$ & $-1$ & $-1$ & $0$ & $0$ & $-1$ & $-1$ \\ \hline
$(\omega)_3$ &  $0$ & $0$ & $0$ & $0$ & $1$ & $1$ & $0$ & $0$ & $0$ & $-1$ & $0$ & $1$ \\ \hline
$(\omega)_4$ &  $0$ & $0$ & $0$ & $1$ & $-1$ & $-1$ & $1$ & $0$ & $0$ & $0$ & $1$ & $-1$ \\ \hline
$(\omega)_5$ &  $0$ & $0$ & $1$ & $-1$ & $1$ & $0$ & $-1$ & $0$ & $1$ & $0$ & $-1$ & $1$ \\ \hline
$(\omega)_6$ &  $1$ & $1$ & $-1$ & $0$ & $0$ & $0$ & $1$ & $0$ & $-1$ & $1$ & $0$ & $-1$ \\ \hline
$(\omega)_7$ &  $0$ & $-1$ & $0$ & $0$ & $0$ & $0$ & $0$ & $1$ & $0$ & $0$ & $1$ & $1$ \\ \hline
$(\omega)_8$ &  $0$ & $1$ & $1$ & $1$ & $0$ & $1$ & $-1$ & $-1$ & $0$ & $-1$ & $-1$ & $-1$ \\ \hline
$(\omega)_9$ &  $0$ & $0$ & $0$ & $0$ & $0$ & $0$ & $1$ & $0$ & $0$ & $0$ & $0$ & $1$ \\ \hline
$(\omega)_{10}$ &  $0$ & $0$ & $0$ & $0$ & $0$ & $-1$ & $0$ & $0$ & $1$ & $0$ & $1$ & $-1$ \\ \hline
$(\omega)_{11}$ &  $1$ & $0$ & $0$ & $-1$ & $1$ & $0$ & $0$ & $1$ & $-1$ & $1$ & $-1$ & $1$ \\ \hline
$(\omega)_{12}$ &  $0$ & $1$ & $0$ & $1$ & $-1$ & $1$ & $0$ & $-1$ & $0$ & $0$ & $0$ & $-1$ \\ \hline
$(\omega)_{13}$ &  $0$ & $-1$ & $0$ & $0$ & $1$ & $0$ & $0$ & $1$ & $0$ & $-1$ & $1$ & $1$ \\ \hline
$(\omega)_{14}$ &  $0$ & $1$ & $1$ & $1$ & $-1$ & $0$ & $0$ & $-1$ & $0$ & $0$ & $-1$ & $-1$ \\ \hline
$(\omega)_{15}$ &  $0$ & $0$ & $0$ & $-1$ & $1$ & $0$ & $0$ & $0$ & $1$ & $0$ & $0$ & $1$ \\ \hline
$(\omega)_{16}$ &  $1$ & $0$ & $-1$ & $0$ & $0$ & $-1$ & $1$ & $1$ & $-1$ & $1$ & $1$ & $-1$ \\ \hline
$(\omega)_{17}$ &  $0$ & $0$ & $1$ & $0$ & $0$ & $1$ & $-1$ & $0$ & $0$ & $0$ & $-1$ & $1$ \\ \hline
$(\omega)_{18}$ &  $0$ & $1$ & $0$ & $1$ & $0$ & $1$ & $0$ & $-1$ & $0$ & $-1$ & $0$ & $-1$ \\ \hline
$(\omega)_{19}$ &  $0$ & $-1$ & $0$ & $0$ & $0$ & $-1$ & $1$ & $1$ & $0$ & $0$ & $1$ & $1$ \\ \hline
$(\omega)_{20}$ &  $0$ & $1$ & $1$ & $0$ & $0$ & $0$ & $-1$ & $-1$ & $1$ & $0$ & $-1$ & $-1$ \\ \hline
$(\omega)_{21}$ &  $1$ & $0$ & $-1$ & $-1$ & $1$ & $0$ & $1$ & $1$ & $-1$ & $1$ & $0$ & $1$ \\ \hline
$(\omega)_{22}$ &  $0$ & $0$ & $0$ & $1$ & $-1$ & $0$ & $0$ & $0$ & $0$ & $0$ & $1$ & $-1$ \\ \hline
$(\omega)_{23}$ &  $0$ & $0$ & $1$ & $0$ & $1$ & $1$ & $-1$ & $0$ & $0$ & $-1$ & $-1$ & $1$ \\ \hline
$(\omega)_{24}$ &  $0$ & $1$ & $0$ & $1$ & $-1$ & $0$ & $1$ & $-1$ & $0$ & $0$ & $0$ & $-1$ \\ \hline
$(\omega)_{25}$ &  $0$ & $-1$ & $0$ & $-1$ & $1$ & $-1$ & $0$ & $1$ & $1$ & $0$ & $1$ & $1$ \\ \hline
$(\omega)_{26}$ &  $1$ & $1$ & $0$ & $0$ & $0$ & $0$ & $0$ & $0$ & $-1$ & $1$ & $-1$ & $-1$ \\ \hline
$(\omega)_{27}$ &  $0$ & $0$ & $0$ & $0$ & $0$ & $1$ & $0$ & $0$ & $0$ & $0$ & $0$ & $1$ \\ \hline
$(\omega)_{28}$ &  $0$ & $0$ & $0$ & $1$ & $0$ & $0$ & $0$ & $0$ & $0$ & $-1$ & $1$ & $-1$ \\ \hline
$(\omega)_{29}$ &  $0$ & $0$ & $1$ & $0$ & $0$ & $0$ & $0$ & $0$ & $0$ & $0$ & $-1$ & $1$ \\ \hline
$(\omega)_{30}$ &  $0$ & $1$ & $0$ & $0$ & $0$ & $0$ & $0$ & $-1$ & $1$ & $0$ & $0$ & $-1$ \\ \hline
\end{tabular}
\caption{$\bbZ_2\times \bbZ_{30}$ sub-orbits of the minimal orbit of $\cW$.}
\label{tab:minorb}
\end{table}

\subsection{The binary icosahedral group $\tilde{\mathbb{I}}$}
\label{sec:I120}

The binary icosahedral group $\tilde{\mathbb{I}}$ is the preimage of the symmetry group of a
regular icosahedron in $\mathbb{E}^3$ by the degree two covering map ${\rm SU}(2)\to
SO(3)$. It has a presentation as the group generated by the unit quaternions
\beq
s=\frac{1}{2}\l(1+i+j+k\r), \quad t= \frac{1}{2}\l(\phi+\phi^{-1}+i+j\r),
\eeq
whose full set of relations is $s^3=t^5=(st)^2$. The resulting group
$\tilde{\mathbb{I}}$ has order 120, exponent 60, and class order 9. Its
character table is given in \cref{tab:ctI120}.

\begin{table}
\begin{center}
\begin{tabular}{|c|c|c|c|c|c|c|c|c|c|}
\hline
 & $1_{1}$ & $1_{-1}$ & $30_1$ & $20_{1}$ & $20_{-1}$ & $12_{\phi}$ & $12_{\phi^{-1}}$ & $12_{-\phi}$ & $12_{-\phi^{-1}}$ \\
\hline
$\chi_{\rm id}$ & $1$& $1$ & $1$ & $1$ & $1$ & $1$ & $1$ & $1$ & $1$\\
\hline
$\chi_1$ & $2$& $-2$ & $0$ & $1$ & $-1$ & $\phi$ & $\phi^{-1}$ & $-\phi$ & $-\phi^{-1}$\\
\hline
$\chi_2$ & $4$& $-4$ & $0$ & $-1$ & $1$ & $1$ & $-1$ & $-1$ & $1$\\
\hline
$\chi_3$ & $6$& $-6$ & $0$ & $0$ & $0$ & $-1$ & $1$ & $1$ & $-1$\\
\hline
$\chi_5$ & $4$& $4$ & $0$ & $1$ & $1$ & $-1$ & $-1$ & $-1$ & $-1$\\
\hline
$\chi_6$ & $3$& $3$ & $-1$ & $0$ & $0$ & $-\phi^{-1}$ & $\phi$ & $-\phi^{-1}$ & $\phi$\\
\hline
$\chi_4$ & $5$& $5$ & $1$ & $-1$ & $-1$ & $0$ & $0$ & $0$ & $0$\\
\hline
$\chi_7$ & $2$& $-2$ & $0$ & $1$ & $-1$ & $-\phi^{-1}$ & $-\phi$ & $\phi^{-1}$
& $\phi$\\
\hline
$\chi_8$ & $3$& $3$ & $-1$ & $0$ & $0$ & $\phi$ & $-\phi^{-1}$ & $\phi$ & $-\phi^{-1}$\\
\hline
\end{tabular}
\end{center}
\caption{The character table of $\tilde{\mathbb{I}}$.  Conjugacy classes $[g]$
are denoted by their order and sign of the ${\rm SU}(2)$ character $\chi_1$ as $|[g]|_{\chi_1([g])}$.
}
\label{tab:ctI120}
\end{table}

\subsection{The prepotential of $\widetilde{X}_{\mathfrak{e}_8,3}$}
\label{sec:E8prep}

\bea
F_{\widetilde{X}_{\mathfrak{e}_8,3}} &=&
-\frac{7 \mathsf{t}_1^{10}}{135000000}+\frac{521 \re^{6 \mathsf{t}_0} \mathsf{t}_1^9}{2700000}+\frac{5117 \re^{12 \mathsf{t}_0} \mathsf{t}_1^8}{100000}+\frac{\re^{2 \mathsf{t}_0} \mathsf{t}_2 \mathsf{t}
   _1^8}{300000}+\frac{7 \mathsf{t}_3 \mathsf{t}_1^8}{15000000}+\frac{243}{250} \re^{18
  \mathsf{t}_0} \mathsf{t}_1^7 \nn \\ &+& \frac{136 \re^{8 \mathsf{t}_0} \mathsf{t}_2 \mathsf{t}_1^7}{1875} \nn
+ \frac{67 \re^{6 \mathsf{t}_0} \mathsf{t}_3
   \mathsf{t}_1^7}{75000}+\frac{\re^{3 \mathsf{t}_0} \mathsf{t}_4 \mathsf{t}_1^7}{1875}-\frac{\mathsf{t}_5 \mathsf{t}_1^7}{3937500}+\frac{1954}{375} \re^{24 \mathsf{t}_0} \mathsf{t}_1^6+\frac{7}{600} \re^{4 \mathsf{t}_0} \mathsf{t}
   _2^2 \mathsf{t}_1^6 \nn \\ &-& \frac{13 \mathsf{t}_3^2 \mathsf{t}_1^6}{7500000}+\frac{2401}{750}
\re^{14 \mathsf{t}_0} \mathsf{t}_2 \mathsf{t}_1^6 + \frac{2151 \re^{12 \mathsf{t}_0} \mathsf{t}_3 \mathsf{t}_1^6}{25000}+\frac{\re^{2 \mathsf{t}
   _0} \mathsf{t}_2 \mathsf{t}_3 \mathsf{t}_1^6}{25000}+\frac{243}{500} \re^{9
  \mathsf{t}_0} \mathsf{t}_4 \mathsf{t}_1^6+\frac{19 \re^{6 \mathsf{t}_0}
  \mathsf{t}_5 \mathsf{t}_1^6}{56250}\nn \\ &-& \frac{\re^{4 \mathsf{t}_0} \mathsf{t}_6 \mathsf{t}
   _1^6}{1000}+\frac{\mathsf{t}_7 \mathsf{t}_1^6}{18750000}+12 \re^{30 \mathsf{t}_0} \mathsf{t}_1^5
+ \frac{43}{12} \re^{10 \mathsf{t}_0} \mathsf{t}_2^2 \mathsf{t}_1^5+\frac{71 \re^{6 \mathsf{t}_0} \mathsf{t}_3^2 \mathsf{t}
   _1^5}{50000} \nn \\ &+& \frac{43}{2} \re^{20 \mathsf{t}_0} \mathsf{t}_2 \mathsf{t}_1^5+\frac{189}{250} \re^{18 \mathsf{t}_0} \mathsf{t}_3 \mathsf{t}_1^5+\frac{4}{25} \re^{8 \mathsf{t}_0} \mathsf{t}_2 \mathsf{t}_3 \mathsf{t}_1^5+7 \re^{15 \mathsf{t}
   _0} \mathsf{t}_4 \mathsf{t}_1^5 \nn \\ &+& \frac{1}{2} \re^{5 \mathsf{t}_0} \mathsf{t}_2 \mathsf{t}_4 \mathsf{t}_1^5+\frac{1}{250} \re^{3 \mathsf{t}_0} \mathsf{t}_3 \mathsf{t}_4 \mathsf{t}_1^5+\frac{67 \re^{12 \mathsf{t}_0} \mathsf{t}_5 \mathsf{t}
   _1^5}{6250}+\frac{\re^{2 \mathsf{t}_0} \mathsf{t}_2 \mathsf{t}_5 \mathsf{t}_1^5}{18750}+\frac{11
  \mathsf{t}_3 \mathsf{t}_5 \mathsf{t}_1^5}{4687500}\nn \\ &-& \frac{1}{12} \re^{10 \mathsf{t}_0} \mathsf{t}_6 \mathsf{t}_1^5+\frac{\re^{6 \mathsf{t}
   _0} \mathsf{t}_7 \mathsf{t}_1^5}{25000}+\frac{159}{10} \re^{36 \mathsf{t}_0} \mathsf{t}_1^4+\frac{117}{100} \re^{6 \mathsf{t}_0} \mathsf{t}_2^3 \mathsf{t}_1^4+\frac{7 \mathsf{t}_3^3 \mathsf{t}_1^4}{2500000}+\frac{592}{15}
   \re^{16 \mathsf{t}_0} \mathsf{t}_2^2 \mathsf{t}_1^4 \nn \\ &+& \frac{2557 \re^{12 \mathsf{t}_0} \mathsf{t}_3^2 \mathsf{t}_1^4}{50000}+\frac{7 \re^{2 \mathsf{t}_0} \mathsf{t}_2 \mathsf{t}_3^2 \mathsf{t}_1^4}{50000}+\frac{88}{25} \re^{6 \mathsf{t}
   _0} \mathsf{t}_4^2 \mathsf{t}_1^4-\frac{\mathsf{t}_5^2 \mathsf{t}_1^4}{703125}+\frac{507}{10}
   \re^{26 \mathsf{t}_0} \mathsf{t}_2 \mathsf{t}_1^4+\frac{229}{125} \re^{24 \mathsf{t}_0} \mathsf{t}_3
   \mathsf{t}_1^4 \nn \\ &+& \frac{23}{600}
   \re^{4 \mathsf{t}_0} \mathsf{t}_2^2 \mathsf{t}_3 \mathsf{t}_1^4+\frac{343}{125} \re^{14 \mathsf{t}_0} \mathsf{t}_2 \mathsf{t}_3 \mathsf{t}_1^4+\frac{98}{5} \re^{21 \mathsf{t}_0} \mathsf{t}_4 \mathsf{t}_1^4+\frac{1}{300} \re^{\mathsf{t}_0}
   \mathsf{t}_2^2 \mathsf{t}_4 \mathsf{t}_1^4+\frac{1331}{50} \re^{11 \mathsf{t}_0} \mathsf{t}_2 \mathsf{t}_4
   \mathsf{t}_1^4 \nn \\ &+& \frac{459}{500} \re^{9 \mathsf{t}_0} \mathsf{t}_3 \mathsf{t}_4 \mathsf{t}_1^4+\frac{9}{250} \re^{18 \mathsf{t}_0}
   \mathsf{t}_5 \mathsf{t}_1^4+\frac{76 \re^{8 \mathsf{t}_0} \mathsf{t}_2 \mathsf{t}_5 \mathsf{t}_1^4}{1875}+\frac{11 \re^{6 \mathsf{t}_0} \mathsf{t}_3 \mathsf{t}_5 \mathsf{t}_1^4}{9375}+\frac{17 \re^{3 \mathsf{t}_0} \mathsf{t}_4 \mathsf{t}_5 \mathsf{t}
   _1^4}{3750}\nn \\ &-&\frac{4}{15} \re^{16 \mathsf{t}_0} \mathsf{t}_6 \mathsf{t}_1^4 - \frac{27}{100} \re^{6 \mathsf{t}_0} \mathsf{t}_2 \mathsf{t}_6 \mathsf{t}_1^4-\frac{19 \re^{4 \mathsf{t}_0} \mathsf{t}_3 \mathsf{t}_6 \mathsf{t}
   _1^4}{3000}-\frac{1}{300} \re^{\mathsf{t}_0} \mathsf{t}_4 \mathsf{t}_6 \mathsf{t}_1^4+\frac{\re^{12 \mathsf{t}_0} \mathsf{t}_7 \mathsf{t}_1^4}{25000}+\frac{\re^{2 \mathsf{t}_0} \mathsf{t}_2 \mathsf{t}_7 \mathsf{t}
   _1^4}{25000}\nn \\ &-& \frac{\mathsf{t}_3 \mathsf{t}_7 \mathsf{t}_1^4}{1250000} + 10 \re^{42 \mathsf{t}_0} \mathsf{t}_1^3+\frac{1}{45} \re^{2 \mathsf{t}_0} \mathsf{t}_2^4 \mathsf{t}_1^3+\frac{159}{5} \re^{12 \mathsf{t}_0} \mathsf{t}_2^3 \mathsf{t}
   _1^3+\frac{19 \re^{6 \mathsf{t}_0} \mathsf{t}_3^3 \mathsf{t}_1^3}{25000}+\frac{484}{5} \re^{22
     \mathsf{t}_0} \mathsf{t}_2^2 \mathsf{t}_1^3  \nn \\ &+& \frac{9}{50} \re^{18 \mathsf{t}_0} \mathsf{t}_3^2 \mathsf{t}
   _1^3+ \frac{54}{625} \re^{8
   \mathsf{t}_0} \mathsf{t}_2 \mathsf{t}_3^2 \mathsf{t}_1^3+\frac{248}{5} \re^{12 \mathsf{t}_0} \mathsf{t}_4^2 \mathsf{t}_1^3+\frac{4}{5} \re^{2 \mathsf{t}_0} \mathsf{t}_2 \mathsf{t}_4^2 \mathsf{t}_1^3+\frac{2 \re^{6 \mathsf{t}_0} \mathsf{t}_5^2
   \mathsf{t}_1^3}{5625}+\frac{1}{45} \re^{2 \mathsf{t}_0} \mathsf{t}_6^2
   \mathsf{t}_1^3  \nn \\ &+& 48 \re^{32
     \mathsf{t}_0} \mathsf{t}_2 \mathsf{t}_1^3 + 2 \re^{30 \mathsf{t}_0} \mathsf{t}_3 \mathsf{t}_1^3+\frac{19}{6} \re^{10 \mathsf{t}_0} \mathsf{t}_2^2
   \mathsf{t}_3 \mathsf{t}_1^3+9 \re^{20 \mathsf{t}_0} \mathsf{t}_2 \mathsf{t}_3
   \mathsf{t}_1^3+18 \re^{27 \mathsf{t}_0} \mathsf{t}_4
   \mathsf{t}_1^3+\frac{343}{15} \re^{7 \mathsf{t}_0} \mathsf{t}_2^2
   \mathsf{t}_4 \mathsf{t}_1^3  \nn \\ &+& \frac{4}{625} \re^{3
   \mathsf{t}_0} \mathsf{t}_3^2 \mathsf{t}_4 \mathsf{t}_1^3 + \frac{578}{5} \re^{17 \mathsf{t}_0} \mathsf{t}_2 \mathsf{t}_4 \mathsf{t}_1^3+6 \re^{15 \mathsf{t}_0} \mathsf{t}_3 \mathsf{t}_4 \mathsf{t}_1^3+\re^{5 \mathsf{t}_0} \mathsf{t}_2 \mathsf{t}_3 \mathsf{t}_4
   \mathsf{t}_1^3+\frac{4}{375} \re^{24 \mathsf{t}_0} \mathsf{t}_5
   \mathsf{t}_1^3 \nn \\ &+& \frac{7}{375} \re^{4
     \mathsf{t}_0} \mathsf{t}_2^2 \mathsf{t}_5 \mathsf{t}_1^3 - \frac{\mathsf{t}_3^2 \mathsf{t}_5 \mathsf{t}
     _1^3}{187500} + \frac{98}{375}
   \re^{14 \mathsf{t}_0} \mathsf{t}_2 \mathsf{t}_5 \mathsf{t}_1^3+\frac{51 \re^{12
       \mathsf{t}_0} \mathsf{t}_3 \mathsf{t}_5
     \mathsf{t}_1^3}{3125}+\frac{\re^{2 \mathsf{t}_0} \mathsf{t}_2 \mathsf{t}_3
     \mathsf{t}_5 \mathsf{t}_1^3}{3125} \nn \\ &+& \frac{36}{125}
   \re^{9 \mathsf{t}_0} \mathsf{t}_4 \mathsf{t}_5 \mathsf{t}_1^3-\frac{2}{45} \re^{2 \mathsf{t}_0} \mathsf{t}
   _2^2 \mathsf{t}_6 \mathsf{t}_1^3 - 3 \re^{12 \mathsf{t}_0} \mathsf{t}_2 \mathsf{t}_6 \mathsf{t}_1^3-\frac{1}{6} \re^{10 \mathsf{t}_0} \mathsf{t}_3
   \mathsf{t}_6 \mathsf{t}_1^3-\frac{49}{15} \re^{7 \mathsf{t}_0} \mathsf{t}_4
   \mathsf{t}_6 \mathsf{t}_1^3 \nn \\ &-& \frac{1}{125} \re^{4 \mathsf{t}_0} \mathsf{t}_5 \mathsf{t}_6 \mathsf{t}_1^3+\frac{2}{625} \re^{8 \mathsf{t}_0} \mathsf{t}_2 \mathsf{t}_7
   \mathsf{t}_1^3+\frac{3 \re^{6 \mathsf{t}_0} \mathsf{t}_3 \mathsf{t}_7 \mathsf{t}_1^3}{12500}+\frac{53 \re^{12 \mathsf{t}_0} \mathsf{t}_3^4}{100000}+90 \re^{9
     \mathsf{t}_0} \mathsf{t}_2^3 \mathsf{t}_4 \mathsf{t}_1 \nn
   \\ &+& \frac{2}{625} \re^{3 \mathsf{t}_0} \mathsf{t}_4 \mathsf{t}_7 \mathsf{t}_1^3+\frac{\mathsf{t}_5 \mathsf{t}_7 \mathsf{t}_1^3}{468750}+\frac{15}{2}
   \re^{48 \mathsf{t}_0} \mathsf{t}_1^2+\frac{34}{3} \re^{8 \mathsf{t}_0} \mathsf{t}_2^4 \mathsf{t}
   _1^2-\frac{9 \mathsf{t}_3^4 \mathsf{t}_1^2}{5000000}+99 \re^{18 \mathsf{t}_0} \mathsf{t}_2^3 \mathsf{t}
   _1^2 \nn \\ &+& \frac{153 \re^{12 \mathsf{t}_0}
   \mathsf{t}_3^3 \mathsf{t}_1^2}{25000}+\frac{3 \re^{2 \mathsf{t}_0} \mathsf{t}_2 \mathsf{t}_3^3 \mathsf{t}_1^2}{25000}+6 \re^{3 \mathsf{t}_0} \mathsf{t}_4^3 \mathsf{t}_1^2+105 \re^{28 \mathsf{t}_0} \mathsf{t}_2^2 \mathsf{t}
   _1^2+\frac{81}{250} \re^{24 \mathsf{t}_0} \mathsf{t}_3^2 \mathsf{t}_1^2+\frac{29 \re^{4 \mathsf{t}
       _0} \mathsf{t}_2^2 \mathsf{t}_3^2 \mathsf{t}_1^2}{1000} \nn \\ &+& \frac{147}{250} \re^{14 \mathsf{t}_0} \mathsf{t}_2 \mathsf{t}_3^2 \mathsf{t}
   _1^2+108 \re^{18 \mathsf{t}_0} \mathsf{t}_4^2 \mathsf{t}_1^2+96 \re^{8 \mathsf{t}_0} \mathsf{t}_2 \mathsf{t}_4^2 \mathsf{t}_1^2+\frac{78}{25} \re^{6 \mathsf{t}_0} \mathsf{t}_3 \mathsf{t}_4^2 \mathsf{t}_1^2+\frac{9 \re^{12 \mathsf{t}_0} \mathsf{t}
   _5^2 \mathsf{t}_1^2}{3125} \nn \\ &+&\frac{2 \re^{2 \mathsf{t}_0} \mathsf{t}_2 \mathsf{t}_5^2 \mathsf{t}
     _1^2}{9375}+ \frac{\mathsf{t}_3 \mathsf{t}_5^2 \mathsf{t}_1^2}{234375}+\frac{5}{6} \re^{8 \mathsf{t}_0} \mathsf{t}_6^2 \mathsf{t}
   _1^2-\frac{3 \mathsf{t}_7^2 \mathsf{t}_1^2}{1250000}+30 \re^{38 \mathsf{t}_0} \mathsf{t}_2 \mathsf{t}_1^2+\frac{3}{5} \re^{36 \mathsf{t}_0} \mathsf{t}_3 \mathsf{t}_1^2 \nn \\ &+& \frac{56}{5} \re^{16 \mathsf{t}_0} \mathsf{t}_2^2
   \mathsf{t}_3 \mathsf{t}_1^2 + \frac{39}{5} \re^{26 \mathsf{t}_0} \mathsf{t}_2 \mathsf{t}_3 \mathsf{t}_1^2+3 \re^{3 \mathsf{t}_0} \mathsf{t}_2^3 \mathsf{t}_4 \mathsf{t}_1^2+169 \re^{13
   \mathsf{t}_0} \mathsf{t}_2^2 \mathsf{t}_4 \mathsf{t}_1^2+\frac{27}{100} \re^{9 \mathsf{t}_0} \mathsf{t}_3^2
   \mathsf{t}_4 \mathsf{t}_1^2 \nn \\ &+& 138 \re^{23 \mathsf{t}_0} \mathsf{t}_2 \mathsf{t}_4 \mathsf{t}_1^2 + \frac{42}{5} \re^{21 \mathsf{t}_0} \mathsf{t}
   _3 \mathsf{t}_4 \mathsf{t}_1^2+\frac{1}{50} \re^{\mathsf{t}_0} \mathsf{t}_2^2 \mathsf{t}_3 \mathsf{t}_4 \mathsf{t}_1^2+\frac{363}{25} \re^{11 \mathsf{t}_0} \mathsf{t}_2 \mathsf{t}_3 \mathsf{t}_4 \mathsf{t}_1^2\nn \\ &+&\frac{7}{15} \re^{10 \mathsf{t}_0}
   \mathsf{t}_2^2 \mathsf{t}_5 \mathsf{t}_1^2+\frac{\re^{6 \mathsf{t}_0} \mathsf{t}_3^2 \mathsf{t}_5 \mathsf{t}
     _1^2}{1250}+ \frac{2}{5} \re^{20 \mathsf{t}_0} \mathsf{t}_2 \mathsf{t}_5 \mathsf{t}_1^2+\frac{3}{125} \re^{18 \mathsf{t}_0} \mathsf{t}_3
   \mathsf{t}_5 \mathsf{t}_1^2+\frac{4}{125} \re^{8 \mathsf{t}_0} \mathsf{t}_2
   \mathsf{t}_3 \mathsf{t}_5 \mathsf{t}_1^2 \nn \\ &+& \frac{4}{5} \re^{15 \mathsf{t}_0}
   \mathsf{t}_4 \mathsf{t}_5 \mathsf{t}_1^2 + \frac{2}{5} \re^{5 \mathsf{t}_0} \mathsf{t}_2 \mathsf{t}
   _4 \mathsf{t}_5 \mathsf{t}_1^2+ \frac{1}{125} \re^{3 \mathsf{t}_0} \mathsf{t}_3 \mathsf{t}_4 \mathsf{t}_5 \mathsf{t}_1^2-\frac{14}{3} \re^{8 \mathsf{t}_0} \mathsf{t}_2^2 \mathsf{t}_6 \mathsf{t}_1^2-\frac{1}{200} \re^{4 \mathsf{t}_0} \mathsf{t}
   _3^2 \mathsf{t}_6 \mathsf{t}_1^2\nn \\ &-& 3 \re^{18 \mathsf{t}_0} \mathsf{t}_2 \mathsf{t}_6 \mathsf{t}
   _1^2-\frac{2}{5} \re^{16 \mathsf{t}_0} \mathsf{t}_3 \mathsf{t}_6 \mathsf{t}_1^2- \frac{21}{50} \re^{6 \mathsf{t}_0} \mathsf{t}_2 \mathsf{t}_3 \mathsf{t}_6
   \mathsf{t}_1^2-13 \re^{13 \mathsf{t}_0} \mathsf{t}_4 \mathsf{t}_6 \mathsf{t}_1^2-3 \re^{3 \mathsf{t}_0} \mathsf{t}_2 \mathsf{t}_4 \mathsf{t}_6 \mathsf{t}_1^2\nn \\ &-&\frac{1}{50} \re^{\mathsf{t}_0} \mathsf{t}_3 \mathsf{t}_4 \mathsf{t}_6 \mathsf{t}
   _1^2-\frac{1}{15} \re^{10 \mathsf{t}_0} \mathsf{t}_5 \mathsf{t}_6
   \mathsf{t}_1^2 + \frac{3}{500} \re^{4 \mathsf{t}_0} \mathsf{t}_2^2
   \mathsf{t}_7 \mathsf{t}_1^2+\frac{3 \mathsf{t}_3^2 \mathsf{t}_7
     \mathsf{t}_1^2}{1250000}-\frac{1}{50}
   \re^{6 \mathsf{t}_0} \mathsf{t}_2 \mathsf{t}_6 \mathsf{t}_7\nn \\ &+& \frac{3
   \re^{12 \mathsf{t}_0} \mathsf{t}_3 \mathsf{t}_7 \mathsf{t}_1^2}{12500}+\frac{3 \re^{2 \mathsf{t}_0} \mathsf{t}_2 \mathsf{t}_3 \mathsf{t}_7 \mathsf{t}_1^2}{12500}+\frac{\re^{12 \mathsf{t}_0} \mathsf{t}_3^2 \mathsf{t}_5 \mathsf{t}_1}{6250}+\frac{51}{50} \re^{6 \mathsf{t}_0} \mathsf{t}_2^3 \mathsf{t}_3
   \mathsf{t}_1^2+\frac{1}{50} \re^{\mathsf{t}_0}
   \mathsf{t}_2^2 \mathsf{t}_4 \mathsf{t}_7\nn \\ &+&\frac{9}{250} \re^{9 \mathsf{t}_0} \mathsf{t}_4 \mathsf{t}_7 \mathsf{t}
   _1^2+\frac{\re^{6 \mathsf{t}_0} \mathsf{t}_5 \mathsf{t}_7 \mathsf{t}_1^2}{3125} - \frac{3}{500} \re^{4 \mathsf{t}_0} \mathsf{t}_6 \mathsf{t}_7 \mathsf{t}_1^2+\frac{5}{6} \re^{4 \mathsf{t}_0} \mathsf{t}_2^5 \mathsf{t}_1+\frac{245}{6}
   \re^{14 \mathsf{t}_0} \mathsf{t}_2^4 \mathsf{t}_1+\frac{3 \re^{6
       \mathsf{t}_0} \mathsf{t}_3^4 \mathsf{t}_1}{100000} \nn \\ &+& 60 \re^{24 \mathsf{t}_0} \mathsf{t}_2^3 \mathsf{t}_1+\frac{3}{250} \re^{18 \mathsf{t}_0} \mathsf{t}_3^3 \mathsf{t}
   _1 + \frac{9}{625} \re^{8 \mathsf{t}_0} \mathsf{t}_2 \mathsf{t}_3^3 \mathsf{t}_1+60 \re^{9 \mathsf{t}_0} \mathsf{t}_4^3 \mathsf{t}_1-\frac{2 \mathsf{t}_5^3 \mathsf{t}_1}{2109375}+30 \re^{34 \mathsf{t}_0} \mathsf{t}_2^2 \mathsf{t}
   _1 \nn \\ &+& \frac{7}{20} \re^{10 \mathsf{t}_0} \mathsf{t}_2^2 \mathsf{t}_3^2 \mathsf{t}_1+\frac{3}{10}
   \re^{20 \mathsf{t}_0} \mathsf{t}_2 \mathsf{t}_3^2 \mathsf{t}_1 + 60 \re^{24 \mathsf{t}_0} \mathsf{t}_4^2 \mathsf{t}_1+40 \re^{4 \mathsf{t}_0} \mathsf{t}
   _2^2 \mathsf{t}_4^2 \mathsf{t}_1+210 \re^{14 \mathsf{t}_0} \mathsf{t}_2
   \mathsf{t}_4^2 \mathsf{t}_1 \nn \\ &+& \frac{42}{5} \re^{12 \mathsf{t}_0} \mathsf{t}_3 \mathsf{t}_4^2 \mathsf{t}_1+\frac{6}{5} \re^{2 \mathsf{t}_0} \mathsf{t}_2 \mathsf{t}_3 \mathsf{t}_4^2
   \mathsf{t}_1 + \frac{4 \re^{8 \mathsf{t}_0} \mathsf{t}_2 \mathsf{t}_5^2
     \mathsf{t}_1}{1875}+\frac{2 \re^{6 \mathsf{t}_0} \mathsf{t}_3
     \mathsf{t}_5^2 \mathsf{t}_1}{9375} +6 \re^{27 \mathsf{t}_0} \mathsf{t}_3 \mathsf{t}_4 \mathsf{t}_1\nn \\ &+& \frac{4 \re^{3 \mathsf{t}_0} \mathsf{t}_4 \mathsf{t}_5^2 \mathsf{t}
   _1}{1875}+\frac{5}{6} \re^{14 \mathsf{t}_0} \mathsf{t}_6^2 \mathsf{t}_1+\frac{5}{6} \re^{4
     \mathsf{t}_0} \mathsf{t}_2 \mathsf{t}_6^2 \mathsf{t}_1+\frac{1}{30} \re^{2 \mathsf{t}_0} \mathsf{t}_3
   \mathsf{t}_6^2 \mathsf{t}_1 + +80 \re^{15 \mathsf{t}_0} \mathsf{t}_2^3
   \mathsf{t}_4 \nn \\ &+& \frac{3 \re^{6
   \mathsf{t}_0} \mathsf{t}_7^2 \mathsf{t}_1}{25000}+\frac{1}{30} \re^{2 \mathsf{t}_0} \mathsf{t}_2^4 \mathsf{t}_3 \mathsf{t}_1+\frac{41}{5} \re^{12 \mathsf{t}_0} \mathsf{t}_2^3 \mathsf{t}_3 \mathsf{t}_1+\frac{66}{5} \re^{22 \mathsf{t}_0}
   \mathsf{t}_2^2 \mathsf{t}_3 \mathsf{t}_1+6 \re^{32 \mathsf{t}_0} \mathsf{t}_2 \mathsf{t}_3 \mathsf{t}_1 \nn \\ &+& \frac{3 \re^{3 \mathsf{t}_0} \mathsf{t}_3^3 \mathsf{t}_4 \mathsf{t}_1}{1250}+190
   \re^{19 \mathsf{t}_0} \mathsf{t}_2^2 \mathsf{t}_4 \mathsf{t}_1+\frac{3}{5} \re^{15 \mathsf{t}_0} \mathsf{t}_3^2 \mathsf{t}_4 \mathsf{t}_1+\frac{3}{10} \re^{5 \mathsf{t}_0} \mathsf{t}_2 \mathsf{t}_3^2 \mathsf{t}_4 \mathsf{t}_1+60 \re^{29 \mathsf{t}_0}
   \mathsf{t}_2 \mathsf{t}_4 \mathsf{t}_1 \nn \\ &+&
   \frac{49}{5} \re^{7 \mathsf{t}_0} \mathsf{t}_2^2 \mathsf{t}_3 \mathsf{t}_4 \mathsf{t}_1+\frac{102}{5} \re^{17 \mathsf{t}_0} \mathsf{t}_2 \mathsf{t}_3
   \mathsf{t}_4 \mathsf{t}_1+\frac{4}{25} \re^{6 \mathsf{t}_0} \mathsf{t}_2^3 \mathsf{t}_5 \mathsf{t}_1+\frac{\mathsf{t}_3^3 \mathsf{t}_5 \mathsf{t}_1}{312500}+\frac{8}{15} \re^{16 \mathsf{t}_0} \mathsf{t}_2^2 \mathsf{t}_5 \mathsf{t}
   _1 \nn \\ &+& \frac{\re^{2 \mathsf{t}_0} \mathsf{t}_2 \mathsf{t}_3^2 \mathsf{t}_5 \mathsf{t}_1}{6250}+\frac{14}{25} \re^{6 \mathsf{t}_0} \mathsf{t}_4^2 \mathsf{t}_5 \mathsf{t}
   _1+\frac{2}{125} \re^{24 \mathsf{t}_0} \mathsf{t}_3 \mathsf{t}_5 \mathsf{t}_1+\frac{7}{375} \re^{4 \mathsf{t}_0} \mathsf{t}_2^2 \mathsf{t}_3 \mathsf{t}_5 \mathsf{t}_1+\frac{14}{125} \re^{14 \mathsf{t}_0} \mathsf{t}_2 \mathsf{t}_3 \mathsf{t}_5
   \mathsf{t}_1 \nn \\ &+& \frac{4}{5} \re^{21 \mathsf{t}_0} \mathsf{t}_4 \mathsf{t}_5 \mathsf{t}_1+\frac{2}{75} \re^{\mathsf{t}_0} \mathsf{t}_2^2 \mathsf{t}_4 \mathsf{t}_5 \mathsf{t}_1+\frac{44}{25} \re^{11 \mathsf{t}_0} \mathsf{t}_2 \mathsf{t}_4 \mathsf{t}_5
   \mathsf{t}_1+\frac{12}{125} \re^{9 \mathsf{t}_0} \mathsf{t}_3 \mathsf{t}_4 \mathsf{t}_5 \mathsf{t}
   _1-\frac{5}{3} \re^{4 \mathsf{t}_0} \mathsf{t}_2^3 \mathsf{t}_6 \mathsf{t}_1 \nn \\ &-& \frac{35}{3} \re^{14 \mathsf{t}_0} \mathsf{t}_2^2 \mathsf{t}_6 \mathsf{t}
   _1-\frac{1}{20} \re^{10 \mathsf{t}_0} \mathsf{t}_3^2 \mathsf{t}_6 \mathsf{t}_1-10 \re^{4 \mathsf{t}_0} \mathsf{t}_4^2 \mathsf{t}_6 \mathsf{t}_1-\frac{1}{15} \re^{2 \mathsf{t}_0} \mathsf{t}_2^2 \mathsf{t}_3 \mathsf{t}_6 \mathsf{t}_1-\re^{12 \mathsf{t}
   _0} \mathsf{t}_2 \mathsf{t}_3 \mathsf{t}_6 \mathsf{t}_1 \nn \\ &-& 10 \re^{19 \mathsf{t}_0} \mathsf{t}_4 \mathsf{t}_6 \mathsf{t}_1-30 \re^{9 \mathsf{t}_0} \mathsf{t}_2 \mathsf{t}_4 \mathsf{t}_6 \mathsf{t}_1-\frac{7}{5} \re^{7 \mathsf{t}_0} \mathsf{t}_3 \mathsf{t}_4 \mathsf{t}
   _6 \mathsf{t}_1-\frac{2}{15} \re^{16 \mathsf{t}_0} \mathsf{t}_5 \mathsf{t}_6 \mathsf{t}_1-\frac{4}{25}
   \re^{6 \mathsf{t}_0} \mathsf{t}_2 \mathsf{t}_5 \mathsf{t}_6 \mathsf{t}_1 \nn \\ &-& \frac{1}{375} \re^{4 \mathsf{t}_0} \mathsf{t}_3 \mathsf{t}_5 \mathsf{t}_6
   \mathsf{t}_1-\frac{2}{75} \re^{\mathsf{t}_0} \mathsf{t}_4 \mathsf{t}_5 \mathsf{t}_6 \mathsf{t}_1+\frac{1}{50} \re^{10 \mathsf{t}_0} \mathsf{t}_2^2 \mathsf{t}_7 \mathsf{t}_1+\frac{3 \re^{6 \mathsf{t}_0} \mathsf{t}_3^2 \mathsf{t}_7 \mathsf{t}
   _1}{25000}+\frac{3}{625} \re^{8 \mathsf{t}_0} \mathsf{t}_2 \mathsf{t}_3 \mathsf{t}_7 \mathsf{t}_1 \nn
   \\ &+& \frac{3}{25} \re^{5 \mathsf{t}_0} \mathsf{t}_2 \mathsf{t}_4 \mathsf{t}_7 \mathsf{t}_1+\frac{3}{625} \re^{3 \mathsf{t}_0} \mathsf{t}_3 \mathsf{t}_4
   \mathsf{t}_7 \mathsf{t}_1+\frac{\re^{12 \mathsf{t}_0} \mathsf{t}_5 \mathsf{t}_7 \mathsf{t}_1}{3125}+\frac{\re^{2 \mathsf{t}_0} \mathsf{t}_2 \mathsf{t}_5 \mathsf{t}_7 \mathsf{t}_1}{3125}-\frac{\mathsf{t}_3 \mathsf{t}_5 \mathsf{t}_7 \mathsf{t}
   _1}{156250} \nn \\ &-& \frac{1}{50} \re^{10 \mathsf{t}_0} \mathsf{t}_6 \mathsf{t}_7 \mathsf{t}_1 + \frac{3}{125} \mathsf{t}_7 \mathsf{t}_8 \mathsf{t}_1+\re^{60 \mathsf{t}_0}-\frac{\mathsf{t}_2^6}{324}+\frac{23}{3} \re^{10 \mathsf{t}_0} \mathsf{t}
   _2^5+\frac{\mathsf{t}_3^5}{5000000}+\frac{185}{6} \re^{20 \mathsf{t}_0} \mathsf{t}
   _2^4  \nn \\ &+& \frac{3 \re^{2 \mathsf{t}_0} \mathsf{t}
     _2 \mathsf{t}_3^4}{100000}- \frac{5
   \mathsf{t}_4^4}{8}+20 \re^{30 \mathsf{t}_0} \mathsf{t}_2^3+\frac{\re^{4 \mathsf{t}_0} \mathsf{t}_2^2 \mathsf{t}_3^3}{1000}+20 \re^{15 \mathsf{t}_0} \mathsf{t}_4^3+60 \re^{5 \mathsf{t}_0} \mathsf{t}_2 \mathsf{t}_4^3+2 \re^{3 \mathsf{t}_0} \mathsf{t}
   _3 \mathsf{t}_4^3\nn \\ &-&\frac{5 \mathsf{t}
     _6^3}{324}  15 \re^{40 \mathsf{t}_0} \mathsf{t}_2^2+\frac{1}{10} \re^{36 \mathsf{t}_0} \mathsf{t}_3^2+\frac{11}{100} \re^{6 \mathsf{t}_0}
   \mathsf{t}_2^3 \mathsf{t}_3^2+\frac{3}{5} \re^{16 \mathsf{t}_0}
   \mathsf{t}_2^2 \mathsf{t}_3^2+\frac{3}{10} \re^{26 \mathsf{t}_0}
   \mathsf{t}_2 \mathsf{t}_3^2+30 \re^{30 \mathsf{t}_0} \mathsf{t}_4^2 \nn
   \\ &+& 130 \re^{10 \mathsf{t}_0} \mathsf{t}_2^2
   \mathsf{t}_4^2  \frac{9}{25} \re^{6 \mathsf{t}_0} \mathsf{t}_3^2 \mathsf{t}_4^2+60 \re^{20 \mathsf{t}_0} \mathsf{t}_2 \mathsf{t}_4^2+6 \re^{18 \mathsf{t}_0} \mathsf{t}_3 \mathsf{t}_4^2+12 \re^{8 \mathsf{t}_0} \mathsf{t}_2 \mathsf{t}_3 \mathsf{t}
   _4^2+\frac{1}{375} \re^{24 \mathsf{t}_0} \mathsf{t}_5^2 \nn \\ &-& \frac{\mathsf{t}_3^2 \mathsf{t}_5^2}{468750} + \frac{2}{375} \re^{14 \mathsf{t}_0} \mathsf{t}_2 \mathsf{t}
   _5^2+\frac{5}{6} \re^{20 \mathsf{t}_0} \mathsf{t}_6^2-\frac{5}{108} \mathsf{t}_2^2 \mathsf{t}_6^2+\frac{5}{3} \re^{10 \mathsf{t}_0} \mathsf{t}_2 \mathsf{t}_6^2+\frac{5}{3} \re^{5 \mathsf{t}_0} \mathsf{t}_4 \mathsf{t}
   _6^2+\frac{1}{90} \re^{2 \mathsf{t}_0} \mathsf{t}_5 \mathsf{t}_6^2 \nn \\ &+& \frac{3 \re^{12 \mathsf{t}_0} \mathsf{t}_7^2}{25000}+\frac{3 \re^{2 \mathsf{t}_0} \mathsf{t}_2 \mathsf{t}_7^2}{25000}+\frac{3 \mathsf{t}_3 \mathsf{t}
   _7^2}{1250000}+30 \mathsf{t}_0 \mathsf{t}_8^2+\re^{8 \mathsf{t}_0} \mathsf{t}_2^4 \mathsf{t}_3+3 \re^{18 \mathsf{t}_0} \mathsf{t}_2^3 \mathsf{t}_3+\frac{35}{3} \re^{5 \mathsf{t}_0} \mathsf{t}_2^4
   \mathsf{t}_4 \nn \\ &+& \frac{3}{500} \re^{9 \mathsf{t}_0} \mathsf{t}_3^3 \mathsf{t}_4+60 \re^{25 \mathsf{t}_0} \mathsf{t}_2^2 \mathsf{t}_4+\frac{1}{100} \re^{\mathsf{t}_0} \mathsf{t}_2^2 \mathsf{t}_3^2
   \mathsf{t}_4+\frac{33}{50} \re^{11 \mathsf{t}_0} \mathsf{t}_2 \mathsf{t}_3^2 \mathsf{t}_4+\re^{3 \mathsf{t}
     _0} \mathsf{t}_2^3 \mathsf{t}_3 \mathsf{t}_4
   \nn \\ &+& 6 \re^{23 \mathsf{t}_0} \mathsf{t}_2
   \mathsf{t}_3 \mathsf{t}_4+\frac{1}{90} \re^{2 \mathsf{t}_0} \mathsf{t}_2^4 \mathsf{t}_5+\frac{2}{5} \re^{12 \mathsf{t}_0} \mathsf{t}_2^3 \mathsf{t}_5+\frac{2}{5} \re^{22 \mathsf{t}_0} \mathsf{t}_2^2 \mathsf{t}_5+\frac{1}{250} \re^{18
   \mathsf{t}_0} \mathsf{t}_3^2 \mathsf{t}_5 \nn \\ &+& \frac{4}{5} \re^{12 \mathsf{t}_0} \mathsf{t}_4^2 \mathsf{t}_5+\frac{2}{5} \re^{2 \mathsf{t}_0} \mathsf{t}_2 \mathsf{t}_4^2
   \mathsf{t}_5+\frac{1}{75} \re^{10 \mathsf{t}_0} \mathsf{t}_2^2 \mathsf{t}_3 \mathsf{t}_5+\frac{14}{15} \re^{7 \mathsf{t}_0} \mathsf{t}_2^2 \mathsf{t}_4 \mathsf{t}_5+\frac{\re^{3 \mathsf{t}_0} \mathsf{t}_3^2 \mathsf{t}_4 \mathsf{t}
   _5}{1250} \nn \\ &+& \frac{2}{25} \re^{5 \mathsf{t}_0} \mathsf{t}_2 \mathsf{t}_3 \mathsf{t}_4 \mathsf{t}_5-\frac{5}{324} \mathsf{t}_2^4 \mathsf{t}_6-\frac{10}{3} \re^{10
   \mathsf{t}_0} \mathsf{t}_2^3 \mathsf{t}_6-\frac{\re^{4 \mathsf{t}_0} \mathsf{t}_3^3 \mathsf{t}
     _6}{1000}-\frac{5}{3} \re^{20 \mathsf{t}_0} \mathsf{t}_2^2 \mathsf{t}_6-\frac{1}{100} \re^{6
     \mathsf{t}_0} \mathsf{t}_2 \mathsf{t}_3^2 \mathsf{t}_6 \nn \\ &-& 10
   \re^{10 \mathsf{t}_0} \mathsf{t}_4^2 \mathsf{t}_6-\re^{8 \mathsf{t}_0} \mathsf{t}_2^2 \mathsf{t}_3 \mathsf{t}_6-\re^{18 \mathsf{t}_0} \mathsf{t}_2 \mathsf{t}_3 \mathsf{t}_6-\frac{40}{3} \re^{5 \mathsf{t}_0} \mathsf{t}_2^2 \mathsf{t}_4 \mathsf{t}
   _6-\frac{1}{100} \re^{\mathsf{t}_0} \mathsf{t}_3^2 \mathsf{t}_4 \mathsf{t}_6-20 \re^{15 \mathsf{t}_0}
   \mathsf{t}_2 \mathsf{t}_4 \mathsf{t}_6 \nn \\ &-& \re^{13 \mathsf{t}_0} \mathsf{t}_3 \mathsf{t}_4 \mathsf{t}_6-\re^{3 \mathsf{t}_0} \mathsf{t}_2 \mathsf{t}_3 \mathsf{t}_4
   \mathsf{t}_6-\frac{1}{45} \re^{2 \mathsf{t}_0} \mathsf{t}_2^2 \mathsf{t}_5 \mathsf{t}_6-\frac{1}{75} \re^{10 \mathsf{t}_0} \mathsf{t}_3 \mathsf{t}_5 \mathsf{t}_6-\frac{2}{15} \re^{7 \mathsf{t}_0} \mathsf{t}_4 \mathsf{t}_5 \mathsf{t}
   _6 \nn \\ &-& \frac{\mathsf{t}_3^3 \mathsf{t}_7}{1250000}+\frac{3 \re^{12 \mathsf{t}_0} \mathsf{t}_3^2 \mathsf{t}_7}{25000}+\frac{3 \re^{2 \mathsf{t}_0} \mathsf{t}_2 \mathsf{t}_3^2
   \mathsf{t}_7}{25000}+\frac{3}{25} \re^{6 \mathsf{t}_0} \mathsf{t}_4^2 \mathsf{t}_7+\frac{\mathsf{t}_5^2 \mathsf{t}_7}{468750}+\frac{1}{500} \re^{4 \mathsf{t}_0} \mathsf{t}_2^2 \mathsf{t}_3 \mathsf{t}_7 \nn \\ &+& \frac{3}{25} \re^{11 \mathsf{t}_0} \mathsf{t}_2 \mathsf{t}_4 \mathsf{t}_7+\frac{3}{250} \re^{9 \mathsf{t}_0} \mathsf{t}_3 \mathsf{t}_4 \mathsf{t}_7+\frac{1}{625} \re^{8 \mathsf{t}_0} \mathsf{t}_2 \mathsf{t}_5
   \mathsf{t}_7+\frac{1}{625} \re^{3 \mathsf{t}_0} \mathsf{t}_4 \mathsf{t}_5 \mathsf{t}_7 \nn \\ &-& \frac{1}{500} \re^{4 \mathsf{t}_0} \mathsf{t}_3 \mathsf{t}_6 \mathsf{t}
   _7-\frac{1}{50} \re^{\mathsf{t}_0} \mathsf{t}_4 \mathsf{t}_6 \mathsf{t}_7+15 \mathsf{t}_4^2 \mathsf{t}
   _8+\frac{2}{125} \mathsf{t}_3 \mathsf{t}_5 \mathsf{t}_8-\frac{10}{3} \mathsf{t}_2 \mathsf{t}_6 \mathsf{t}
   _8 +\frac{1}{375} \re^{4 \mathsf{t}_0}
   \mathsf{t}_2^2 \mathsf{t}_5^2 \nn \\
&+& \frac{2 \re^{6 \mathsf{t}_0} \mathsf{t}_5^3}{84375}+\frac{1}{50} \re^{6 \mathsf{t}_0} \mathsf{t}_2^3 \mathsf{t}_7+\frac{3}{625} \re^{8 \mathsf{t}_0} \mathsf{t}_2 \mathsf{t}_3^2
   \mathsf{t}_5+13 \re^{13 \mathsf{t}_0} \mathsf{t}_2^2 \mathsf{t}_3 \mathsf{t}_4+\frac{4}{5} \re^{17 \mathsf{t}_0} \mathsf{t}_2 \mathsf{t}_4 \mathsf{t}_5~.
\eea

\section{$\wedge^\bullet\mathfrak{e}_8$ and relations in $R(\mathrm{E}_8)$: an
  overview of the results of \cite{E8comp}}
\label{sec:charE8}

I provide here a summary of the computer-aided proof of \cref{claim:E8}. In
principle, a direct approach to the determination of $\{\mathfrak{p}_k\}$ is
as follows
\ben
\item decompose $\wedge^k \mathfrak{g} = \oplus R^{(k)}_i$ into
  irreducibles;
\item letting
$\lambda_j=\sum_{i=1}^8 m_{i,j} \omega_j$ be the highest weight of $R^{(k)}_j$,
consider the tensor product decomposition of $\otimes_i \rho( m_{i,j}
\omega_j)$: this will contain $R^{(k)}_j$ as a summand (with coefficient 1),
plus extra terms;
\item iterate the operation until all virtual summands
(possibly with negative coefficients) have been replaced by tensor products of
fundamental representations; taking the character and summing over $j$ gives
$\mathfrak{p}_k$. 
\een
This however turns out to be computationally unfeasible already for $k\sim 7$; not only is the
decomposition of $\wedge^k \mathfrak{g}$ a daunting (and still unsolved, see
however \cite{MR1437204, MR1822681}) task; the decomposition into irreducibles
of even simple products
such as $\rho_{\omega_3}\otimes \rho_{\omega_3}$ would take hundreds of
Gigabytes of RAM to
compute. Instead, we proceed as follows:
\ben
\item For a given Cartan torus element $\exp(l)\in \cT$ with $l=\sum_i l_i
  \alpha^*_i \in \mathfrak{h}$, we can compute explicit Laurent polynomials
$\theta_j(\exp(l))\in \bbZ[(\re^{l_i})_i, (\re^{-l_i})_i]$, $\phi_k(l)\in
  \bbZ[(\re^{l_i})_i, (\re^{-l_i})_i]$ for $i=1,\dots, 8$ and $k=1,\dots, 120$ from the sole knowledge of the
  weight systems of $\rho_{\omega_i}$ with $i=1,7,8$, which have manageably
  small cardinality  2401, 241 and 26401 respectively (not taking into account
  weight multiplicities): for $\theta_{i}$ with $3\leq i\leq  7$ and all $\phi_k$ this follows from using Newton's identities applied
  to the power sums $\theta_7(\exp(k l))$, and then use of
  \eqref{eq:char37}; $\theta_{2}$ is similarly computed from
  $\theta_j$ with $j=1,7,8$ by the Adams operation on $\theta_1$:
\beq
\theta_1(\exp(2l))=\theta_2(\exp(l))-\theta_1(\exp l)+ \theta_7(\exp l) \theta_1(\exp l) - \theta_8(\exp l).
\eeq
\item We may then impose {\it a priori} constraints on the exponents
  $d_j^{(I)}$ appearing by inspection of the weight systems as well as on the
  dimensions of the tensor powers appearing on the monomials of the r.h.s. of \eqref{eq:pkgen}. We find
\beq
\{\max_{I\in M}d_j^{(I)}\}_{j=1}^8=  \{23,13,9,11,14,19,29,17\}, \quad
\max_{I\in M}\prod_{j=1}^8 \l(\mathrm{dim}\rho_{\omega_j}\r)^{d_j^{(I)}} =1.25366 \times 10^{96}.
\label{eq:djdeg}
\eeq
This truncates the sum on the r.h.s. to a finite, if large ($|M|=\cO(10^6)$),
number of monomials. 
\item In principle, imposing the identity of polynomials $\phi_k =
  \mathfrak{p}_k(\theta_1, \dots, \theta_8)$ determines uniquely all
  $n_{I,k}$; however the range of sum and the complexity of the polynomials involved
  renders this entirely unwieldy. A more sensible alternative is to solve
  \eqref{eq:pkgen} for $n_{I,k}$ by sampling the relation \eqref{eq:pkgen} at $|M|$
  random generic points $\exp(l)\in \bbQ^8$ of the torus\footnote{One might in
    principle pick generic, numerical random values with fixed precision and
    then hope to get an accurate integer truncation for the resulting
    $n_{I,k}\in \bbZ$. Such hope is however misplaced, as the resulting
    numerical matrix of monomial values (a matrix minor of a multi-variate
    Vandermonde matrix) is extremely ill-conditioned and leads to
    uncontrollable numerical errors.}, leading to a generically
  non-singular $|M|\times |M|$ linear system with rational entries for
  $n_{I,k}$, which can be solved exactly. Due to the sheer
  size of the monomial set and the density of the resulting linear system, however, this is unfeasible both for memory and
  time constraints.
\item There is however a non-generic choice of sampling points which, with
  some preparation, does the trick of reducing the problem to a large number
  of smaller problems of manageable size. First we subdivide the monomial set $M$ into slices $M_n=h^{-1}(n)$,
  $n=0, \dots, 8$  given by level
  sets of the function
\beq
h(d^{(I)} \in  M)= \sum_i \zeta(d_j^{(I)}), \quad
\zeta(x)=\l\{\bary{lcl}0 & \mathrm{if} & x=0,\\ 1 & \mathrm{if} & x>0. \eary \r.
\eeq
It is clear then that sampling $\exp(l)$ at values such that
$\theta_i(\exp(l))=0$ except for $n$ values of $i$ truncates the r.h.s. of
\eqref{eq:pkgen} to one of $\binom{8}{n}$ subsets of $M_n$. The strategy here
is to solve numerically for $\{\theta_i = u_i\}$ with some $u_i \in
\bbQ+I\bbQ$; there is a clever choice of the sampling set here such that with
sufficient floating point precision, we obtain a reliable -- and
in fact exact, with suitable analytical bounds -- rounding to rational
expressions for both the l.h.s. and the r.h.s. of \eqref{eq:pkgen}, for all
values of $k$. This simultaneously bypasses the ill-conditioning problem for
\eqref{eq:pkgen}, since we can then use exact arithmetic methods to solve it,
and moreover breaks it up into subsystems of size in the range $\cO(10)-\cO(10^4)$.
\item The latter point is not satisfactory yet since in the worst case
  scenario we deal with dense rational matrices of rank in the tens of
  thousands. However there's a refinement of the sets $M_k$ by
  considering a further slicing by one (or more) of the $d_j^{(I)}$ (i.e. level
  sets of the projections $p_j(d^{(I)} \in  M)=d_j^{(I)}$); these refined
  monomial sets are just selected by taking derivatives of \eqref{eq:pkgen}
  w.r.t. $\theta_j$ of order $d_j^{(I)}$. Since we have closed-form expressions for $\phi_k$
  and $\theta_j$ as functions of $\re^{l_i}$, these can be computed using
  Faa'~di~Bruno type formulas; while the complexity of the latter grows
  factorially, it turns out that derivative slicings of order up to five are
  both computable in finite time, compatible with the rounding of $\phi_k$
  with $8~\times$~machine
  precision, and they allow to break up the size of the
  resulting linear systems down to a maximum\footnote{A linear system of this
    rank, for the type of Vandermonde-type matrices we consider, required
    typically around 90Gb of RAM and half a day to terminate when solved using
    exact arithmetics (in our case, $p$-adic expansions and Dixon's method).}
 of $\cO(4\times 10^3)$. 
\item We are then left with a large number $(\cO(3.10^3))$ of relatively small linear systems and a large
  ($\cO(3.10^5)$) number of sampling points to evaluate $\phi_k$, $\theta_i$,
  and their derivatives in $l_j$; this would lead to a total runtime in the
  hundreds of months (about 120). However the numerical inversion, evaluation and calculation
  of derivatives at one sampling point is independent from that at another;
  this means that the calculation can be easily distributed over several CPU
  cores just by segmentation of the sampling set. Similar considerations
  apply, {\it mutatis mutandis}, to the solution of the linear
  subsystems. With $N~\simeq 75$ processor cores\footnote{In my specific case,
    this involved an average of about nine entry-level 64-bit cluster machines with dual 4-core CPUs.}, the absolute runtime gets reduced to
  about six weeks.
\een

The full result of the calculation is available at {\tt
  http://tiny.cc/E8SpecCurve}, and the original C~source code is available
upon request. 

It should be noted that, despite the innocent-looking appearance of \eqref{eq:p6}-\eqref{eq:p9}, both
the number of terms and the size of the coefficients grow extremely quickly with
$k$. The monomial set $M$ turns out to have cardinality
$|M|=949468$, with the matrix $n_{I,k}$ growing more and more dense for high
$k$ up to a maximum of 949256 non-zero coefficients for $k=118$, and $\max_{I,k}
n_{I,k} \simeq 1.7025\times 10^{10}$, $\min_{I,k} n_{I,k}\simeq -1.5403\times 10^{10}$.

\end{appendix}
\bibliography{miabiblio}
\end{document}